\newtheorem{theorem}{Theorem}[section]
\newtheorem{corollary}{Corollary}
\newtheorem{lemma}[theorem]{Lemma}
\newtheorem{proposition}{Proposition}
\theoremstyle{definition}
\newtheorem{definition}[theorem]{Definition}
\newtheorem{remark}{Remark}
\newtheorem{hypo}{Hypothesis}
\numberwithin{equation}{section}
\newcommand{\dx}{\,\mathrm{d}x}
\newcommand{\e}{\varepsilon}
\newcommand{\dist}{{\rm{dist}}}
\newcommand{\Lw}{\mathcal{L}}
\newcommand{\calL}{\mathcal{L}}
\newcommand{\vol}{\mathrm{vol}}
\newcommand{\expec}[1]{\mathbb{E}[#1]}
\font\tenmsb=msbm10
\font\sevenmsb=msbm7
\font\fivemsb=msbm5
\def\R{\mathbb{R}}
\def\N{\mathbb{N}}
\def\B{\mathbb{B}}
\def\Z{\mathbb{Z}}
\def\T{\mathbb{T}}
\def\Lip{\mathrm{Lip}}
\def\Ard{\mathcal{A}^R(D)}
\def\a{\alpha}
\newcommand{\step}[2]{\medskip  \noindent \textit{Step~}#1.~#2. \\}
\begin{document}

\author[M. Cicalese]{Marco Cicalese}
\address[Marco Cicalese]{Zentrum Mathematik - M7, Technische Universit\"at M\"unchen, Boltzmannstrasse 3, 85747 Garching, Germany}
\email{cicalese@ma.tum.de}

\author[A. Gloria]{Antoine Gloria}
\address[Antoine Gloria]{Sorbonne Universit\'e, UMR 7598, Laboratoire Jacques-Louis Lions, Paris, France \& Universit\'e Libre de Bruxelles, Brussels, Belgium}\email{gloria@ljll.math.upmc.fr}

\author[M. Ruf]{Matthias Ruf}
\address[Matthias Ruf]{Universit\'e Libre de Bruxelles (ULB), Brussels, Belgium}
\email{matthias.ruf@ulb.ac.be}

\title[From statistical polymer physics to nonlinear elasticity]{From statistical polymer physics to nonlinear elasticity}

\begin{abstract}
A polymer-chain network is a collection of interconnected polymer-chains, made themselves of the repetition of a single pattern called a monomer.
Our first main result establishes that, for a class of models for polymer-chain networks, the thermodynamic limit in the canonical ensemble
yields a hyperelastic model in continuum mechanics. In particular, the discrete Helmholtz free energy of the network converges to the infimum of a continuum integral functional (of an energy density depending only on the local deformation gradient) and the discrete Gibbs measure converges (in the sense of a large deviation principle) to a measure supported on minimizers of the integral functional.
Our second  main result establishes the small temperature limit of the obtained continuum model (provided the discrete Hamiltonian is itself independent of the temperature), and shows that it coincides with the $\Gamma$-limit of the discrete Hamiltonian, thus showing that thermodynamic and small temperature limits commute.
We eventually apply these general results to a standard model of polymer physics from which we derive nonlinear elasticity.
We moreover show that taking the $\Gamma$-limit of the Hamiltonian is a good approximation of the thermodynamic limit at finite temperature in the regime of large number of monomers per polymer-chain (which turns out to play the role of an effective inverse temperature in the analysis).
\end{abstract}

\maketitle

\section{Introduction and statement of the main results}

\subsection{Polymer physics and nonlinear elasticity}

On the one hand, rubber-like materials are the realm of continuum mechanics and constitute the paradigmatic example of hyperelastic materials at large deformations --- that is, 
their energy density and stress tensor only depend locally on the gradient of deformation. 
Consider a piece of material that occupies a domain $D$ at rest, and which is deformed according to some map $u:D\to \R^3$ (in Lagrangian coordinates).
The energy of the deformed configuration then takes the form
\begin{equation*}
{\mathcal I}(D,u)\,:=\,\int_D \overline W(\nabla u(x))dx,
\end{equation*}
where $\overline W:\R^{d\times d}\to [0,+\infty],\Lambda \mapsto \overline W(\Lambda)$ is the energy density of the material (minimal at $\Lambda=\mathrm{Id}$), and is referred to as its constitutive law.
The associated Piola stress tensor is given by $D_\Lambda \overline W(\Lambda)$.
A crucial requirement on the map $\overline W$ is frame-indifference, that is, for all rotations $R\in SO_3(\R)$
and deformation gradients $\Lambda \in R^{3\times 3}$, $\overline W(R\Lambda)=\overline W(\Lambda)$. 
Rubber materials are also usually isotropic, which reads as follows on $\overline W$: For all rotations $R\in SO_3(\R)$
and deformation gradients $\Lambda \in R^{3\times 3}$, $\overline W(\Lambda R)=\overline W(\Lambda)$. 
Finally, rubber materials are nearly-incompressible, which typically requires that $\overline W(\Lambda)$ gets large
when $|\det \Lambda-1|\gg 1$, and should not allow interpenetration of matter, which at least imposes that $\overline W(\Lambda) = +\infty$ if $\det \Lambda \le 0$.
For a given deformation $\varphi: \partial D\to \R^3$ of the boundary, the piece of deformed material (that occupied $D$ in the reference configuration) 
has now energy
\begin{equation}\label{nonlinear-elasticty}
\mathcal E(\varphi):=\inf \Big\{ \int_D \overline W(\nabla u)\,|\,u:D\to \R^3, u|_{\partial D}\equiv \varphi\Big\},
\end{equation}
and its deformation is given by the minimizer of this functional (if attained and unique).
We refer to \cite{Ciarlet} for classical mathematical aspects of nonlinear elasticity.
Standard mechanical experiments illustrate the complexity of the nonlinear response of these materials at large deformations --- see Figure~\ref{fig:Treloar}
for the Treloar experiments in uniaxial traction \cite{Treloar}.
\begin{figure}
\begin{minipage}{0.3\textwidth}
\includegraphics[width=.8\textwidth]{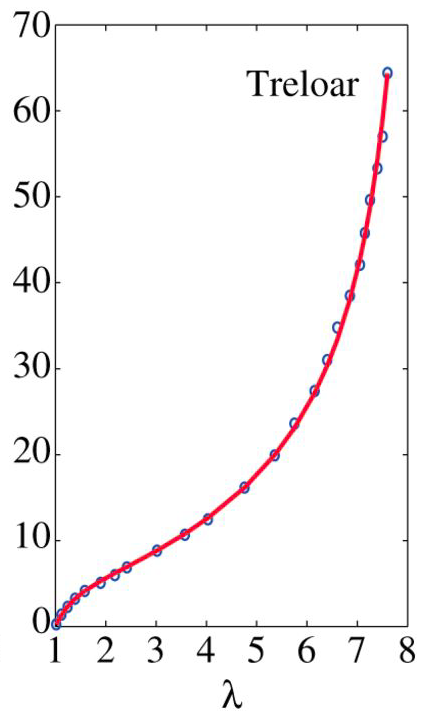} 
\end{minipage}
\begin{minipage}{0.6\textwidth}
A sample of rubber is submitted to prescribed linear deformation at its boundary: $\varphi_\Lambda:x\mapsto \Lambda \cdot x$, for $\Lambda=\mathrm{diag}(\lambda, \lambda^{-\frac12}, \lambda^{-\frac12})$ and $\lambda \in [1,7.5]$.

\medskip

Three regimes:
\begin{itemize}
\item linear response
($|\lambda -1|\ll 1$),
\item  strain softening ($2\le \lambda \le 4$), 
\item strain hardening ($\lambda \ge 6$).
\end{itemize}
\end{minipage}
\caption{Treloar experiments in unaxial traction (engineering stress versus $\lambda$)}\label{fig:Treloar}
\end{figure}

\medskip

On the other hand, rubber-like materials are also the realm of the statistical physics of polymer-chain networks and constitute the paradigmatic example
of materials for which elastic properties are purely entropic --- that is, they are only due to thermal fluctuations.
Consider a network $\mathfrak N$ of cross-linked polymer-chains, described by a (finite) set $\mathfrak L \subset \N$ of cross-links $i$,
a subset $\mathfrak L_{b} \subset \mathfrak L$ of boundary cross-links,
a (finite) set $\mathfrak B \subset \mathfrak L\times \mathfrak L$ of chains $c_{ij}=(i,j)$ (such that $\mathfrak L$ has one single
connected component via $\mathfrak B$, and such that if $(i,j)\in\mathfrak B$ then $(j,i)\notin\mathfrak B$ ). Each chain $c_{ij}$ is itself made of a sequence of $N_{ij}\in \N$ monomers, characterized by $i$, $j$ and $\mathfrak M_{ij}=\{(i,j,1),\dots,(i,j,N_{ij}-1)\}$.
For the simplicity of the presentation in this part of the introduction, we assume that the state-space is discrete, that is, that monomers are placed at edges of the lattice $\Z^3$.
A deformation $u$ of the polymer-chain network $\mathfrak N$ is then a map $\mathfrak L \cup \bigcup_{(i,j)\in \mathfrak B} \mathfrak M_{ij} \to \Z^3$
with the following properties. The map $u$ is edge-injective and has increments unity, that is, for all $(i,j)\ne (i',j')\in \mathfrak B$ and all $0\le k\le N_{ij}-1$, $0\le k'\le N_{i'j'}-1$,
we have with the notation $u_{(i,j,0)}=u_i$ and $u_{(i,j,N_{ij})}=u_j$: 
\begin{eqnarray}
\label{e.injectivity} (u_{(i,j,k)},u_{(i,j,k+1)})&\ne& (u_{(i',j',k')},u_{(i',j',k'+1)}),\\
\label{e.unity} |u_{(i,j,k)}-u_{(i,j,k+1)}|&=&1.
\end{eqnarray}
%
Given a boundary  map $\varphi_b:\mathfrak L_{b} \to \Z^3$, we denote by $\Omega(\mathfrak N,\varphi_b)$ the cardinality of
the set $\{u \text{ deformation }:\mathfrak L \cup \bigcup_{(i,j)\in \mathfrak B} \mathfrak M_{ij} \to \Z^3 \quad |\quad  u|_{\mathfrak L_b} \equiv \varphi_b\}$.
The Helmholtz free energy $\mathcal E^\beta(\mathfrak N,\varphi_b)$ of the network $\mathfrak N$ with boundary deformation $\varphi_b$ at inverse temperature $\beta$  is then given by
\begin{equation}\label{e.free-network-full}
\mathcal E^\beta(\mathfrak N,\varphi_b)\,:=\,-\frac1\beta \log \Omega(\mathfrak N,\varphi_b),
\end{equation}
with the understanding that $\mathcal E^\beta(\mathfrak N,\varphi_b)=+\infty$ if $ \Omega(\mathfrak N,\varphi_b)=0$.
Before we make further restrictions on $\mathfrak L$ and $\mathfrak B$, observe that if the network $\mathfrak B$ is made of one single polymer-chain $(1,2)$ with $N$ monomers, then given $\varphi_b:\{1,2\} \mapsto \{u(1),u(2)\} \in (\Z^3)^2$, $\Omega(\mathfrak N,\varphi_b)$ is explicit and obviously only depends on $N$ and the length $|u(1)-u(2)|$. Indeed, for large $N$, \eqref{e.free-network-full} can be explicitly computed (and typically leads to \eqref{e.KuhnGrun}, see below).

Let us make some further assumptions on the network  $\mathfrak N$ that provides a Lagrangian description 
of the polymer-chain network, and enrich the physics. To this aim, we first quickly give an informal description of the physical phenomenon of cross-linking that
leads from a set of independent strings of monomers to a network of cross-linked polymer-chains. Consider a bath of fluctuating independent strings of monomers in some region of space. If two strings are close enough to one another, they get cross-linked and the part of a string between two cross-links 
is called a polymer-chain. This yields a \emph{reference} polymer-chain network,
and $\mathfrak L$, $\mathfrak L_b$,  $\mathfrak B$ now denote the sets of reference positions of cross-links, boundary cross-links, and end-to-end points of polymer-chains,
whereas given $(x_i,x_j) \in \mathfrak B$, $\mathfrak M_{x_i,x_j}$ now denotes the set of positions $\{x_{ij}^k\}_k$ of the monomers  $(x_{ij}^k,x_{ij}^{k+1})$ of this chain (for $0\le k\le N_{ij}-1$ and with $x_{ij}^0=x_i$, $x_{ij}^{N_{ij}}=x_j$).
We assume that the connectivity of the network is between 3 and 4 (3 or 4 chains meet at every cross-link) and that  the number of monomers of a given polymer-chain $(x_i,x_j)$ between two cross-links at positions $x_i$ and $x_j$ in the 
reference network is of the order of $N_{ij}=|x_i-x_j|^2$
(so that the typical distance of the $N_{ij}$-th monomer from the first one corresponds to the typical distance of a random walker from the origin after $N_{ij}$ steps),
and that the typical distance between two (closest) cross-links is smaller than the typical length of a polymer-chain.
A deformation is now to be understood in the kinematic sense of Lagrangian deformation, say from $\mathfrak L \subset \Z^3\to \Z^3$.
For all $(x_i,x_j)\in \mathfrak B$, we call $\mu_{ij}:\{x_{ij}^0,\dots,x_{ij}^{N_{ij}}\}\to \Z^3$ an admissible deformation of the polymer-chain $(x_i,x_j)$
if it is edge-injective and has increments unity in the sense of \eqref{e.injectivity} \& \eqref{e.unity}. 
Given a deformation $u:\mathfrak L \to \Z^3$ and a polymer-chain $(x_i,x_j)\in \mathfrak B$, we 
denote by $\Omega_{ij}(u)$ the cardinality of the set of admissible deformations $\mu_{ij}$ such that $\mu_{ij}(x_i)=u(x_i)$ and $\mu_{ij}(x_j)=u(x_j)$.
This accounts for local injectivity \emph{within each chain}.
Given a boundary  map $\varphi_b:\mathfrak L_{b} \to \Z^3$, we define $\mathfrak U(\varphi_b):=\{u :\mathfrak L \to \Z^3 \quad |\quad  u|_{\mathfrak L_b} \equiv \varphi_b\}$
the subset of deformations of cross-links that coincide with the boundary map $\varphi_b$ on $\mathfrak L_b$. 
We finally make the assumption that steric effects \emph{between chains} can be accounted for by restricting admissible deformations to a suitable subset 
$\mathfrak V$ of $\{u:\mathfrak L\to \Z^3\}$  (which does not describe the positions of monomers), and define $\widehat{\mathfrak U}(\varphi_b):=\mathfrak U(\varphi_b)
\cap \mathfrak V$.
This assumption allows us to coarsen the model by factorizing the number of  admissible deformations in the form  
$$
\Omega(\mathfrak N,\varphi_b) \sim \sum_{u \in \widehat{\mathfrak U}(\varphi_b)} \prod_{(x_i,x_j)\in \mathfrak B} \Omega_{ij}(u).
$$
This has the effect to integrate out the positions of the monomers and to reduce the characterization of the model to $(\mathfrak L,\mathfrak B)$
and the definition of the state space $\widehat{\mathfrak U}(\varphi_b)$ for any boundary deformation $\varphi_b:\mathfrak L_b \to \Z^3$. Since
these quantities only depend on distances (and/or angles) between cross-links, $\Omega(\mathfrak N,\varphi_b)$ does not depend on the frame
of the Lagrangian description.

We now enrich the physics: On top of the non-interpenetrability of matter, polymer-chains feel the effect of a solvant which yields an internal energy that penalizes changes of volume with respect to the reference network, which we model in the form of $\hat H(u)$, an internal energy that only depends locally on $u$ 
at a scale larger than that of a polymer-chain (in a frame-indifferent way). 
The Helmholtz free energy of the network $\mathfrak N$ with boundary deformation $\varphi_b$ at inverse temperature $\beta$  is then given by
the following modified version of \eqref{e.free-network-full}
\begin{eqnarray*}
\mathcal E^\beta(\mathfrak N,\varphi_b)&=&-\frac1\beta \log \bigg(\sum_{u \in \widehat{\mathfrak U}(\varphi_b)} \Big(\prod_{(x_i,x_j)\in \mathfrak B} \Omega_{ij}(u)\Big)
\exp(-\beta \hat H(u))\bigg)
\\
&=& -\frac1\beta \log \bigg( \sum_{u \in \widehat{\mathfrak U}(\varphi_b)} \exp\Big(  - \beta \big(\hat H(u) + \sum_{(x_i,x_j)\in \mathfrak B}\frac{-1}\beta\log( \Omega_{ij}(u) \big)\Big) \bigg),
\end{eqnarray*}
which we rewrite as
\begin{equation}
\mathcal E^\beta(\mathfrak N,\varphi_b)\,:=\, -\frac1\beta \log \bigg( \sum_{u \in {\mathfrak U}(\varphi_b)} \exp\Big(  - \beta \big( H(u) + \sum_{(x_i,x_j)\in \mathfrak B}\frac{-1}\beta\log( \Omega_{ij}(u) \big)\Big) \bigg),
\label{polymer-physics} 
\end{equation}
by setting $H(u)=\hat H(u)+\tilde H(u)$ where $\tilde H(u)=+\infty$ if $u \notin\mathfrak V$ and $\tilde H(u)=0$ if $u \in \mathfrak V$.
The latter rewriting amounts to penalizing that a deformation $u:\mathfrak L\to \Z^3$ be admissible rather than restricting the 
set of states. The thermally fluctuating network with imposed boundary deformation $\varphi_b$ has then free Helmholtz energy $\mathcal E^\beta(\mathfrak N,\varphi_b)$,
and its configuration is described by a probability measure $\mu^\beta_{\varphi_b}$ on the set of admissible deformations defined by: For all $V\subset {\mathfrak U}(\varphi_b)$,
$$
\mu^\beta_{\varphi_b} \;:=\, \frac{\sum_{u \in V} \exp\Big(  - \beta \big( H(u) + \sum_{(x_i,x_j)\in \mathfrak B}\frac{-1}\beta\log( \Omega_{ij}(u) \big)\Big)}{\sum_{u \in {\mathfrak U}(\varphi_b)} \exp\Big(  - \beta \big( H(u) + \sum_{(x_i,x_j)\in \mathfrak B}\frac{-1}\beta\log( \Omega_{ij}(u) \big)\Big)} .
$$
We refer the reader to \cite{Rub} for more background on polymer physics.
Let us quickly interprete Treloar's experiments and the three regimes of Figure~\ref{fig:Treloar} in terms of polymer physics.
The linear regime essentially represents the fact that for $\varphi_b$ close to the identity map, \eqref{polymer-physics} is close to quadratic.
The regimes of strain softening and strain hardening are related to the entropic term, the geometry of the network, and $H$.
Let us give some intuition on the entropic term by considering a system of two cross-linked polymer-chains of possibly different length, for which the deformation 
of the boundary of the system (that is, the end points of the two polymer-chains except the cross-link) is fixed. For large boundary deformations, monomers tend to align so that there are less configurations available and the free energy of the system gets large and ultimately blows up. For moderately large boundary deformations, among the possible deformation of the cross-link, the one with the largest number of configurations is the linear interpolation of the deformation of the boundary only if the chains have the same length --- otherwise it is advantageous to deform the longer chain more, which yields redistribution of strain and therefore leads to softening.

\medskip

The aim of the present contribution, which draws its inspiration from the work \cite{KoLu}  by Koteck\'y and Luckhaus, is to rigorously relate \eqref{polymer-physics} (or more precisely the associated version with continuous state space) to \eqref{nonlinear-elasticty} in the thermodynamic limit of large number
of polymer-chains at finite temperature $\beta \ne \infty$, and to rigorously justify the model introduced, studied, and analyzed in \cite{ACG2,GLTV}.
In particular, we aim at proving the closeness of $\mathcal E^\beta(\mathfrak N,\varphi_b)$ to $\mathcal E(\varphi)$,
and the closeness of $\mu^\beta_{\varphi_b}$ to the indicator function supported on the set of minimizers associated with  $\mathcal E(\varphi)$.

\medskip

In the following subsection, we introduce the notation, the mathematical (and statistical) description of the network,
and state the main results of the paper on the thermodynamic limit
of the Gibbs measure and the free energy. We then later on apply these results to a specific model of  polymer-chain networks of 
the type above, from which we rigorously derive nonlinear elasticity.

\subsection{Discrete free energies and Gibbs measures}\label{subsec:definitions}

We start with the definition of admissible Euclidean graphs, and first fix once and for all constants $R,r,C_0>0$, and dimensions $d,n$. For technical reasons we assume that $6R<C_0$.
 For  details on  point processes, we refer to the monograph \cite{Moeller}.
\begin{definition}\label{delaunay}
Let $\mathcal{P}\subset\R^d$ be a countable set.
\begin{itemize}
	\item[(i)] $\mathcal{P}$ is said to be in general position if there are no $k+1$ points contained in a common $k-1$-dimensional affine subspace $(1\leq k\leq d)$ and no $d+2$ points lie on the boundary of the same sphere.
	\item[(ii)] A Delaunay tessellation $\mathbb{T}=\{T_i\}_{i\in\mathbb{N}}$ associated with $\mathcal{P}$ is a partition of $\R^d$ into $d$-simplices $T_i$ whose vertices are in $\mathcal{P}$ and such that no point of $\mathcal{P}$ is contained inside the circumsphere of any simplex in $\mathbb{T}$.
\end{itemize}
\end{definition}
We now  introduce our model for the reference configuration of a polymer network.
\begin{definition}\label{defadmissible}
An extended Euclidean graph $G=(\mathcal{L},E,S) \in(\R^d)^{\mathbb{N}} \times \{0,1\}^{\N\times \N}\times \{0,1\}^\N$ 
is a set of points $\calL=\{x_i\}_{i\in \N}\subset (\R^d)^{\mathbb{N}}$, an associated connectivity graph $E\in  \{0,1\}^{\N\times \N}$,
and a subset of points $\calL_1:=\cup_{i|S_i=1}\{x_i\}$. If $E_{ij}=1$,
we say that $(x_i,x_j)$ is an edge of the graph, whereas if $S_i=1$ we say 
that $x_i$ is a ``volumetric point'' (this wording will be clear later). We call $\mathbb{B}$ the set of edges of $(\mathcal L,E,S)$,
and $\mathbb{T}$ the Delaunay tessellation\footnote{uniqueness fails when points are not in general position, which we rule out by condition (v) of this definition} of $\R^d$ associated with $\cup_{i|S_i=1}\{x_i\}$.
We say that $G=(\mathcal L,E,S)$ is an admissible extended Euclidean graph (graph in short) if it satisfies
\begin{itemize}
	\item[(i)] $\dist(z,{\calL_1})\leq R$ for all $z\in\R^d$;
	\item [(ii)] $\dist(x,\mathcal{L} \setminus\{x\})\geq r$ for all $x\in\mathcal{L}$;
	\item[(iii)] for all $x\in\mathcal{L}:\; \{y\in \mathcal L:\; (x,y)\in \B\}\subset B_{C_0}(x)$;
	\item[(iv)] For all $x,y\in\mathcal L$ there exists a path $P(x,y)$ of edges of $\B$ connecting $x$ to $y$ with
	\begin{equation*}
	P(x,y)\subset [x,y]+B_{C_0}(0),
	\end{equation*}
	where $[x,y]=\{x+t (y-x):\;t\in[0,1]\}$;
	{\item[(v)] $\calL_1$ is in general position.}
\end{itemize}
We denote by $\mathcal G$ the set of graphs for which (i)--(v) hold.
\end{definition}
{Note that the set of vertices $\calL$ also satisfies (i).} Point sets with the properties (i) and (ii) are sometimes called Delone sets. This class of point sets has already been used as a reference configuration for atomistic models in elasticity in~\cite{ACG2,BLBL}. Note further that (ii) and (iii) imply that the degree of each vertex is bounded uniformly (this is one of the physical constants of the model). Assumption~(iv) is technical and ensures a coercivity property (see Lemma~\ref{lb}). Assumption~(v) is to avoid the non-uniqueness of Delaunay tessellations --- this is not essential but convenient to simplify measurability issues.

\medskip

Next we endow $\mathcal G$ with a probabilistic structure, and consider on $\mathcal{G}\subset(\R^d)^{\mathbb{N}} \times \{0,1\}^{\N\times \N}\times \{0,1\}^\N$ the $\sigma$-algebra $\Sigma$ given by the trace $\sigma$-algebra of $\mathcal{B}_{(\R^d)^{\mathbb{N}}}\otimes\mathcal{B}_{\{0,1\}^{\mathbb{N}\times\mathbb{N}}}\otimes\mathcal{B}_{\{0,1\}^{\mathbb{N}}}$, where each factor denotes the Borel $\sigma$-algebra given by the product topology on the factors.
We do not distinguish between $(\mathcal L,E,S)$  and $(\mathcal L,\B,\T)$, which we will both denote by $G$. We then give ourselves a statistics on this set of Euclidean graphs  described by a measure $\mathbb{E}$ on $(\mathcal G,\Sigma)$, and address the minimal assumptions on this distribution $\mathbb{E}$.
They are related to the operation of the shift group $(\mathbb{Z}^d,+)$ on $\mathcal G$, that is,
for any shift vector ${z\in\mathbb{Z}^d}$ and any Euclidean graph $G=(\mathcal L,\B,\T)$, the shifted graph
$G+z=(\mathcal L+z,\B+z,\T+z)$ is again a Euclidean graph.
The first assumption is \emph{stationarity}, which means that for any
shift ${z\in\mathbb{Z}^d}$ the random Euclidean graphs $G$ and $G+z$
have the same (joint) distribution. 
The second assumption is \emph{ergodicity}, which means that any (integrable) random variable
$F(G)$ (i.e.~a measurable map of the random graph) that is shift invariant, in the sense that $F(G+z)=F(G)$ for all shift vectors ${z\in\mathbb{Z}^d}$
and almost-every Euclidean graph  $G$ is actually constant, that is $F=\expec{ F}$
for almost every Euclidean graph $G$. {Throughout this paper we will tacitly assume stationarity and ergodicity. Our results remain valid under the mere assumption of stationarity, but all asymptotic quantities may still be non-deterministic.}

\medskip

We are now in the position to introduce the Hamiltonian and the free energy at the microscopic level. Let $D\subset\R^d$ be an open bounded reference domain with Lipschitz boundary. Given a small parameter $0<\e\ll 1$ and any $U\subseteq \R^d$ we set $U_{\e}=\frac{U}{\e}$, and use the short-hand notation $U_\e^\calL:= \calL\cap U_{\e}$ --- with this choice the microscopic scale is set to 1 and the macroscopic scale to $\frac1\e$. We consider microscopic deformations $u:\calL\cap D_{\e}\to\R^n$, whose internal energy takes the form
\begin{equation}\label{def:hamiltonian}
H_{\e}(D,u)=\sum_{\substack{(x,y)\in \B\\ x,y\in D_{\e}}}f\left(x-y,u(x)-u( y)\right)+H_{{\rm vol},\e}(D,u),
\end{equation}
for some map $f$ and {a volumetric term that penalizes large changes of volume and change of ``orientation'' (if it is not identically zero we always consider the case $n=d$). In order to define such a term we need to introduce some further notation. Denote by $\mathcal{V}_1=\{\mathcal{C}_1(x)\}_{x\in\calL_1}$ the Voronoi tessellation of $\R^d$ with respect to the volumetric points $\calL_1$,
and recall that 
\begin{equation*}
\mathcal{C}_1(x):=\{z\in\R^d:\;|z-x|\leq|z-y|\quad\forall y\in\calL_1\}.
\end{equation*}
We define the interior Voronoi cells by
\begin{equation*}
\mathcal{V}_{1,\e}(D)=\{\mathcal{C}_1(x)\in\mathcal{V}_1:\,T\subset D_{\e} \text{ for all }T\in\mathbb{T}\text{ such that }T\cap\mathcal{C}_1(x)\neq\emptyset\}.
\end{equation*}
The intuition behind these cells is that we want to define $H_{\rm vol,\e}(D,\cdot)$ using only the volumetric points inside the domain $D_{\e}$. Given $u:D^{\calL}_{\e}\to\R^d$ we denote by $u_{\rm aff}:\bigcup_{T\subset D_{\e}}\to \R^d$ the continuous and piecewise affine interpolation with respect to the triangulation $\mathbb{T}$ and the values $\{u(x)\}_{x\in\calL_1}$. With these quantities at hand the volumetric term takes the form
\begin{equation}\label{def:volHam}
H_{\rm vol,\e}(D,u)=\sum_{\mathcal{C}_1(x)\in\mathcal{V}_{1,\e}(D)}|\mathcal{C}_1(x)|W\Big({\det}_{\mathcal C_1(x)}(\nabla u_{\rm aff})\Big)
\end{equation}
for some map $W$
and the short-hand notation ${\det}_{\mathcal C_1(x)}(\nabla u_{\rm aff}):= \fint_{\mathcal{C}_1(x)}\det(\nabla u_{\rm aff})\,\mathrm{d}z$.
By definition of the interior Voronoi cells, this sum is well-defined and, since $u_{\rm aff}$ is piecewise affine, the integrals over the Voronoi cells can be rewritten as finite sums.}
The random character of this Hamiltonian $H_{\e}$ is encoded by $\B$ and $\T$ (which is a more descriptive notation of the actual event -- a random graph -- than the standard ``$\omega$'').
We make three assumptions on the discrete energy densities $f$ and $W$.
The first set of assumptions is used for the general results:
\begin{hypo}\label{Hypo1}
The functions $f:\R^d\times \R^n \to \R_+$ 
and $W:\R\to \R_+$ are (jointly) measurable, nonnegative and there exist a constant $C>0$ and an exponent $p>1$ such that for all $z\in\R^d,\xi,\zeta \in \R^n,\lambda\in\R$ we have the (two-sided) $p$-growth condition
\begin{equation}\label{Hyp-1.1}
\frac{1}{C}|\xi|^p-C\leq  f(z,\xi)\leq C(1+|\xi|^p), \quad 0 \leq   W(\lambda) \leq C(1+|\lambda|^{\frac{p}{d}}).
\end{equation}
\end{hypo}
{Some of our results require a slightly stronger set of assumptions.
\begin{hypo}\label{Hypo2}
The function $f:\R^d\times \R^n \to \R_+$  is jointly  measurable, $W:\R\to \R_+$ is continuous, and there exist a constant $C>0$ and an exponent $p>1$ such that for all $z\in\R^d,\xi,\zeta \in \R^n,\lambda,\lambda'\in \R$ we have the (two-sided) $p$-growth condition
\begin{equation}\label{Hyp-2.1}
\frac{1}{C}|\xi|^p-C\leq  f(z,\xi)\leq C(1+|\xi|^p), \quad 0 \leq   W(\lambda) \leq C(1+|\Lambda|^{\frac{p}{d}}),
\end{equation}
and the local Lipschitz conditions
\begin{equation}\label{Hyp-2.2}
\begin{array}{rcl}
|f(z,\xi)-f(z,\zeta)|&\leq& C|\xi-\zeta|(1+ |\xi|^{p-1}+|\zeta|^{p-1}), 
\\
|W(\lambda)-W(\lambda')|&\leq& C|\lambda-\lambda'|(1+|\lambda|^{\frac{p}{d}-1}+|\lambda'|^{\frac{p}{d}-1}).
\end{array}
\end{equation}
If $W\not\equiv 0$, then we assume in addition that $n=d$ and $p\geq d$.
\end{hypo}}

The third set of assumptions is similar to Hypothesis~\ref{Hypo2}, but is tuned for our applications to polymer physics, and exploits the specific form 
of the model.
\begin{hypo}\label{Hypo3}
The function $f:\R^d\times \R^d \to \R_+$  is jointly  measurable, $W:\R\to \R_+$ is continuous, and there exist an exponent $p\ge d$
and constants $0<C_p,C_2,C_p',C_2',C,C'  \sim 1$  such that for all $z\in\R^d,\xi,\zeta \in \R^d,\lambda,\lambda'\in \R$ we have the (two-sided) $p$-growth condition
\begin{equation}\label{Hyp-3.1}
C_2|\xi|^2+C_p|\xi|^p\leq  f(z,\xi)\leq C_2'|\xi|^2+C_p'|\xi|^p, \quad 0 \leq   W(\lambda) \leq C(1+|\lambda|^{\frac{p}{d}}),
\end{equation}
and the local Lipschitz conditions
\begin{equation}\label{Hyp-3.2}
\begin{array}{rcl}
|f(z,\xi)-f(z,\zeta)|&\leq& |\xi-\zeta|(C_2'(|\xi|+|\zeta|)+ C_p' (|\xi|^{p-1}+|\zeta|^{p-1})), 
\\
 |W(\lambda)-W(\lambda')|&\leq& C'|\lambda-\lambda'|(1+|\lambda|^{\frac{p}{d}-1}+|\lambda'|^{\frac{p}{d}-1}).
\end{array}
\end{equation}
\end{hypo}
Note that the second condition in \eqref{Hyp-3.2} follows automatically from \eqref{Hyp-3.1} {if $W$ is assumed to be convex.}

\medskip
Under the above Hypotheses the passage from discrete Hamiltonians to continuum energies is well-understood (e.g. by $\Gamma$-convergence in \cite{AC,ACG2,BCS,BK,BS2}; see also \cite{BS} for results on local minimizers). In this paper we are interested in the asymptotic behavior of the free energy when we prescribe boundary conditions. To this end, given
$\varphi\in \Lip(D,\R^n)$ we define the class of states associated with $\varphi$ at scale $\frac1\e$ as
\begin{equation*}
\mathcal{B}_{\e}(D,\varphi)=\{u:D_{\e}\cap \calL \to\R^n,\;|u(x)-\tfrac1\e \varphi(\e x)|<1\text{ if }\dist(x,\partial D_{\e})\leq C_0\}.
\end{equation*}
We denote by  $\mathcal{V}:=\{\mathcal{C}(x)\}_{x\in\calL}$ the Voronoi tessellation of $\R^d$ associated with $\calL$ (note that this is not necessarily the dual tessellation of $\T$ --- the latter is given by $\mathcal{V}_1$ which could and \emph{will} be coarser).
We shall identify functions of $\mathcal{B}_{\e}(D,\varphi)$ with their piecewise constant extensions on the union of Voronoi cells 
$\mathcal{C}(x)$ for $x \in D_{\e}\cap \calL$.
%
%
The partition function $Z^{\beta}_{\e,D,\varphi}$ at inverse temperature $\beta>0$ with boundary condition $\varphi\in \Lip(D,\R^n)$ is defined as
\begin{equation}\label{defpartfunc}
Z^{\beta}_{\e,D,\varphi}:=\int_{\mathcal{B}_{\e}(D,\varphi)}\exp(-\beta H_{\e}(D,u))\,\mathrm{d}{u},
\end{equation}
where the integration is understood in the sense of the product measure $\mathrm{d}u=\prod_{x_j \in D_{\e}\cap \calL}\mathrm{d}u(x_j)$, whereas the Helmholtz free energy writes
\begin{equation}\label{deffreeenergy}
\mathcal{E}_{\e}^\beta(D,\varphi):=-\frac{1}{\beta|D_{\e}| }\log \big(Z^{\beta}_{\e,D,\varphi}\big).
\end{equation}

\medskip

We conclude this section by the definition of the Gibbs measure.
For all $\e>0$ and ${v} \in L^p(D)$, we introduce the rescaled version  $u:=\Pi_{1/\e}{v}$ of ${v}$ as
\begin{equation*}
\Pi_{1/\e}{v}: D_\e \to \R^n, z \mapsto \tfrac1\e {v}(\e z).
\end{equation*}
%
We define the Gibbs measure $\mu^{\beta}_{\e,D,\varphi}$ at temperature $\beta$ associated with the Hamiltonian $H_{\e}(D,\cdot)$ 
and the boundary condition $\varphi$ as the probability measure on $L^p(D,\R^n)$ characterized by
\begin{equation}\label{defgibbs}
L^p(D)\ni V\mapsto \mu^{\beta}_{\e,D,\varphi}(V):=\frac{1}{Z^{\beta}_{\e,D,\varphi}}\int_{\Pi_{1/\e}V\cap \mathcal{B}_{\e}(D,\varphi)}\exp(-\beta H_{\e}(D,u))\,\mathrm{d}{u},
\end{equation}
where we divided by the partition function
to ensure that $\mu^{\beta}_{\e,D,\varphi}(L^p(D))=1$ {(see Section \ref{Sec4} for a rigorous definition).}
The main aim of this article is to study the thermodynamic limit of $\mathcal{E}_{\e}^\beta(D,\varphi)$ and $\mu^{\beta}_{\e,D,\varphi}$,
that is their asymptotic behavior as $\frac1\e \uparrow \infty$ (large-volume limit).

In all the results to come, and in the proofs, quantities of interest are random variables (or random measures or functionals). As such, they depend on the realization of the random graph.
We do not make this dependence explicit in the notation, except when it is strictly necessary (in which case we put  an additional argument, e.g. we write $H_{\e}(D,u,G)$ instead of $H_{\e}(D,u)$).

\subsection{Thermodynamic limit}

We start with the convergence of the Helmholtz free energy
for linear boundary conditions ${\overline{\varphi}_\Lambda:x\mapsto \Lambda x}$ and the definition of the limiting (free) energy density of the continuum hyperelastic model.
\begin{theorem}\label{th:W}
Assume Hypothesis~\ref{Hypo1}.
Then for all $\beta>0$ there exists a deterministic quasiconvex function $\overline W^\beta:\R^{{n\times d}} \to {\R}$ 
satisfying the two-sided $p$-growth condition
\begin{equation*}
\forall\Lambda \in \R^{{n\times d}}:\qquad \frac1C |\Lambda|^p-C \le \overline W^\beta(\Lambda)\le C(1+|\Lambda|^p),
\end{equation*}
and {for} all bounded Lipschitz domains $D\subset \R^d$,
the Helmholtz free energy defined in \eqref{deffreeenergy} satisfies almost surely 
$$
\forall\Lambda \in  \R^{{n\times d}}: \qquad \lim_{\e \downarrow 0} \mathcal{E}_{\e}^\beta(D,{\overline{\varphi}_\Lambda})\,=\,\overline{W}^\beta(\Lambda).
$$
\end{theorem}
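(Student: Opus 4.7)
The plan is to adapt the standard ergodic-homogenization machinery (see e.g.~\cite{AC,ACG2,BCS}) to the free energy $-\beta^{-1}\log Z$ rather than to a deterministic infimum. For every half-open cube $Q\subset\R^d$, every $\Lambda\in\R^{n\times d}$ and every realization $G\in\mathcal G$, I introduce the set function
\[
\Phi(Q,\Lambda,G):=-\frac{1}{\beta}\log\int_{\mathcal B_1(Q,\overline{\varphi}_\Lambda)}\exp\bigl(-\beta H_1(Q,u,G)\bigr)\,\mathrm{d}u.
\]
The rescaling identity $\mathcal E^\beta_\e(Q_1,\overline{\varphi}_\Lambda,G)=\e^d\,\Phi(\e^{-1}Q_1,\Lambda,G)$ (which holds because $\overline{\varphi}_\Lambda$ is invariant under $\Pi_{1/\e}\Pi_\e$) reduces the convergence to the behavior of $\Phi(Q_R,\Lambda,G)$ as $R\uparrow\infty$.

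The first task is to establish the two structural properties required by a subadditive ergodic theorem. Translation covariance $\Phi(Q+z,\Lambda,G)=\Phi(Q,\Lambda,G-z)$ for $z\in\Z^d$ follows from stationarity of $\mathbb E$ and from the fact that $\overline{\varphi}_\Lambda(\cdot+z)-\overline{\varphi}_\Lambda(\cdot)\equiv\Lambda z$ is a global translation absorbed by a change of integration variable. For approximate subadditivity, given a partition $Q=\bigsqcup_j Q_j$ into sub-cubes I restrict the integration defining $\Phi(Q,\Lambda,G)$ to those $u$ that additionally satisfy $|u-\overline{\varphi}_\Lambda|<1$ in a layer of width $C_0$ around every $\partial Q_j$. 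On this restricted class the Hamiltonian splits as $\sum_j H_1(Q_j,u|_{Q_j})$ plus cross contributions supported on edges and Voronoi cells crossing internal interfaces, and these defects are bounded by $C(1+|\Lambda|^p)|\partial\mathcal Q|$ via the upper bound of \eqref{Hyp-1.1} and properties (i)--(iii) of Definition~\ref{defadmissible}. Factorizing the resulting lower bound on $Z^\beta_{1,Q,\overline{\varphi}_\Lambda}$ over the sub-cubes yields
\[
\Phi(Q,\Lambda,G)\le\sum_j\Phi(Q_j,\Lambda,G)+C(1+|\Lambda|^p)|\partial\mathcal Q|.
\]
Together with $\expec{|\Phi(Q,\Lambda,\cdot)|}\le C(1+|\Lambda|^p)|Q|$ (upper bound from testing $u\equiv\overline{\varphi}_\Lambda$, lower bound via the $p$-coercivity of $f$ combined with a discrete Poincar\'e inequality on the path structure provided by Definition~\ref{defadmissible}(iv), cf.~Lemma~\ref{lb}), the multi-parameter subadditive ergodic theorem of Akcoglu--Krengel produces the deterministic limit $\overline W^\beta(\Lambda)=\lim_R R^{-d}\Phi(Q_R,\Lambda,\cdot)$, together with the two-sided $p$-growth of $\overline W^\beta$.

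The extension from cubes to arbitrary bounded Lipschitz domains $D$ is routine: approximate $D$ from inside and outside by finite unions of disjoint small cubes of side $\delta$, apply the subadditivity inequality together with the convergence on cubes, and use the Lipschitz regularity of $\partial D$ together with the $p$-growth of $\overline W^\beta$ to show that the boundary defect is negligible as $\delta\downarrow 0$. Quasiconvexity of $\overline W^\beta$ is obtained by a homogenization-style argument: given $\psi\in W^{1,\infty}_0(Q,\R^n)$ approximated by piecewise-affine maps $\psi_k$ with constant gradient $A_{k,j}$ on cubes $Q_{k,j}$ partitioning $Q$, apply the decoupling inequality of the previous step with boundary datum $\overline{\varphi}_\Lambda+\psi_k$; the translation $u\mapsto u-\psi_k$ converts each sub-problem into one with affine boundary condition $\overline{\varphi}_{\Lambda+A_{k,j}}$, and the convergence on cubes applied to each $Q_{k,j}$ yields
\[
\overline W^\beta(\Lambda)\le\fint_Q\overline W^\beta\bigl(\Lambda+\nabla\psi_k(x)\bigr)\,\mathrm{d}x+o_k(1),
\]
so that sending $k\to\infty$ delivers the Jensen-type inequality defining quasiconvexity.

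The technical heart of the argument lies in the surface-error control entering the subadditivity estimate. Because the volumetric term $H_{\mathrm{vol},\e}$ is built on the Delaunay tessellation of $\calL_1$, which does not respect any fixed cube partition, the boundary layer must be thickened to absorb every Delaunay simplex crossing an internal interface. Property~(i) of Definition~\ref{defadmissible} bounds the number of such simplices per unit surface area, and the $p/d$-growth of $W$ then controls their contribution by $C(1+|\Lambda|^p)|\partial\mathcal Q|$; a similar argument using properties (ii)--(iii) handles the edges cut by an interface. Making this geometric defect estimate uniform in the random graph while preserving the clean factorization of the partition function over sub-cubes is the step that will require most of the work.
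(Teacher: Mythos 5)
Your overall architecture matches the paper closely: define a set function $\Phi$ from $-\beta^{-1}\log Z$ on cubes, check stationarity and near-subadditivity, invoke Akcoglu--Krengel, transfer from cubes to general Lipschitz domains, and prove $p$-growth. Where you genuinely diverge is the quasiconvexity step. The paper obtains quasiconvexity indirectly: it first proves the identification $\mathcal F^\pm(D,v)=\fint_D\overline W^\beta(\nabla v)\,\mathrm{d}x$ for all $v\in W^{1,p}$ (Theorem~\ref{quasiconvexetc}, which is needed anyway for the large deviation principle), combines it with the strong-$L^p$ lower semicontinuity of $\mathcal F^+$ (Lemma~\ref{lsc}) and the compact embedding $W^{1,p}\hookrightarrow L^p$ to deduce weak $W^{1,p}$-lower semicontinuity, and then invokes the standard necessary condition. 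Your route is more direct: a Jensen-type inequality from decoupling the partition function over a piecewise-affine subdivision. That is a legitimate alternative and avoids setting up the whole $\mathcal F^\pm$ machinery if one only wants Theorem~\ref{th:W}.

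However, there is a gap that affects both the statement of the theorem and your quasiconvexity argument: you never establish continuity of $\Lambda\mapsto\overline W^\beta(\Lambda)$. This matters in two places. First, the subadditive ergodic theorem gives $\lim_\e\mathcal E^\beta_\e(D,\overline{\varphi}_\Lambda)=\overline W^\beta(\Lambda)$ on a $\Lambda$-\emph{dependent} full-probability event, while the theorem asserts the convergence almost surely \emph{for all} $\Lambda$ simultaneously. Passing from $\Q^{n\times d}$ to $\R^{n\times d}$ requires a deterministic continuity estimate on the $\e$-level free energies; this is precisely the content of the paper's Proposition~\ref{fullexistence} (the rather delicate cutoff/change-of-variables argument producing the two-sided comparison between $\overline W_\kappa(\Lambda,O_{2\delta})$ and $\overline W_{3\kappa}(\Lambda',O)$), and it is not a routine step. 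Second, in your quasiconvexity argument the passage from $\fint_Q\overline W^\beta(\Lambda+\nabla\psi_k)$ to $\fint_Q\overline W^\beta(\Lambda+\nabla\psi)$ as $k\to\infty$ also requires continuity of $\overline W^\beta$ together with dominated convergence from the $p$-growth; without it, you only obtain the Jensen inequality for piecewise-affine test functions whose gradients live in the countable set where a.s.\ convergence is known, and you would then have to first extract local Lipschitz regularity from rank-one convexity to close the argument. Finally, a smaller point: the lower $p$-growth bound for $-\beta^{-1}\log Z$ is not just ``$p$-coercivity plus a discrete Poincar\'e inequality''; because $Z$ is an integral over a high-dimensional state space, one must also control the entropy (the volume of the configuration set), which the paper does via the spanning tree/forest estimate of Lemmas~\ref{treeargument}--\ref{forrestargument}. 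Your ``cf.~Lemma~\ref{lb}'' effectively defers to that, but the tree argument deserves explicit mention as the key tool rather than the Poincar\'e inequality.
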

The extension of this result to general boundary conditions $\varphi \in \Lip(D,\R^n)$ is as follows,
and implies the convergence of the Helmholtz free energy to the infimum of 
an energy functional associated with the free energy density $\overline{W}^\beta$ --- a continuum hyperelastic model.
\begin{theorem}\label{th:Helmholtz}
Assume Hypothesis~\ref{Hypo1} and for all $\beta>0$, let $\overline{W}^\beta$ be the well-defined energy density of Theorem~\ref{th:W}.
Then for all bounded Lipschitz domains $D\subset \R^d$ we have almost surely
for all boundary conditions $\varphi \in \Lip(D,\R^n)$
$$
\lim_{\e \downarrow 0} \mathcal{E}_{\e}^\beta(D,\varphi)\,=\,\inf\Big\{ \fint_D \overline W^\beta(\nabla u(x))dx\,:\,
u\in \varphi+W^{1,p}_0(D)\Big\}.
$$
\end{theorem}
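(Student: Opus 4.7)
I would prove the theorem by establishing matching upper and lower bounds on $\mathcal{E}_\e^\beta(D,\varphi)$, in both cases reducing to the affine boundary-condition case supplied by Theorem~\ref{th:W} via a mesoscopic decomposition of $D$ into cubes. Since $\overline{W}^\beta$ is quasiconvex with $p$-growth by Theorem~\ref{th:W}, the right-hand side $\mathcal{F}^\beta(\varphi):=\inf\{\fint_D \overline{W}^\beta(\nabla u):u\in\varphi+W^{1,p}_0(D)\}$ is a bona fide (continuum) relaxation functional, whose infimum is attained by the direct method. The structure of the argument mirrors a $\Gamma$-convergence proof for the discrete free energy viewed as a function of $\varphi$, except that the logarithmic dependence on the partition function makes the lower bound qualitatively different from the Hamiltonian case.

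\textbf{Upper bound.} Fix $u\in\varphi+W^{1,p}_0(D)$. Density of piecewise affine maps in $W^{1,p}$ combined with a boundary-layer correction (exploiting $p$-growth to ensure $u_\delta=\varphi$ in a $\delta$-neighborhood of $\partial D$ at negligible cost) yields a Lipschitz piecewise affine approximation $u_\delta$ with $\fint_D \overline{W}^\beta(\nabla u_\delta)\to\fint_D \overline{W}^\beta(\nabla u)$ as $\delta\downarrow 0$. I cover the affinity region by finitely many cubes $\{Q_k\}$ with $u_\delta|_{Q_k}=\Lambda_k\,\cdot\,+b_k$ and lower-bound $Z^\beta_{\e,D,\varphi}$ by restricting integration to deformations whose values on a thin discrete shell around each $\partial Q_k$ lie within distance $1$ of $\Lambda_k x+b_k$. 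The $p$-growth of $f$ and $W$, together with the uniformly bounded vertex degree coming from Definition~\ref{defadmissible}(ii)--(iii), control both the coupling cost from edges crossing $\partial Q_k$ and the interface terms of $H_{\mathrm{vol},\e}$, producing
\begin{equation*}
\mathcal{E}_\e^\beta(D,\varphi)\leq\sum_k\frac{|Q_k|}{|D|}\mathcal{E}_\e^\beta(Q_k,\Lambda_k\,\cdot\,+b_k)+o_{\delta,\e}(1).
\end{equation*}
Theorem~\ref{th:W} on each $Q_k$ followed by the successive limits $\e\downarrow 0$, $\delta\downarrow 0$, and the infimum over $u$ then yields the $\limsup$ inequality.

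\textbf{Lower bound and main obstacle.} For the converse I would proceed by disintegration. Fix a mesoscale $\eta$ with $\e\ll\eta\ll 1$ and cover $D$ by cubes $\{Q_k^\eta\}$. Writing the partition function as an iterated integral first over boundary traces $b_k$ on (a thin discrete layer near) $\partial Q_k^\eta$ and then over the interior of each cube, Jensen's inequality applied to $-\log(\cdot)$ gives
\begin{equation*}
\mathcal{E}_\e^\beta(D,\varphi)\geq\sum_k\frac{|Q_k^\eta|}{|D|}\,\mathbb{E}_{\rho_k}\!\left[\mathcal{E}_\e^\beta(Q_k^\eta,b_k)\right]-o_{\eta,\e}(1),
\end{equation*}
where $\rho_k$ is the induced trace distribution under the Gibbs measure. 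Coercivity of $H_\e$ together with the Poincar\'e-type estimate of Lemma~\ref{lb} ensures that the gradient of $b_k$ concentrates in $L^p$ near an affine datum $\Lambda_k^\eta x+c_k^\eta$, yielding a family of piecewise constant fields $\Lambda^\eta$ bounded in $L^p$. After extraction, $\Lambda^\eta\rightharpoonup \nabla u^*$ for some $u^*\in\varphi+W^{1,p}_0(D)$ (the Dirichlet condition being inherited from the hard constraint in $\mathcal{B}_\e(D,\varphi)$). Applying Theorem~\ref{th:W} cube-by-cube and invoking quasiconvex lower semicontinuity of $u\mapsto\int_D \overline{W}^\beta(\nabla u)$ yields $\liminf_\e\mathcal{E}_\e^\beta(D,\varphi)\ge\fint_D\overline{W}^\beta(\nabla u^*)\ge\mathcal{F}^\beta(\varphi)$. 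The hardest step will be the disintegration itself: unlike for the Hamiltonian, the lower bound is not a pointwise infimum, so one must (i) control the coupling cost across cube interfaces, which is particularly delicate for the non-local volumetric term whose determinants average over Voronoi cells straddling $\partial Q_k^\eta$, and (ii) ensure that entropic fluctuations at the mesoscopic scale $\eta$ do not mask the affine-gradient structure of typical boundary traces. The $p$-growth of Hypothesis~\ref{Hypo1} together with standard $L^p$-tail bounds for the Gibbs measure should suffice to make both errors vanish uniformly as $\eta\downarrow 0$.
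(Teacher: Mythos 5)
Your upper bound is essentially sound and parallels Step 1 of the paper's Theorem~\ref{quasiconvexetc}: a piecewise-affine approximation, a mesoscopic cube decomposition, a restriction of the domain of integration near the cube boundaries to obtain $Z^\beta_{\e,D,\varphi}\ge\prod_k Z^\beta_{\e,Q_k,\varphi_{\Lambda_k}}\cdot e^{-\beta\,(\text{interface cost})}$, and an appeal to Theorem~\ref{th:W} cube by cube. This gives $\limsup_\e\mathcal{E}^\beta_\e(D,\varphi)\le\fint_D\overline W^\beta(\nabla u)$ for piecewise affine $u$, and then for all $u\in\varphi+W^{1,p}_0$ by density, $p$-growth, and continuity of $\overline W^\beta$.

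The lower bound, however, has a genuine gap. You propose to disintegrate $Z^\beta_{\e,D,\varphi}$ over the traces on the mesoscopic shells and apply ``Jensen for $-\log$'', claiming $\mathcal{E}^\beta_\e(D,\varphi)\ge\sum_k\frac{|Q^\eta_k|}{|D|}\,\mathbb{E}_{\rho_k}[\mathcal{E}^\beta_\e(Q^\eta_k,b_k)]-o$. But the inequality points the other way. Writing (up to interface coupling) $Z^\beta_{\e,D,\varphi}=\int e^{-\beta H_{\rm shell}(b)}\prod_k Z_k(b_k)\,\mathrm{d}b$, the Gibbs variational principle gives the \emph{identity} $\log Z^\beta_{\e,D,\varphi}=\mathbb{E}_{\rho}\big[\sum_k\log Z_k(b_k)\big]-\beta\,\mathbb{E}_{\rho}[H_{\rm shell}]+H(\rho)$, where $\rho$ is the induced trace measure and $H(\rho)$ its differential entropy. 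Rearranging, $-\log Z^\beta_{\e,D,\varphi}=-\mathbb{E}_{\rho}\big[\sum_k\log Z_k(b_k)\big]+\beta\,\mathbb{E}_\rho[H_{\rm shell}]-H(\rho)$, so the desired lower bound requires $H(\rho)\le o(\beta|D_\e|)$ (together with the coupling estimate). You give no control on $H(\rho)$, and in fact this is the crux: free energy is \emph{subadditive}, so the disintegration naturally yields an \emph{upper} bound on $\mathcal{E}^\beta_\e(D,\varphi)$ in terms of the cube free energies, not a lower bound. This is exactly why the cube decomposition is harmless for the $\Gamma$-limit of $H_\e$ (a pointwise infimum commutes with restriction) but qualitatively insufficient for the logarithm of a partition function.

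The paper avoids this by never disintegrating $Z^\beta_{\e,D,\varphi}$ directly. Instead it defines $\mathcal{F}^\pm(O,v)$ via the \emph{bounded} $L^p$-neighborhoods $\mathcal{N}_p(v,O,\e,\kappa)$, so that the mesoscopic factorization is a \emph{set inclusion} $\mathcal{N}_p(v,D,\e,\kappa)\subset\prod_z\mathcal{N}_p(L_{\rho,\xi}v,Q(z,\rho),\e,\kappa_z)\times S^1_\e\times S^2_\e$ followed by Fubini, which gives $Z_{\e,D}(\mathcal{N}_p(\cdots))\le\prod_z Z_{\e,Q(z,\rho)}(\cdots)\cdot(\text{bounded shell contributions})$ without any uncontrolled integral over traces; this is Step 2 of the proof of Theorem~\ref{quasiconvexetc}, which moreover relies on the Koteck\'y--Luckhaus blow-up lemma (to linearize $v$ locally) and on the interpolation inequality of Proposition~\ref{interpolation} (to enforce affine boundary data inside each small cube at controlled cost). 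The passage from $\mathcal{F}^\pm$ to the Helmholtz free energy with boundary datum $\varphi$ is then achieved indirectly: the paper proves the full large deviation principle (using exponential tightness to reduce to compact sets, a finite covering by small $L^p$-balls, and Lemmas~\ref{tightness} and~\ref{compactness}), and Theorem~\ref{th:Helmholtz} follows in one line by testing the LDP bounds with $U=V=L^p(D,\mathbb{R}^n)$ in Step~3 of the proof of Theorem~\ref{LDP}. Your proposal contains no substitute for any of these ingredients --- bounded neighborhoods, blow-up lemma, interpolation inequality, exponential tightness and covering --- and the entropic estimate it silently requires is precisely what these tools are designed to supply. As a minor additional point, the concentration of the trace $b_k$ near an affine datum and the identification of the weak limit of the piecewise constant field $\Lambda^\eta$ with a gradient $\nabla u^*$ also need justification (the paper's Lemma~\ref{compactness} handles the analogous step via difference quotients), and this is not addressed.
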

We conclude the study of the thermodynamic limit by establishing a large-deviation principle which 
ensures that the Gibbs measure concentrates as $\e \downarrow 0$ on states that minimize
the energy functional associated with $\overline{W}^\beta$ in the set of continuum deformations 
that satisfy the boundary condition $\varphi$. For a general introduction to the subject we refer to \cite{DeZe}.
\begin{theorem}\label{LDP}
Assume Hypothesis~\ref{Hypo1} and for all $\beta>0$, let $\overline{W}^\beta$ be the well-defined energy density of Theorem~\ref{th:W}.
Then for all bounded Lipschitz domains $D\subset \R^d$, almost surely, and for all boundary conditions $\varphi \in \Lip(D,\R^n)$, the measure $\mu^\beta_{\e,D,\varphi}$ satisfies a strong large deviation principle with speed ${(\beta}|D_{\e}|{)}^{-1}$ and good rate functional $\mathcal{I}^\beta_{D,\varphi}:L^p(D,\R^n)\to [0,+\infty]$ finite only on $\varphi+W^{1,p}_0(D,\R^n)$
and characterized by 
\begin{equation*}
\varphi+W^{1,p}_0(D,\R^n)\ni u \mapsto \mathcal{I}^\beta_{D,\varphi}(u):=\fint_D \overline W^\beta(\nabla u(x))\,\mathrm{d}x-\inf_{v\in \varphi+W^{1,p}_0(D,\R^n)}\fint_D \overline W^\beta(\nabla v(x))\,\mathrm{d}x.
\end{equation*}
More precisely, for every open and closed sets $U\subset L^p(D,\R^n)$ and $V\subset L^p(D,\R^n)$,
we have
\begin{eqnarray*}
\liminf_{\e\downarrow 0}\frac{1}{|D_{\e}|}\log(\mu^\beta_{\e,D,\varphi}(U))\geq -\inf_{u\in U}\mathcal{I}^\beta_{D,\varphi}(u),
\\
\limsup_{\e\downarrow 0}\frac{1}{|D_{\e}|}\log(\mu^\beta_{\e,D,\varphi}(V))\leq -\inf_{u\in V}\mathcal{I}^\beta_{D,\varphi}(u).
\end{eqnarray*}
\end{theorem}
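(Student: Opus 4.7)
The plan is to recast the LDP as a convergence statement for \emph{restricted} free energies and to leverage Theorem~\ref{th:Helmholtz} in a localized form. For any Borel $V\subset L^p(D,\R^n)$, set
\begin{equation*}
Z^\beta_{\e,D,\varphi}(V):=\int_{\Pi_{1/\e}V\cap\mathcal{B}_\e(D,\varphi)}\exp(-\beta H_\e(D,u))\,\mathrm{d}u,\quad \mathcal{F}^\beta_\e(V):=-\frac{1}{\beta|D_\e|}\log Z^\beta_{\e,D,\varphi}(V),
\end{equation*}
so that $\frac{1}{|D_\e|}\log\mu^\beta_{\e,D,\varphi}(V)=-\beta(\mathcal{F}^\beta_\e(V)-\mathcal{E}^\beta_\e(D,\varphi))$. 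Since $\mathcal{E}^\beta_\e(D,\varphi)\to m^\beta_{D,\varphi}:=\inf_{v\in\varphi+W^{1,p}_0}\fint_D\overline{W}^\beta(\nabla v)$ almost surely by Theorem~\ref{th:Helmholtz}, the two LDP inequalities reduce, respectively, to
\begin{equation*}
\liminf_{\e\downarrow0}\mathcal{F}^\beta_\e(V)\ge\inf_{u\in V}\fint_D\overline{W}^\beta(\nabla u)\quad\text{for $V$ closed, and}\quad\limsup_{\e\downarrow0}\mathcal{F}^\beta_\e(U)\le\inf_{u\in U}\fint_D\overline{W}^\beta(\nabla u)\quad\text{for $U$ open},
\end{equation*}
with the convention $\fint_D\overline{W}^\beta(\nabla u):=+\infty$ whenever $u\notin\varphi+W^{1,p}_0$.

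For the lower bound on the Gibbs measure (open sets), pick $u^\ast\in U\cap(\varphi+W^{1,p}_0)$ almost realizing $\inf_U\fint_D\overline{W}^\beta(\nabla\cdot)$, and approximate it by a piecewise affine $u^h$ agreeing with $\varphi$ near $\partial D$ (density of Lipschitz maps in $\varphi+W^{1,p}_0$). Choose $\delta>0$ with the $L^p$-ball $B_\delta(u^\ast)\subset U$. On a partition of $D$ into small cubes on which $u^h$ is affine with gradient $\Lambda_i$, Theorem~\ref{th:W} controls each local partition function with affine boundary condition $\overline{\varphi}_{\Lambda_i}$; summing the local bounds and identifying $u^h$ with the collection of affine pieces yields
\begin{equation*}
\limsup_{\e\downarrow0}\mathcal{F}^\beta_\e(B_\delta(u^\ast))\le\fint_D\overline{W}^\beta(\nabla u^h)+o_h(1),
\end{equation*}
and continuity of $\overline{W}^\beta$ under the $p$-growth of Theorem~\ref{th:W} lets $h\to0$ and then $\eta\to 0$.

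For the upper bound on the Gibbs measure (closed sets), the coercive $p$-growth of $f$ in Hypothesis~\ref{Hypo1} provides exponential tightness: the Gibbs mass of the complement of a large $L^p$-ball decays exponentially in $|D_\e|$. This reduces the statement to the case of $V$ relatively compact in $L^p$. Covering such a $V$ by finitely many balls $B_{\delta_i}(u_i)$ with $u_i\in V$ and using subadditivity
\begin{equation*}
Z^\beta_{\e,D,\varphi}(V)\le\sum_iZ^\beta_{\e,D,\varphi}(B_{\delta_i}(u_i)),
\end{equation*}
the task becomes the local lower bound
\begin{equation*}
\liminf_{\delta\downarrow0}\liminf_{\e\downarrow0}\mathcal{F}^\beta_\e(B_\delta(u))\ge\fint_D\overline{W}^\beta(\nabla u)\qquad\text{for every }u\in L^p(D,\R^n).
\end{equation*}

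This local lower bound is the main obstacle: it is the dual, at the level of partition functions, of the $\Gamma$-$\liminf$ inequality underlying Theorem~\ref{th:Helmholtz}. The strategy is to apply Theorem~\ref{th:Helmholtz} on a slightly shrunk subdomain $D'\Subset D$ with the boundary condition $u|_{\partial D'}$ and obtain the bound $\fint_{D'}\overline{W}^\beta(\nabla u)$; the bulk $L^p$-constraint must then be upgraded into a genuine Dirichlet condition at $\partial D'$ via a cut-off/matching argument, paying an $o_\delta(1)$ boundary-layer error controlled by $p$-growth. The delicate point is that the discrete constraint $\mathcal{B}_\e(D,\varphi)$ only fixes values in a thin $C_0$-neighborhood of $\partial D_\e$, while the $L^p$-constraint gives only bulk control; bridging the two requires a quantitatively stable version of Theorem~\ref{th:Helmholtz} under perturbations of the boundary data, extracted from the same cut-off/gluing machinery used to derive Theorem~\ref{th:Helmholtz} from Theorem~\ref{th:W}.
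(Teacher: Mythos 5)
Your high-level reduction is sound: you correctly rewrite $\log\mu^\beta_{\e,D,\varphi}(V)$ in terms of restricted free energies, you correctly identify the exponential tightness step (which in the paper is Lemma~\ref{exptight}, built on the compactness Lemma~\ref{compactness}), and the finite-ball cover plus subadditivity for the closed-set bound is exactly the paper's Step~2. Your open-set argument — approximate the competitor by a piecewise affine $u^h$ matching $\varphi$ near $\partial D$, tile $D$ by cubes on which $u^h$ is affine, and apply Theorem~\ref{th:W} on each cube — is a correct reproof of the inequality $\mathcal{F}^+(D,v)\le\frac{1}{|D|}\int_D\overline W^\beta(\nabla v)$, which the paper isolates as Step~1 of Theorem~\ref{quasiconvexetc}; the paper phrases the open-set argument slightly differently (using Proposition~\ref{interpolation} to compare soft boundary conditions with the unconstrained neighbourhood $\mathcal{N}_p$), but the mathematics is essentially the same. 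There is, however, a genuine flaw in your strategy for the local lower bound behind the closed-set estimate.

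You claim that applying Theorem~\ref{th:Helmholtz} on a shrunk domain $D'\Subset D$ with boundary data $u|_{\partial D'}$ yields the bound $\fint_{D'}\overline W^\beta(\nabla u)$. It does not: Theorem~\ref{th:Helmholtz} gives
\[
\lim_{\e\downarrow 0}\mathcal{E}^\beta_\e(D',u|_{\partial D'})=\inf\Big\{\fint_{D'}\overline W^\beta(\nabla w):w\in u+W^{1,p}_0(D',\R^n)\Big\},
\]
which is strictly smaller than $\fint_{D'}\overline W^\beta(\nabla u)$ whenever $u$ is not energy-minimal for its own trace — i.e.~generically. So this route systematically under-estimates the rate function: it would give (at best) the Helmholtz free energy of the clamped problem on $D'$, not the quasi-convex integral of $\nabla u$ itself. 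There is also a secondary issue: Theorem~\ref{th:Helmholtz} as stated requires Lipschitz boundary data, and $u|_{\partial D'}$ for $u\in W^{1,p}$ is not Lipschitz. The $L^p$-ball constraint $B_\delta(u)$ that you actually have restricts the bulk behaviour of configurations near $u$, and it is precisely this bulk closeness — not matching of boundary traces — that forces the discrete system to pay the integral $\int_D\overline W^\beta(\nabla u)$.

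What the paper does instead (Step~2 of Theorem~\ref{quasiconvexetc}) is a blow-up argument in the spirit of Koteck\'y--Luckhaus: extend $u$ to a compactly supported $W^{1,p}$ function, choose (by their blow-up lemma, cf.~\eqref{blowup}) a lattice $\xi+\rho\Z^d$ of Lebesgue points where $u$ is $L^p$- and $W^{1,p}$-close to its piecewise affine linearization, decompose the constraint set $\mathcal{N}_p(u,D,\e,\kappa)$ into a product over the small cubes $Q(z,\rho)$ and thin boundary stripes, dominate the stripe contributions via Lemma~\ref{forrestargument}, and finally on each cube invoke the interpolation inequality Proposition~\ref{interpolation} (in the rescaled form of Remark~\ref{invariance}) to pass from the $\mathcal{N}_p$-constraint with local data $\kappa_z$ to the partition function with affine boundary data $\nabla u(z)$, for which Lemma~\ref{equivalent} identifies the limit with $\overline W^\beta(\nabla u(z))$. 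Summing over the lattice and taking $\rho\to0$ recovers $\int_D\overline W^\beta(\nabla u)$. The interpolation inequality is unavoidable here: it is what converts the soft $\mathcal{N}_p$-localization into a statement with boundary conditions without changing the asymptotic free energy, and nothing in your sketch substitutes for it. As a structural remark, in the paper Theorem~\ref{th:Helmholtz} is itself obtained as the case $U=V=L^p(D,\R^n)$ of the two LDP inequalities, so using it as a black box here is a mild circularity that should be resolved by proving the two free-energy inequalities directly from Theorem~\ref{quasiconvexetc}.
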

\begin{remark}\label{r.quasiconvexity}
In the scalar case in some regimes the strict convexity of $\overline{W}^{\beta}$ is known (see, e.g., \cite{AKM,CDM,CD,DGI}). Given such a result the large deviation principle immediately implies that the Gibbs measures converge to the Dirac measure supported on the unique minimizer of the rate functional. However, in our vectorial setting, in general we don't even expect convexity of $\overline{W}^{\beta}$; see also \cite[Lemma 8]{KoLu}.
\end{remark}
In the following section we complete the study of the thermodynamic limit by analyzing the behavior of $\overline W^\beta$
and $\mathcal I^\beta_{D,\varphi}$ when the temperature tends to zero, i.e.~in the regime $\beta\uparrow \infty$.

\subsection{Zero-temperature limit}

A natural guess for the zero-temperature limit of the Helmholtz free energy $\overline W^\beta$ is the large-volume
limit of the infimum of the Hamiltonian $H_\e(D,\cdot)$.
The following result establishes rigorously the $\Gamma$-convergence of the rate functional $\mathcal I^\beta_{D,\varphi}$
towards the $\Gamma$-limit of the discrete Hamiltonian {studied in \cite{ACG2}}, and therefore indeed implies the commutation of the limits 
$\beta \uparrow +\infty$ and $\e \downarrow 0$. Note that we require the stronger Hypothesis~\ref{Hypo2}.
\begin{theorem}\label{th:smallT}
Assume Hypothesis~\ref{Hypo2} and for all $\beta>0$, let $\overline{W}^\beta$ be the well-defined energy density of Theorem~\ref{th:W}
and for all Lipschitz domains $D$ and boundary conditions $\varphi \in \Lip(D,\R^n)$, let $\mathcal{I}^\beta_{D,\varphi}:L^p(D,\R^n)\to [0,+\infty]$ be the rate functional of Theorem~\ref{LDP}.
Then, as $\beta \uparrow +\infty$, $\mathcal{I}^\beta_{D,\varphi}$ almost-surely $\Gamma(L^p)$-converges towards the integral functional
$\mathcal{I}^\infty_{D,\varphi}:L^p(D,\R^n)\to [0,+\infty]$ finite only on $\varphi+W^{1,p}_0(D,\R^n)$
and characterized by 
\begin{equation*}
\varphi+W^{1,p}_0(D,\R^n)\ni u \mapsto \mathcal{I}^\infty_{D,\varphi}(u):=\fint_D \overline W^\infty(\nabla u(x))\,\mathrm{d}x-\inf_{v\in \varphi+W^{1,p}_0(D,\R^n)}\fint_D \overline W^\infty(\nabla v(x))\,\mathrm{d}x,
\end{equation*}
where $\overline W^\infty$ is an almost-surely well-defined quasiconvex energy density satisfying the two-sided growth condition 
\begin{equation}\label{e.p-growth2}
\forall\Lambda \in \R^{d\times n}:\qquad \frac{1}{C} |\Lambda|^p-C \le \overline W^{{\infty}}(\Lambda)\le C( |\Lambda|^p+1),
\end{equation}
and given for all $\Lambda \in \R^{n\times d}$ by 
$$
\overline{W}^\infty(\Lambda):=\lim_{\e \downarrow 0} \inf_{u \in \mathcal{B}_{\e}(D',\varphi_\Lambda)} {\frac{1}{|D'_{\e}|}}H_{\e}(D',u),
$$
for any Lipschitz bounded domain $D' \subset \R^d$. In addition, for all $\Lambda \in \R^{n\times d}$, 
\begin{equation}\label{e.diff-Tinfinite}
|\overline W^\infty(\Lambda)-\overline W^\beta(\Lambda)|\,\le\,\frac{\log \beta}{\beta}C\big(1+|\Lambda|^{p-1}\big) .
\end{equation}
\end{theorem}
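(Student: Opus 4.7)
The centerpiece of the proof is the quantitative estimate \eqref{e.diff-Tinfinite}; once it is established, all remaining assertions follow by soft arguments combining Theorem~\ref{th:W} with the $\Gamma$-convergence of $H_\e(D,\cdot)/|D_\e|$ towards an integral functional of a quasiconvex density (standard under Hypothesis~\ref{Hypo2}, cf.\ \cite{ACG2}). My plan is therefore to prove, for every $\Lambda$, the two-sided bound
\[
\tfrac{1}{|D_\e|}\inf_{\mathcal{B}_\e(D,\overline{\varphi}_\Lambda)} H_\e(D,\cdot) - \tfrac{C}{\beta} \;\le\; \mathcal{E}^\beta_\e(D,\overline{\varphi}_\Lambda) \;\le\; \tfrac{1}{|D_\e|}\inf_{\mathcal{B}_\e(D,\overline{\varphi}_\Lambda)} H_\e(D,\cdot) + C(1+|\Lambda|^{p-1})\tfrac{\log\beta}{\beta},
\]
and then pass $\e\downarrow 0$ using Theorem~\ref{th:W} on the left-hand side and the $\Gamma$-convergence of the Hamiltonians on the right-hand side to identify both $\overline{W}^\beta(\Lambda)$ and $\overline{W}^\infty(\Lambda)$.

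For the lower bound on $\mathcal{E}^\beta_\e$ (upper bound on $Z^\beta_{\e,D,\overline{\varphi}_\Lambda}$) I would use the layer-cake representation together with the $p$-coercivity of $f$ from Hypothesis~\ref{Hypo2}, the connectedness property (iv) of Definition~\ref{defadmissible} to invoke a discrete Poincar\'e inequality against the centred variable $v=u-\overline{\varphi}_\Lambda/\e$, and conclude that $H_\e(D,u)-\inf H_\e \gtrsim \frac{1}{C}\sum_{x\in D_\e^\calL}|v(x)|^p - C|D_\e|$. The volume of the sub-level sets of $H_\e$ in $\R^{n|D_\e^\calL|}$ is then controlled by the volume of $\ell^p$-balls, and evaluating $\beta\int_0^\infty e^{-\beta t}(\cdots)^{N/p}/\Gamma(N/p+1)\,dt$ with $N\sim|D_\e|$ gives $Z \le \exp(-\beta\inf H_\e + O(|D_\e|))$, which is the claimed $O(1/\beta)$ error.

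For the upper bound on $\mathcal{E}^\beta_\e$ I would localize around a minimizer $u^*$ by bounding $Z \ge \delta^{n|D_\e^\calL|}\exp\!\big(-\beta\max_{\|u-u^*\|_\infty\le\delta}H_\e(D,u)\big)$. The Lipschitz estimates \eqref{Hyp-2.2}, together with the a priori energy bound $H_\e(D,u^*)\le C(1+|\Lambda|^p)|D_\e|$ obtained by testing against the affine competitor $x\mapsto\Lambda x$ (and H\"older's inequality to pass from the $p$-th to the $(p-1)$-th power on edge differences), produce $\max_{\|u-u^*\|_\infty\le\delta}H_\e(D,u)-\inf H_\e \le C\delta(1+|\Lambda|^{p-1})|D_\e|$ for $\delta\le 1$; the choice $\delta=1/\beta$ then delivers the announced $C(1+|\Lambda|^{p-1})\log\beta/\beta$ error. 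The main technical hurdle I anticipate lies here: one needs a \emph{deterministic} constant controlling the sum $\sum_{(x,y)\in\B,\,x,y\in D_\e}|u^*(x)-u^*(y)|^{p-1}$ uniformly in the random graph and in $\e$, and the argument must be repeated for the volumetric contribution $H_{\rm vol,\e}$, where Lipschitz continuity of $W$ with slope growing like $|\lambda|^{p/d-1}$ forces a parallel a priori bound on $\det_{\mathcal{C}_1(x)}(\nabla u^*_{\rm aff})$ and exploits the assumption $p\ge d$.

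Finally, having \eqref{e.diff-Tinfinite} and the well-definedness of $\overline{W}^\infty$, the $\Gamma(L^p)$-convergence of $\mathcal{I}^\beta_{D,\varphi}$ to $\mathcal{I}^\infty_{D,\varphi}$ reduces to that of the integral functionals $\mathcal{F}^\beta(u):=\fint_D \overline{W}^\beta(\nabla u)$ (extended by $+\infty$ off $\varphi+W^{1,p}_0$), since equi-coercivity from the uniform $p$-growth forces $\inf\mathcal{F}^\beta\to\inf\mathcal{F}^\infty$. The pointwise bound $|\mathcal{F}^\beta(u)-\mathcal{F}^\infty(u)|\le C(\log\beta/\beta)\fint_D(1+|\nabla u|^{p-1})$ is trivial along $W^{1,p}$-bounded sequences, so the $\Gamma$-liminf inequality follows by combining it with the weak $W^{1,p}$-lower semicontinuity of $\mathcal{F}^\infty$ (itself a consequence of the quasiconvexity of $\overline{W}^\infty$, which passes from the quasiconvexity of each $\overline{W}^\beta$ via dominated convergence in Morrey's cell formula, using the $p$-growth as a majorant). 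The $\Gamma$-limsup is the constant sequence $u_\beta\equiv u$. This closes the proof modulo the quantitative-bound step, which is the real work.
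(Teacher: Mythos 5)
Your sketch splits into two quantitative inequalities at the $\e$-level; one of them is sound (and in fact a nice simplification of the paper's argument), but the other contains a genuine gap.

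\medskip

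\textbf{Upper bound on $\mathcal{E}^\beta_\e$ (lower bound on $Z$).} Localizing in an $\ell^\infty$-ball $\{\|u-u^*\|_\infty\le\delta\}$ around a minimizer $u^*\in\mathcal{BC}_\e(Q,\varphi_\Lambda)$, then using the Lipschitz estimate \eqref{Hyp-2.2} together with the a priori bound $\|\nabla_\B u^*\|^p_{\ell^p_\e(Q)}\le C(1+|\Lambda|^p)|Q^\calL_\e|$ and H\"older, and finally choosing $\delta=1/\beta$, does give $\mathcal{E}^\beta_\e\le\frac{1}{|Q_\e|}\inf H_\e+C(1+|\Lambda|^{p-1})\frac{\log\beta}{\beta}$. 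This is a genuinely different route from the paper's Lemma~\ref{tempub}, which instead changes variables $u\mapsto u-\hat{u}_\e$, factorizes into interior and boundary degrees of freedom, and uses the coarea formula with $f_\e(y)=|Q^\calL_\e|^{-1/p}|y|_p$ plus a lower bound on the $\ell^p$-sphere surface measure (Lemma~\ref{surfacevolume}). Your version is shorter, gives the same exponent in $\beta$ and $\Lambda$, and only needs the Lebesgue volume of an $\ell^\infty$-cube, which is elementary.

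\medskip

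\textbf{Lower bound on $\mathcal{E}^\beta_\e$ (upper bound on $Z$).} Here the proposal breaks down. You invoke the discrete Poincar\'e inequality to conclude
\[
H_\e(D,u)-\inf H_\e\gtrsim \tfrac{1}{C}\sum_{x\in D_\e^\calL}|v(x)|^p-C|D_\e|,
\]
but Lemma~\ref{poincare} only gives $\|v\|^p_{\ell^p_\e(O)}\le \frac{C}{\e^p}\big(\|\nabla_\B v\|^p_{\ell^p_\e(O)}+\e^{1-d}\big)$, i.e.\ $\|\nabla_\B v\|^p\ge \frac{\e^p}{C}\|v\|^p-\e^{1-d}$. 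The prefactor $\e^p$ is not cosmetic: carrying it through your layer-cake argument, the sub-level set $\{H_\e-\inf\le t\}$ is contained in an $\ell^p$-ball of radius $\sim\e^{-1}(t+|D_\e|)^{1/p}$ rather than $\sim(t+|D_\e|)^{1/p}$, so the Stirling-controlled volume picks up an extra $\e^{-N}$ with $N\sim n|D_\e^\calL|\sim |D|\e^{-d}$. After taking $-\frac{1}{\beta|D_\e|}\log$ the estimate yields a spurious term $\sim\frac{n\log\e}{\beta}\to-\infty$ as $\e\downarrow0$, so no useful lower bound survives the thermodynamic limit. This is precisely why the paper's Lemma~\ref{templb} avoids Poincar\'e altogether: it splits $\mathcal{B}_\e(Q,0)$ into small- and large-gradient parts (exploiting $H_\e(u)-H_\e(\tilde u_\e)\ge\max\{0,\frac1C\|\nabla_\B(u-\varphi_\Lambda)\|^p-C(1+|\Lambda|^p)|Q^\calL_\e|\}$), controls the large-gradient contribution by Lemma~\ref{forrestargument}, and bounds the volume of the small-gradient set by a unit-Jacobian linear change of variables along a spanning tree of $G_{Q,\e}$. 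That change of variables maps the small-gradient set into an $\ell^p$-ball of radius $\sim|Q^\calL_\e|^{1/p}$ in dimension $\sim n|Q^\calL_\e|$, whose volume is $\le(C(1+|\Lambda|))^{n|Q^\calL_\e|}$ by Stirling, with no $\e$-degeneracy. The conceptual distinction is that the spanning-tree map estimates the Lebesgue \emph{measure} of the constraint set directly in the gradient variables, whereas a pointwise Poincar\'e inequality inevitably introduces a scale-dependent constant when transferring control from gradients to values.

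\medskip

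\textbf{$\Gamma$-convergence step.} Once \eqref{e.diff-Tinfinite} is in place, the passage to $\Gamma$-convergence (locally uniform pointwise convergence of the densities, equicoercivity from uniform $p$-growth, quasiconvexity of $\overline W^\infty$, and subtraction of infima) matches the paper's proof of Theorem~\ref{th:smallT}; your observation that quasiconvexity of $\overline W^\infty$ can be propagated from the $\overline W^\beta$'s via dominated convergence in Morrey's formula is correct, though the paper takes it directly from the $\Gamma$-convergence result of~\cite{ACG2} (Theorem~\ref{Gamma-limit}).
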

{\begin{remark}\label{onsmallTtheorem}
Theorem~\ref{th:smallT} implies in particular that the minimizers of the rate functionals $\mathcal{I}^{\beta}_{D,\varphi}$ given by Theorem~\ref{LDP} at inverse temperature $\beta$ converge weakly in $W^{1,p}(D,\R^n)$ to minimizers of $I^{\infty}_{D,\varphi}$. Moreover, due to equicoercivity of both functionals, from \cite[Proposition~1.18]{GCB} we infer that for every open and closed sets $U\subset L^p(D,\R^n)$ and $V\subset L^p(D,\R^n)$
\begin{itemize}
	\item[(i)] $\limsup_{\beta\uparrow +\infty}\inf_{u\in U}\mathcal{I}^{\beta}_{D,\varphi}(u)\leq \inf_{u\in U}\mathcal{I}^{\infty}_{D,\varphi}(u)$;
	\item[(ii)] $\liminf_{\beta\uparrow +\infty}\inf_{u\in V}\mathcal{I}^{\beta}_{D,\varphi}(u)\geq \inf_{u\in V}\mathcal{I}^{\infty}_{D,\varphi}(u)$.
\end{itemize}
Those inequalities allow to pass to the limit $\beta\to +\infty$ in the inequalities of the large deviation principle. For quadratic functionals we shall prove a much stronger statement in Corollary \ref{cor.quad}, namely the limit free energy and the density of the $\Gamma$-limit differ only by a $\beta$-dependent constant (which does not affect minimization). This provides a rigorous justification of the so-called phantom model (for which the free energies of polymer-chains are assumed to be Gaussian), an elementary linear model of polymer physics (see e.g.~\cite[Section~7.2.2]{Rub}).
\end{remark}

\begin{remark}
If Hypothesis~\ref{Hypo2} is replaced by Hypothesis~\ref{Hypo3}, the conclusion \eqref{e.diff-Tinfinite} can be strengthened to
\begin{equation}\label{e.diff-Tinfinite-hypo3}
|\overline W^\infty(\Lambda)-\overline W^\beta(\Lambda)|\,\le\,\frac{\log \beta}{\beta}(C_2'' |\Lambda|^2+C''_p|\Lambda|^{p}+d\big) ,
\end{equation}
for some $C''_2$ and $C''_p$ depending on $d$, $p$, $C,C',C_2,C_2',C_p,C_p'$.
\end{remark}

\subsection{Application to polymer physics}\label{subsec:polyph}
In this last section of the introduction, we apply the above results to the physical model of polymer physics introduced above.
To this aim, we make precise the form of the free energies of isolated polymer-chains in function of the number of monomers in the chain. The Kuhn and {Gr\"un} formula (see e.g.~\cite{KuhnGrun} and \cite[Section~3.4]{Rub}) for the free energy of an isolated chain made of $N$ monomers of size $\ell$  with end-to-end length $L$ at temperature $\beta$ is given by
\begin{equation*} 
 f^\beta(L,N)\,:=\,\frac1\beta N\left( \frac{L}{N\ell}\theta\left(\frac{L}{N\ell}\right)+\log \frac{\theta\left(\frac{L}{N\ell}\right)}{\sinh\theta\left(\frac{L}{N\ell}\right)} \right),
\end{equation*}
where $\theta$ is the inverse of the Langevin function $t\mapsto \coth t- \frac{1}{t}$. In particular, $L\mapsto f^\beta(L,N)$ is a non-negative convex increasing function in the variable $L^2$, that vanishes at $L=0$ and blows up as $L\uparrow N\ell$.
This formula is based on a self-avoiding random bridge.
For technical considerations, we replace this function by a function with $p$-growth from above and below, which yields our starting point
\begin{equation}\label{e.KuhnGrun}
 f^{\beta,(p)}(L,N)\,:=\,\frac{N} \beta f^{(p)}\left(\frac{L}{N\ell}\right),
\end{equation}
where $f^{(p)}$ is a suitable approximation of $t\mapsto  f(t):=t\theta\left(t\right)+\log \frac{\theta\left(t\right)}{\sinh\theta\left(t\right)}$
(that remains convex and increasing). At order $p=10$, a Taylor-expansion (cf.~\cite{GLTV}) simply yields
\begin{equation*}
 f^{\beta,(10)}(L,N)\,=\,\frac{N}{\beta}\left[\frac{3}{2} \left( \frac{L}{N\ell} \right)^2 +\frac{9}{20}\left( \frac{L}{N\ell} \right)^4 + \frac{9}{350} \left( \frac{L}{N\ell} \right)^6 +\frac{81}{7000} \left( \frac{L}{N\ell} \right)^8 
+\frac{243}{673750} \left( \frac{L}{N\ell} \right)^{10}\right].
\end{equation*}

\medskip

Consider now an ergodic random graph $G^\circ=(\mathcal L,\B,\T)$,  a fixed inverse temperature $\beta^\circ\ge 1$, and fix $p$ (say, $p=10$).
Recall that we assume that the length of an edge $b\in \B$ of the random graph writes $\sqrt{N_b^\circ}\ell$, which we use to define the number $N_b^\circ$ of monomers  in the polymer-chain $b$.
We denote by $N^\circ:=\expec{N_b^\circ:b\in \B}$ the average number of monomers per polymer-chain in the graph.
We then rewrite \eqref{e.KuhnGrun} in terms of the deformation ratio
$\lambda=\frac{L}{\sqrt{N_b^\circ}\ell}$ as
\begin{equation*}
f^{\beta^\circ}(L,N_b^\circ)\,:=\,\frac{N_b^\circ}{\beta^\circ}  f^{(p)}\left(\lambda \frac{1}{\sqrt{N_b^\circ}}\right),
\end{equation*}
and make the volumetric term more precise by considering for some $K>0$
$$
W(\Lambda):=  \frac1K W_\vol(\det \Lambda),
$$
where $W_\vol:\R\to \R_+$ is a convex function that is minimal at $t=1$ and satisfies the growth condition
\begin{equation}
\forall t\ge 0:\quad W_\vol(t)\,\le\, 1+t^{\frac pd}.
\end{equation}
For all Lipschitz domains $D$ and microscopic deformations $u:\calL\cap D_{\e}\to\R^d$,
the discrete Hamiltonian takes the form for all $\e>0$
\begin{equation*} 
H_{\e}^\circ(D,u)=\sum_{\substack{(x,y)\in \B\\ x,y\in D_{\e}}}\frac{N_{xy}^\circ}{\beta^\circ} f^{(p)} \left(\frac{|u(x)-u(y)|}{|x-y|} \frac1{\sqrt{N^\circ_{xy}}}\right)+
\sum_{\mathcal{C} \in\mathcal{V}_{1,\e}(D)}|\mathcal{C}|\frac1KW_\vol \Big( {\det}_{\mathcal C}(\nabla u_{\rm aff})\Big),
%
\end{equation*}
which we rewrite in the equivalent form
\begin{equation}\label{e.good-form}
H_{\e}^\circ(D,u)=\frac{N^\circ}{\beta^\circ}\tilde H^\circ_{\e}(D,u),
\end{equation}
where $\tilde H^\circ_{\e}(D,u)$ is given   by
\begin{equation*}
 \tilde H^\circ_{\e}(D,u) := \sum_{\substack{(x,y)\in \B\\ x,y\in D_{\e}}}f^{\circ,(p)}_{xy}\left(\frac{|u(x)-u(y)|}{|x-y|}\right)+
 \sum_{\mathcal{C} \in\mathcal{V}_{1,\e}(D)}|\mathcal{C}|W_\vol^\circ \Big( {\det}_{\mathcal C}(\nabla u_{\rm aff})\Big),
\end{equation*}
and for all $\lambda\ge 0, t\in \R$,
\begin{equation*}
f^{\circ,(p)}_{xy}(\lambda)\,:=\,\frac{N^\circ_{xy}}{N^\circ} f^{(p)} \left(\lambda \frac1{\sqrt{N^\circ_{xy}}} \right),
\qquad
W_\vol^\circ(t)\,:=\, \frac{\beta^\circ}{N^\circ} \frac1KW_\vol(t).
\end{equation*}
In terms of scaling, since volumetric and entropic terms compete, we choose $\frac{\beta^\circ}K \sim 1$, in which case 
\eqref{Hyp-3.1} and \eqref{Hyp-3.2} are valid for $p=10$ with the constants
\begin{equation}\label{e.parameters}
C \sim \frac1{N^\circ}, C_2\sim \frac{3}{2{N^\circ}},C_{10} \sim \frac{243}{673750 \sqrt{N^\circ}^5}.
\end{equation}
We are in the position to apply our general results.
In what follows, $\beta^\circ$ and $N^\circ$ are fixed physical quantities,
whereas $\beta_1$ and $N_1$ are  dummy variables.

\subsubsection{Thermodynamic limit for $H_{\e}^\circ(D,u)$}

By Theorems~\ref{th:W}, \ref{th:Helmholtz}, and~\ref{LDP}, for all temperatures $\beta_1$ there exists a macroscopic energy density $\overline W^{\circ,\beta_1}_{N^\circ}$ associated with the Hamiltonian $H_{\e}^\circ(D,u)$ (recall that $\beta^\circ$ and $N^\circ$ are \emph{fixed} parameters)
via
\begin{equation}\label{e.def-W-orig}
\forall\Lambda \in  \R^{d\times d}: \qquad \lim_{\e \downarrow 0}-\frac1{\beta_1 |D_\e|}\log \int_{\mathcal{B}_{\e}(D,\varphi_\Lambda)}\exp(-\beta_1 H_{\e}^\circ(D,u))\,\mathrm{d}{u}\,=\,\overline W^{\circ,\beta_1}_{N^\circ}(\Lambda).
\end{equation}
For the physical choice $\beta_1=\beta^\circ$, this implies that the free energy of the discrete network of polymer-chains and the associated Gibbs measure are well-described at the thermodynamic limit (with given Dirichlet boundary data $\varphi$) by the infimum of the continuum energy functional
$$
\varphi+W^{1,p}_0(D)\ni u \mapsto \mathcal I^{\circ,\beta_1}_{N^\circ}(u):= \fint_{D} \overline W^{\circ,\beta_1}_{N^\circ} (\nabla u(x))dx,
$$
and by the Dirac mass at the set of minimizers.
Next we argue that a direct application of Theorem~\ref{th:smallT} does not allow to justify the two-temperatures model which
amounts to taking the limit $\beta_1\uparrow \infty$ while keeping $\beta^\circ$ fixed.
In this setting, Theorem~\ref{th:smallT} yields the existence of some energy density  $\overline W^{\circ,\infty}_{N^\circ}$ 
such that 
$$
\forall\Lambda \in  \R^{d\times d}: \qquad \lim_{\beta_1\uparrow \infty}\overline W^{\circ,\beta_1}_{N^\circ}(\Lambda)\,=\,\overline W^{\circ,\infty}_{N^\circ}(\Lambda).
$$
However, the quantitative estimate \eqref{e.diff-Tinfinite} of Theorem~\ref{th:smallT}, that takes the form
\begin{equation}\label{e.bad-grail}
|\overline W^{\circ,\beta_1}_{N^\circ}(\Lambda)-\overline W^{\circ,\infty}_{N^\circ}(\Lambda)|\,\le \, \frac{\log \beta_1}{\beta_1}(d+\frac1{\beta^\circ}C(1+|\Lambda|^p)),
\end{equation}
is not precise enough for $\beta_1=\beta^\circ$ since $\overline W^{\circ,\infty}_{N^\circ}$ is itself of order $\frac1{\beta^\circ} C(1+|\Lambda|^p)$.
The rest of this section aims at justifying the two-temperatures model in the regime $N^\circ\gg 1$ rather than $\beta^\circ\gg 1$.

\subsubsection{Thermodynamic limit for $\tilde H^\circ_{\e}(D,u)$}

We denote by $\overline W^{\circ,N_1}$ the macroscopic free energy at temperature ``$N_1$''
 (the number of monomers will indeed play
the role of an inverse physical temperature in what follows)
associated with the Hamiltonian $\tilde H^\circ_{\e}$ via Theorem~\ref{th:W}, that is, 
$$
\forall\Lambda \in  \R^{d\times d}: \qquad \lim_{\e \downarrow 0}-\frac1{N_1 |D_\e|}\log \int_{\mathcal{B}_{\e}(D,\varphi_\Lambda)}\exp(-N_1 \tilde H^\circ_{\e}(D,u))\,\mathrm{d}{u}\,=\,\overline W^{\circ,N_1}(\Lambda).
$$
In view of \eqref{e.good-form} and \eqref{e.def-W-orig}, we have the identity 
\begin{equation}\label{e.Nbetacirc}
 \overline W^{\circ,\beta_1}_{N^\circ} |_{\beta_1=\beta^\circ} \,=\, \frac{N^\circ}{\beta^\circ} \overline W^{\circ,N_1}|_{N_1=N_\circ}.
\end{equation}
Whereas the $\overline W^{\circ,\beta_1}_{N^\circ}$ is well-suited to take the zero-temperature limit $\beta_1\uparrow \infty$,
$\overline W^{\circ,N_1}$ is well-suited  to take the limit of large number of monomers per chain $N_1\uparrow \infty$.
By Theorem~\ref{th:smallT} (in form of \eqref{e.diff-Tinfinite-hypo3}), there exists a macroscopic energy density 
$\overline W^{\circ,\infty}$ such that for all $N_1\gg 1$
\begin{equation}\label{grail}
\forall \Lambda \in \R^{d\times d}: \quad |\overline W^{\circ,\infty}(\Lambda)-\overline W^{\circ,N_1}(\Lambda)|\,\le\,\frac{\log N_1}{N_1}\Big(C_2'' |\Lambda|^2+C_{p}''|\Lambda|^p+d\Big),
\end{equation}
and so that the integral functional $u \mapsto\mathcal I^{\circ,N_1}:= \fint_{D} \overline W^{\circ,N_1} (\nabla u(x))dx$ $\Gamma(L^{p})$-converges towards  $u \mapsto \mathcal I^{\circ,\infty}(u):= \fint_{D} \overline W^{\circ,\infty} (\nabla u(x))dx$ on $\varphi+W^{1,p}_0(D)$ as $N_1\uparrow \infty$.
Note that the lower and upper bounds in \eqref{e.p-growth2} are crude and could be largely
improved if more precise assumptions are made on the random graph --- in particular,  we expect the coefficients of the terms of order $|\Lambda|^p$ to be comparable in both sides of the two-sided estimate, so that the RHS of \eqref{grail} would indeed scale like $\frac{\log N_1}{N_1}$ times the order of 
magnitude of $\overline W^{\circ,N_1}(\Lambda)$.

\subsubsection{Justification of the two-temperatures model in the regime $N^\circ\gg 1$} 
The combination of \eqref{grail} and \eqref{e.Nbetacirc} yields
\begin{equation}\label{grail2}
\forall \Lambda \in \R^{d\times d}: \quad |\frac{N^\circ}{\beta^\circ} \overline W^{\circ,\infty}(\Lambda)-\overline W^{\circ,\beta^\circ}_{N^\circ}(\Lambda)|\,\le\,\Big(\frac{\log N^\circ}{N^\circ} \Big)\times \frac{N^\circ}{\beta^\circ}\Big(C_2'' |\Lambda|^2+C_{p}''|\Lambda|^p+d\Big).
\end{equation}
In view of the parameters \eqref{e.parameters} and lower bounds for the $\Gamma$-limit, for deformations $\Lambda$ such that $ |\Lambda| \sim \sqrt{N^\circ}$ (that is, in the nonlinear regime), we have
\begin{equation*}
\begin{array}{rcl}
 \overline W^{\circ,\infty}(\Lambda) &\sim&   C_2 |\Lambda|^2+(C_{p}+C)|\Lambda|^p 
 \\
& \sim &  C_2'' |\Lambda|^2+C_{p}''|\Lambda|^p 
\end{array}
 \gtrsim 1
 \quad \implies \quad 
\frac{N^\circ}{\beta^\circ} \overline W^{\circ,\infty}(\Lambda) \gtrsim  \frac{N^\circ}{\beta^\circ}\Big(C_2'' |\Lambda|^2+C_{p}''|\Lambda|^p+d\Big),
\end{equation*}
so that \eqref{grail2} shows that the relative error between $\overline W^{\circ,\beta^\circ}_{N^\circ}$ and its approximation
$\frac{N^\circ}{\beta^\circ} \overline W^{\circ,\infty}$ is of order $\frac{\log N^\circ}{N^\circ}\ll 1$ in the regime $N^\circ\gg 1$ of large number of monomers per polymer-chain.  
Combined with the observation that the identity \eqref{e.good-form} also yields
$$
\forall \Lambda \in \R^{d\times d}: \quad\frac{N^\circ}{\beta^\circ} \overline W^{\circ,\infty}(\Lambda)=\lim_{\beta_1\uparrow \infty} 
\overline W^{\circ,\beta_1}_{N^\circ}(\Lambda)=\overline W^{\circ,\infty}_{N^\circ}(\Lambda),
$$
\eqref{grail2} takes the form 
\begin{eqnarray*}
\forall \Lambda \in \R^{d\times d}: && | \overline W^{\circ,\infty}_{N^\circ}(\Lambda)-\overline W^{\circ,\beta^\circ}_{N^\circ}(\Lambda)|\,\le\,\Big(\frac{\log N^\circ}{N^\circ} \Big)\times \frac{N^\circ}{\beta^\circ}\Big(C_2'' |\Lambda|^2+C_{p}''|\Lambda|^p+d\Big),
\\
&&|\Lambda|\sim \sqrt{N^\circ} \implies  \overline W^{\circ,\infty}_{N^\circ}(\Lambda) \gtrsim  \frac{N^\circ}{\beta^\circ}\Big(C_2'' |\Lambda|^2+C_{p}''|\Lambda|^p+d\Big),
\end{eqnarray*}
which improves on \eqref{e.bad-grail}.
The above applications of Theorems~\ref{th:W}, \ref{th:Helmholtz},~\ref{LDP}, and~\ref{th:smallT} therefore yield a rigorous justification of the two-temperatures model $\overline W^{\circ,\infty}_{N^\circ}$, which consists in assuming that the monomers of the polymer-chains fluctuate at inverse temperature $\beta^\circ$, whereas
cross-links are considered at zero temperature ($\beta_1=+\infty$).
This sets on rigorous ground the approach introduced and analyzed in \cite{GLTV,ACG2} to derive nonlinear elasticity from polymer physics, and concludes this series of works.

\subsubsection{Extensions and comments} 

The process of vulcanization of rubber generates metallic inclusions (zinc oxides) in the matrix phase, which modifies the elastic behavior of rubber-like materials at large deformation since the former are more rigid than the polymer-chains.
This can be included in the discrete model as follows.
Enrich the probability space by adding a state $Z\in \{0,1\}^{\N}$, and say that a vertex $i$ is in the set of zinc oxides if $Z(i)=1$.
If an edge $b=(x_i,x_j) \in \B$ is such that $Z(i)+Z(j)\ge 1$, then the free energy of the polymer-chain $f_{ij}$ is multiplied by some large constant $K\gg 1$ (which encodes the larger rigidity of the zinc oxides). We may then perform the same analysis as above.

\medskip

We now comment on the main two analytical simplifications of this work, namely that $f_{ij}$ and $W$ have $p$-growth from above.
We believe that at least parts of the results survive if we let $f_{ij}$ blow up at finite deformation, following the approach developed by Duerinckx and the second author in \cite{DG} in the continuum setting.
In contrast, the growth condition on $W$ is  crucial for our arguments to work. Relaxing this assumption constitutes the major open problem of homogenization of integral functionals with quasiconvex integrands. For first results in that direction (with small data) we refer to \cite{NS17}.

\subsection{Outline of the article}

The rest of this article is organized as follows.
To simplify the exposition, we assume from Section~\ref{Sec2} to Section~\ref{Sec5} that $W\equiv 0$ in \eqref{def:hamiltonian}.
In Section~\ref{Sec2} we introduce some notation and establish some preliminary results that are essentially of geometric nature.
Section~\ref{Sec3} is dedicated to the definition of $\overline{W}^\beta$ and to the proof of Theorem~\ref{th:W}.
The large-deviation principle for the Gibbs measure is addressed in Section~\ref{Sec4} and the Theorem~\ref{LDP} is proved. {As a by-product of the argument we obtain Theorem~\ref{th:Helmholtz}.}
The small-temperature limit is analyzed in Section~\ref{Sec5}, where we prove  Theorem~\ref{th:smallT}.
The last section is dedicated to some extensions, namely the addition of the volumetric term $W$ to \eqref{def:hamiltonian}.

\tableofcontents

\section{Notation and preliminary geometric estimates}\label{Sec2}

Let us fix some notation. Given a measurable set $B\subset\mathbb{R}^d$ we denote by $|B|$ its $d$-dimensional Lebesgue measure. The same notation is used to denote the cardinality of $B$ whenever it is a finite set. More generally we denote by $\mathcal{H}^k(B)$ the $k$-dimensional Hausdorff measure of $B$. Given $x\in\mathbb{R}^d$ we let $|x|$ denote its Euclidean norm and we let $B_r(x)$ be the open ball with center $x$ and radius $r$. Moreover, $Q(x,r)=x+(-r/2,r/2)^d$ denotes the open cube with center $x$ and side length $r$. We set $\dist(x,B)=\inf_{y\in B}|x-y|$.  Given an open set $U\subset\mathbb{R}^d$ we define $\mathcal{A}^R(U)$ to be the family of open, bounded subsets of $U$ with Lipschitz boundary. We denote by $L^p(U,\mathbb{R}^n),\,W^{1,p}(U,\mathbb{R}^n)$ the usual vector-valued Lebesgue and Sobolev spaces. We use the short-hand notation  $L^p(U)$ or $W^{1,p}(U)$ when we refer to convergence in these spaces and no confusion about the co-domain is possible. In the proofs $C$ denotes a generic constant (depending only on the dimension or other fixed parameters) that may change every time it appears.
%
%


\subsection{Geometric considerations}
In this subsection we establish some geometric properties of admissible extended Euclidean graphs that will be useful throughout this article. Recall that given $G=(\mathcal{L},E,S)$, we denote by $\mathcal{V}=\{\mathcal{C}(x)\}_{x\in\Lw}$ the Voronoi tessellation associated to the vertices $\Lw$.  
Note that if the vertices fulfill conditions (i) and (ii) of Definition \ref{defadmissible}, then the Voronoi cells satisfy $B_{\frac{r}{2}}(x)\subset \mathcal{C}(x)\subset B_R(x)$ for all $x\in\Lw$. In particular it holds that 
\begin{equation}\label{e.volVoronoi}
\forall x\in \Lw:\quad \frac{1}{C}\leq |\mathcal{C}(x)|\leq C
\end{equation}
and, for fixed $O\in\Ard$ and $\e$ small enough, we have the estimate
\begin{equation}\label{volumeestimate}
\frac{1}{C}|O|\e^{-d}\leq |O_{\e}^\calL|\leq C|O|\e^{-d}.
\end{equation} 
%
In some geometric constructions we will also need a bound on the cardinality of sets of the form
\begin{equation*}
\{x\in\Lw:\;\dist(x,\partial O_{\e})\leq C_0\}.
\end{equation*} 
For $x$ in this set the rescaled Voronoi cells $\e\mathcal{C}(x)$ are contained in the $(C_0+R)\e$-tubular neighbourhood of $\partial O$. Since for Lipschitz boundaries, the Minkowski content agrees (up to a dimensional constant) with $\mathcal{H}^{d-1}(\partial O)$, we deduce that for $\e$ small enough we have
\begin{equation}\label{surfaceesimtate}
|\{x\in\Lw:\;\dist(x,\partial O_{\e})\leq C_0\}|\leq C\e^{1-d}\mathcal{H}^{d-1}(\partial O).
\end{equation}
Similar estimates hold for finite unions or intersections of Lipschitz sets. 

We shall identify functions $u:\Lw\to\R^n$ with their piecewise constant interpolations on the Voronoi tessellation $\mathcal{V}$
associated with $\mathcal L$. Conversely, given a function $u\in L_{\rm loc}^p(\R^d,\R^n)$ we define a (random) discrete approximation $u_{\e}:\Lw\to\R^n$ via
\begin{equation}\label{discreteapprox}
u_{\e}(x):=\frac{1}{|\e \mathcal{C}(x)|}\int_{\e \mathcal{C}(x)}u(z)\,\,\mathrm{d}z.
\end{equation}
\begin{remark}\label{approximation}
The rescaled (piecewise constant) functions $\tilde{u}_{\e}(\e x):=u_{\e}(x)$ converge to $u$ in $L_{\rm loc}^p(\R^d,\R^n)$. Indeed, given any bounded set $B\subset\R^d$ we choose another bounded, open set $U\subset\R^d$ such that $B\subset\subset U$ and redefine $u\equiv 0$ on $\R^d\backslash U$. This does not affect the values of $\tilde{u}_{\e}$ and $u$ on $B$, but now $u\in L^p(\R^d,\R^n)$. Assume that $n=1$. From \eqref{e.volVoronoi} and Lebesgue's differentiation Theorem~we infer that $\tilde{u}_{\e}\to u$ almost everywhere in $B$. Moreover, again by definition~\eqref{discreteapprox}, we have $|\tilde{u}_{\e}|\leq C\mathcal M u$, where $\mathcal M u$ denotes the Hardy-Littlewood maximal function. Hence $\tilde{u}_{\e}\to u$ in $L^p(B,\R^d)$ by dominated convergence. The general case $n\geq 1$ follows by treating each component separately . 	
\end{remark}
For notational convenience we also define discrete $\ell^p$ norms as follows: for all $\e>0$ and $u:\Lw\to\mathbb{R}^n$
\begin{equation*}
\|u\|_{\ell_\e^p(O)}:=\bigg(\sum_{x\in O_{\e}^\calL}|u(x)|^p\bigg)^{\frac{1}{p}},
\quad\quad\|\nabla_{\B} u\|_{\ell_\e^p(O)}:=\bigg(\sum_{\substack{(x,y)\in \B\\ x,y\in O_{\e}^\calL}}|u(x)-u(y)|^p\bigg)^{\frac{1}{p}},
\end{equation*}
where $\nabla_\B $ denotes the gradient on the graph, which maps functions on vertices to functions on edges (and is convenient to estimate the Hamiltonian).

As we show now, admissible graphs enjoy discrete Poincar\'e-type inequalities with respect to these norms. Recall that for any set $O\subset\R^d$ and $\varphi\in \Lip(O,\R^n)$ we let
\begin{equation*}
\mathcal{B}_{\e}(O,\varphi)=\{u:O_{\e}\cap \calL \to\R^n,\;|u(x)-\tfrac1\e \varphi(\e x)|<1\text{ if }\dist(x,\partial D_{\e})\leq C_0\}.
\end{equation*}
\begin{lemma}\label{poincare}
Let $G\in\mathcal{G}$ and let $O\in\mathcal{A}^R(\R^d)$. Then there exists a constant $C=C_{O,p}$ such that for all $\e$ small enough and all $u\in\mathcal{B}_{\e}(O,0)$ we have
\begin{equation*}
\|u\|^p_{\ell_\e^p(O)}\leq \frac{C}{\e^p}\left(\|\nabla_{\B} u\|^p_{\ell_\e^p(O)}+\e^{1-d}\right).
\end{equation*}	
\end{lemma}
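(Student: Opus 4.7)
The plan is to adapt the classical Poincar\'e inequality (proved along rays via the Fundamental Theorem of Calculus) to the discrete setting, by telescoping along discrete paths furnished by condition~(iv) of Definition~\ref{defadmissible} and exploiting the boundary slack $|u(x)|<1$ near $\partial O_\e$. Set $\calL^{bd}_\e:=\{x\in O_\e^\calL:\dist(x,\partial O_\e)\leq C_0\}$. By \eqref{surfaceesimtate} and the definition of $\mathcal{B}_\e(O,0)$, one has $\sum_{x\in \calL^{bd}_\e}|u(x)|^p\leq |\calL^{bd}_\e|\leq C\e^{1-d}\leq (C/\e^p)\e^{1-d}$, so these vertices are already accounted for.

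For interior vertices, I would exploit the Lipschitz regularity of $\partial O$ to decompose $O$ into finitely many measurable pieces $\{O_k\}_{k=1}^M$ and fix directions $\nu_k\in S^{d-1}$ such that, for each $x\in O_k$, the ray $\{x+t\nu_k:t\geq 0\}$ exits $O$ at distance at most $\mathrm{diam}(O)$ from $x$. Rescaling, for $x\in (O_k)^\calL_\e$ the analogous ray exits $O_\e$ at a point $\pi_\e(x)$ with $|x-\pi_\e(x)|\leq C/\e$; by (i) there exists $y(x)\in\calL$ within distance $R$ of $\pi_\e(x)$, whence $y(x)\in\calL^{bd}_\e$ and $|x-y(x)|\leq C/\e$. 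Condition~(iv) then yields a path $P(x,y(x))=(x=z_0,\dots,z_N=y(x))$ in $\B$ lying in $[x,y(x)]+B_{C_0}(0)$, with $N\leq C|x-y(x)|\leq C/\e$ by (ii)--(iii). If this path exits $O_\e$ at some smallest index $k^\ast$, I would truncate it at $z_{k^\ast-1}\in\calL^{bd}_\e$ (since its neighbor $z_{k^\ast}$ is outside $O_\e$ yet within $C_0$); otherwise the path stays entirely in $O_\e$. In both cases, telescoping and H\"older give
\begin{equation*}
|u(x)|^p\leq C|u(z_\ast)|^p+CN^{p-1}\sum_{e\in P}|\nabla_\B u(e)|^p\leq C+\frac{C}{\e^{p-1}}\sum_{e\in P}|\nabla_\B u(e)|^p,
\end{equation*}
where $z_\ast\in\calL^{bd}_\e$ and the sum runs over edges of the (possibly truncated) path in $O_\e$.

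Summing over $x\in (O_k)^\calL_\e\setminus \calL^{bd}_\e$, the first term is bounded by $C|O_\e^\calL|\leq C\e^{-d}\leq (C/\e^p)\e^{1-d}$ by \eqref{volumeestimate} and $p\geq 1$. Swapping summation order in the second leaves us to prove the combinatorial estimate $\#\{x:e\in P(x,y(x))\}\leq C/\e$. But if $e$ is in this path, one of its endpoints lies within $C_0$ of the segment $[x,y(x)]$, which has direction $\nu_k$; hence $x$ is constrained to a cylinder of radius $O(C_0+R)$ about a line in direction $\nu_k$, of length $\leq\mathrm{diam}(O)/\e$ within $O_\e$, which by (ii) contains at most $C/\e$ vertices. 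The gradient contribution is therefore at most $(C/\e^{p-1})(C/\e)\|\nabla_\B u\|_{\ell_\e^p(O)}^p=(C/\e^p)\|\nabla_\B u\|_{\ell_\e^p(O)}^p$; summing over $k$ yields the claim. The main obstacle is precisely this combinatorial count, since the paths of (iv) deviate from $[x,y(x)]$ by up to $C_0$ and are not straight; the bound relies crucially on the use of a \emph{common} direction $\nu_k$ for all $x$ in $O_k$. The secondary technicality is the truncation at the first exit of the path from $O_\e$, which ensures that the telescoping involves only well-defined values of $u$.
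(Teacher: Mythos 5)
Your argument is essentially the paper's own: shoot a ray to the boundary, use condition (iv) to telescope along a discrete path of length $O(1/\e)$, apply Jensen, and then count how many vertices' paths pass through a given edge by confining $x$ to a tube of radius $O(C_0+R)$ around a half-line in the fixed ray direction. The only cosmetic differences are that the paper uses a single direction $e_1$ for all of $O$ (your decomposition into pieces $O_k$ with separate directions $\nu_k$ is superfluous, since a ray in any fixed direction from a point of a bounded set exits within $\mathrm{diam}(O)$) and handles the boundary by extending $u\equiv 0$ outside $O_\e$ and aiming the path at an exterior vertex $z_x$, rather than truncating the path and invoking $|u|<1$ on the layer $\calL^{bd}_\e$.
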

\begin{proof}[Proof of Lemma~\ref{poincare}]
We extend $u$ setting $u(x)=0$ for $x\in\Lw\backslash O_{\e}^\calL$. Take any cube $Q\subset\R^d$ such that $O\subset\subset Q$. For $x\in O_{\e}^\calL$, define the ray $R_x:=\{x+te_1:\,t\geq 0\}$. Then there exists a smallest number $t_*>0$ such that $x+t_*e_1\in \mathcal{C}(z)$ for some $z\in\Lw\backslash O_{\e}^\calL$. We let $z_x\in\Lw$ be (one of) such point(s). Then $z_x\in Q_{\e}$ for $\e$ small enough and moreover $|x-z_x|\leq \e^{-1}{\rm diam}\,O+2R$. As $G$ is admissible, there exists a path $P(x)$ connecting $x$ and $z_x$ such that $P(x)\subset [x,z_x]+B_{C_0}(0)$. By \eqref{e.volVoronoi} the number of edges in such a path is bounded by $\#\{(x^{\prime},x^{\prime\prime})\in P(x)\}\leq C\e^{-1}{\rm diam}\,O$ and moreover we may assume that $P(x)\subset Q_{\e}$ for $\e$ small enough. Jensen's inequality then yields
\begin{align}\label{jensen}
|u(x)|^p&=|u(x)-u(z_x)|^p\leq \Big(\sum_{(x^{\prime},x^{\prime\prime})\in P(x)}|u(x^{\prime})-u(x^{\prime\prime})|\Big)^{p}\nonumber
\\
&\leq  C\left(\frac{{\rm diam}\,O}{\e}\right)^{p-1}\sum_{(x^{\prime},x^{\prime\prime})\in P(x)}|u(x^{\prime})-u(x^{\prime\prime})|^p,
\end{align}
where we used that $u(z_x)=0$. Next, for any edge $(x^{\prime},x^{\prime\prime})\in \B$ we set
\begin{equation*}
K_{\e}(x^{\prime},x^{\prime\prime}):=\{x\in O_{\e}^\calL:\;(x^{\prime},x^{\prime\prime})\in P(x)\}.
\end{equation*}
We need to bound the cardinality of this set. If $x\in K_{\e}(x^{\prime},x^{\prime\prime})$, then there exists $\lambda\in[0,1]$ such that the point $x_{\lambda}=x+\lambda (z_x-x)$ satisfies $|x_{\lambda}-x^{\prime}|\leq C_0$. Hence we infer
\begin{equation*}
x=x-x_{\lambda}+x^{\prime}+(x_{\lambda}-x^{\prime})=-\lambda (z_x-x)+x^{\prime}+(x_{\lambda}-x^{\prime})\in (-R_{-x^{\prime}}+B_{R+C_0}(0))\cap\e^{-1}O.
\end{equation*}
By \eqref{e.volVoronoi} we conclude that $\#R_{\e}(x^{\prime},x^{\prime\prime})\leq C\e^{-1}{\rm diam}\,O$, so that summing (\ref{jensen}) over $x\in O_{\e}^\calL$ yields
\begin{equation}\label{nearlypoincare}
\|u\|^p_{\ell_\e^p(O)}\leq C\left(\frac{{\rm diam}\,O}{\e}\right)^{p}\|\nabla_\B u\|^p_{\ell_\e^p(Q)}.
\end{equation}
Due to the constant extension and the soft boundary conditions, for small $\e$ the contributions on the large cube $Q$ can be bounded via the estimate
\begin{equation*}
\|\nabla_\B u\|^p_{\ell_\e^p(Q)}\leq \|\nabla_\B u\|^p_{\ell_\e^p(O)}+\sum_{\substack{(x,y)\in \B\\ [x,y]\cap\partial O_{\e}\neq\emptyset}}|u(x)-u(y)|^p\leq\|\nabla_\B u\|^p_{\ell_\e^p(O)}+C\e^{1-d}\mathcal{H}^{d-1}(\partial O),
\end{equation*}
where we used (\ref{surfaceesimtate}). Inserting this estimate in (\ref{nearlypoincare}) concludes the proof. 
\end{proof}
\begin{remark}\label{reverse}
In the discrete setting there is also a trivial reverse Poincar\'e inequality. Indeed, as the degree of every vertex in $\Lw$ is equibounded due to \eqref{e.volVoronoi}, there exists $C=C_p$ such that  for all $O \in\mathcal{A}^R(\R^d)$ and  $u:\Lw \to\mathbb{R}^n$, $\|\nabla_\B u\|^p_{\ell^p_{\e}(O)}\leq C\|u\|^p_{\ell^p_{\e}(O)}$.	
\end{remark}
Next we prove a technical Lemma~which is the analogue of Lemma~12 in \cite{KoLu} for non-periodic graphs. 
\begin{lemma}\label{treeargument}
Let $G^{\prime}=(\mathcal{L}^{\prime},\B^\prime)$ be a finite connected subgraph of $G$ and $\bar x\in\mathcal{L}^{\prime}$. Then there exists a dimensional constant $C_1$ such that, for all $z\in\R^n$ and $\alpha,\gamma>0$, 
\begin{equation*}
\int_{(\R^n)^{\mathcal{L}^{\prime}}}\mathds{1}_{\{|u(\bar x)-z|<\gamma\}}\exp\big(-\alpha\sum_{(x,y)\in \B^{\prime}}|u(x)-u(y)|^p\big)\,\mathrm{d}u\leq C_1\gamma^n \Big(\a^{-\frac{n}{p}}C_1\Big)^{|\mathcal{L}^{\prime}|-1}.
\end{equation*}
\end{lemma}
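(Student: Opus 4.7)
The plan is to reduce the integral over the connected subgraph $G'$ to an integral over a spanning tree rooted at $\bar x$, and then to evaluate it by iterated integration from the leaves down to the root.

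First, since $G'$ is connected, I pick a spanning tree $T'=(\mathcal L',\B'_T)$ with $\B'_T\subset \B'$. Because the exponent in the integrand is non-positive, dropping the edges in $\B'\setminus \B'_T$ gives a pointwise upper bound:
\begin{equation*}
\exp\Bigl(-\alpha\sum_{(x,y)\in \B'}|u(x)-u(y)|^p\Bigr)\leq \exp\Bigl(-\alpha\sum_{(x,y)\in \B'_T}|u(x)-u(y)|^p\Bigr).
\end{equation*}
It therefore suffices to prove the lemma for $T'$. Next, I root $T'$ at $\bar x$ and order the vertices $\mathcal L'=\{x_0=\bar x, x_1,\dots, x_{|\mathcal L'|-1\}}$ such that every $x_k$ with $k\geq 1$ has a unique parent $p(x_k)\in\{x_0,\dots,x_{k-1}\}$ with $(p(x_k),x_k)\in\B'_T$. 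The tree edges are then exactly $\{(p(x_k),x_k) : k\geq 1\}$.

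I then perform the change of variables $v_0:=u(\bar x)$ and $v_k:=u(x_k)-u(p(x_k))$ for $k\geq 1$. This map from $(\R^n)^{\mathcal L'}$ to itself has unit Jacobian, and the integrand becomes a product, so Fubini gives
\begin{equation*}
\int_{(\R^n)^{\mathcal L'}}\mathds 1_{\{|v_0-z|<\gamma\}}\prod_{k=1}^{|\mathcal L'|-1}\exp\bigl(-\alpha|v_k|^p\bigr)\,\mathrm{d}v_0\cdots \mathrm{d}v_{|\mathcal L'|-1}= |B_\gamma(0)|\cdot \Bigl(\int_{\R^n}\exp(-\alpha|v|^p)\,\mathrm{d}v\Bigr)^{|\mathcal L'|-1}.
\end{equation*}
The first factor is a dimensional constant times $\gamma^n$. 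For the single Gaussian-type integral, the substitution $v=\alpha^{-1/p}w$ yields
\begin{equation*}
\int_{\R^n}\exp(-\alpha|v|^p)\,\mathrm{d}v=\alpha^{-n/p}\int_{\R^n}\exp(-|w|^p)\,\mathrm{d}w,
\end{equation*}
which is a finite dimensional constant times $\alpha^{-n/p}$. Choosing $C_1$ large enough so that it dominates both the volume constant of the unit ball and the integral $\int_{\R^n}\exp(-|w|^p)\,\mathrm{d}w$ gives the asserted bound.

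I do not anticipate a real obstacle here: the two essential ingredients are the observation that a connected subgraph dominates any spanning tree in the inequality above, and the tree-telescoping change of variables that decouples the integral. The only point to be slightly careful about is that the constant from $\int \exp(-|w|^p)\,\mathrm{d}w$ depends on $n$ and $p$ (but $p$ is a fixed parameter of the model), and that $|B_\gamma(0)|$ indeed scales like $\gamma^n$ so that the exponent of $\gamma$ matches the statement.
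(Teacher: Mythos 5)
Your proof is correct and follows essentially the same strategy as the paper: reduce to a spanning tree rooted at $\bar x$ using nonnegativity of the dropped edge terms, then exploit the tree structure to decouple the integral into a product of one-vertex integrals. The only cosmetic difference is that you perform a single global triangular change of variables $v_k=u(x_k)-u(p(x_k))$, whereas the paper peels off one leaf at a time via Fubini and a local substitution; both yield the same factorization and the same constant.
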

\begin{proof}[Proof of Lemma~\ref{treeargument}]
As $G^{\prime}$ is connected, there exists a rooted spanning tree $T_{\bar x}=(\mathcal{L}^{\prime},\B_{\bar x})$ with root $\bar x$ (note that here we exceptionally consider a directed graph). We now prove inductively that we can integrate out all the vertices except the root. Since $T_{\bar{x}}$ has less edges than $G^{\prime}$, it holds that
\begin{equation*}
\sum_{(x,y)\in \B^{\prime}}|u(x)-u(y)|^p\geq \sum_{(x,y)\in \B_{\bar x}}|u(x)-u(y)|^p.
\end{equation*}
Consider any leaf $x_0\in \mathcal{L}^{\prime}$, that means $x_0$ has no outgoing edges and only one incoming edge $(x_1,x_0)\in \B_{\bar x}$. Then, by Fubini's Theorem~and a change of variables, we deduce that
\begin{align*}
&\int_{(\R^n)^{\mathcal{L}^{\prime}}}\mathds{1}_{\{|u(\bar x)-z|<\gamma\}}\exp(-\alpha\sum_{(x,y)\in \B^{\prime}}|u(x)-u(y)|^p)\,\mathrm{d}u\\
&\leq \int_{(\R^n)^{\mathcal{L}^{\prime}}}\mathds{1}_{\{|u(\bar x)-z|<\gamma\}}\exp(-\a\sum_{(x,y)\in \B_{\bar x}}|u(x)-u(y)|^p)\,\mathrm{d}u\\
&\leq \int_{(\R^n)^{\mathcal{L}^{\prime}\backslash x_0}}\mathds{1}_{\{|u(\bar x)-z|<\gamma\}}\exp(-\a\sum_{\substack{(x,y)\in \B_{\bar x}\\ (x,y)\neq (x_1,x_0)}}|u(x)-u(y)|^p)\int_{\R^n}\exp(-\a |u(x_1)-u(x_0)|^p)\,\mathrm{d}u(x_0)\,\mathrm{d}u
\\
&=\left(\a^{-\frac{n}{p}}\int_{\R^n}\exp(-|\zeta|^p)\,\mathrm{d}\zeta\right)\int_{(\R^n)^{\mathcal{L}^{\prime}\backslash x_0}}\mathds{1}_{\{|u(\bar x)-z|<\gamma\}}\exp(-\a\sum_{\substack{(x,y)\in \B_{\bar x}\\ (x,y)\neq (x_1,x_0)}}|u(x)-u(y)|^p)\,\mathrm{d}u.
\end{align*}
The (directed) graph $(\mathcal{L}^{\prime}\backslash \{x_0\},\B_{\bar x}\backslash (x_1,x_0))$ is still a rooted tree for the set of edges $\mathcal{L}^{\prime}\backslash \{x_0\}$ with root $\bar x$. By iteration we thus obtain the claim upon setting 
\begin{equation*}
C_1=\max\Big\{\int_{B_1(0)}\mathrm{d}\zeta,\int_{\R^n}\exp(-|\zeta|^p)\,\mathrm{d}\zeta\Big\},
\end{equation*}
where the volume of the unit ball is the remaining term when we integrated out the contributions of all the edges in $\B_{\bar x}$.	
\end{proof}
\begin{remark}\label{numbercomponents}
Given a set $U\in\mathcal{A}^R(\mathbb{R}^d)$, the graph $G_{U}=(U\cap\Lw,\{(x,y)\in \B:\;x,y\in U\})$ is in general not connected but can be decomposed into its connected components. If $N_U$ denotes the number of such components, then it follows that 
\begin{equation*}
N_U\leq\#\{x\in\Lw:\;\dist(x,\partial U)\leq C_0\}.
\end{equation*} 
Indeed, for any component $G_j=(V_j,\B_j)$ take $x\in V_j$ and $y\in\Lw\backslash U$. As $G$ is connected we find a path in $G$ connecting $x$ and $y$. Starting at $x$, let $y_j$ be the first vertex of the path such that $y_j\notin U$. Then its preceding vertex $x_j$ satisfies $\dist(x_j,\partial U)\leq C_0$ because $G$ is admissible. By construction $x_j\in V_j$.
\end{remark}
Combining Remark~\ref{numbercomponents}, Lemma~\ref{treeargument}, and Fubini's theorem, we immediately obtain the following bound for possibly disconnected subgraphs.
\begin{lemma}\label{forrestargument}
Let $\e>0$. Given $O\in\mathcal{A}^{R}(\R^d)$, we define the graph $G_{O,\e}=(O_{\e}^\calL,\{(x,y)\in \B:\;x,y\in O_{\e}^\calL\})$. Consider a set $V$ such that there exist $\gamma>0$ and $\{z_x\}_{\{x\in\calL:\;\dist(x,\partial O_{\e})\leq C_0\}}\subset\R^n$ with
\begin{equation*}
V\subset \{u:O_{\e}^\calL\to\R^n:\;|u(x)-z_x|<\gamma \text{ for all }x\in\calL\text{ such that }\dist(x,\partial O_{\e})\leq C_0\}.
\end{equation*}
Then there exists $C_1>0$ such that for all $\alpha>0$
\begin{equation*}
\int_V\exp(-\alpha \|\nabla_\B u\|^p_{\ell_\e^p(O)})\,\mathrm{d}u\leq \Big(C_1\gamma^n\Big)^{N_{O,\e}}\Big(\alpha^{-\frac{n}{p}}C_1\Big)^{|O_{\e}^\calL|-N_{O,\e}},
\end{equation*}	
where $N_{O,\e}$ denotes the number of connected components of the graph $G_{O,\e}$.
\end{lemma}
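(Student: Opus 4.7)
The plan is to enlarge the domain of integration to a product over connected components and then reduce the statement to one application of Lemma~\ref{treeargument} per component. Let $G_1=(V_1,\B_1),\dots,G_{N_{O,\e}}=(V_{N_{O,\e}},\B_{N_{O,\e}})$ denote the connected components of $G_{O,\e}$. My starting point is the observation from Remark~\ref{numbercomponents} that each $V_j$ contains at least one vertex $\bar x_j\in\calL$ with $\dist(\bar x_j,\partial O_\e)\le C_0$; I fix one such vertex per component. Since the integrand is nonnegative, I may replace $V$ by the larger set
\begin{equation*}
\widetilde V:=\Big\{u:O_\e^\calL\to\R^n:\;|u(\bar x_j)-z_{\bar x_j}|<\gamma\;\text{for all}\;j=1,\ldots,N_{O,\e}\Big\},
\end{equation*}
which only retains one of the original $\gamma$-constraints per component.

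Next, because no edge in $\B$ joins two distinct components of $G_{O,\e}$, the energy splits as
\begin{equation*}
\|\nabla_\B u\|^p_{\ell_\e^p(O)}=\sum_{j=1}^{N_{O,\e}}\sum_{(x,y)\in\B_j}|u(x)-u(y)|^p,
\end{equation*}
and the defining constraints of $\widetilde V$ also split, each involving vertices of a single component. Fubini's theorem then yields
\begin{equation*}
\int_{\widetilde V}\exp\bigl(-\alpha\|\nabla_\B u\|^p_{\ell_\e^p(O)}\bigr)\,\mathrm{d}u
=\prod_{j=1}^{N_{O,\e}}\int_{(\R^n)^{V_j}}\mathds{1}_{\{|u(\bar x_j)-z_{\bar x_j}|<\gamma\}}\exp\!\Big(-\alpha\!\!\sum_{(x,y)\in\B_j}\!\!|u(x)-u(y)|^p\Big)\,\mathrm{d}u_j.
\end{equation*}

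Finally, each factor is exactly of the form handled by Lemma~\ref{treeargument}, applied to the finite connected graph $G_j$ with root $\bar x_j$, reference point $z=z_{\bar x_j}$, and parameters $\alpha,\gamma$. This yields the per-component bound $C_1\gamma^n(\alpha^{-n/p}C_1)^{|V_j|-1}$, and multiplying over $j$ together with the identity $\sum_{j=1}^{N_{O,\e}}(|V_j|-1)=|O_\e^\calL|-N_{O,\e}$ gives the announced inequality. The only nontrivial point is the very first step, namely the fact that every connected component meets the boundary layer so that a root with a $\gamma$-control is available to seed the spanning-tree estimate of Lemma~\ref{treeargument}; this is precisely what Remark~\ref{numbercomponents} guarantees, and once it is in hand the remainder is a straightforward product estimate.
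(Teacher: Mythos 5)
Your proof is correct and matches the paper's intended argument, which it presents only as a one-line combination of Remark~\ref{numbercomponents}, Lemma~\ref{treeargument}, and Fubini's theorem. You have filled in exactly the right details: Remark~\ref{numbercomponents} furnishes a boundary vertex $\bar x_j$ in every connected component (with $\dist(\bar x_j,\partial O_\e)\le C_0$), the enlargement of $V$ to $\widetilde V$ is justified because the constraints on those roots are a subset of the constraints defining $V$, the energy tensorizes over components since no edge of $\mathbb B$ joins two of them, and the per-component bound from Lemma~\ref{treeargument} multiplies out via $\sum_j(|V_j|-1)=|O_\e^\calL|-N_{O,\e}$ to the stated inequality.
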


\subsection{Estimates on the partition function}

For the analysis, we need to introduce further functional spaces.
Given $O\in\mathcal{A}^R(\R^d)$, $v \in L_{\rm loc}^p(\R^d,\mathbb{R}^n)$, $w:O_{\e}^\calL\to\R^n$, and $\kappa,M>0$, we define the following three sets:
\begin{equation}\label{defsets}
\begin{split}
\mathcal{N}_p(v,O,\e,\kappa)&:=\{u:O_{\e}^\calL\to\R^n,\;\sum_{O_{\e}^\calL}\e^{d}|v_{\e}(x)-\e u(x)|^p<\kappa^p|O|^{1+\frac{p}{d}}\},
\\
\mathcal{N}_{\infty}(w,O,\e)&:=\{u:O_{\e}^\calL\to\R^n:\;\|w-u\|_{\infty}<1\},
\\
\mathcal{S}_{M}(O,\e)&:=\{u:O_{\e}^\calL\to\R^n:\;H_{\e}(O,u) \leq M|O_{\e}^\calL|\}.
\end{split}
\end{equation}
The first two sets define neighborhoods of $\varphi_{\e}$ (defined via \eqref{discreteapprox}) and $v$, respectively, in a suitable topology. The third set contains deformations of uniformly finite energy. 

Next, we introduce a localized version of the partition function \eqref{defpartfunc}, and define for all sets $V\subset\{u:O_{\e}^\calL\to\R^n\}$  
\begin{equation*}
Z^\beta_{\e,O}(V):=\int_{V}\exp(-\beta H_{\e}(O,u))\,\mathrm{d}u,
\end{equation*}
and two ($\beta$-dependent) quantities that play a major role in the analysis: For $v \in L_{\rm loc}^p(\R^d,\R^n)$ and $O\in\mathcal{A}^R(\R^d)$ we set
\begin{equation*}
\begin{split}
&\mathcal{F}^-_{\kappa}(O,v)=\liminf_{\e\downarrow 0}-\frac{1}{\beta|O_{\e}
|}\log(Z_{\e,O}^\beta(\mathcal{N}_p(v,O,\e,\kappa))),\\
&\mathcal{F}^+_{\kappa}(O,v)=\limsup_{\e\downarrow 0}-\frac{1}{\beta|O_{\e}
|}\log(Z_{\e,O}^\beta(\mathcal{N}_p(v,O,\e,\kappa))).
\end{split}
\end{equation*} 
Since both quantities are decreasing in $\kappa$, we can consider their limits as $\kappa\downarrow  0$ and define
\begin{equation*}
\begin{split}
\mathcal{F}^-(O,v)=\lim_{\kappa\to 0}\mathcal{F}^-_{\kappa}(O,v)=\sup_{\kappa>0}\mathcal{F}^-_{\kappa}(O,v),\\
\mathcal{F}^+(O,v)=\lim_{\kappa\to 0}\mathcal{F}^+_{\kappa}(O,v)=\sup_{\kappa>0}\mathcal{F}^+_{\kappa}(O,v).
\end{split}
\end{equation*}

\medskip

We conclude this section with two results. The first one rules out concentration on high energy configurations, and
the second is an interpolation result, which will both be crucial to prove the exponential tightness at the origin of the
large deviation principle for the Gibbs measure.
%
%
%
\begin{lemma}\label{tightness}
Assume Hypothesis~\ref{Hypo1} and let $G\in\mathcal{G}$. Fix $O\in\mathcal{A}^R(\R^d)$, $v \in L^p_{\rm loc}(\R^d,\R^n)$ and $\varphi\in {\rm Lip}(O,\R^n)$. Then there exists a constant $C_{\beta}>0$ such that for all $\kappa>0$, $\e=\e(\kappa)>0$ small enough, all $\beta>0$ and $M \ge C_{\beta}$, 
\begin{align*}
&Z_{\e,O}^\beta(\mathcal{N}_p(v,O,\e,\kappa)\backslash \mathcal{S}_M(O,\e))\leq \exp(-\frac{M}{2}\beta |O_{\e}^\calL|)\exp(C_{\beta}|O_{\e}^\calL|),
\\
&Z_{\e,O}^\beta(\mathcal{B}_{\e}(O,\varphi)\backslash \mathcal{S}_M(O,\e))\leq \exp(-\frac{M}{2}\beta |O_{\e}^\calL|)\exp(C_{\beta}|O_{\e}^\calL|).
\end{align*}
The constant $C_{\beta}$ can be chosen as
\begin{equation*}
C_{\beta}=\begin{cases}
 C &\mbox{if $\beta\geq \frac{1}{2}$,}\\
-C\log(\beta) &\mbox{$0<\beta<\frac{1}{2}$.}
\end{cases}
\end{equation*}
\end{lemma}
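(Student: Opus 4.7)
The plan is to prove both bounds by a single argument, noting that $\mathcal{B}_{\e}(O,\varphi)$ and $\mathcal{N}_p(v,O,\e,\kappa)$ each imply a pointwise bound $|u(x)-z_x|<\gamma$ at the vertices relevant for Lemma~\ref{forrestargument}. On $V\setminus\mathcal{S}_M(O,\e)$ the Hamiltonian satisfies $H_\e(O,u)>M|O_{\e}^\calL|$, so I split
\begin{equation*}
\exp(-\beta H_\e(O,u))\le \exp\bigl(-\tfrac{M\beta}{2}|O_{\e}^\calL|\bigr)\,\exp\bigl(-\tfrac{\beta}{2}H_\e(O,u)\bigr),
\end{equation*}
reducing the task to showing $\int_V\exp(-\tfrac{\beta}{2}H_\e)\,\mathrm{d}u\le \exp(C_\beta|O_{\e}^\calL|)$ uniformly in $\e$ small enough (depending on $\kappa$ in the $\mathcal{N}_p$ case).

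The second step is to use the $p$-growth from Hypothesis~\ref{Hypo1}, together with the bounded-degree estimate $\#\{(x,y)\in\B:x,y\in O_\e^\calL\}\le C|O_{\e}^\calL|$, to obtain $H_\e(O,u)\ge \frac{1}{C}\|\nabla_{\B}u\|^p_{\ell^p_\e(O)}-C|O_{\e}^\calL|$. To avoid a spurious $e^{C\beta|O_\e^\calL|}$ prefactor when $\beta$ is large, I set $\gamma_\beta:=\min(\beta/2,1/4)$ and use $H_\e\ge 0$ to write $\exp(-\tfrac{\beta}{2}H_\e)\le\exp(-\gamma_\beta H_\e)$. This gives
\begin{equation*}
\int_V\exp\bigl(-\tfrac{\beta}{2}H_\e\bigr)\,\mathrm{d}u\le e^{C|O_\e^\calL|}\int_V\exp\bigl(-\alpha_\beta\|\nabla_{\B}u\|^p_{\ell^p_\e(O)}\bigr)\,\mathrm{d}u
\end{equation*}
with $\alpha_\beta:=\gamma_\beta/C$ and a $\beta$-independent constant in the prefactor.

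The third step is to apply Lemma~\ref{forrestargument}. For $V=\mathcal{B}_\e(O,\varphi)$ the pointwise bound is built into the definition with $z_x=\e^{-1}\varphi(\e x)$ and $\gamma=1$, while for $V=\mathcal{N}_p(v,O,\e,\kappa)$ the integral constraint forces $|u(x)-v_\e(x)/\e|<K$ for \emph{every} $x\in O_\e^\calL$ (each term of a bounded sum of non-negative quantities is bounded by the total), with $K=\kappa|O|^{(1+p/d)/p}\e^{-d/p-1}$; so Lemma~\ref{forrestargument} applies with $z_x=v_\e(x)/\e$ and $\gamma=K$. In both cases the lemma yields a bound of order $(C_1\gamma^n)^{N_{O,\e}}(\alpha_\beta^{-n/p}C_1)^{|O_\e^\calL|-N_{O,\e}}$. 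Taking logarithms, the bulk term $|O_\e^\calL|\log(\alpha_\beta^{-n/p}C_1)$ is $\le C|O_\e^\calL|$ when $\beta\ge \tfrac12$ (then $\alpha_\beta$ stays bounded below) and $\le -C(\log\beta)|O_\e^\calL|$ when $\beta<\tfrac12$, which is exactly the claimed dichotomy for $C_\beta$. The surface term $N_{O,\e}\log(C_1\gamma^n)$ is, by \eqref{surfaceesimtate} and Remark~\ref{numbercomponents}, at worst of order $\e^{1-d}|\log\e|=o(|O_\e^\calL|)$ and can be absorbed into $C|O_\e^\calL|$ as soon as $\e$ is small enough in terms of $\kappa$.

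The only genuine difficulty in this scheme is precisely this last point: because the pointwise radius $\gamma=K$ grows polynomially in $1/\e$ in the $\mathcal{N}_p$ case, the boundary factor $(C_1 K^n)^{N_{O,\e}}$ could \emph{a priori} compete with the volume scaling, and it is only the Minkowski-content bound $N_{O,\e}\lesssim \e^{1-d}\mathcal{H}^{d-1}(\partial O)$ from Section~\ref{Sec2} that makes the surface contribution negligible. Once this is in place, everything else is a mechanical combination of Hypothesis~\ref{Hypo1}, the bounded-degree/geometric estimates of Section~\ref{Sec2}, and the forest-integration Lemma~\ref{forrestargument}; the restriction $M\ge C_\beta$ in the statement serves only to guarantee that the resulting exponential bound is actually useful.
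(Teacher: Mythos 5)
Your argument is correct and follows essentially the same route as the paper's: on the complement of $\mathcal{S}_M(O,\e)$ you split off $\exp(-\tfrac{M\beta}{2}|O_\e^\calL|)$, use the $p$-growth lower bound on $H_\e$, apply Lemma~\ref{forrestargument} with the pointwise bounds furnished by $\mathcal{N}_p$ (resp.\ $\mathcal{B}_\e$), and absorb the surface term $N_{O,\e}\log(C_1\gamma^n)$ via the Minkowski-content estimate \eqref{surfaceesimtate}. The only cosmetic variant is how the additive $-C|O_\e^\calL|$ from the $p$-growth is neutralized: you cap the rate at $\gamma_\beta=\min(\beta/2,1/4)$ to keep the prefactor $\beta$-uniform, whereas the paper writes $H_\e\geq\tfrac{3M}{4}|O_\e^\calL|+\tfrac14 H_\e$ on the bad set and absorbs the constant via the assumption $M\geq C$ before invoking the forest lemma at rate $\beta/(4C)$; both routes yield the same $C_\beta$ dichotomy.
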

\begin{proof}[Proof of Lemma~\ref{tightness}]
Note that by Hypothesis~\ref{Hypo1}, for any $u\notin\mathcal{S}_M(O,\e)$ it holds that
\begin{equation*}
H_{\e}(u,O)\geq \frac{3M}{4}|O_{\e}^\calL|+\frac{1}{4}H_{\e}(u,O)\geq \frac{3M}{4}|O_{\e}^\calL|+\frac{1}{4C}\|\nabla_\B u\|^p_{\ell^p_{\e}(O)}-\frac{C}{4}|O_{\e}^\calL|.
\end{equation*}
Hence we obtain that
\begin{equation*}
Z_{\e,O}^\beta(\mathcal{N}_p(v,O,\e,\kappa)\backslash \mathcal{S}_M(O,\e))\leq\exp(-\frac{M}{2}\beta|O_{\e}^\calL|)\int_{\mathcal{N}_p(v,O,\e,\kappa)}\exp(-\beta\frac{1}{C}\|\nabla_\B u\|^p_{\ell^p_{\e}(O)})\,\mathrm{d}u,
\end{equation*}
up to redefining $C$.
In order to bound the last integral, first note that for every $u\in\mathcal{N}_p(v,O,\e,\kappa)$ the definition (\ref{defsets}) implies that for all $x\in O_{\e}^\calL$ we have
\begin{equation*}
|u(x)-\e^{-1}v_{\e}(x)|<\kappa (|O|\e^{-d})^{\frac{1}{p}+\frac{1}{d}}.
\end{equation*}
Therefore we may apply Lemma~\ref{forrestargument} with the family $z_x=\e^{-1}v_{\e}(x)$ and obtain the estimate
\begin{equation}\label{roughbound}
\int_{\mathcal{N}_p(v,O,\e,\kappa)}\exp(-\beta\frac{1}{C}\|\nabla_\B u\|^p_{\ell^p_{\e}(O)})\,\mathrm{d}u\leq \left( C\kappa^n\left(|O|\e^{-d}\right)^{\frac{n}{p}+\frac{n}{p}}\right)^{N_{O,\e}}\left(C{\beta}^{-\frac{n}{p}}\right)^{|O_{\e}^\calL|-N_{O,\e}},
\end{equation}
where the graph $G_{O,\e}$ is defined as in Lemma~\ref{forrestargument}. By Remark~\ref{numbercomponents} and (\ref{surfaceesimtate}), taking $\e$ small enough (depending on $O$) the number of connected components of $G_{O,\e}$ can be bounded via
\begin{equation*}
N_{O,\e}\leq C\mathcal{H}^{d-1}(\partial O)\e^{1-d}.
\end{equation*}
Set $C_{\beta}$ as in the statement. Up to further decreasing $\e=\e(O,\kappa)$ , we deduce from \eqref{roughbound} the estimate
\begin{equation*}
\int_{\mathcal{N}_p(v,O,\e,\kappa)}\exp(-\frac{1}{C}\|\nabla_\B u\|^p_{\ell^p_{\e}(O)})\,\mathrm{d}u\leq \exp(C_{\beta}|O_{\e}^\calL|).
\end{equation*}
This proves the first estimate. The second one is easier as we have a better control for Lemma~\ref{forrestargument} using the boundary conditions. We leave the details to the reader.
\end{proof}
\begin{remark}\label{Mdepends}
Observe that in Lemma~\ref{tightness} the condition on $\e$ is independent of $M$, so that the estimate holds
uniformly with respect to $M\ge C$.
\end{remark}

The last result we state in this section is one of the main tools in \cite{KoLu} to prove large deviation principles for the Gibbs measures associated with elastic energies on periodic lattices. It is an interpolation inequality that allows to impose additional boundary conditions. We extend the validity of this inequality to admissible graphs. Although there are only minor changes in the argument, we display the proof with our notation in the appendix. Since it is a technical tool, we don't quantify the dependence on $\beta$ here. However, we stress that we have to keep track of how the estimate depends on the set $O$ after letting $\e\downarrow 0$ (see Remark~\ref{invariance} in the appendix).
\begin{proposition}\label{interpolation}
Assume Hypothesis~\ref{Hypo1} and let $G\in\mathcal{G}$. Fix $O\in\Ard$ and $\beta>0$. Let $v\in L^p_{\rm loc}(\R^d,\mathbb{R}^n)$. For $\delta>0$ we set $O^{\delta}=\{x\in O:\;\dist(x,\partial O)<2\delta\}$. Then for all $\delta>0$ small enough, $N\in\mathbb{N}$ and $\kappa>0$ there exists $\e_0>0$ and $C=C_{\beta}<+\infty$ such that for all $0<\e<\e_0$ and all $\varphi\in\mathcal{N}_p(v,O,\e,\kappa)$ we have
\begin{align*}
\Big(Z_{\e,O}^\beta(\mathcal{N}_p(v,O,\e,\kappa))\Big)^{\frac{N-C}{N}}\leq& \, 2NZ_{\e,O}^\beta(\mathcal{N}_p(v,O,\e,3\kappa)\cap\mathcal{B}_{\e}(O,\varphi))
\\
&\times \exp\Big(C\big(|(O^{\delta})_{\e}|+\Big(\frac{(N\kappa|O|^{\frac{1}{d}})^p}{\delta^p}+\frac{1}{N}\Big)|O_{\e}^\calL|+ H_{\e}(O^{\delta},\varphi)\big)\Big).
\end{align*}	
\end{proposition}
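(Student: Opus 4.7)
\textbf{Proof Plan for Proposition~\ref{interpolation}.} The approach is a discrete slicing construction in the spirit of De~Giorgi, adapted to partition functions as in Kotecký–Luckhaus~\cite{KoLu}; the goal is to trade the soft $\mathcal{N}_p$-constraint for the stronger $\mathcal{N}_p \cap \mathcal{B}_\e(O,\varphi)$-constraint paying only a multiplicative error that is controlled by the transition of $u$ to $\varphi$ in a thin annulus near $\partial O$, itself bounded by the closeness of both $u$ and $\varphi$ to the common reference $v_\e/\e$. First I would partition the boundary layer $O^{\delta}$ into $N$ disjoint annular sub-layers $A_1,\dots,A_N$ of width $\sim \delta/N$ ordered from $\partial O$ inwards, noting that at microscopic scale each $A_i$ contains $\sim \delta/(N\e)$ lattice layers. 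For each $i \in \{1,\dots,N\}$ and each $u \in \mathcal{N}_p(v,O,\e,\kappa)$, I then construct an interpolant $T_i u: O_\e^\calL \to \R^n$ coinciding with $\varphi$ on $(A_1 \cup \cdots \cup A_{i-1})_\e^\calL$, with $u$ on $(O \setminus (A_1\cup\cdots\cup A_i))_\e^\calL$, and transitioning linearly between $\varphi$ and $u$ across the microscopic sub-layers of $A_i$.

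The next step is to estimate $H_\e(O, T_i u)$. Splitting edges into three classes (contained in the interior, contained in $O^\delta$, or crossing between zones), using the upper $p$-growth of $f$ from Hypothesis~\ref{Hypo1}, and bounding the discrete gradient in the transition zone by $\sim (N\e/\delta)\,|u(x)-\varphi(x)|$, I obtain
\begin{equation*}
H_\e(O,T_iu) \;\le\; H_\e(O,u) \,+\, H_\e(O^\delta,\varphi) \,+\, C\big(N\e/\delta\big)^p \sum_{x\in (A_i)^\calL_\e} |u(x)-\varphi(x)|^p \,+\, C\,|(A_i)^\calL_\e|.
\end{equation*}
Since both $u$ and $\varphi$ lie in $\mathcal{N}_p(v,O,\e,\kappa)$, the triangle inequality gives $\sum_{x\in O_\e^\calL}|u(x)-\varphi(x)|^p \le C\kappa^p|O|^{1+p/d}\e^{-(d+p)}$, so summing the transition bound over $i=1,\dots,N$ produces a total of order $(N\kappa|O|^{1/d}/\delta)^p |O_\e^\calL|$, which is exactly the first term in the exponential of the claim. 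One must also verify that $T_i u \in \mathcal{N}_p(v,O,\e,3\kappa)$: the $3\kappa$ slack absorbs the $L^p$-distances $\|u-v_\e/\e\|_p + \|\varphi-v_\e/\e\|_p$ plus the transition region. Finally, $T_i u$ belongs to $\mathcal{B}_\e(O,\varphi)$ because by construction it coincides with $\varphi$ on the outermost layer $A_1$, which contains the $C_0$-neighbourhood of $\partial O_\e$ for $\e$ small enough.

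The core and most delicate step is the extraction of the exponent $(N-C)/N$ and the factor $2N$ via a Hölder/pigeonhole argument. Writing the partition function as a double integral over $(u_{\rm in},u_{\rm out})$ with $u_{\rm out}$ the restriction to $(O^\delta)^\calL_\e$, one decomposes the integration over $u_{\rm out}$ into $N$ alternatives indexed by which layer $A_i$ is selected for the transition. The image $T_i u$ depends only on $u_{\rm in}$ and on $\varphi|_{(A_1\cup\cdots\cup A_{i-1})^\calL_\e}$; integrating back the $u_{\rm out}$-variables on the layers frozen to $\varphi$ yields a volume factor $\exp(C|(O^\delta)_\e|)$ via Lemma~\ref{forrestargument}. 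A Hölder inequality of the form $Z = \int e^{-\beta H} du \le \big(\prod_{i=1}^N \int e^{-\beta H(T_i u)} du\big)^{1/N}$ combined with a pigeonhole selection of the best layer $i^*$ — which costs a fraction $C/N$ of the mass and yields the factor $2N$ — produces the power $(N-C)/N$ on the left-hand side, with the $1/N$ term in the exponential absorbing the averaging loss. The main obstacle I anticipate is making the Hölder/pigeonhole argument compatible with the nested constraint structure (simultaneously ensuring $T_{i^*}u \in \mathcal{N}_p(v,O,\e,3\kappa)$, the boundary condition $T_{i^*}u \in \mathcal{B}_\e(O,\varphi)$, and controlling the accumulated geometric constants — which depend only on $R,r,C_0$ and $O$ through the uniform bounds on Voronoi cells and connecting paths from Lemma~\ref{poincare} and Remark~\ref{numbercomponents}) while keeping the prefactors independent of $\e$.
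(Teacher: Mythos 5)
The proposal has the right overall shape --- a slicing of the boundary annulus into $N$ layers, a cut-off interpolation between $u$ and $\varphi$ on one selected layer, and a decomposition into cases to control the interpolation error --- and this matches the spirit of the paper's argument. However, there are two linked gaps that would make the ``integrating back the $u_{\rm out}$-variables'' step fail, and these are precisely what the paper's more elaborate construction is designed to handle.

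First, the map $u\mapsto T_i u$ is \emph{not injective}: it discards the values of $u$ on the outer layers $A_1,\dots,A_{i-1}$, so the change of variables you need cannot be performed directly. You propose to fix this by integrating out the forgotten variables and invoking Lemma~\ref{forrestargument}, but that lemma requires a damping factor $\exp(-\alpha\|\nabla_{\B}u\|^p)$ in the integrand, and after replacing $H_\e(O,u)$ by $H_\e(O,T_i u)$ the integrand no longer depends on $u_{\rm out}$ at all. The only bound available is then the raw Lebesgue measure of the slice of $\mathcal{N}_p(v,O,\e,\kappa)$ in the erased coordinates. That ball has radius $\sim \kappa (|O|\e^{-d})^{1/p+1/d}$, so its slice over the $\sim \delta\e^{-d}$ vertices of $O^{\delta}$ has $\log$-measure of order $\e^{-d}\log(\e^{-1})$, which diverges \emph{faster} than the admissible error $C|(O^{\delta})_\e|\sim \delta\e^{-d}$. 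Your claimed factor $\exp(C|(O^{\delta})_\e|)$ simply is not achievable without additional structure. The paper resolves exactly this by (i) introducing an auxiliary variable $\psi\in\mathcal N_\infty(\varphi,O\setminus\overline{O_{i+1}},\e)$ (an $L^\infty$-ball of radius $1$, so its measure is genuinely $\exp(O(|(O^\delta)_\e^\calL|))$), and defining a \emph{bijective} map $\Phi_{i,\e}(u,\psi)$ with computable Jacobian; and (ii) augmenting the quantity $e^i_\e(M)$ with an extra exponential weight $c_0\|\nabla_\B u\|^p_{\ell^p_\e(S_i^*)}$, which becomes, after the change of variables, the damping term needed to apply Lemma~\ref{forrestargument} to the otherwise unconstrained new variable $h_2$.

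Second, the way you propose to extract the exponent $(N-C)/N$ --- a ``H\"older inequality combined with a pigeonhole selection of the best layer $i^*$'' --- conflates two incompatible strategies and does not give that power. H\"older applied to $\prod_i e^{-\beta H(T_i u)/N}$ yields power $1$ on both sides, not $(N-C)/N$; and the pigeonhole alone gives only the factor $N$. The paper's mechanism is different: it first restricts to the bounded-energy event $\mathcal{S}_M(O,\e)$ via Lemma~\ref{tightness} (this is where the factor $2$ comes from), pigeonholes $\mathcal{N}_p=\bigcup_i\mathcal{P}_{i,\e}$ where $\mathcal{P}_{i,\e}$ is the set on which the stripe $S_i$ carries at most a $3/N$-fraction of the energy (this gives the factor $N$ and, crucially, the term $\frac{C}{N}H_\e(O,u)\le\frac{CM}{N}|O_\e^\calL|$ in the interpolation error), and then chooses $M\sim -\frac{2}{|O_\e^\calL|}\log Z^\beta_{\e,O}(\mathcal{N}_p(v,O,\e,\kappa))$ so that $\exp(CM|O_\e^\calL|/N)$ collapses into $Z^{C'/N}$, and rearranging gives the power $(N-C)/N$. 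This $\mathcal{S}_M$-restriction step and the energy-balancing pigeonhole are absent from your plan.
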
 
%


\section{Thermodynamic limit of the free energy: Proof of Theorem~\ref{th:W}}\label{Sec3}

As made clear in the statements of the main result, linear boundary conditions 
are the basic ingredients to define the continuum free energy density. Let $\Lambda \in \R^{n\times d}$, and $D$ be a Lipschitz subset of $\R^d$. We denote by $\varphi_\Lambda:\calL \to\R^n$ the function defined by $\varphi_\Lambda(x)=\Lambda x$ and by $\overline{\varphi}_\Lambda :\R^d\to\R^n$ its continuum version $x\mapsto \Lambda x$
(this distinction will be needed when we identify $\varphi_\Lambda $ with its piecewise constant interpolation on the Voronoi tessellation; see below).
In this section we better characterize the asymptotic behavior of the functionals $\mathcal{E}^{\beta}_{\e}(D,\overline{\varphi}_{\Lambda})$. We first show that for stationary graphs there exists a limit of the free energy when $\e\to 0$ and, following the approach of \cite{KoLu}, we give some useful equivalent characterizations. Then we show that the limit inherits the $p$-growth conditions of Hypothesis~\ref{Hypo1}. Finally, we prove its quasiconvexity and conclude with Theorem~\ref{th:W}. 

\subsection{Existence of $\overline W^\beta$ and equivalent definitions}
We shall prove the almost sure existence of the limit  $\lim_{\e \downarrow 0} \mathcal{E}_{\e}(D,\overline{\varphi}_\Lambda)$ using the subadditive ergodic theorem, cf.~\cite[Theorem~2.7]{AkKr}. 
We set $\mathcal{I}=\{[a,b):a,b\in \mathbb{R}^d, a\neq b\}$, where $[a,b):=\{x\in\R^{d}:\;a_i\leq x_i<b_i\;\forall i\}$. 
\begin{proposition}\label{existence}
Assume Hypothesis~\ref{Hypo1}.
Fix $\Lambda \in\R^{n\times d}$. Then there exists a deterministic constant $\overline W^{\beta}(\Lambda )$
such that for all Lipschitz domains $D$ we have almost surely
\begin{equation*}
\overline W^\beta(\Lambda )=\lim_{\e\downarrow 0}\mathcal{E}_{\e}^\beta(D,\overline{\varphi}_\Lambda).
\end{equation*}
\end{proposition}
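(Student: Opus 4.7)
The plan is to apply the Akcoglu--Krengel subadditive ergodic theorem~\cite{AkKr} to the random set function
$$\mu_\Lambda : \mathcal{I} \to \R, \qquad \mu_\Lambda(A) := -\frac{1}{\beta}\log \int_{\mathcal{B}_1(A, \overline{\varphi}_\Lambda)} \exp(-\beta H_1(A, u))\,\mathrm{d}u,$$
defined on half-open rectangles $A$ at the microscopic scale $\e=1$, where $\mathcal{B}_1$ and $H_1$ are the objects from Section~\ref{subsec:definitions} evaluated at $\e=1$. Since $|D_\e|\,\mathcal{E}_\e^\beta(D, \overline{\varphi}_\Lambda) = \mu_\Lambda(D/\e)$, the claim reduces to the almost-sure existence of the limit $\lim_{\e \downarrow 0}\mu_\Lambda(D/\e)/|D/\e|$ together with its independence from $D$.

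Stationarity of $\mu_\Lambda$ under $\Z^d$-shifts follows from stationarity of the graph combined with the affine change of variables $u(\cdot)\mapsto u(\cdot-z)+\Lambda z$: this bijection maps $\mathcal{B}_1(A+z,\overline{\varphi}_\Lambda)$ onto $\mathcal{B}_1(A,\overline{\varphi}_\Lambda)$ and preserves $H_1$, since $f(z,\xi)$ depends only on edge differences $x-y$ and value differences $u(x)-u(y)$. The moment bound $|\mu_\Lambda(A)|\leq C_\beta(1+|\Lambda|^p)|A|$ follows from Hypothesis~\ref{Hypo1}: the upper bound by restricting the integration to $\{u:|u(x)-\Lambda x|<1\text{ for all }x\in A\cap\calL\}$, on which $f$ is pointwise bounded by $C(1+|\Lambda|^p)$; the matching lower bound by combining the coercivity $f(z,\xi)\geq \tfrac{1}{C}|\xi|^p-C$ with Lemma~\ref{forrestargument}.

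The heart of the proof is subadditivity up to a surface error. Given a finite disjoint partition $A=\bigsqcup_i A_i$ into sub-rectangles, I would concatenate admissible configurations $u_i\in\mathcal{B}_1(A_i,\overline{\varphi}_\Lambda)$ into a single $u$ on $\calL\cap A$, which automatically lies in $\mathcal{B}_1(A,\overline{\varphi}_\Lambda)$ since $\partial A\subset \bigcup_i\partial A_i$. The Hamiltonian decomposes as $H_1(A,u)=\sum_i H_1(A_i,u_i)+R$, where $R$ collects contributions from edges straddling the internal interfaces $\partial A_i\cap A^\circ$. Each such edge has length $\leq C_0$ and its endpoints lie in boundary layers where the soft constraint forces $|u(x)-u(y)|\leq 2+|\Lambda|C_0$; hence $f(x-y,u(x)-u(y))\leq C(1+|\Lambda|^p)$, and the surface bound~\eqref{surfaceesimtate} gives $R\leq C(1+|\Lambda|^p)\sum_i\mathcal{H}^{d-1}(\partial A_i\cap A^\circ)$. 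Fubini then yields
$$\mu_\Lambda(A)\leq \sum_i \mu_\Lambda(A_i) + C(1+|\Lambda|^p)\sum_i\mathcal{H}^{d-1}(\partial A_i\cap A^\circ).$$
The surface error can be absorbed either via the modified process $\tilde\mu_\Lambda(A):=\mu_\Lambda(A)+C(1+|\Lambda|^p)\mathcal{H}^{d-1}(\partial A)$, or by appealing to a version of Akcoglu--Krengel that tolerates boundary corrections of lower order.

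Akcoglu--Krengel combined with ergodicity then yields a deterministic $\overline{W}^\beta(\Lambda)$ with $\mu_\Lambda(nQ)/|nQ|\to\overline{W}^\beta(\Lambda)$ almost surely along any fixed cube $Q$. To upgrade to arbitrary bounded Lipschitz $D$, I would sandwich $D/\e$ between interior and exterior unions of dyadic cubes of mesoscopic side $\delta/\e$ (with $\e\ll\delta\ll 1$), apply subadditivity both ways, and let $\e\downarrow 0$ before $\delta\downarrow 0$: the surface errors are $O(\e^{1-d}\mathcal{H}^{d-1}(\partial D))=o(|D_\e|)$ and the cube limit is $Q$-independent by stationarity, so both bounds converge to $|D|\overline{W}^\beta(\Lambda)$. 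The main obstacle I foresee is the careful bookkeeping of these surface corrections: subadditivity holds only approximately, and the sandwich argument relies crucially on them remaining negligible uniformly in $\e$, which is precisely why Section~\ref{Sec2} develops the geometric estimate~\eqref{surfaceesimtate} and Lemma~\ref{forrestargument} in advance.
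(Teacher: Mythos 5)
Your proof is correct and takes essentially the same route as the paper: both define the stochastic process $-\log Z_{1,I,\varphi_\Lambda}$ on rectangles, establish integrability, stationarity (via the shift $w(\cdot)=u(\cdot+z)-\Lambda z$), and almost-subadditivity (via the inclusion $\prod_i\mathcal{B}_1(I_i,\varphi_\Lambda)\subset\mathcal{B}_1(I,\varphi_\Lambda)$ and the cross-edge bound $f\leq C(1+|\Lambda|^p)$ on the soft boundary layer), then absorb the surface error by adding $C_\Lambda\mathcal{H}^{d-1}(\partial I)$ to the process, invoke Akcoglu--Krengel, and extend to Lipschitz domains by a cube sandwich (which the paper delegates to \cite{glpe}). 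Your integrability upper bound, obtained by restricting the integration domain to $\{|u(x)-\Lambda x|<1\}$, is a valid minor variant of the paper's argument (which keeps the full set $\mathcal{B}_1(I,0)$ and bounds $H_1$ pointwise instead).
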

\begin{remark}\label{Ffixed}
In the above statement the exceptional set may depend on $\Lambda $ (and $\beta$). Later on we shall prove that $\overline W$ is continuous, which implies that the set can be taken independent of $\Lambda $ (and $\beta$).
\end{remark}
\begin{proof}[Proof of Proposition~\ref{existence}]
We drop the superscript $\beta$, and start with defining a suitable stochastic process (that is, a measurable function on the set of graphs $\mathcal G$).
Given $I\in\mathcal{I}$, set
\begin{equation}\label{defprocess}
\sigma(I):=-\log\Big(\int_{\mathcal{B}_1(I,\varphi_\Lambda )}\exp\big(- H_{1}(I,u)\big)\,\mathrm{d}u\Big)+C_{\Lambda}\,\mathcal{H}^{d-1}(\partial I),
\end{equation}
where $C_{\Lambda}$ will be chosen later to make the process subadditive.
%
In order to apply the subadditive ergodic Theorem~it is enough to prove:
\begin{itemize}
\item[(a)] that $|\sigma(I,G)|$ is bounded uniformly with respect to $G$,
\item[(b)] that $G\mapsto \sigma(I,G)$ is a stationary process,
\item[(c)] that $I\mapsto \sigma(I,G)$ is subadditive.
\end{itemize}
We split the rest of the proof into four steps, prove (a), (b), and (c) separately, and then conclude.

\step1{ Proof of (a)}
In order to show that $\sigma(I,\cdot)$ is integrable, we use Hypothesis~\ref{Hypo1}, \eqref{e.volVoronoi}, and Remark~\ref{reverse} in the form
\begin{align*}
H_1(I,u)&\leq C\|\nabla_\B u\|^p_{\ell^p_1(I)}+C|I|\leq C\|\nabla_\B (u-\varphi_\Lambda )\|^p_{\ell^p_1(I)}+C\|\nabla_\B \varphi_\Lambda \|^p_{\ell^p_1(I)}+C|I|
\\
&\leq C\|u-\varphi_\Lambda \|^p_{\ell^p_1(I)}+C(|\Lambda |^p+1)|I|.
\end{align*}
Since $\mathcal{B}_1(I,\varphi_\Lambda )-\varphi_\Lambda =\mathcal{B}_1(I,0)$, we obtain by a change of variables, monotonicity, Fubini's theorem, and~\eqref{e.volVoronoi} again,  
\begin{align}\label{upperestimate}
\sigma(I)\leq & \,C(|\Lambda |^p+1)|I|-\log\Big(\int_{\mathcal{B}_1(I,0)}\exp\big(- C\|u\|^p_{\ell^p_1(I)}\big)\,\mathrm{d}u\Big)+C_{\Lambda}\mathcal{H}^{d-1}(\partial I)\nonumber
\\
\leq &\,C(|\Lambda |^p+1)|I|-\#\{x\in\Lw\cap I:\;\dist(x,\partial I)>C_0\}\log\Big(\int_{\R^n}\exp(- C|\zeta|^p)\,\mathrm{d}\zeta\Big)\nonumber
\\
&-\#\{x\in\Lw\cap I:\;\dist(x,\partial I)\leq C_0\}\log\Big(\int_{B_1(0)}\exp(- C|\zeta|^p)\,\mathrm{d}\zeta\Big)+C_{\Lambda}\mathcal{H}^{d-1}(\partial I)\nonumber
\\
\leq &\,C(|\Lambda |^p+1)|I|+C|I|\big|\log C\big|+C\mathcal{H}^{d-1}(\partial I)\big|\log C\big|+C_{\Lambda}\mathcal{H}^{d-1}(\partial I).
\end{align}
From Hypothesis~\ref{Hypo1} and Lemma~\ref{forrestargument} applied with $V=\mathcal{B}_{1}(I,\varphi_\Lambda )$, $z_x=\Lambda x$, $\gamma=1$ and $\alpha=\frac{1}{C}$, we also deduce that
\begin{align*}
Z_{1,I,\varphi_\Lambda} \leq \int_{\mathcal{B}_1(I,\varphi_\Lambda )}\exp\big(- \frac{1}{C}\|\nabla_\B u\|^p_{\ell^p_1(I)}+C|I| \big)\,\mathrm{d}u
\leq \exp(C|I|)C^{|\Lw\cap I|}\leq C^{|I|}.
\end{align*} 
Taking minus the logarithm we obtain that
\begin{equation}\label{lowerestimate}
\sigma(I)\geq -|I|\log C .
\end{equation}
The desired estimate (a) follows from the combination of  \eqref{upperestimate} and  \eqref{lowerestimate}.

\step2{Proof of (b)}
Let $z\in \Z^d$. By definition of a graph and of $ \mathcal{B}_1$, we have the equivalence
\begin{equation*}
u \in \mathcal{B}_1(I,\varphi_\Lambda ,G+z)\iff w(\cdot)=u(\cdot+z)-\Lambda z\in\mathcal{B}_1(I-z,\varphi_\Lambda,G)
\end{equation*}
and $H_1(I,u,G+z)=H_1(I-z,w,G)$. This implies stationarity of $\sigma$ in the form  $\sigma(I,z+G)=\sigma(I-z,G)$.

\step3{Proof of (c)}
In order to prove subadditivity, let $I\in\mathcal{I}$ and consider a finite partition $I=\bigcup_iI_i$ with $I_i\in \mathcal{I}$. By definition,
\begin{equation}\label{tensorize}
\mathcal{B}_1(I,\varphi_\Lambda )\supset\{u:\Lw\cap I\to\R^n:\;u_{|\Lw\cap I_i}\in \mathcal{B}_1(I_i,\varphi_\Lambda )\}=\prod_i\mathcal{B}_1(I_i,\varphi_\Lambda ).
\end{equation}
Moreover, for any $u\in\prod_i\mathcal{B}_1(I_i,\varphi_\Lambda )$ the monotonicity of $H_1(\cdot,u)$ with respect to set inclusion yields the (almost) subadditivity estimate
\begin{align}\label{almostsub}
H_1(I,u)&\leq \sum_{i}H_1(I_i,u_{|\Lw\cap I_i})+\sum_{i}\sum_{\substack{(x,y)\in \B\\ [x,y]\cap \partial I_i\backslash\partial I\neq\emptyset}}C(|\Lambda |^p+1)\nonumber
\\
&\leq \sum_{i}H_1(I_i,u_{|\Lw\cap I_i})+C(|\Lambda |^p+1)\sum_{i}\mathcal{H}^{d-1}(\partial I_i\backslash\partial I).
\end{align}
From (\ref{tensorize}) and (\ref{almostsub}) we conclude by Fubini's Theorem~that for $C_{\Lambda}\geq C(|\Lambda |^p+1)$ we have
\begin{align*}
\sigma(I)&\leq \sum_{i}\big(\sigma(I_i)-C_{\Lambda}\mathcal{H}^{d-1}(\partial I_i)\big)+C(|\Lambda |^p+1)\sum_i\mathcal{H}^{d-1}(\partial I_i\backslash\partial I)+C_{\Lambda}\mathcal{H}^{d-1}(\partial I)
\\
&\leq\sum_i\sigma(I_i)+\big(C(|\Lambda |^p+1)-C_{\Lambda}\big) \sum_i\mathcal{H}^{d-1}(\partial I_i\backslash\partial I)\leq \sum_i\sigma(I_i),
\end{align*}
that is, the desired subadditivity.

\step4{Conclusion}
By the subadditive ergodic Theorem~(combined with an elementary approximation argument to pass from integer rectangles 
to general rectangles and Lipschitz domains, see for instance Step 4 of the proof of Theorem~3.1 in \cite{glpe}), we obtain the existence of the deterministic field $\overline W(\Lambda)$ satisfying
almost surely for all Lipschitz domains $D$
$$
\overline W(\Lambda)\,=\,\lim_{t\uparrow \infty} \frac{1}{t^d} \mathcal E_1(tD,\varphi_\Lambda).
$$
\end{proof}

Following~\cite{KoLu} we next prove two equivalent characterizations of $\overline W^\beta$.
\begin{lemma}\label{equivalent}
Assume Hypothesis~\ref{Hypo1}. Fix $\Lambda \in\R^{n\times d}$. Then $\overline W^\beta(\Lambda )$ defined in Proposition~\ref{existence}
 almost surely satisfies: For all $\kappa>0$ and all $O\in\mathcal{A}^R(\R^d)$,
\begin{align}
\overline W^\beta(\Lambda )&=\lim_{\e\downarrow 0}-\frac{1}{|O_{\e}|}\log\left( Z^\beta_{\e,O}(\mathcal{N}_p(\overline{\varphi}_\Lambda ,O,\e,\kappa)\cap\mathcal{B}_{\e}(O,\varphi_\Lambda ))\right)\tag{I}
\\
&=\mathcal{F}^+(O,\overline{\varphi}_\Lambda).\tag{II}
\end{align}
\end{lemma}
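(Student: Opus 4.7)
The strategy is to establish the chain of inequalities
\begin{equation*}
\overline W^\beta(\Lambda) \;\leq\; \mathcal{F}^+(O,\overline\varphi_\Lambda) \;\leq\; \lim_{\e\downarrow 0}\Bigl(-\tfrac{1}{\beta|O_\e|}\log Z^\beta_{\e,O}(\mathcal{N}_p\cap \mathcal{B}_\e)\Bigr) \;\leq\; \overline W^\beta(\Lambda).
\end{equation*}
The inequality $\mathrm{(II)}\le\mathrm{(I)}$ follows immediately from $\mathcal{N}_p \cap \mathcal{B}_\e \subset \mathcal{N}_p$ by taking $-\tfrac{1}{\beta|O_\e|}\log$, $\limsup_\e$, and $\sup_\kappa$; the rightmost inequality $\mathrm{(I)} \geq \overline W^\beta(\Lambda)$ similarly uses $\mathcal{N}_p \cap \mathcal{B}_\e \subset \mathcal{B}_\e$ and Proposition~\ref{existence}. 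The lemma thus reduces to the two non-trivial bounds $\mathrm{(I)} \leq \overline W^\beta(\Lambda)$ and $\overline W^\beta(\Lambda) \leq \mathcal{F}^+$.

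For $\mathrm{(I)} \leq \overline W^\beta(\Lambda)$, I would decompose $O$ (up to a standard boundary strip treated as in Step~4 of the proof of Proposition~\ref{existence}) into sub-cubes $\{Q_i\}_{i=1}^{N^d}$ of side $\ell = L/N$ with $L = |O|^{1/d}$, and consider the restricted set of configurations $u$ satisfying $u \in \mathcal{B}_\e(Q_i,\varphi_\Lambda)$ and $H_\e(Q_i,u) \leq M|Q_{i,\e}|$ for every $i$. Applying Lemma~\ref{poincare} to $u - \varphi_\Lambda$ on each $Q_i$ (whose Poincar\'e constant scales as $\ell^p$) and summing, and using the pointwise bound $|(\overline\varphi_\Lambda)_\e(x)-\e\varphi_\Lambda(x)|\leq \e R|\Lambda|$, I would obtain
\begin{equation*}
\sum_{O_\e^\calL}\e^d|(\overline\varphi_\Lambda)_\e(x)-\e u(x)|^p \;\leq\; C(L/N)^p(M+|\Lambda|^p+1)|O| + O(\e),
\end{equation*}
which is $< \kappa^p|O|^{1+p/d}$ provided $N \geq N_0(\kappa,M,|\Lambda|,O)$, so the restricted set lies in $\mathcal{N}_p \cap \mathcal{B}_\e$. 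The Hamiltonian splits as $H_\e(O,u) = \sum_i H_\e(Q_i,u|_{Q_i}) + O(NL^{d-1}/\e^{d-1})$ (the cross-boundary contribution being $o(|O_\e|)$ at fixed $N$ since boundary increments are bounded by $\mathcal{B}_\e(Q_i,\varphi_\Lambda)$), while Lemma~\ref{tightness} with $M$ large enough (depending on $\beta$ and $|\Lambda|$) yields $Z^\beta_{\e,Q_i}(\mathcal{B}_\e(Q_i,\varphi_\Lambda) \cap \mathcal{S}_M(Q_i)) \geq \tfrac12 Z^\beta_{\e,Q_i,\varphi_\Lambda}$. Combining these facts, the partition function factorizes as
\begin{equation*}
Z^\beta_{\e,O}(\mathcal{N}_p\cap\mathcal{B}_\e(O,\varphi_\Lambda)) \;\geq\; 2^{-N^d}\prod_{i=1}^{N^d} Z^\beta_{\e,Q_i,\varphi_\Lambda}\cdot e^{-\beta C NL^{d-1}/\e^{d-1}}.
\end{equation*}
Taking $-\tfrac{1}{\beta|O_\e|}\log$, $\e \downarrow 0$, and Proposition~\ref{existence} on each $Q_i$, I recover $\limsup \leq \overline W^\beta(\Lambda)$; combined with the trivial lower bound, the limit in $\mathrm{(I)}$ exists and equals $\overline W^\beta(\Lambda)$.

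For $\mathcal{F}^+ \geq \overline W^\beta(\Lambda)$, I would apply Proposition~\ref{interpolation} with $v = \overline\varphi_\Lambda$ and $\varphi = \varphi_\Lambda$; the hypothesis $\varphi_\Lambda \in \mathcal{N}_p(\overline\varphi_\Lambda,O,\e,\kappa)$ holds for $\e$ small by the same pointwise bound as above. Using $\mathcal{N}_p(3\kappa) \cap \mathcal{B}_\e \subset \mathcal{B}_\e(O,\varphi_\Lambda)$ on the right-hand side of the interpolation inequality and $H_\e(O^\delta,\varphi_\Lambda) \leq C(1+|\Lambda|^p)|(O^\delta)_\e|$ from Hypothesis~\ref{Hypo1}, taking $-\tfrac{1}{\beta|O_\e|}\log$ and $\limsup_{\e\downarrow 0}$ then yields
\begin{equation*}
\tfrac{N-C}{N}\,\mathcal{F}^+_\kappa(O,\overline\varphi_\Lambda) \;\geq\; \overline W^\beta(\Lambda) - C\Bigl(\tfrac{|O^\delta|}{|O|}(1+|\Lambda|^p) + \tfrac{(N\kappa|O|^{1/d})^p}{\delta^p} + \tfrac{1}{N}\Bigr).
\end{equation*}
Sending first $\kappa \downarrow 0$ (the middle error vanishes at fixed $N$ and $\delta$), then $\delta \downarrow 0$ (the boundary strip vanishes by Lipschitz regularity of $\partial O$), and finally $N \uparrow \infty$, one obtains $\mathcal{F}^+(O,\overline\varphi_\Lambda) \geq \overline W^\beta(\Lambda)$, closing the chain.

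The main obstacle is the correct bookkeeping of the cascade of error terms in the interpolation step, since the remainder from Proposition~\ref{interpolation} couples $\kappa$, $\delta$, $N$, and the boundary strip $O^\delta$, so the four limits must be peeled off in precisely the order indicated. A secondary technical point is to verify that Lemma~\ref{poincare} applies on each sub-cube $Q_i \in \mathcal{A}^R(\R^d)$ with constant scaling as $\ell^p$, so that $N$ can be fixed consistently with the prescribed $\kappa$ and the a priori energy bound $M$.
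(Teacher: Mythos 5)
Your proof is logically sound, though the opening framing as a chain of inequalities is slightly imprecise: the easy inclusion $\mathcal{N}_p\cap\mathcal{B}_\e\subset\mathcal{N}_p$ gives only $\mathcal{F}^+_\kappa\leq(\mathrm{I})(\kappa)$ for matching $\kappa$, not $\mathcal{F}^+\leq(\mathrm{I})(\kappa)$; what actually closes the loop is combining this with your $\limsup(\mathrm{I})(\kappa)\leq\overline W^\beta(\Lambda)$ bound to get $\mathcal{F}^+_\kappa\leq\overline W^\beta(\Lambda)$ for every $\kappa$, hence $\mathcal{F}^+\leq\overline W^\beta(\Lambda)$. Your route to the hard direction of (I) is, however, genuinely different from the paper's. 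The paper re-applies the subadditive ergodic theorem to a second stochastic process $\sigma_\kappa$ to obtain existence of the limit in (I) (Step~1), then proves $\kappa$-independence by a dyadic cube subdivision exploiting $\sum_j|I_j|^{1+p/d}\leq|I|^{1+p/d}$ (Step~2), and only then identifies the limit with $\overline W^\beta$ via Lemma~\ref{tightness} in the regime $\kappa$ large (Step~3). You bypass Steps~1--2 entirely by producing a direct variational lower bound for the partition function: decompose $O$ into $N^d$ subcubes, restrict to $\prod_i\bigl(\mathcal{B}_\e(Q_i,\varphi_\Lambda)\cap\mathcal{S}_M(Q_i,\e)\bigr)$, invoke the discrete Poincar\'e inequality of Lemma~\ref{poincare} (constant $\sim\ell^p$) to verify this set lies inside $\mathcal{N}_p(\overline\varphi_\Lambda,O,\e,\kappa)$ once $N\gtrsim\kappa^{-1}(M+1+|\Lambda|^p)^{1/p}$, factorize over the subcubes, and appeal to Lemma~\ref{tightness} and Proposition~\ref{existence} on each $Q_i$. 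This trades the second ergodic-theorem invocation and the $\kappa$-independence step for the bookkeeping of the subcube decomposition (uniformity of $M$ over the subcubes and $\e$, cross-boundary energy terms, and the boundary strip when $O$ is not itself a cube). The interpolation argument for (II) coincides with the paper's Step~4; your reordering of the $\delta$ and $N$ limits is harmless since they commute once $\kappa\downarrow 0$ is taken first. One structural difference worth flagging: the paper's Step~3 is deliberately deterministic given the limits from Steps~1--2 (see Remark~\ref{onexistence}), a feature exploited again in Proposition~\ref{fullexistence}; your identification of (I) instead runs through the almost-sure event of Proposition~\ref{existence} applied to all the $Q_i$ simultaneously, which suffices for the present lemma but shifts where randomness enters any downstream use.
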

\begin{proof}[Proof of Lemma~\ref{equivalent}]
We split the proof into 4 steps. Again we drop the superscript $\beta$.

\step1{Existence of the limit}
We first prove that the right hand side of (I) is well-defined. Again we use the subadditive ergodic theorem. To this end, note that due to \eqref{e.volVoronoi} and the definition of the sets $\mathcal{N}_p$ in (\ref{defsets}) there exists a deterministic length of the form $l_{\kappa}=\lceil C_1(|\Lambda |+1)/\kappa\rceil\in\mathbb{N}$ such that, for any $I\in\mathcal{I}$,
\begin{equation}\label{lk_contained}
\varphi_\Lambda \in\mathcal{N}_p(\overline{\varphi}_\Lambda ,l_{\kappa}I,1,\frac{\kappa}{2})\cap\mathcal{B}_{1}(l_{\kappa}I,\varphi_\Lambda ).
\end{equation}
We fix such $C_1$ from now on and define the stochastic process $\sigma_{\kappa}:\mathcal{I}\to L^1(\mathcal{G})$ by
\begin{equation}\label{muk}
\sigma_{\kappa}(I)=-\log\left( Z_{1,I}(\mathcal{N}_p(\overline{\varphi}_\Lambda ,l_{\kappa}I,1,\kappa)\cap\mathcal{B}_{1}(l_{\kappa}I,\varphi_\Lambda ))\right)+C_{\sigma_{\kappa}}\mathcal{H}^{d-1}(\partial (l_{\kappa}I)),
\end{equation}
where $C_{\sigma_{\kappa}}$ will be chosen later to obtain subadditivity. 
To show integrability, we first note that $\sigma_{\kappa}(I)\geq \sigma(l_{\kappa}I)-(C_{\Lambda}-C_{\sigma_{\kappa}})l_{\kappa}\mathcal{H}^{d-1}(\partial I)$, where $\sigma$ is the process defined in the proof of Proposition~\ref{existence}. In order to prove an upper bound, observe that there exists a constant $c>0$ such that
\begin{equation*}
\varphi_\Lambda +\left\{u:l_{\kappa}I\cap\mathcal{L}\to\mathbb{R}^n:\;|u(x)|\leq \min\left\{1,c(|\Lambda |+1)|I|^{\frac{1}{d}}\right\}\right\}\subset\mathcal{N}_p(\overline{\varphi}_\Lambda ,l_{\kappa}I,1,\kappa)\cap\mathcal{B}_{1}(l_{\kappa}I,\varphi_\Lambda ).
\end{equation*} 
Indeed, the set on the left hand side clearly satisfies the boundary conditions and is thus contained in $\mathcal{B}_{1}(l_{\kappa}I,\varphi_\Lambda )$. The remaining inclusion follows by the triangle inequality since \eqref{e.volVoronoi} and (\ref{lk_contained}) yield
\begin{align*}
\|(\varphi_\Lambda )_1-\varphi_\Lambda -u\|_{\ell^p_1(l_{\kappa}I)}&< \frac{\kappa}{2}|l_{\kappa}I|^{\frac{1}{p}+\frac{1}{d}}+C|l_kI|^{\frac{1}{p}}\|u\|_{\infty}\leq \frac{\kappa}{2}|l_{\kappa}I|^{\frac{1}{p}+\frac{1}{d}}+C|l_{\kappa}I|^{\frac{1}{p}}c(|\Lambda |+1)|I|^{\frac{1}{d}}
\\
&\leq\frac{\kappa}{2}|l_{\kappa}I|^{\frac{1}{p}+\frac{1}{d}}+C|l_{\kappa}I|^{\frac{1}{p}}\frac{c}{C_1}\kappa|l_{\kappa}I|^{\frac{1}{d}}\leq\kappa|l_{\kappa}I|^{\frac{1}{p}+\frac{1}{d}},
\end{align*}
provided that $c\leq \frac{C_1}{2C}$. Having in mind the established set-inclusion, the argument for (\ref{upperestimate}) also yields a deterministic upper bound, the proof of which we omit.

Concerning stationarity, we recall that the interpolation in \eqref{discreteapprox} is random as it depends on the Voronoi cells. By stationarity of $G$, which is inherited by the Voronoi tessellation, for every $z\in\Z^d$ we have $(\varphi_\Lambda )_1(x+z,G+z)=(\varphi_\Lambda )_1(x,G)+\Lambda z$. Hence, with a slight abuse of notation,
\begin{equation*}
\left(\mathcal{N}_p(\overline{\varphi}_\Lambda ,l_{\kappa}(I-z),1,\kappa,G)\cap\mathcal{B}_{1}(l_{\kappa}(I-z),\varphi_\Lambda,G)\right)+l_{\kappa}\Lambda z=\mathcal{N}_p(\overline{\varphi}_\Lambda ,l_{\kappa}I,1,\kappa,G+l_{\kappa}z)\cap\mathcal{B}_{1}(l_{\kappa}I,\varphi_\Lambda ,G+l_{\kappa}z).
\end{equation*}
where we used that $l_{\kappa}\in\mathbb{N}$. By a change of variables we obtain the $l_{\kappa}\Z^d$-stationarity condition
\begin{equation*}
\sigma_{\kappa}(I-l_{\kappa}z,G)=\sigma_{\kappa}(I,G+l_{\kappa}z).
\end{equation*}
From now on the proof is very similar to the one of Proposition~\ref{existence}. Just note that for proving subadditivity, given a partition $I=\bigcup_iI_i$ it holds that
\begin{equation}\label{productdecomposition}
\mathcal{N}_p(\overline{\varphi}_\Lambda ,l_{\kappa}I,1,\kappa)\cap\mathcal{B}_{1}(l_{\kappa}I,\varphi_\Lambda )\supset \prod_i\mathcal{N}_p(\overline{\varphi}_\Lambda ,l_{\kappa}I_i,1,\kappa)\cap\mathcal{B}_{1}(l_{\kappa}I_i,\varphi_\Lambda ).
\end{equation}
This is clear for the boundary conditions, while for the discrete neighbourhoods it follows from the inequality $\sum_j|I_j|^{1+\frac{p}{d}}\leq |I|^{1+\frac{p}{d}}$, which is due to the fact that the discrete $\ell^p$-norms are maximal for $p=1$. With (\ref{productdecomposition}) at hand and choosing a suitable $C_{\sigma_{\kappa}}$, we conclude, as we did for Proposition~\ref{existence}, that with probability one the following limit exists
\begin{equation*}
W_{\kappa}(\Lambda )=\lim_{\e\to 0}-\frac{1}{|O_{\e}|}\log\left( Z_{\e,O}(\mathcal{N}_p(\overline{\varphi}_\Lambda ,O,\e,\kappa)\cap\mathcal{B}_{\e}(O,\varphi_\Lambda ))\right),
\end{equation*} 
and is independent of the Lipschitz domain $O$ (note that (\ref{discreteapprox}) and the definition of the sets $\mathcal{N}_p(\overline{\varphi}_\Lambda ,O,\e,\kappa)$ in (\ref{defsets}) are compatible with rescaling in the sense that $(\overline{\varphi}_\Lambda )_{\e}=\e(\overline{\varphi}_\Lambda )_1$ and therefore $u\in \mathcal{N}_p(\overline{\varphi}_\Lambda ,O,\e,\kappa)\cap\mathcal{B}_{\e}(O,\varphi_\Lambda )$ if and only if $u\in \mathcal{N}_p(\overline{\varphi}_\Lambda ,O/\e,1,\kappa)\cap\mathcal{B}_{1}(O/\e,\varphi_\Lambda )$).

\step2{Independence with respect to $\kappa$}
Let us prove that the limit is independent of $\kappa$ which turns out to be useful for proving (I) in the next step. We may assume that the limit of Step 1 exists almost surely for all positive $\kappa\in\mathbb{Q}$. As in \cite{KoLu} we compute the energy on a half-open cube $O=[0,1)^d$ which we subdivide again into $2^d$ smaller half-open cubes $\{O_i\}_{i=1}^{2^d}$ of equal size. In this case we can improve the rescaled version (\ref{productdecomposition}) in the sense that
\begin{equation*}
\mathcal{N}_p(\overline{\varphi}_\Lambda ,O,\e,\frac{\kappa}{2})\cap\mathcal{B}_{\e}(O,\varphi_\Lambda )\supset \prod_i\mathcal{N}_p(\overline{\varphi}_\Lambda ,O_i,\e,\kappa)\cap\mathcal{B}_{\e}(O_i,\varphi_\Lambda ).
\end{equation*}
Indeed, the boundary conditions on the large cube hold by the boundary conditions on the smaller cubes and, for each function $u$ belonging to the right hand side set, the discrete norms in the definition (\ref{defsets}) can be estimated via
\begin{equation*}
\sum_{x\in O_{\e}^\calL}\e^d|(\overline{\varphi}_\Lambda )_{\e}(x)-\e u(x)|^p\leq 2^d \kappa^p |O_1|^{1+\frac{p}{d}}= \left(\frac{\kappa}{2}\right)^p (2^d)^{1+\frac{p}{d}}|O_1|^{1+\frac{p}{d}}=\left(\frac{\kappa}{2}\right)^p|O|^{1+\frac{p}{d}}.
\end{equation*}
A rescaled version of the almost subadditivity estimate (\ref{almostsub}) and Fubini's Theorem~then yield
\begin{align*}
\sum_{i=1}^{2^d}&\left(C(|\Lambda |^p+1)\mathcal{H}^{d-1}(\partial O_i)\e-\frac{1}{|(O_i)_{\e}|}\log\left( Z_{\e,O_i}(\mathcal{N}_p(\overline{\varphi}_\Lambda ,O_i,\e,\kappa)\cap\mathcal{B}_{\e}(O_i,\varphi_\Lambda ))\right)\frac{|(O_i)_{\e}|}{|O_{\e}|}\right)
\\
&\geq-\frac{1}{|O_{\e}|}\log\left( Z_{\e,O}(\mathcal{N}_p(\overline{\varphi}_\Lambda ,O,\e,\frac{\kappa}{2})\cap\mathcal{B}_{\e}(O,\varphi_\Lambda ))\right).
\end{align*}
Passing to the limit when $\e\to 0$ we obtain by definition
\begin{equation*}
W_{\kappa}(\Lambda )\geq W_{\kappa/2}(\Lambda ),
\end{equation*}
where used that the limit is indeed also given by half-open cubes because the contributions at the boundary are negligible. Since the reverse inequality is obvious this proves that the two terms actually agree. In particular, by a sandwich argument we deduce that for a set of full probability the limit exists for all $\kappa>0$ and is independent of $\kappa$.

\step3{Proof of (I)}
We argue by letting $\kappa\to +\infty$. Since we shall use this equality to show that the exceptional set can be taken independent of $\Lambda $, we only use deterministic arguments.

On the one hand, by Hypothesis~\ref{Hypo1}, for any $u\in\mathcal{B}_{\e}(O,\varphi_\Lambda )$ and $\e$ small enough we have by the discrete Poincar\'e inequality of Lemma~\ref{poincare}
\begin{align*}
H_{\e}(u,O)\geq& \frac{1}{C}\|\nabla_\B u\|^p_{\ell^p_{\e}(O)}-C|O_{\e}^\calL|\geq \frac{1}{C}\|\nabla_\B (u-\varphi_\Lambda )\|^p_{\ell_\e^p(O)}-C(|\Lambda |^p+1)|O_{\e}^\calL|
\\
\geq&\frac{1}{C}\|\e u-\e\varphi_\Lambda \|^p_{\ell^p_{\e}(O)}-C(|\Lambda |^p+1)|O_{\e}^\calL|.
\end{align*}
On the other hand, if $u\notin \mathcal{N}_p(\overline{\varphi}_\Lambda ,O,\e,\kappa)$ then from the definition in (\ref{discreteapprox}) and \eqref{e.volVoronoi} we infer that
\begin{equation*}
\kappa^p|O|^{1+\frac{p}{d}}\leq C\sum_{x\in O_{\e}^\calL}\e^d\left(|\e u(x)-\e \Lambda x|^p+|\Lambda |^p\e^p\right)\leq C\e^d\|\e u-\e\varphi_\Lambda \|^p_{\ell^p_{\e}(O)}+C|\Lambda |^p\e^{p+d}|O_{\e}^\calL|.
\end{equation*}
Combining these inequalities, we infer that, given $M>0$ there exists $\kappa_0>0$ such that for all $\kappa>\kappa_0$ it holds that $\mathcal{B}_{\e}(O,\varphi_\Lambda )\backslash\mathcal{N}_p(\overline{\varphi}_\Lambda ,O,\e,\kappa)\subset \mathcal{B}_{\e}(O,\varphi_\Lambda )\backslash \mathcal{S}_M(O,\e)$. Now we choose
\begin{equation}\label{e.choiceM}
M>2\left(\overline{C}+\log(2)|O_{\e}^\calL|^{-1}+\frac{1}{|O_{\e}^\calL|}\log\Big(Z_{\e,O}(\mathcal{B}_{\e}(O,\varphi_\Lambda ))\Big)\right),
\end{equation}
where $\overline{C}$ denotes the constant from Lemma~\ref{tightness}. Note that due to (\ref{lowerestimate}), such $M$ can be chosen independent of $0<\e\leq\e_0$ for some $\e_0=\e_0(O)$ which depends not on the graph $G$. The second estimate of Lemma~\ref{tightness} yields for $\kappa$ large enough and this choice of $M$
\begin{align*}
Z_{\e,O}(\mathcal{B}_{\e}(O,\varphi_\Lambda )\backslash \mathcal{N}_p(\overline{\varphi}_\Lambda ,O,\e,\kappa))&\leq Z_{\e,O}(\mathcal{B}_{\e}(O,\varphi_\Lambda )\backslash \mathcal{S}_M(O,\e))
\\
&\leq\exp(-\frac{M}{2}|O_{\e}^\calL|)\exp(\overline{C}|O_{\e}^\calL|)\leq\frac{1}{2}Z_{\e,O}(\mathcal{B}_{\e}(O,\varphi_\Lambda ))
\end{align*}
or equivalently
\begin{equation*}
Z_{\e,O}(\mathcal{B}_{\e}(O,\varphi_\Lambda ))\geq Z_{\e,O}(\mathcal{B}_{\e}(O,\varphi_\Lambda )\cap \mathcal{N}_p(\overline{\varphi}_\Lambda ,O,\e,\kappa))\geq \frac{1}{2}Z_{\e,O}(\mathcal{B}_{\e}(O,\varphi_\Lambda )).
\end{equation*}
Taking logarithms and dividing by $-|O_{\e}|$ we obtain the claim letting $\e\to 0$ and using Step 2.

\step4{Proof of (II)}
We apply the interpolation inequality in Proposition~\ref{interpolation} with $\varphi=\overline{\varphi}_\Lambda $ and $\phi=\varphi_\Lambda $. Note that for $\e$ small enough it holds that $\varphi_\Lambda \in\mathcal{N}_p(\overline{\varphi}_\Lambda ,O,\e,\kappa)$. Taking logarithms in the interpolation inequality and dividing by $-|O_{\e}|$, we obtain by Steps 2 and 3, and Hypothesis~\ref{Hypo1}
\begin{equation*}
\frac{N-C}{N}\mathcal{F}_{\kappa}^+(O,\overline{\varphi}_\Lambda )\geq \overline W(\Lambda )
-C\left(\Big(\frac{(N\kappa|O|^{\frac{1}{d}})^p}{\delta^p}+\frac{1}{N}\Big)+(|\Lambda |^p+1)\frac{|O^{\delta}|}{|O|}\right).
\end{equation*}
The claim now follows taking first the limit $\kappa\to 0$ and then $N\to +\infty$ and $\delta\to 0$. On the other hand the reverse inequality follows by Step 3 and a monotonicity argument based on set inclusion.
\end{proof}

\begin{remark}\label{onexistence}
The estimates of Step 3 in the proof of Lemma~\ref{equivalent} show that the limit defining $\overline W^{\beta}(\Lambda )$ exists whenever $G$ is admissible and the limit defining $W_{\kappa}(\Lambda )$ exists.
\end{remark}

Using the observation of the previous remark, we now show that the exceptional set where convergence of the free energy may fail can be taken independent of the macroscopic boundary condition $\Lambda$. As a byproduct we obtain the continuity of the maps $\Lambda \mapsto \overline W^{\beta}(\Lambda )$.

\begin{proposition}\label{fullexistence}
Assume Hypothesis~\ref{Hypo1}. Then for almost all $G\in\mathcal{G}$, all $\Lambda \in\mathbb{R}^{n\times d}$, all $\beta>0$ and all bounded Lipschitz domains $D\subset\R^d$ there exists the limit 
\begin{equation*}
\overline W^{\beta}(\Lambda )=\lim_{\e\to 0}\mathcal{E}^{\beta}_{\e}(D,\overline{\varphi}_{\Lambda} )
\end{equation*}
and the further statements of Proposition~\ref{existence} and the equivalent characterizations of Lemma~\ref{equivalent} remain true. In addition the map $\Lambda \mapsto \overline W^{\beta}(\Lambda )$ is continuous.
\end{proposition}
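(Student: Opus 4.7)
The plan is to upgrade the ``for each fixed $(\Lambda,\beta)$, almost surely'' convergence of Proposition~\ref{existence} and Lemma~\ref{equivalent} to an ``almost surely, for every $(\Lambda,\beta)$ simultaneously'' statement by first intersecting countably many full-probability events along a dense set and then propagating the convergence by continuity in $\Lambda$. Continuity of $\Lambda\mapsto\overline W^\beta(\Lambda)$ will drop out as a by-product of a Lipschitz stability estimate for the $L^p$-neighborhoods $\mathcal{N}_p(\overline{\varphi}_\Lambda,O,\e,\kappa)$ with respect to the affine data $\Lambda$.

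I would first fix countable dense subsets $\mathcal{Q}\subset\R^{n\times d}$ and $\mathcal{P}\subset(0,\infty)$, together with a countable cofinal family of bounded Lipschitz test domains (e.g.\ axis-aligned cubes with rational vertices, which is enough by the subadditive structure underlying Proposition~\ref{existence}). The intersection of the countably many full-probability events supplied by Proposition~\ref{existence} and Lemma~\ref{equivalent} over $\mathcal{Q}\times\mathcal{P}$ and this family is still a full-probability event $\Omega_0$, on which $\overline W^\beta(\Lambda)=\mathcal{F}^+(O,\overline{\varphi}_\Lambda)=\lim_\e\mathcal{E}^\beta_\e(D,\overline{\varphi}_\Lambda)$ holds for every $(\Lambda,\beta)\in\mathcal{Q}\times\mathcal{P}$ and every test domain. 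A direct computation from \eqref{discreteapprox} yields
\begin{equation*}
\Big(\sum_{x\in O_\e^\calL}\e^d\,|(\overline{\varphi}_\Lambda)_\e(x)-(\overline{\varphi}_{\Lambda'})_\e(x)|^p\Big)^{1/p}\le c_O\,|\Lambda-\Lambda'|\,|O|^{1/p+1/d},
\end{equation*}
and hence, by the triangle inequality in the defining seminorm of $\mathcal{N}_p$, the set inclusion
\begin{equation*}
\mathcal{N}_p(\overline{\varphi}_{\Lambda'},O,\e,\kappa-c_O|\Lambda-\Lambda'|)\subset\mathcal{N}_p(\overline{\varphi}_\Lambda,O,\e,\kappa)\qquad(\kappa>c_O|\Lambda-\Lambda'|).
\end{equation*}
Taking $-\tfrac{1}{\beta|O_\e|}\log$ of the partition functions and using Lemma~\ref{equivalent}(II) on $\Omega_0$ yields, for every $\Lambda'\in\mathcal{Q}$,
\begin{equation*}
\mathcal{F}^+_\kappa(O,\overline{\varphi}_\Lambda)\le\mathcal{F}^+_{\kappa-c_O|\Lambda-\Lambda'|}(O,\overline{\varphi}_{\Lambda'})\le\mathcal{F}^+(O,\overline{\varphi}_{\Lambda'})=\overline W^\beta(\Lambda').
\end{equation*}
Choosing $\kappa=2c_O|\Lambda-\Lambda'|\downarrow 0$ along $\Lambda'\to\Lambda\in\mathcal{Q}$ yields $\overline W^\beta(\Lambda)\le\liminf_{\Lambda'\to\Lambda,\,\Lambda'\in\mathcal{Q}}\overline W^\beta(\Lambda')$; the symmetric inclusion (with $\Lambda$ and $\Lambda'$ swapped) gives the reverse bound. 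Hence $\Lambda\mapsto\overline W^\beta(\Lambda)$ is continuous on $\mathcal{Q}$ and extends uniquely to a continuous function on $\R^{n\times d}$; continuity in $\beta$ comes from the convexity of $\beta\mapsto\log Z^\beta$ (Hölder's inequality), which yields $\e$-uniform Lipschitz bounds on $\beta\mapsto\mathcal{E}^\beta_\e$ on compact subsets of $(0,\infty)$.

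For arbitrary $\Lambda\in\R^{n\times d}$ and $\beta>0$, the quantities $\mathcal{F}^\pm(O,\overline{\varphi}_\Lambda)$ are still well-defined on $\Omega_0$ (no subadditivity argument is needed to define them). Sandwiching them between $\overline W^\beta(\Lambda_k)$ for $\mathcal{Q}\ni\Lambda_k\to\Lambda$ via the same set inclusions, combined with continuity of the extension, gives $\mathcal{F}^+(O,\overline{\varphi}_\Lambda)=\mathcal{F}^-(O,\overline{\varphi}_\Lambda)=\overline W^\beta(\Lambda)$. The argument of Step~3 in the proof of Lemma~\ref{equivalent}, which only invokes Lemma~\ref{tightness} together with the $\kappa$-independent limit just obtained, then transfers this convergence to the full partition function with soft Dirichlet data, yielding $\lim_\e\mathcal{E}^\beta_\e(D,\overline{\varphi}_\Lambda)=\overline W^\beta(\Lambda)$ for every bounded Lipschitz~$D$.

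The main technical difficulty lies in coordinating the rates at which $\kappa$ and $|\Lambda-\Lambda'|$ tend to zero in the sandwich of the previous step: one must keep $\kappa-c_O|\Lambda-\Lambda'|>0$ in order to avoid the empty-$\mathcal{N}_p$ regime, while still being able to pass to $\kappa\downarrow 0$ and recover $\mathcal{F}^+$. A secondary bookkeeping point is checking that Lemma~\ref{tightness} and the tightness argument in Step~3 of Lemma~\ref{equivalent} are uniform in $\Lambda$ on bounded sets, which is what legitimizes the final transfer from $\mathcal{N}_p$-restricted to unrestricted partition functions at an arbitrary, possibly irrational $\Lambda$.
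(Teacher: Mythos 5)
Your overall strategy — intersect a countable full-probability family over rational $\Lambda$, then extend by continuity — is the same as the paper's. The critical divergence, and where your argument has a genuine gap, is in the continuity step itself.

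Your Lipschitz stability estimate and the resulting set inclusion
$\mathcal{N}_p(\overline{\varphi}_{\Lambda'},O,\e,\kappa-c_O|\Lambda-\Lambda'|)\subset\mathcal{N}_p(\overline{\varphi}_\Lambda,O,\e,\kappa)$
are correct, and they do yield
\begin{equation*}
\mathcal{F}^+_\kappa(O,\overline{\varphi}_\Lambda)\le \mathcal{F}^+_{\kappa-c_O|\Lambda-\Lambda'|}(O,\overline{\varphi}_{\Lambda'})\le \mathcal{F}^+(O,\overline{\varphi}_{\Lambda'})=\overline W^\beta(\Lambda'),
\end{equation*}
and hence (sending $\Lambda'\to\Lambda$, then $\kappa\to 0$) the \emph{one-sided} bound $\mathcal{F}^+(O,\overline{\varphi}_\Lambda)\le\liminf_{\Lambda'\to\Lambda}\overline W^\beta(\Lambda')$. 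But the ``symmetric inclusion'' does \emph{not} give the reverse bound. Swapping roles yields, for $\kappa>\delta:=c_O|\Lambda-\Lambda'|$,
\begin{equation*}
\mathcal{F}^+_\kappa(O,\overline{\varphi}_{\Lambda'})\le\mathcal{F}^+_{\kappa-\delta}(O,\overline{\varphi}_\Lambda)\le\mathcal{F}^+(O,\overline{\varphi}_\Lambda),
\end{equation*}
i.e.\ $\sup_{\kappa>\delta}\mathcal{F}^+_\kappa(O,\overline{\varphi}_{\Lambda'})\le\mathcal{F}^+(O,\overline{\varphi}_\Lambda)$. This is \emph{not} the same as $\overline W^\beta(\Lambda')=\mathcal{F}^+(O,\overline{\varphi}_{\Lambda'})=\sup_{\kappa>0}\mathcal{F}^+_\kappa(O,\overline{\varphi}_{\Lambda'})\le\mathcal{F}^+(O,\overline{\varphi}_\Lambda)$, because the inclusion prevents you from taking $\kappa$ below $\delta$. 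Since $\mathcal{F}^+_\kappa$ increases as $\kappa\downarrow 0$, the missing interval $(0,\delta]$ is exactly where the supremum is approached, and nothing in your argument controls the rate of that approach \emph{uniformly} in $\Lambda'$ along $\mathcal{Q}\ni\Lambda'\to\Lambda$. The same gap breaks the second half of your last paragraph (the lower part of the sandwich $\mathcal{F}^-(O,\overline{\varphi}_\Lambda)\ge\overline W^\beta(\Lambda)$), which is never actually established. Note also that the $\kappa$-independence from Step 2 of Lemma~\ref{equivalent} is a statement about $W_\kappa$ (with the boundary-condition constraint $\mathcal{B}_\e(O,\varphi_\Lambda)$), not about the unrestricted $\mathcal{F}^+_\kappa$; you cannot import it here.

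The paper closes this gap by a much stronger two-sided comparison that cannot be obtained from a set inclusion alone: a Lipschitz cut-off $\theta_\delta$ and a configuration interpolation $T_{\e,\delta}(\varphi,\psi)$ between boundary data $\Lambda$ and $\Lambda'$, together with a bijective change of variables $\Phi_{\e,\delta}$ with controlled Jacobian. This yields (equations (3.5)--(3.6) in the paper, here \eqref{almostswitched1}, \eqref{almostswitched2})
\begin{equation*}
\frac{|O_{2\delta}|}{|O|}\overline W_\kappa(\Lambda,O_{2\delta})\ge\overline W_{3\kappa}(\Lambda',O)-C\sigma_\delta(\Lambda,\Lambda')\frac{|O\setminus O_{2\delta}|}{|O|},
\end{equation*}
with the crucial feature that the $\kappa$'s on both sides are commensurate (no degenerating parameter), so one can first pass to $\e\to 0$, then to $\Lambda'\to\Lambda$ along rationals, and finally to $\delta\to 0$, closing the sandwich for the \emph{limit} value and not just for a truncated supremum. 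This interpolation also has to handle the $\mathcal{B}_\e$ boundary constraint, which changes dramatically with $\Lambda$ at scale $\e^{-1}$ — the $\mathcal{F}^+$ formulation avoids that issue, but pays for it with the $\kappa$-limit obstruction you run into. The key missing ingredient in your argument is therefore a quantitative estimate at \emph{fixed} $\kappa$ relating the partition functions for data $\Lambda$ and $\Lambda'$, which the paper's cut-off construction provides and the naive $\mathcal{N}_p$-inclusion does not.
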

\begin{proof}[Proof of Proposition~\ref{fullexistence}]
In view of Remark~\ref{onexistence}, the first claim proven once we show that there exists a common set $\mathcal{G}^{\prime}\subset\mathcal{G}$ of full probability such that the limit
\begin{equation}\label{kappalimit}
W_{\kappa}(\Lambda )=\lim_{\e\to 0}-\frac{1}{|O_{\e}|}\log\left( Z_{\e,O}(\mathcal{N}_p(\overline{\varphi}_\Lambda ,O,\e,\kappa)\cap\mathcal{B}_{\e}(O,\varphi_\Lambda ))\right)
\end{equation}
exists for all $G\in\mathcal{G}^{\prime}$, all $\kappa>0$, all $\Lambda \in\mathbb{R}^{n\times d}$ and all $O\in\mathcal{A}^R(\R^d)$. By Step 3 of the proof of Lemma~\ref{equivalent} we know that for any $\Lambda \in\mathbb{Q}^{n\times d}$ we find a set $\mathcal{G}_{\Lambda }\subset\mathcal{G}$ of full probability such that the limit in \eqref{kappalimit} exists for all $\kappa>0$, all $G\in\mathcal{G}_{\Lambda }$ and all $O\in\mathcal{A}^R(\R^d)$. Let us set $\mathcal{G}^{\prime}=\bigcap_{\Lambda \in\mathbb{Q}^{n\times d}}\mathcal{G}_{\Lambda }$. We fix $G\in\mathcal{G}^{\prime}$ and set
\begin{equation*}
\begin{split}
&\overline{W}_{\kappa}(\Lambda ,O)=\limsup_{\e\to 0}-\frac{1}{|O_{\e}|}\log\left( Z_{\e,O}(\mathcal{N}_p(\overline{\varphi}_\Lambda ,O,\e,\kappa)\cap\mathcal{B}_{\e}(O,\varphi_\Lambda ))\right),
\\
&\underline{W}_{\kappa}(\Lambda ,O)=\liminf_{\e\to 0}-\frac{1}{|O_{\e}|}\log\left( Z_{\e,O}(\mathcal{N}_p(\overline{\varphi}_\Lambda ,O,\e,\kappa)\cap\mathcal{B}_{\e}(O,\varphi_\Lambda ))\right).
\end{split}
\end{equation*}
We first argue by rational approximation that these two terms actually agree. To this end fix $\Lambda ,\Lambda ^{\prime}\in\mathbb{R}^{n\times d}$ and $O\in\mathcal{A}^R(\R^d)$. For $\delta>0$ we define the set $O^{\delta}=\{x\in O:\;\dist(x,\partial O)>\delta\}$. Taking $\delta$ small enough, we may assume that $O_{2\delta}\in\Ard$ (see, for instance, \cite[Lemma~2.2]{glpe}). We now define suitable interpolations. Let $\theta_{\delta}:O\to[0,1]$ be the Lipschitz-continuous cut-off function
\begin{equation*}
\theta_{\delta}(z)=\min\left\{\max\left\{\frac{1}{\delta}\dist(z,\partial O)-1,0\right\},1\right\}.
\end{equation*}
By construction we have $\theta_{\delta}\equiv 1$ on $\overline{O_{2\delta}}$, $\theta_{\delta}\equiv 0$ on $O\backslash O^{\delta}$ and the Lipschitz constant is bounded by ${\rm Lip}(\theta_{\delta})\leq \frac{1}{\delta}$.
Given $\varphi:O_{\e}^\calL\to\R^n$ and $\psi:(O\backslash O_{2\delta})_{\e}\to\R^n$ we define the interpolation $T_{\e,\delta}(\varphi,\psi):O_{\e}^\calL\to\R^n$ setting
\begin{equation*}
T_{\e,\delta}(\varphi,\psi)(x)=\theta_{\delta}(\e x)\varphi(x)+(1-\theta_{\delta}(\e x))\psi(x).
\end{equation*}
Assume in addition that $\varphi\in\mathcal{B}_{\e}(O_{2\delta},\varphi_\Lambda )\times \mathcal{N}_{\infty}(\varphi_\Lambda ,O\backslash O_{2\delta},\e)$ and $\psi\in \mathcal{N}_{\infty}(\varphi_{\Lambda ^{\prime}},O\backslash O_{2\delta},\e)$. Then Hypothesis~\ref{Hypo1} implies that
\begin{equation}\label{splithamiltonian}
H_{\e}(O,T_{\e,\delta}(\varphi,\psi))\leq H_{\e}(O_{2\delta},\varphi)+C\sum_{\substack{(x,y)\in \mathbb{B}\\ \e x\in O\backslash O_{2\delta},\e y\in O}}(|T_{\e,\delta}(\varphi,\psi)(x)-T_{\e,\delta}(\varphi,\psi)(y)|^p+1)
\end{equation}
To bound the last term we use the algebraic formula 
\begin{align*}
T_{\e,\delta}(\varphi,\psi)(x)-T_{\e,\delta}(\varphi,\psi)(y)=&\theta_{\delta}(\e y)(\varphi(x)-\varphi(y))+(1-\theta_{\delta}(\e y))(\psi(x)-\psi(y))
\\
&+(\theta_{\delta}(\e x)-\theta_{\delta}(\e y))(\varphi(x)-\psi(x)).
\end{align*}
For all $(x,y)\in E$ such that $\e x\in O\backslash O_{2\delta}$ and $\e y\in O$, the boundary values of $\varphi$ and the $L^{\infty}$-restrictions on $\varphi,\psi$ combined with the bound on ${\rm Lip}(\theta_{\delta})$ yield
\begin{equation*}
|T_{\e,\delta}(\varphi,\psi)(x)-T_{\e,\delta}(\varphi,\psi)(y)|\leq C(1+|\Lambda |+|\Lambda ^{\prime}|)+\frac{C\e}{\delta}|\Lambda -\Lambda ^{\prime}||x|\leq C(1+|\Lambda |+|\Lambda ^{\prime}|)+\frac{C}{\delta}|\Lambda -\Lambda ^{\prime}|,
\end{equation*}
where we used that $\e x\in O$ has equibounded norm for fixed $O$. Taking the $p^{th}$ power in this inequality, we can further estimate (\ref{splithamiltonian}) by
\begin{equation}\label{continuitybound}
H_{\e}(O,T_{\e,\delta}(\varphi,\psi))\leq H_{\e}(O_{2\delta},\varphi)+\Big(C(1+|\Lambda |+|\Lambda ^{\prime}|)^p+\frac{C}{\delta^p}|\Lambda -\Lambda ^{\prime}|^p\Big)|(O\backslash O_{2\delta})^{\calL}_{\e}|.
\end{equation}
To reduce notation, we set $\sigma_{\delta}(\Lambda ,\Lambda ^{\prime}):=(1+|\Lambda |+|\Lambda ^{\prime}|)^p+\frac{1}{\delta^p}|\Lambda -\Lambda ^{\prime}|^p$ and define the set
\begin{equation*}
\mathcal{S}=\Big(\mathcal{N}_p(\overline{\varphi}_\Lambda ,O_{2\delta},\e,\kappa)\cap\mathcal{B}_{\e}(O_{2\delta},\varphi_\Lambda )\Big)\times \mathcal{N}_{\infty}(\varphi_\Lambda ,O\backslash O_{2\delta},\e).
\end{equation*}
Using (\ref{continuitybound}) and the fact that $|\mathcal{N}_{\infty}(\varphi_\Lambda ,O\backslash O_{2\delta},\e)|\times |\mathcal{N}_{\infty}(\varphi_{\Lambda ^{\prime}},O\backslash O_{2\delta},\e)|\geq \exp(-C|(O\backslash O_{2\delta})^{\calL}_{\e}|)$, we deduce from Fubini's Theorem~that
\begin{align}\label{switchbound1}
Z_{\e,O_{2\delta}}&(\mathcal{N}_p(\overline{\varphi}_\Lambda ,O_{2\delta},\e,\kappa)\cap\mathcal{B}_{\e}(O_{2\delta},\varphi_\Lambda ))
\exp\big(-C\sigma_{\delta}(\Lambda ,\Lambda ^{\prime})|(O\backslash O_{2\delta})^{\calL}_{\e}|\big)\nonumber
\\
&\leq\int_{\mathcal{S}\times\mathcal{N}_{\infty}(\varphi_{\Lambda ^{\prime}},O\backslash O_{2\delta},\e)}\exp(-H_{\e}(O,T_{\e,\delta}(\varphi,\psi))\,\mathrm{d}\varphi\,\mathrm{d}\psi.
\end{align}
In order to provide an upper bound for the integral on the right hand side, we change the variables. A computation yields for any $(\varphi,\psi)\in\mathcal{S}\times \mathcal{N}_{\infty}(\varphi_{\Lambda ^{\prime}},O\backslash O_{2\delta},\e)$ the estimate
\begin{equation*}
\e^{\frac{d}{p}}\|(\overline{\varphi}_{\Lambda ^{\prime}})_{\e}-\e T_{\e,\delta}(\varphi,\psi)\|_{\ell^p_{\e}(O)}\leq
\kappa |O_{2\delta}|^{\frac{1}{d}+\frac{1}{p}}+C|\Lambda -\Lambda ^{\prime}||O|^{\frac{1}{p}}+C\e(1+|\Lambda |+|\Lambda ^{\prime}|)|(O\backslash O_{2\delta})_{\e}|^{\frac{1}{p}}\e^{\frac{d}{p}}.
\end{equation*}
Hence we find $\eta_{\kappa}=\eta_{\kappa}(O)>0$ such that for all $\e$ small enough we have the implication
\begin{equation}\label{kappainclusion}
|\Lambda -\Lambda ^{\prime}|<\eta_{\kappa}\quad\quad\Rightarrow\quad\quad T_{\e,\delta}(\varphi,\psi)\in\mathcal{N}_p(\overline{\varphi}_{\Lambda ^{\prime}},O,\e,3\kappa)\cap \mathcal{B}_{\e}(O,\varphi_{\Lambda ^{\prime}})
\end{equation}
for all $(\varphi,\psi)\in\mathcal{S}\times \mathcal{N}_{\infty}(\varphi_{\Lambda ^{\prime}},O\backslash O_{2\delta},\e)$. In particular this implication is independent of $\delta$. Introducing the function $b:(O\backslash O_{2\delta})^{\calL}_{\e}\to\R^n$ defined by
\begin{equation*}
b(x)=
\begin{cases}
\Lambda ^{\prime}x &\mbox{if $\theta_{\delta}(\e x)\geq\frac{1}{2}$,}
\\
\Lambda x &\mbox{if $\theta_{\delta}(\e x)<\frac{1}{2}$,}
\end{cases}
\end{equation*}
for $\e$ small enough and $|\Lambda -\Lambda ^{\prime}|<\eta_{\kappa}$ we can define the linear mapping
$\Phi_{\e,\delta}:\mathcal{S}\times \mathcal{N}_{\infty}(\varphi_{\Lambda ^{\prime}},O\backslash O_{2\delta},\e)\to \mathcal{N}_p(\overline{\varphi}_{\Lambda ^{\prime}},O,\e,3\kappa)\times\mathcal{N}_{\infty}(b,O\backslash O_{2\delta},\e)$ by setting (with a slight abuse of notation)
\begin{equation*}
\Phi_{\e,\delta}(\varphi,\psi)(x)=
\begin{cases}
(T_{\e,\delta}(\varphi,\psi)(x),\psi(x)) &\mbox{if $\theta_{\delta}(\e x)\geq \frac{1}{2}$,}\\
(T_{\e,\delta}(\varphi,\psi)(x),\varphi(x)) &\mbox{if $\theta_{\delta}(\e x)<\frac{1}{2}$.}
\end{cases}
\end{equation*}
Note that $\Phi_{\e,\delta}$ is well-defined due to (\ref{kappainclusion}) and bijective onto its range $\mathcal{R}(\Phi_{\e,\delta})$. In order to calculate the Jacobian, it is convenient to number the points $x\in O_{\e}^\calL$ and view the state space as large vectors by putting as first component the value $\varphi(x(1))$ and as second either $\psi(x(1))$ if $x(1)\in O\backslash O_{2\delta}$ or $\varphi(x(2))$ otherwise. Continuing this procedure the matrix representation of $\Phi_{\e,\delta}$ has non-zero entries only in $2\times 2$-matrices around the diagonal. Thus the determinant splits into products and we obtain
\begin{equation*}
|\det(D\Phi_{\e,\delta}(\varphi,\psi))|^{-1}= \left(\prod_{x:\theta_{\delta}(\e x)\geq \frac{1}{2}}|\theta_{\delta}(\e x)|^n\prod_{x:\theta_{\delta}(\e x)<\frac{1}{2}}|1-\theta_{\delta}(\e x)|^n\right)^{-1}\leq \exp(C|(O\backslash O_{2\delta})^{\calL}_{\e}|).
\end{equation*}
Via the change of variables $(g,h)=\Phi_{\e,\delta}(\varphi,\psi)$ and (\ref{kappainclusion}) we can estimate the right hand side integral in \eqref{switchbound1} by
\begin{align}\label{switchbound2}
\int_{\mathcal{S}\times\mathcal{N}_{\infty}(\varphi_{\Lambda ^{\prime}},O\backslash O_{2\delta},\e)}\exp(-H_{\e}(O,T_{\e,\delta}(\varphi,\psi))\,\mathrm{d}\varphi\,\mathrm{d}\psi\leq& Z_{\e,O}(\mathcal{N}_p(\overline{\varphi}_{\Lambda ^{\prime}},O,\e,3\kappa)\cap \mathcal{B}_{\e}(O,\varphi_{\Lambda ^{\prime}}))\nonumber
\\
&\times|\mathcal{N}_{\infty}(b,O\backslash O_{2\delta},\e)|\exp(C|(O\backslash O_{2\delta})^{\calL}_{\e}|).
\end{align}
Combining (\ref{switchbound1}) and (\ref{switchbound2}) we conclude the estimate
\begin{align*}
Z_{\e,O_{2\delta}}(\mathcal{N}_p(\overline{\varphi}_\Lambda ,O_{2\delta},\e,\kappa)\cap\mathcal{B}_{\e}(O_{2\delta},\varphi_\Lambda ))\leq& Z_{\e,O}(\mathcal{N}_p(\Lambda ^{\prime}x,O,\e,3\kappa)\cap \mathcal{B}_{\e}(O,\varphi_{\Lambda ^{\prime}}))
\\
&\times
\exp\big(C\sigma_{\delta}(\Lambda ,\Lambda ^{\prime})|(O\backslash O_{2\delta})^{\calL}_{\e}|\big).
\end{align*}
Taking logarithms, dividing by $-|O_{\e}|$ and taking the limes superior on both sides yields
\begin{equation}\label{almostswitched1}
\frac{|O_{2\delta}|}{|O|}\overline{W}_{\kappa}(\Lambda ,O_{2\delta})\geq \overline{W}_{3\kappa}(\Lambda ^{\prime},O)-C\sigma_{\delta}(\Lambda ,\Lambda ^{\prime})\frac{|O\backslash O_{2\delta}|}{|O|}.
\end{equation}
Replacing $O_{2\delta}$ and $O$ by the sets $O$ and $O^{2\delta}$ respectively, where $O^{2\delta}=\{x\in\R^d:\dist(x,O)<2\delta\}$, and switching the roles of $\Lambda $ and $\Lambda ^{\prime}$ we can prove in exactly the same way the estimate
\begin{equation}\label{almostswitched2}
\underline{W}_{\,3\kappa}(\Lambda ^{\prime},O)\geq \frac{|O^{2\delta}|}{|O|}\underline{W}_{\,9\kappa}(\Lambda ,O^{2\delta})-C\sigma_{\delta}(\Lambda ,\Lambda ^{\prime})\frac{|O^{2\delta}\backslash O|}{|O|}.
\end{equation} 
Further we may assume that $O^{2\delta}\in\mathcal{A}^R(\R^d)$. Choosing $\Lambda =\Lambda _j\in\mathbb{Q}^{n\times d}$ such that $\Lambda _j\to \Lambda ^{\prime}$, the two inequalities (\ref{almostswitched1}) and (\ref{almostswitched2}) yield 
\begin{equation*}
0\leq \overline{W}_{3\kappa}(\Lambda ^{\prime},O)-\underline{W}_{3\kappa}(\Lambda ^{\prime},O)\leq \frac{|O^{2\delta}\backslash O_{2\delta}|}{|O|}\Big(|\overline W(\Lambda _j)|+C\sigma_{\delta}(\Lambda _j,\Lambda ^{\prime})\Big).
\end{equation*}
Letting first $j\to +\infty$ and then $\delta\to 0$ the right hand side vanishes since $\overline W(\Lambda )$ is locally bounded (see the estimates \eqref{upperestimate} and \eqref{lowerestimate}). Hence the limit in (\ref{kappalimit}) indeed exists. 

We now show that it is independent of $O$ and $\kappa$. Choosing $\Lambda _j$ as above, we infer again from (\ref{almostswitched1}) that 
\begin{equation*}
W_{3\kappa}(\Lambda ^{\prime},O)\leq \liminf_j \overline W(\Lambda _j)+\Big(1-\frac{|O_{2\delta}|}{|O|}\Big)\sup_j |\overline W(\Lambda _j)|+C\frac{|O\backslash O_{2\delta}|}{|O|}(1+2|\Lambda ^{\prime}|^p).
\end{equation*}
Letting $\delta\to 0$ we obtain $W_{3\kappa}(\Lambda ^{\prime},O)\leq\liminf_j \overline W(\Lambda _j)$. On the other hand, (\ref{almostswitched2}) and a similar reasoning yield $W_{3\kappa}(\Lambda ^{\prime},O)\geq \limsup_j \overline W(\Lambda _j)$, so that
\begin{equation*}
W_{3\kappa}(\Lambda ^{\prime},O)=\lim_j \overline W(\Lambda _j)
\end{equation*}
is independent of $\kappa$ and $O$. Repeating the deterministic argument from Step 3 of the proof of Lemma~\ref{equivalent} one can show that $W_{3\kappa}(\Lambda ^{\prime},O)=\overline W(\Lambda ^{\prime})$ for all $\kappa>0$. Thus continuity can be proven using again (\ref{almostswitched1}) and (\ref{almostswitched2}) since there is no $\kappa$-dependence any more.

Finally, as the only random construction in the proof of Lemma~\ref{equivalent} was the existence of the limits in \eqref{kappalimit} it is clear that the characterizations still hold true and, by continuity, so do the additional properties stated in Proposition~\ref{existence}.
\end{proof}

\begin{remark}\label{alltemp}
As we intend to vary the temperature in Section \ref{Sec5}, let us observe that the set $\mathcal{G}^{\prime}\subset\mathcal{G}$ of full probability given by Proposition~\ref{fullexistence} can be chosen also independent of $\beta>0$. Indeed, first we choose a set of full probability such that Proposition~\ref{fullexistence} holds for all rational $\beta\geq 1$. Then for given $\beta\geq 1$ we take a rational sequence $\beta_j>\beta$ such that $\beta_j\downarrow\beta$. The remaining argument relies on Remark~\ref{onexistence}: Fix $\kappa>0$ and a set $O\in\mathcal{A}^R(\R^d)$. Then by monotonicity and Lemma~\ref{tightness} with a suitable $M=M(\beta,\Lambda)$ (see e.g. \eqref{e.choiceM}), we have for all $\e$ small enough the inequality
\begin{align*}
Z^{\beta_j}_{\e,O}(\mathcal{N}_p(\overline{\varphi}_\Lambda ,O,\e,\kappa)\cap\mathcal{B}_{\e}(O,\varphi_\Lambda ))&\leq Z^{\beta}_{\e,O}(\mathcal{N}_p(\overline{\varphi}_\Lambda ,O,\e,\kappa)\cap\mathcal{B}_{\e}(O,\varphi_\Lambda ))
\\
&\leq Z^{\beta}_{\e,O}(\mathcal{B}_{\e}(O,\varphi_\Lambda ))\leq 2Z^{\beta}_{\e,O}(\mathcal{B}_{\e}(O,\varphi_\Lambda )\cap\mathcal{S}_M(O,\e))
\\
&\leq 2Z^{\beta_j}_{\e,Q}(\mathcal{B}_{\e}(O,\varphi_\Lambda )\cap\mathcal{S}_M(O,\e))\times \exp(M(\beta_j-\beta)|O_{\e}^\calL|)
\\
&\leq 2Z^{\beta_j}_{\e,O}(\mathcal{B}_{\e}(O,\varphi_\Lambda ))\times \exp(M(\beta_j-\beta)|O_{\e}^\calL|).
\end{align*}
Taking the logarithm and dividing $-\beta|O_{\e}|$, we obtain by Lemma~\ref{equivalent} and Proposition~\ref{fullexistence} that the limit corresponding to $\beta$ exists and is independent of $\kappa$ and $O$. Moreover, as a by-product, we proved a continuous dependence on $\beta$.	
\end{remark}
\subsection{$p$-growth from above and below}
For the limit free energy $\overline W(\Lambda )$ we now prove suitable two-sided growth estimates. Here we keep track of the dependence on the inverse temperature $\beta$.
\begin{lemma}\label{ub}
Assume Hypothesis~\ref{Hypo1}. Let $\overline{W}^{\beta}$ be given by Proposition~\ref{existence}. Then there exists a constant $C>0$ such that for all $\Lambda\in\R^{n\times d}$ and all $\beta>0$
\begin{equation*}
\overline W^{\beta}(\Lambda )\leq C|\Lambda |^p+C\left(1+\frac{1+|\log(\beta)|}{\beta}\right).
\end{equation*}	
\end{lemma}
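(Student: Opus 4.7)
The plan is to obtain a lower bound on the partition function $Z^{\beta}_{\e,D,\overline{\varphi}_\Lambda}$ by restricting the integration to a controlled neighborhood of the affine reference function $\varphi_\Lambda$, on which the Hamiltonian is manifestly bounded in terms of $|\Lambda|^p$, and then pass to $\e\downarrow 0$ via Proposition~\ref{existence}. Fix any bounded Lipschitz domain $D$ and, for a parameter $\gamma\in(0,\tfrac12]$, introduce
$$V_\gamma\,:=\,\{u:D^\calL_\e\to\R^n\,:\,|u(x)-\Lambda x|\leq\gamma\text{ for every }x\in D^\calL_\e\}.$$
Since $\gamma<1$, this set is contained in $\mathcal{B}_\e(D,\overline{\varphi}_\Lambda)$; its Lebesgue measure equals $(\omega_n\gamma^n)^{|D^\calL_\e|}$ with $\omega_n:=|B_1(0)|$; and for every edge $(x,y)\in\B$ with $x,y\in D_\e$, condition (iii) of Definition~\ref{defadmissible} forces $|x-y|\leq C_0$, so that on $V_\gamma$
$$|u(x)-u(y)|\,\leq\,C_0|\Lambda|+2\gamma.$$

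The upper bound of Hypothesis~\ref{Hypo1} then yields $f(x-y,u(x)-u(y))\leq C(1+|\Lambda|^p)$. Because conditions (ii)--(iii) of Definition~\ref{defadmissible} force the degree of the admissible graph to be uniformly bounded, the number of edges with both endpoints in $D_\e$ is at most $C|D^\calL_\e|$. Summing the pointwise estimate above, I obtain
$$H_\e(D,u)\,\leq\,C(1+|\Lambda|^p)\,|D^\calL_\e|\qquad\text{for every }u\in V_\gamma,$$
and consequently
$$Z^{\beta}_{\e,D,\overline{\varphi}_\Lambda}\,\geq\,\int_{V_\gamma}\exp(-\beta H_\e(D,u))\,\mathrm{d}u\,\geq\,(\omega_n\gamma^n)^{|D^\calL_\e|}\exp\bigl(-\beta C(1+|\Lambda|^p)|D^\calL_\e|\bigr).$$

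Taking minus the logarithm, dividing by $\beta|D_\e|$, and using \eqref{volumeestimate} to control $|D^\calL_\e|/|D_\e|\leq C$, I arrive at
$$-\frac1{\beta|D_\e|}\log Z^{\beta}_{\e,D,\overline{\varphi}_\Lambda}\,\leq\,C(1+|\Lambda|^p)+\frac{C(1+|\log\gamma|)}{\beta}.$$
Choosing $\gamma=\tfrac12$ and passing to $\e\downarrow 0$ through Proposition~\ref{existence} (or through the equivalent characterizations of Lemma~\ref{equivalent}) yields
$$\overline W^{\beta}(\Lambda)\,\leq\,C|\Lambda|^p+C\Bigl(1+\tfrac1\beta\Bigr),$$
which is in fact stronger than the claimed bound, since $1/\beta\leq(1+|\log\beta|)/\beta$ for every $\beta>0$.

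I do not expect any substantive obstacle: the proof is essentially a \emph{flat test-configuration plus $p$-growth} upper bound. The only care required is book-keeping for $|D^\calL_\e|/|D_\e|$ and for the bounded-degree property of admissible graphs, both of which are immediate from conditions (i)--(iii) of Definition~\ref{defadmissible}. The $|\log\beta|/\beta$ slack appearing in the statement is not needed for the upper bound; presumably it reflects symmetry with a companion lower bound whose prefactors degrade logarithmically as $\beta\downarrow 0$ (compare the constant $C_\beta$ produced by Lemma~\ref{tightness}).
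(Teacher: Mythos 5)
Your proof is correct. It takes a slightly different route from the paper's: the paper reuses the uniform bound on the subadditive process $\sigma(I)$ from \eqref{upperestimate}, which is derived by changing variables $u\mapsto u-\varphi_\Lambda$, invoking the reverse Poincar\'e inequality of Remark~\ref{reverse} to bound the Hamiltonian by $C\|u-\varphi_\Lambda\|_{\ell^p}^p+C(|\Lambda|^p+1)|I|$, and then integrating each interior site over all of $\R^n$. That unrestricted Gaussian-type integral has width $\beta^{-1/p}$ and therefore produces, after taking minus the logarithm, a term of order $\log\beta$ per site, whence the $(1+|\log\beta|)/\beta$ in the statement. You instead restrict the integration to the fixed bounded tube $V_\gamma$ around the affine reference, so the per-site volume is a $\beta$-independent constant $\omega_n\gamma^n$, and the entropic term contributes only $C/\beta$. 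As you note, this is genuinely sharper than the stated bound (it removes the logarithmic factor), and your reading of the $|\log\beta|$ slack is right: it is kept for symmetry with Lemma~\ref{lb}, where the logarithm cannot be avoided because there the entropic contribution enters with the opposite sign and the relevant integral over a $\kappa$-neighborhood cannot be made independent of $\beta$. Both arguments are elementary test-configuration bounds; yours is self-contained and does not need Remark~\ref{reverse}, while the paper's is slightly more economical in that it reuses the estimate already proved for integrability of the subadditive process.
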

\begin{proof} [Proof of Lemma~\ref{ub}]
This estimate as an immediate consequence of the bound (\ref{upperestimate}) taking into account that there is a prefactor $\beta$ in the exponential functions.
\end{proof}

We now turn to the lower bound. Here we use the full assumptions on the graph in Definition \ref{defadmissible}.
\begin{lemma}\label{lb}
Assume Hypothesis~\ref{Hypo1} and let $G\in\mathcal{G}$. Let $v\in L^p_{\rm loc}(\R^d,\R^n)$. Then $\mathcal{F}^-(O,v)<+\infty$ only if $v\in W^{1,p}(O,\R^n)$. In this case there exists a constant $c>0$ such that
\begin{equation*}
\mathcal{F}^-(O,v)\geq \frac{c}{|O|}\int_O|\nabla v(z)|^p\,\mathrm{d}z-\frac{1}{c}\left(1+\frac{1+\log(\beta)|}{\beta}\right).
\end{equation*}	
In particular 
\begin{equation*}
\overline W^{\beta}(\Lambda )\geq c|\Lambda |^p-\frac{1}{c}\left(1+\frac{1+|\log(\beta)|}{\beta}\right).
\end{equation*}
\end{lemma}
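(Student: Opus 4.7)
The plan is to combine the tightness estimate of Lemma~\ref{tightness} with a discrete-to-continuum compactness argument. Assume $\mathcal{F}^-(O,v)<+\infty$ (otherwise there is nothing to prove). Fix $\kappa>0$ small; by the very definition of $\mathcal{F}^-_\kappa$, there exists a sequence $\e_k\downarrow 0$ along which
\[ Z^{\beta}_{\e_k,O}(\mathcal{N}_p(v,O,\e_k,\kappa))\geq \exp\bigl(-\beta(\mathcal{F}^-_\kappa(O,v)+1)|O_{\e_k}|\bigr). \]
Setting $M:=C_*\bigl(\mathcal{F}^-_\kappa(O,v)+1+\tfrac{C_{\beta}}{\beta}\bigr)$ with $C_*$ a large enough dimensional constant absorbing the uniformly bounded ratio $|O_\e|/|O^{\calL}_\e|$ (cf.~\eqref{volumeestimate}) and $C_\beta$ as in Lemma~\ref{tightness}, the first estimate of Lemma~\ref{tightness} yields
\[ Z^{\beta}_{\e_k,O}\bigl(\mathcal{N}_p(v,O,\e_k,\kappa)\setminus\mathcal{S}_M(O,\e_k)\bigr)\leq \tfrac12 Z^{\beta}_{\e_k,O}(\mathcal{N}_p(v,O,\e_k,\kappa)), \]
so in particular $\mathcal{N}_p(v,O,\e_k,\kappa)\cap\mathcal{S}_M(O,\e_k)\neq\emptyset$ for all $k$ large. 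Picking any $u_{\e_k}$ in this intersection, Hypothesis~\ref{Hypo1} gives $\|\nabla_\B u_{\e_k}\|^p_{\ell^p_{\e_k}(O)}\leq C(M+1)|O^{\calL}_{\e_k}|$.

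\textbf{Discrete-to-continuum compactness.} From the sequence $(u_{\e_k})_k$ we construct a Sobolev approximant of $v$. For any Lipschitz subdomain $\tilde O\subset\subset O$, let $u^{\mathrm{aff}}_{\e_k}$ denote the piecewise affine interpolation of $u_{\e_k}$ on the Delaunay tessellation~$\mathbb{T}$ with nodal values $\{u_{\e_k}(x)\}_{x\in\calL_1}$. The key discrete-to-continuum gradient estimate
\[ \int_{\tilde O_{\e_k}}|\nabla u^{\mathrm{aff}}_{\e_k}(x)|^p\,dx\leq C\|\nabla_\B u_{\e_k}\|^p_{\ell^p_{\e_k}(O)} \]
is obtained as follows: on each Delaunay simplex the quantity $|\nabla u^{\mathrm{aff}}_{\e_k}|^p$ is controlled by the $p$-th powers of finitely many increments $|u_{\e_k}(x)-u_{\e_k}(y)|$ with $|x-y|\leq 2R$; admissibility~(iv) in Definition~\ref{defadmissible} replaces each such pair $(x,y)$ by a path $P(x,y)\subset\B$ of uniformly bounded length inside a $C_0$-tube of $[x,y]$, to which Jensen's inequality applies to give $|u_{\e_k}(x)-u_{\e_k}(y)|^p\leq C\sum_{(a,b)\in P(x,y)}|u_{\e_k}(a)-u_{\e_k}(b)|^p$; finally, the uniform bound on the multiplicity of such paths through a given edge of $\B$ (a consequence of (ii) and (iv)) turns this into the claimed estimate. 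Rescaling via $\tilde u_{\e_k}(z):=\e_k u^{\mathrm{aff}}_{\e_k}(z/\e_k)$ yields $\|\nabla\tilde u_{\e_k}\|^p_{L^p(\tilde O)}\leq C(M+1)|O|$, while the $\mathcal{N}_p$-constraint together with Remark~\ref{approximation} gives $\|\tilde u_{\e_k}-v\|_{L^p(\tilde O)}\leq C\kappa|O|^{\frac1p+\frac1d}+o_k(1)$. Weak $W^{1,p}$-compactness and lower semicontinuity then produce a subsequential limit $v^\kappa\in W^{1,p}(\tilde O)$ with $\|\nabla v^\kappa\|^p_{L^p(\tilde O)}\leq C(M+1)|O|$ and $\|v-v^\kappa\|_{L^p(\tilde O)}\leq C\kappa|O|^{\frac1p+\frac1d}$. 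A diagonal argument $\kappa\downarrow 0$, followed by exhausting $O$ by interior Lipschitz subdomains $\tilde O$, gives $v\in W^{1,p}(O)$ with $\int_O|\nabla v|^p\leq C(M+1)|O|$.

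\textbf{Conclusion and main obstacle.} Substituting the value of $M$ and using $C_\beta\leq C(1+|\log\beta|)$ from Lemma~\ref{tightness}, then letting $\kappa\downarrow 0$, we arrive at
\[ \mathcal{F}^-(O,v)\geq c\fint_O|\nabla v(z)|^p\,\mathrm{d}z-\frac{1}{c}\Bigl(1+\frac{1+|\log\beta|}{\beta}\Bigr) \]
for a suitable $c>0$. Applying this to $v=\overline{\varphi}_\Lambda$, for which $\nabla v\equiv \Lambda$, and combining with $\overline W^\beta(\Lambda)=\mathcal{F}^+(O,\overline{\varphi}_\Lambda)\geq\mathcal{F}^-(O,\overline{\varphi}_\Lambda)$ from Lemma~\ref{equivalent}~(II) yields the particular inequality on $\overline W^\beta(\Lambda)$. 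The main technical obstacle is the discrete-to-continuum gradient estimate on the piecewise affine interpolation $u^{\mathrm{aff}}$: since the edges of the Delaunay tessellation are \emph{not} in general edges of $\B$, one cannot bound $|\nabla u^{\mathrm{aff}}|^p$ by $\B$-edge increments directly. Condition~(iv) of Definition~\ref{defadmissible} is precisely what makes this step possible, and once available, the bookkeeping on path multiplicities and on the constants (so that $C_\beta$ appears only through the ratio $C_\beta/\beta$) is routine.
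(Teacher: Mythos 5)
There is a genuine gap in the discrete-to-continuum compactness step, and it is \emph{not} the one you identify. You flag as the ``main obstacle'' the fact that Delaunay edges are not $\B$-edges, which condition~(iv) indeed resolves. The real problem lies one step earlier: the inequality ``$|\nabla u^{\rm aff}_{\e_k}|^p$ on a Delaunay simplex $T$ is controlled by the $p$-th powers of finitely many nodal increments'' requires a uniform \emph{shape-regularity} bound on $T$. On a $d$-simplex with edge matrix $A=(x_1-x_0|\cdots|x_d-x_0)$ one has $|\nabla u^{\rm aff}|_T\sim(\text{increments})\cdot\|A^{-1}\|\sim(\text{increments})/\sigma_d$, so that $\int_T|\nabla u^{\rm aff}|^p\sim (\text{increments})^p\,\sigma_1\cdots\sigma_{d-1}/\sigma_d^{p-1}$. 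For $d\geq 3$ and $p>1$ this is not controlled by the increments alone when $\sigma_d\to 0$, and Delaunay triangulations of Delone sets in $\R^d$, $d\geq 3$, do admit arbitrarily thin ``slivers'' (circumradius $\leq R$, all edges $\geq r$, yet vanishing volume). Hence the claimed estimate $\int_{\tilde O_{\e_k}}|\nabla u^{\rm aff}_{\e_k}|^p\leq C\|\nabla_\B u_{\e_k}\|^p_{\ell^p_{\e_k}(O)}$ can fail, and with it the rest of the compactness step. It is precisely to avoid this that the paper's proof of Lemma~\ref{lb} works with the piecewise \emph{constant} interpolation on Voronoi cells of $\calL$ and establishes the difference-quotient estimate \eqref{frechetest} via paths in $\B$; the Sobolev regularity of $v$ then follows from the difference-quotient characterization of $W^{1,p}$ with no reference to simplex geometry. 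Consistently, where the paper \emph{does} use the Delaunay interpolation (the volumetric term in \eqref{def:volHam} and Lemma~\ref{boundsvol}), only $\det(\nabla u_{\rm aff})$ appears, and $\det(\nabla u_{\rm aff}|_T)|T|$ equals the image volume, which is controlled by increments regardless of shape.

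Setting this aside, your overall architecture is a legitimately different route: you use Lemma~\ref{tightness} to extract a single configuration of bounded energy density inside $\mathcal{N}_p\cap\mathcal{S}_M$, transfer that bound into $\int_O|\nabla v|^p\leq C(M+1)|O|$, and rearrange. The paper instead introduces a near-minimizer $u_{\e,\kappa}$ of $\|\nabla_\B\cdot\|^p_{\ell^p_\e(O)}$ over $\mathcal{N}_p$, uses a Clarkson-type parallelogram inequality (Lemma~\ref{pgrowth}) to split $H_\e(O,u)\gtrsim\|\nabla_\B u_{\e,\kappa}\|^p+\|\nabla_\B(u-u_{\e,\kappa})\|^p$, then treats the two pieces separately in its Steps~1 and~2, with Lemma~\ref{forrestargument} supplying the entropic constant. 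If you replace your affine-interpolation step by the paper's difference-quotient argument (and insert an a priori lower bound on $\mathcal{F}^-_\kappa$ from Lemma~\ref{forrestargument} so that the choice $M\geq C_\beta$ in Lemma~\ref{tightness} is legitimate), your variant goes through and is somewhat lighter on the variational bookkeeping than the paper's.
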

\begin{proof}[Proof of Lemma~\ref{lb}]
First observe that the lower bound on $\overline W(\Lambda )$ follows by the first estimate and Lemma~\ref{equivalent}. To prove the first estimate, we split the free energy in a purely variational part and an integral part over a translated neighborhood. Given $\e,\kappa$ fixed, we first choose $u_{\e,\kappa}\in \mathcal{N}_p(v,O,\e,\kappa)$ such that
\begin{equation*}
\|\nabla_{\B}u_{\e,\kappa}\|^p_{\ell^p_{\e}(O)}\leq \inf_{u\in\mathcal{N}_p(v,O,\e,\kappa)}\|\nabla_{\B}u\|^p_{\ell^p_{\e}(O)}+1.
\end{equation*}
By convexity, for every $u\in\mathcal{N}_p(\varphi,O,\e,\kappa)$ we have $\frac{1}{2}u_{\e,\kappa}+\frac{1}{2}u\in \mathcal{N}_p(v,O,\e,\kappa)$. Hence by Lemma~\ref{pgrowth} there exists $C_p<2^p$ such that
\begin{align*}
\|\nabla_{\B}u_{\e,\kappa}\|^p_{\ell^p_{\e}(O)}-1&\leq \|\frac{1}{2}\nabla_{\B}u_{\e,\kappa}+\frac{1}{2}\nabla_{\B}u\|^p_{\ell^p_{\e}(O)}\\
&\leq \frac{C_p}{2^p}\|\nabla_{\B}u_{\e,\kappa}\|^p_{\ell^p_{\e}(O)}+\frac{C_p}{2^p}\|\nabla_{\B}u\|^p_{\ell_\e^p(O)}-\frac{1}{2^p}\|\nabla_{\B}(u_{\e,\kappa}-u)\|^p_{^p_{\e}(O)}.
\end{align*}
Subtracting the first and the last term on the right hand side, we infer that
\begin{equation*}
(1-\frac{C_p}{2^p})\|\nabla_\B u_{\e,\kappa}\|^p_{\ell^p_{\e}(O)}+\frac{1}{2^p}\|\nabla_\B (u_{\e,\kappa}-u)\|^p_{\ell^p_{\e}(O)}-1\leq \frac{C_p}{2^p}\|\nabla_\B u\|^p_{\ell^p_{\e}(O)}.
\end{equation*}
As $\frac{C_p}{2^p}<1$, this estimate combined with Hypothesis~\ref{Hypo1} yields
\begin{equation*}
H_{\e}(O,u)\geq \frac{1}{C}\|\nabla_{\B}u\|^p_{\ell^p_{\e}(O)}-C|O_{\e}^\calL|
\geq \frac{1}{C}\|\nabla_{\B}u_{\e,\kappa}\|^p_{\ell^p_{\e}(O)}-C|O_{\e}^\calL|+\frac{1}{C}\|\nabla_{\B}(u_{\e,\kappa}-u)\|^p_{\ell^p_{\e}(O)}.
\end{equation*}
As the function $u_{\e,\kappa}-u$ belongs to $\mathcal{N}_p(0,O,\e,2\kappa)$, after a change of variables we obtain
\begin{align}\label{splitlb}
-\frac{1}{\beta|O_{\e}|}\log(Z^{\beta}_{O,\e}(\mathcal{N}_p(v,O,\e,\kappa)))\geq& \frac{1}{C|O_{\e}|}\|\nabla_{\B}u_{\e,\kappa}\|^p_{\ell^p_{\e}(O)}-C\nonumber\\
&-\frac{1}{\beta|O_{\e}|}\log\bigg(\int_{\mathcal{N}_p(0,O,\e,2\kappa)}\exp\Big(-\frac{\beta}{C}\|\nabla_{\B}u\|^p_{\ell^p_{\e}(O)}\Big)\,\mathrm{d}u\bigg)
\end{align} 

\step 1{Estimate of the first right hand side term of \eqref{splitlb}}
Let us start with estimating $|O_{\e}|^{-1}\|\nabla_\B u_{\e,\kappa}\|^p_{\ell^p_{\e}(O)}$. Here we follow \cite{ACG2} and use a difference quotient estimate. To this end, let $O^{\prime}\subset\subset O$ and fix $h\in\R^d$ with $2|h|\leq \dist(O^{\prime},\partial O)$. For any $y\in \Lw$ we set $U^h_{\e}(y)=\{x\in\Lw:\;\mathcal{C}(x)\cap (\mathcal{C}(y)+\frac{h}{\e})\neq\emptyset\}$. Then
\begin{align}\label{intest}
\int_{O^{\prime}/\e}&|\e u_{\e,\kappa}(z+h/\e)-\e u_{\e,\kappa}(z)|^p\,\mathrm{d}z\leq\sum_{\substack{y\in\Lw\\ \mathcal{C}(y)\cap \frac{O^{\prime}}{\e}\neq\emptyset}}\int_{ \mathcal{C}(y)}\e^p|u_{\e,\kappa}(z+h/\e)-u_{\e,\kappa}( y)|^p\,\mathrm{d}z\nonumber\\
&=\sum_{\substack{y\in\Lw\\ \mathcal{C}(y)\cap \frac{O^{\prime}}{\e}\neq\emptyset}}\int_{ \mathcal{C}(y)+\frac{h}{\e}}\e^p|u_{\e,\kappa}(z)-u_{\e,\kappa}(y)|^p\,\mathrm{d}z\leq \sum_{\substack{y\in\Lw\\ \mathcal{C}(y)\cap \frac{O^{\prime}}{\e}\neq\emptyset}}\sum_{x\in U^h_{\e}(y)}\int_{\mathcal{C}(x)}\e^p|u_{\e,\kappa}(x)-u_{\e,\kappa}(y)|^p\,\mathrm{d}z.
\end{align}
We next derive a pointwise estimate of $\e^p|u_{\e,\kappa}(x)-u_{\e,\kappa}(y)|^p$ for $x\in U^h_{\e}(y)$. Let $P(x,y)$ be a path connecting $x,y$ satisfying the properties of Definition \ref{defadmissible} (iv). From \eqref{e.volVoronoi} we deduce that $|x-y|\leq |h|\e^{-1}+2R$ and thus $\#P(x,y)\leq C(|h|\e^{-1}+1)$. Using Jensen's inequality we obtain
\begin{align}\label{diffbound}
\e^p|u_{\e,\kappa}(x)-u_{\e,\kappa}(y)|^p&\leq \e^p\left(\#P(x,y)\right)^{p-1}\sum_{(x^{\prime},x^{\prime\prime})\in P(x,y)}|u_{\e,\kappa}(x^{\prime})-u_{\e,\kappa}(x^{\prime\prime})|^p\nonumber\\
&\leq (C\e|h|^{p-1}+C\e^p)\sum_{(x^{\prime},x^{\prime\prime})\in P(x,y)}|u_{\e,\kappa}(x^{\prime})-u_{\e,\kappa}(x^{\prime\prime})|^p.
\end{align}
Moreover note that $\mathcal{C}(y)\cap O^{\prime}_{\e}\neq\emptyset$ and $x\in U^h_{\e}(y)$ imply that $x^{\prime},x^{\prime\prime}\in O_{\e}^\calL$ for $\e$ small enough. Indeed, applying the triangle inequality several times one can show that for any $v\in[ x^{\prime},x^{\prime\prime}]$ one has
\begin{equation*}
\dist\left(v,\frac{O^{\prime}}{\e}\right)\leq \frac{|h|}{\e}+(2C_0+3R),
\end{equation*} where $C_0$ is given by Definition \ref{defadmissible}. Conversely, given any $(x^{\prime},x^{\prime\prime})\in \mathbb{B}$, we define the sets
\begin{equation*}
K_{\e}^h(x^{\prime},x^{\prime\prime}):=\{y\in\Lw:\;\exists x\in U^h_{\e}(y)\text{ such that }(x^{\prime},x^{\prime\prime})\in P(x,y)\}.
\end{equation*}
As $G$ is admissible, for any $(x^{\prime},x^{\prime\prime})\in P(x,y)$ there exists $\lambda\in[0,1]$ such that $z=y+\lambda (x-y)$ satisfies $|z-x^{\prime}|\leq C_0$. Hence for any $y\in K^h_{\e}(x^{\prime},x^{\prime\prime})$ we obtain
\begin{equation*}
y=y-z+x^{\prime}+(z-x^{\prime})=-\lambda \frac{h}{\e}+x^{\prime}+\lambda \left(\frac{h}{\e}-(x-y)\right)+(z-x^{\prime})\in [-\frac{h}{\e},0]+x^{\prime}+B_{2R+C_0}(0),
\end{equation*}
where we have used that $|x-y-\frac{h}{\e}|\leq 2R$ for any $x\in U^h_{\e}(y)$. Using again \eqref{e.volVoronoi} we conclude that $\#K_{\e}^h(x^{\prime},x^{\prime\prime})\leq C(|h|\e^{-1}+1)$. Furthermore the set $U_{\e}^h(y)$ has equibounded cardinality, so that the inequalities \eqref{intest}, \eqref{diffbound} and the uniform bound on the measure of the Voronoi cells imply
\begin{align}\label{int-energy}
\int_{O^{\prime}/\e}|\e u_{\e,\kappa}(z+h/\e)-\e u_{\e,\kappa}(z)|^p\,\mathrm{d}z&\leq C(|h|^p+|h|\e^{p-1}+\e|h|^{p-1}+\e^p)\sum_{\substack{(x,y)\in \mathbb{B}\\ x,y\in O_{\e}^\calL}}|u_{\e,\kappa}(x)-u_{\e,\kappa}(y)|^p\nonumber
\\
&=C(|h|^p+|h|\e^{p-1}+\e|h|^{p-1}+\e^{p})\|\nabla_{\B}u_{\e,\kappa}\|^p_{\ell^p_{\e}(O)}.
\end{align}
As $u_{\e,\kappa}\in\mathcal{N}_p(v,O,\e,\kappa)$, the function $v_{\e,\kappa}:O\to\mathbb{R}^n$ defined by $v_{\e,\kappa}(x):=\e u_{\e,\kappa}(x/\e)$ satisfies 
\begin{equation*}
\int_{O^{\prime}}|v_{\e,\kappa}(z)-v_{\e}(z/\e)|^p\,\mathrm{d}z\leq C\e^{d}\|\e u_{\e,\kappa}-v_{\e}\|^p_{\ell_\e^p(O)}\leq C\kappa^p|O|^{1+\frac{p}{d}}.
\end{equation*}
In particular, by Remark~\ref{approximation} it is bounded in $L^p(O^{\prime})$ and thus there exists a subsequence (not relabeled) such that $v_{\e,\kappa}\rightharpoonup v_{\kappa}$ in $L^p(O^{\prime})$. Moreover, by Remark~\ref{approximation} and lower semicontinuity of the $L^p$-norm it holds that $\|v_{\kappa}-v\|_{L^p(O^{\prime})}\leq C\kappa|O|^{\frac{1}{p}+\frac{1}{d}}$. By a change of variables in the left hand side of \eqref{int-energy} we further obtain that
\begin{equation}\label{frechetest}
\int_{O^{\prime}}|v_{\e,\kappa}(z+h)-v_{\e,\kappa}(z)|^p\,\mathrm{d}z\leq C(|h|^p\e^d+|h|\e^{p+d-1}+\e^{d+1}|h|^{p-1}+\e^{p+d})\|\nabla_\B u_{\e,\kappa}\|^p_{\ell^p_{\e}(O)}.
\end{equation}
Applying weak lower semicontinuity in the above estimate we deduce
\begin{equation}\label{sobolevestimate}
\liminf_{\kappa\to 0}\liminf_{\e\to 0}\frac{1}{|O_{\e}|}\|\nabla_{\B}u_{\e,\kappa}\|^p_{\ell^p_{\e}(A)}\geq \frac{1}{C|O|}\int_{O^{\prime}}\left|\frac{v(z+h)-v(z)}{|h|}\right|^p\,\mathrm{d}z.
\end{equation}
Before we conclude Sobolev-regularity of $v$, we have to ensure that the third right hand side term in \eqref{splitlb} remains finite.

\step 2{Control of the third right hand side term of \eqref{splitlb}}
We want to apply Lemma~\ref{forrestargument}. To this end, we observe that
\begin{equation*}
|u(x)|\leq 2\kappa\left(\frac{|O|}{\e^d}\right)^{\frac{1}{p}+\frac{1}{d}}
\end{equation*}
for any $u\in\mathcal{N}_p(0,O,\e,2\kappa)$ and all $x\in O_{\e}^\calL$. Hence, setting $\gamma=2\kappa\left(\frac{|O|}{\e^d}\right)^{\frac{1}{p}+\frac{1}{d}}$, $z_x=0$ and $\alpha=\frac{\beta}{C}$, Lemma~\ref{forrestargument} yields
\begin{align}\label{entropybound}
\log\bigg(\int_{\mathcal{N}_p(0,O,\e,2\kappa)}\exp(-\frac{\beta}{C}\|\nabla_{\B}u\|^p_{\ell^p_{\e}(O)})\,\mathrm{d}u\bigg)\leq &N_{O,\e}\log\bigg(C(2\kappa)^n\left(\frac{|O|}{\e^d}\right)^{\frac{n}{p}+\frac{n}{d}}\bigg)\nonumber
\\
&+C\big(|O_{\e}^\calL|-N_{O,\e}\big)(1+|\log(\beta)|),
\end{align}
where $N_{O,\e}$ denotes the number of connected components of the graph $G_{O,\e}$. Since $O$ has Lipschitz boundary, by Remark~\ref{numbercomponents} and \eqref{surfaceesimtate} it holds that 
\begin{equation*}
N_{O,\e}\leq C\e^{1-d}\mathcal{H}^{d-1}(\partial O)
\end{equation*} 
for $\e$ small enough. Dividing \eqref{entropybound} by $-\beta |O_{\e}|$ and letting $\e\to 0$ we find that
\begin{align*}
\liminf_{\e\to 0}-\frac{1}{\beta|O_{\e}|}\log\Big(\int_{\mathcal{N}_p(0,O,\e,2\kappa)}\exp\Big(-\frac{1}{C}\|\nabla_{\B}u\|^p_{\ell^p_{\e}(O)}\Big)\mathrm{d}u\Big)
\geq-C\left(\frac{1+|\log(\beta)\big|}{\beta}\right).
\end{align*}
From \eqref{splitlb}, \eqref{sobolevestimate} and the previous inequality we finally obtain
\begin{equation*}
\mathcal{F}^-(O,v)\geq \frac{1}{C|O|}\int_{O^{\prime}}\left|\frac{v(z+h)-v(z)}{|h|}\right|^p\,\mathrm{d}z-C\left(1+\frac{1+|\log(\beta)|}{\beta}\right)
\end{equation*}
for every $h\in\R^d$ such that $2|h|\leq \dist(O^{\prime},\partial O)$. Using the difference quotient characterization of $W^{1,p}$-spaces we conclude that $v\in W^{1,p}(O,\R^n)$ and letting $|h|\to 0$ yields by the arbitrariness of $O^{\prime}$ that
\begin{equation*}
\mathcal{F}^-(O,v)\geq \frac{1}{C|O|}\int_O|\nabla v|^p\,\mathrm{d}z-C\left(1+\frac{1+\log(\beta)|}{\beta}\right).
\end{equation*}
\end{proof}

%

\subsection{Quasiconvexity of the limit free energy} 
For the proof of Theorem~\ref{th:W} we next establish a lower semicontinuity result that we use to show the quasiconvexity of the free energy by soft arguments.
\begin{lemma}\label{lsc}
Let $G\in\mathcal{G}$ and let $O\in\mathcal{A}^R(\R^d)$. If $v,\hat{v}\in L^p_{\rm loc}(\R^d,\R^n)$ are such that $v=\hat{v}$ a.e. on $O$, then $\mathcal{F}^{\pm}(O,v)=\mathcal{F}^{\pm}(O,\hat{v})$. Hence the maps $L^p(O,\R^n)\ni v\mapsto \mathcal{F}^{\pm}(O,v)$ are well-defined. Moreover both are lower semicontinuous with respect to the strong $L^p(O,\R^n)$-topology. 
\end{lemma}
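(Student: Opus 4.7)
My plan is to reduce both claims to a single key estimate that controls the discrete $\ell^p_\e$-distance between the approximations $v_\e$ and $\hat{v}_\e$ by the $L^p$-distance of $v$ and $\hat v$ on a thin neighborhood of $O$. Writing $v_\e(x)-\hat{v}_\e(x)=\frac{1}{|\e\mathcal{C}(x)|}\int_{\e\mathcal{C}(x)}(v-\hat{v})\,\mathrm{d}z$, applying Jensen's inequality together with the uniform bound $|\e\mathcal{C}(x)|\geq \e^d/C$ from \eqref{e.volVoronoi}, and noting that the cells $\{\e\mathcal{C}(x)\}_{x\in O_\e^\calL}$ are pairwise disjoint and contained in the $\e R$-neighborhood $O+B_{\e R}(0)$ (recall $\mathcal{C}(x)\subset B_R(x)$ for every $x\in\calL$), I would first establish
\begin{equation*}
\e^{d/p}\|v_\e-\hat{v}_\e\|_{\ell^p_\e(O)}\leq C\|v-\hat{v}\|_{L^p(O+B_{\e R}(0))},
\end{equation*}
for a constant $C=C(r,R,d,p)$.

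For the first claim, assume $v=\hat{v}$ a.e.\ on $O$. After extending both functions by zero outside $O$ --- which does not alter their $L^p(O)$-equivalence class --- the right-hand side above reduces to $\|v-\hat{v}\|_{L^p((O+B_{\e R}(0))\setminus O)}$, which vanishes as $\e\downarrow 0$ by absolute continuity of the Lebesgue integral. Combined with the triangle inequality in $\ell^p_\e(O)$, this implies that for every $\kappa'>\kappa>0$ there exists $\e_0>0$ such that
\begin{equation*}
\mathcal{N}_p(\hat{v},O,\e,\kappa)\subset \mathcal{N}_p(v,O,\e,\kappa')\qquad\text{for all }0<\e<\e_0.
\end{equation*}
Monotonicity of the partition function under set inclusion then gives $Z^\beta_{\e,O}(\mathcal{N}_p(\hat v,O,\e,\kappa))\leq Z^\beta_{\e,O}(\mathcal{N}_p(v,O,\e,\kappa'))$; taking $-\frac{1}{\beta|O_\e|}\log$ and passing to $\liminf_{\e\downarrow 0}$ (resp.\ $\limsup_{\e\downarrow 0}$) yields $\mathcal{F}^\pm_{\kappa'}(O,v)\leq \mathcal{F}^\pm_\kappa(O,\hat v)$. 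Sending $\kappa,\kappa'\downarrow 0$ and invoking the symmetric inequality, I conclude $\mathcal{F}^\pm(O,v)=\mathcal{F}^\pm(O,\hat{v})$, so both functionals descend to $L^p(O,\R^n)$.

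For the lower semicontinuity, let $v_k\to v$ in $L^p(O,\R^n)$. By the first part I can extend every $v_k$ and $v$ by zero outside $O$ without changing the values of $\mathcal{F}^\pm(O,\cdot)$, and then $v_k\to v$ strongly in $L^p(O+B_R(0))$. The key estimate implies $\sup_{\e\in(0,1]}\e^{d/p}\|v_\e-(v_k)_\e\|_{\ell^p_\e(O)}\to 0$ as $k\to\infty$, and the same triangle-inequality argument as above produces, for every $\kappa'>\kappa$, an index $k_0$ such that
\begin{equation*}
\mathcal{N}_p(v_k,O,\e,\kappa)\subset \mathcal{N}_p(v,O,\e,\kappa')\qquad\text{for all }k\geq k_0\text{ and all }\e\in(0,1].
\end{equation*}
This yields $\mathcal{F}^\pm_{\kappa'}(O,v)\leq \mathcal{F}^\pm_\kappa(O,v_k)\leq \mathcal{F}^\pm(O,v_k)$ for $k\geq k_0$, and taking $\liminf_{k\to\infty}$ followed by $\kappa,\kappa'\downarrow 0$ delivers the desired lower semicontinuity of both $\mathcal{F}^-$ and $\mathcal{F}^+$.

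The only delicate point is the mild mismatch between $O$ and the slightly larger set $O+B_{\e R}(0)$ on which the averages $v_\e$ depend; this is precisely what the absolute continuity of the integral (for the first claim) and extension by zero (for the second) handle, and no deeper obstacle is expected.
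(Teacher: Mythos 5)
Your proposal is correct and follows essentially the same route as the paper: both rest on the Jensen-type estimate comparing the discrete averages $v_\e$ and $\hat v_\e$ over Voronoi cells, the resulting inclusion $\mathcal{N}_p(\hat v,O,\e,\kappa)\subset\mathcal{N}_p(v,O,\e,\kappa')$ for $\e$ small, and monotonicity of the partition function together with the $\sup_\kappa$ structure of $\mathcal{F}^\pm$.

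The only organizational difference is that the paper proves a single lower-semicontinuity statement along sequences $v_j\to v$ in $L^p(O)$ with $v_j,v\in L^p_{\rm loc}$ and observes that both claims follow at once (well-definedness being the special case of the constant sequence $v_j=\hat v$), whereas you prove well-definedness first and then use it as a license to extend by zero before attacking lower semicontinuity. Both are fine, though the paper's is slightly more economical. One wording issue in your first part: you cannot extend $v$ and $\hat v$ by zero outside $O$ \emph{before} establishing the first claim, since the quantities $\mathcal{F}^\pm(O,\cdot)$ depend on $v$ on a slightly enlarged neighborhood of $O$ through $v_\e$ — that is exactly what the claim asserts does not matter. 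Fortunately your actual argument does not need the extension: since $v=\hat v$ a.e.\ on $O$, the bound $\|v-\hat v\|_{L^p(O+B_{\e R}(0))}=\|v-\hat v\|_{L^p((O+B_{\e R}(0))\setminus O)}$ holds as stated, and this tends to zero by absolute continuity of the integral because $O$ has Lipschitz boundary. So the extension-by-zero remark in the first part should simply be dropped; it is correctly used (and legitimate) in the second part, once the first claim is in hand.
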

\begin{proof}[Proof of Lemma~\ref{lsc}]
Let $v_j,v\in L^p_{\rm loc}(\R^d,\R^n)$ such that $v_j\to v$ in $L^p(O,\R^n)$. Both claims follow if we establish the lower semicontinuity along such sequences. Given $u\in\mathcal{N}_p(v_{j},O,\e,\kappa)$, by \eqref{e.volVoronoi} we have
\begin{align*}
\e^{\frac{d}{p}}\|v_{\e}-\e u\|_{\ell_\e^p(O)}&\leq\e^{\frac{d}{p}}\|v_{\e}-(v_{j})_{\e}\|_{\ell^p_{\e}(O)}+\e^{\frac{d}{p}}\|(v_{j})_{\e}-\e u\|_{\ell^p_{\e}(O)}\\
&\leq C\Big(\sum_{x\in O_{\e}^\calL}\int_{\e \mathcal{C}(x)}|v(z)-v_j(z)|^p\,\mathrm{d}z\Big)^{\frac{1}{p}}+\kappa|O|^{\frac{1}{p}+\frac{1}{d}}\\
&\leq C\Big(\|u-u_j\|^p_{L^p(O)}+\int_{\partial O+B_{R\e}(0)}|v(z)-v_j(z)|^p\,\mathrm{d}z\Big)^{\frac{1}{p}}+\kappa|O|^{\frac{1}{p}+\frac{1}{d}}.
\end{align*}
Since $O\in\Ard$ we deduce that for all $j=j(\kappa)$ large enough there exists $\e_0=\e_0(j)$ such that for all $\e<\e_0$ we have $\mathcal{N}_p(v_{j},O,\e,\kappa)\subset \mathcal{N}_p(v,O,\e,2\kappa)$. For every fixed $\kappa_0$ and $j=j(\kappa_{0})$ large enough this yields
\begin{equation*}
\mathcal{F}^+(O,v_j)=\sup_{\kappa>0}\mathcal{F}_{\kappa}^+(O,v_j)\geq \mathcal{F}^+_{\kappa_0}(O,v_j)\geq \mathcal{F}^+_{2\kappa_0}(O,v).
\end{equation*}
Letting first $j\to +\infty$ and then $\kappa_0\to 0$ we conclude that
\begin{equation*}
\liminf_j\mathcal{F}^+(O,v_j)\geq \lim_{\kappa_0\to 0}\mathcal{F}^+_{2\kappa_0}(O,v)=\mathcal{F}^+(O,v).
\end{equation*}
The proof for $\mathcal{F}^-(O,v)$ is similar.
\end{proof}
We now state an important intermediate result that will imply by more or less standard arguments a large deviation principle for large volume Gibbs measures under clamped boundary conditions. Due to that reason, we postpone its proof to the end of Section \ref{Sec4} on the large deviation principle.
\begin{theorem}\label{quasiconvexetc}
Assume Hypothesis~\ref{Hypo1}. Then for a set of full probability and for any $v\in W^{1,p}(D,\mathbb{R}^n)$ it holds that 
\begin{equation*}
\mathcal{F}^-(D,v)=\mathcal{F}^+(D,v)=\frac{1}{|D|}\int_D\overline W^{\beta}(\nabla v)\,\mathrm{d}x,
\end{equation*}	
where $\overline W$ is given by Proposition~\ref{existence}.
\end{theorem}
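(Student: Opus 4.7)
My plan is to establish the identities by a three-stage approximation: affine data, piecewise affine data, then general $v\in W^{1,p}(D,\R^n)$. The key tools will be the two characterizations of $\overline W^\beta$ in Lemma~\ref{equivalent}, the interpolation inequality of Proposition~\ref{interpolation} for upgrading $L^p$-closeness to hard boundary data, and the two-sided $p$-growth from Lemmas~\ref{ub}--\ref{lb} for passing limits through integrands.

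\emph{Stage 1 (affine data).} For $v=\overline{\varphi}_\Lambda$, the identity $\mathcal{F}^+(D,v)=\overline W^\beta(\Lambda)$ is Lemma~\ref{equivalent}(II). Applying Proposition~\ref{interpolation} with $\phi=\varphi_\Lambda$ and taking the liminf rather than the limsup, exactly as in Step~4 of the proof of Lemma~\ref{equivalent}, also yields $\mathcal{F}^-(D,v)\ge\overline W^\beta(\Lambda)$; since $\mathcal{F}^-\le\mathcal{F}^+$ and $\fint_D\overline W^\beta(\nabla v)\,\mathrm{d}x=\overline W^\beta(\Lambda)$, the identity follows.

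\emph{Stage 2 (piecewise affine data).} Let $v$ be piecewise affine on a polyhedral Lipschitz partition $\{D_j\}$ of $D$ with $\nabla v\equiv\Lambda_j$ on $D_j$. For the upper bound I will glue near-optimal configurations for $\overline W^\beta(\Lambda_j)$ given by Lemma~\ref{equivalent}(I) on slightly shrunken pieces $D_j^\delta\subset\subset D_j$, matched across a thin gluing layer via the soft $L^\infty$-tolerance built into $\mathcal{B}_\e$; by \eqref{surfaceesimtate} and the $p$-growth of $f$, the gluing contribution to $H_\e$ is of order $\e^{1-d}$, negligible after dividing by $|D_\e|$. For the lower bound, I will apply Proposition~\ref{interpolation} with $\varphi$ a discretization of $v$ (whose Hamiltonian per unit volume is controlled since $\nabla v$ is piecewise constant) to insert the boundary datum $\varphi_v$ into the partition function, then use the (almost) additivity $H_\e(D,\cdot)\ge\sum_j H_\e(D_j^\delta,\cdot)$ minus negligible interface contributions to factorize the restricted partition function over a product of partition functions on $D_j^\delta$ with boundary condition $\varphi_{\Lambda_j}$; each factor is then handled by Lemma~\ref{equivalent}(I). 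Sending $\e\downarrow 0$, $\delta\downarrow 0$, $\kappa\downarrow 0$, $N\uparrow\infty$ in order yields the claim.

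\emph{Stage 3 (general $v\in W^{1,p}$).} Approximate $v$ by piecewise affine $v_k\to v$ in $W^{1,p}(D,\R^n)$ (mollification plus interpolation on a triangulation). For the upper bound, the inclusion $\mathcal{N}_p(v_k,D,\e,\kappa/2)\subset\mathcal{N}_p(v,D,\e,\kappa)$ holds for $k$ large, since $\|v_\e-(v_k)_\e\|_{\ell_\e^p}^p\lesssim\e^{-d}\|v-v_k\|_{L^p(D)}^p$; thus $\mathcal{F}^+(D,v)\le\mathcal{F}^+(D,v_k)$, and Stage~2 combined with continuity of $\overline W^\beta$ and dominated convergence (using Lemma~\ref{ub}) gives the desired upper bound. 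For the lower bound, reapply Proposition~\ref{interpolation} with $\varphi$ a discretization of $v_k$ (piecewise affine, so $H_\e(D^\delta,\varphi)/|D_\e|$ is uniformly bounded in $\e$), reduce to the Stage~2 lower bound on a slightly smaller domain, and send $\e,\delta,\kappa,1/N\to 0$ and then $k\to\infty$ in the correct order. Together with $\mathcal{F}^-\le\mathcal{F}^+$ this pinches both $\mathcal{F}^\pm(D,v)$ at $\fint_D\overline W^\beta(\nabla v)\,\mathrm{d}x$. The hardest step will be factorizing the restricted partition function in the Stage~2 lower bound: Proposition~\ref{interpolation} inserts a boundary condition only on $\partial D$, whereas the internal interfaces $\partial D_j\cap D^\circ$ do not acquire any hard constraint from it. I expect one must combine the $\mathcal{N}_p$-closeness to $v$ with a slicing argument over a family of parallel slabs near each interface to find, for each graph realization, a slab on which the configuration is already close in discrete $L^p$ to the piecewise linear datum, and then apply Proposition~\ref{interpolation} locally to upgrade this to the hard boundary condition required by Lemma~\ref{equivalent}(I); keeping the small parameters $\e,\delta,\kappa,1/N$ mutually compatible across this nested application will be the principal technical burden.
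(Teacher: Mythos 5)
Your Stage~1 and Stage~2/3 upper bound are essentially the paper's Step~1: lower semicontinuity of $\mathcal{F}^+$ (Lemma~\ref{lsc}) together with the $p$-growth and continuity of $\overline W^\beta$ reduce the inequality $\mathcal{F}^+(D,v)\le\fint_D\overline W^\beta(\nabla v)$ to continuous piecewise affine $v$, and the gluing/factorization you propose is exactly what the paper does there. The genuine gap is in the lower bound for general $v\in W^{1,p}(D,\R^n)$. Approximating $v$ by piecewise affine $v_k$ and invoking the inclusion $\mathcal{N}_p(v,D,\e,\kappa)\subset\mathcal{N}_p(v_k,D,\e,\tilde\kappa)$ only holds for $\tilde\kappa\gtrsim\kappa+|D|^{-1/p-1/d}\|v-v_k\|_{L^p(D)}$, which does not tend to zero with $\kappa$ for fixed $k$. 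You are therefore stuck at a nonzero deviation parameter $\tilde\kappa(k)$ which can only be sent to zero by simultaneously refining the triangulation, and the per-piece errors produced by Proposition~\ref{interpolation} (the terms $(N\kappa_j|D_j|^{1/d})^p/\delta^p$) deteriorate as the mesh refines. This diagonal limit is not controlled by a qualitative density argument: one needs a decomposition for which the sum of the per-piece deviation parameters is small in the correct weighted $\ell^p$-average.

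This is exactly what Koteck\'y and Luckhaus's blow-up lemma (\cite[Corollary~1]{KoLu}) provides and why the paper invokes it. After extending $v$, it chooses, for every $\eta>0$, a mesh size $\rho$ and an offset $\xi$ of a cube lattice $\mathcal{L}_{\rho,\xi}$ so that simultaneously (i) the rescaled local deviations $\phi^z_{\rho,\xi}$ of $v$ from its affine approximation $L_{\rho,\xi}v$ satisfy $\sum_z\rho^d\big(\|\nabla\phi^z_{\rho,\xi}\|_{L^p}^p+\|\phi^z_{\rho,\xi}\|_{L^p}^p\big)<\eta$ and (ii) the Riemann sum $\sum_z\rho^d\,\overline W^\beta(\nabla v(z))\mathds{1}_{D'}(z)$ is within $\eta$ of $\int_{D'}\overline W^\beta(\nabla v)$. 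One then gets $\mathcal{N}_p(v,D,\e,\kappa)\subset\prod_z\mathcal{N}_p(L_{\rho,\xi}v,Q(z,\rho),\e,\kappa_z)\times(\text{boundary strip})$ with per-cube parameters $\kappa_z$ whose weighted average $\sum_z\rho^d\kappa_z^p\le C(\rho^p+\eta)$; this is the estimate (see~\eqref{kappasum} in the paper) that makes the sum of interpolation errors vanish as $\rho,\eta\downarrow 0$. Your piecewise affine $v_k$ does not come with this control. Incidentally, your Stage~2 lower bound (for $v$ already piecewise affine) is morally the paper's cube factorization; the interface issue you flag is handled in the paper by absorbing the skeleton $S_\rho$ into the factorization and bounding its contribution with Lemma~\ref{forrestargument} and \eqref{surfaceesimtate}, so your slicing argument is unnecessary there. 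The missing ingredient is the quantitative blow-up decomposition, not a slicing refinement of the interpolation inequality.
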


\begin{proof}[Proof of Theorem~\ref{th:W}]
The almost sure existence of the limit of the free energy for all $\Lambda\in\R^{n\times d}$, all $\beta>0$ and all bounded Lipschitz domains $D$ follows from Proposition~\ref{fullexistence} and Remark~\ref{alltemp}. The claimed $p$-growth conditions have been proven in the Lemmata \ref{ub} and \ref{lb}. Quasiconvexity is a standard result on necessary conditions for weak lower semicontinuity of integral functionals on $W^{1,p}(D,\R^n)$, provided the integrand is continuous (as proven in Proposition~\ref{fullexistence}) and satisfies the proven $p$-growth. Thus quasiconvexity of the map $\Lambda\mapsto \overline{W}^{\beta}(\Lambda)$ is a consequence of Theorem~\ref{quasiconvexetc}, Lemma~\ref{lsc} and the Sobolev embedding theorem. Finally the claim on the ergodic case is contained in Proposition~\ref{existence}.
\end{proof}


\section{Large deviation principle for the Gibbs measures: Proof of Theorems~\ref{th:Helmholtz}~and~\ref{LDP}}\label{Sec4}

We now turn our attention to the announced large deviation principle for Gibbs measures associated with the discrete Hamiltonian $H_{\e}$. As a by-product we shall prove Theorem~\ref{th:Helmholtz}. 

\subsection{Notation for Gibbs measures and exponential tightness}
In order to avoid technical issues when discretizing the gradient of a Sobolev function on a vanishing set, we restrict our analysis to boundary data $\varphi\in {\rm Lip}(\R^d,\R^n)$. Moreover, in order to identify the discrete variables with a function defined on the continuum we proceed as follows: Recall that given any $v:D\to\R^n$, we have set $u:=\Pi_{1/\e}v:D_{\e}\to\R^n$ as
\begin{equation*}
\Pi_{1/\e}v(z)=\frac{1}{\e} v(\e z).
\end{equation*}
Given such $v$, with a slight abuse of notation we write $u=\Pi_{1/\e}v\in\mathcal{B}_{\e}(D,\varphi)$ if and only if the following conditions are met:
\begin{itemize}
\item[(i)] $\big(\Pi_{1/\e}v\big)_{|\mathcal{C}(x)\cap D_{\e}}$ is constant for all $x\in\calL$;
\item[(ii)] $\big(\Pi_{1/\e}v\big)_{|D_{\e}^{\calL}}\in\mathcal{B}_{\e}(D,\varphi)$ in the usual sense;
\item[(iii)] $\big(\Pi_{1/\e}v\big)_{|\mathcal{C}(x)\cap D_{\e}}=\big(\Pi_{1/\e}\varphi\big)(x)$ whenever $x\in\calL\setminus D_{\e}$
\end{itemize}
Then the Gibbs measure $\mu^{\beta}_{\e,D,g}$ with respect to the Hamiltonian $H_{\e}$ and boundary value $\varphi$ is the probability measure on $L^p(D,\R^n)$ given by the formula \eqref{defgibbs}, that is,
\begin{equation*}
\mu^{\beta}_{\e,D,\varphi}(V)=\frac{1}{Z^{\beta}_{\e,D,\varphi}}\int_{\Pi_{1/\e}V\cap \mathcal{B}_{\e}(D,\varphi)}\exp(-\beta H_{\e}(D,u))\,\mathrm{d}{u},
\end{equation*}
where the partition function $Z^{\beta}_{\e,D,\varphi}$ is the normalizing factor that ensure that $\mu^{\beta}_{\e,D,\varphi}(L^p(D,\R^n))=1$. With what we have proved so far, we are now in a position to state and prove a large deviation principle for these Gibbs measures in the many-particle limit. As usual, we first have to establish an exponential tightness estimate. This will be achieved in the two lemmata below.
\begin{lemma}\label{compactness}
Assume Hypothesis~\ref{Hypo1} and let $G\in\mathcal{G}$. Fix $O\in\mathcal{A}^R(\R^d)$ and $\varphi\in {\rm Lip}(\R^d,\R^n)$. If $u^{\e}\in\mathcal{B}_{\e}(O,\varphi)\cap\mathcal{S}_M(O,\e)$, then there exists a subsequence $u^{\e_j}$ and  $v\in \varphi+W^{1,p}_0(O,\R^n)$ such that $\Pi_{\e_j}u^{\e_j}:=\e_j u^{\e_j}(\e_j^{-1}\cdot)\to v$ in $L^p(O,\R^n)$.
\end{lemma}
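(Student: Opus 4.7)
The plan is to combine an a priori $\ell^p$-bound on the discrete gradient, a discrete-to-continuous Fr\'echet--Kolmogorov compactness argument based on the difference quotient estimate already used in the proof of Lemma~\ref{lb}, and a zero-extension trick to capture the boundary trace.

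First I would use the coercivity $f(z,\xi)\geq C^{-1}|\xi|^p-C$ from Hypothesis~\ref{Hypo1} together with the energy bound $H_\e(O,u^\e)\leq M|O_\e^\calL|$ to deduce $\|\nabla_\B u^\e\|_{\ell^p_\e(O)}^p\leq C(M+1)\e^{-d}$. Since $\varphi$ is Lipschitz and every edge $(x,y)\in\B$ has length $|x-y|\leq C_0$ by Definition~\ref{defadmissible}(iii), the analogous bound holds for $w^\e:=u^\e-\Pi_{1/\e}\varphi$, namely $\|\nabla_\B w^\e\|_{\ell^p_\e(O)}^p\leq C\e^{-d}$. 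Since $u^\e\in\mathcal B_\e(O,\varphi)$, the definition of the admissible class gives $|w^\e(x)|<1$ whenever $\dist(x,\partial O_\e)\leq C_0$, so extending $w^\e$ by $0$ on $\calL\setminus O_\e^\calL$ only adds $O(\e^{1-d})$ edges crossing $\partial O_\e$ (by the boundary cardinality estimate \eqref{surfaceesimtate}), each contributing at most $1$ to $|\nabla_\B w^\e|$, which is of lower order than $\e^{-d}$. The extended discrete function (still denoted $w^\e$) therefore satisfies the same gradient bound on any bounded $U\supset\overline O$, and Lemma~\ref{poincare} yields $\|w^\e\|^p_{\ell^p_\e(O)}\leq C\e^{-p-d}$, hence $\|\tilde w_\e\|_{L^p(\R^d)}\leq C$, where $\tilde w_\e(z):=\e w^\e(z/\e)$ denotes the piecewise constant Voronoi interpolation.

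Next I would apply the pointwise-on-paths argument underlying \eqref{int-energy}--\eqref{frechetest} to $w^\e$, obtaining for every $U'\subset\subset U$ and every $h\in\R^d$ with $2|h|<\dist(U',\partial U)$ the estimate
\begin{equation*}
\int_{U'}|\tilde w_\e(z+h)-\tilde w_\e(z)|^p\,\mathrm{d}z\leq C(|h|^p+|h|\e^{p-1}+\e|h|^{p-1}+\e^p).
\end{equation*}
Together with the $L^p$-bound above, the Fr\'echet--Kolmogorov theorem provides a subsequence $\tilde w_{\e_j}\to\bar w$ in $L^p_{\mathrm{loc}}(\R^d,\R^n)$. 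Passing to the limit $\e\downarrow 0$ then $|h|\downarrow 0$ in the difference quotient bound gives $\bar w\in W^{1,p}_{\mathrm{loc}}(\R^d,\R^n)$; since $\tilde w_\e$ vanishes outside an $R\e$-neighbourhood of $O$, the limit $\bar w$ is compactly supported in $\overline O$ and thus belongs to $W^{1,p}(\R^d,\R^n)$. The Lipschitz regularity of $\partial O$ then gives $\bar w|_O\in W^{1,p}_0(O,\R^n)$. Finally, writing $\Pi_{\e_j}u^{\e_j}=\tilde w_{\e_j}+\tilde\varphi_{\e_j}$ on $O$, where $\tilde\varphi_{\e_j}\to\varphi$ in $L^p(O,\R^n)$ by Remark~\ref{approximation}, I conclude that $\Pi_{\e_j}u^{\e_j}\to v:=\bar w+\varphi$ in $L^p(O,\R^n)$ with $v\in\varphi+W^{1,p}_0(O,\R^n)$.

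The main obstacle is upgrading $L^p_{\mathrm{loc}}(O)$ convergence to $L^p(O)$ convergence while simultaneously identifying the zero trace of $v-\varphi$; the zero-extension step above is the crucial ingredient, as it converts the pointwise soft constraint $|u^\e-\Pi_{1/\e}\varphi|<1$ near $\partial O_\e$ into genuine Sobolev regularity up to the boundary of the limit.
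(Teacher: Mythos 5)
Your proposal is correct and takes essentially the same route as the paper: discrete Poincar\'e to get an $L^p$ bound, the difference-quotient estimate of the form \eqref{frechetest} on an enlarged domain, a Fr\'echet--Kolmogorov compactness argument, and identification of the boundary trace via the soft constraint near $\partial O_\e$. The only cosmetic difference is that you subtract $\varphi$ first and extend $w^\e$ by zero, whereas the paper extends $u^\e$ by $\Pi_{1/\e}\varphi$ and then shows $v=\varphi$ on $O_1\setminus O$; the two are trivially equivalent.
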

\begin{proof}[Proof of Lemma~\ref{compactness}]
We just sketch the argument. First extend $u^{\e}$ to the whole vertex set $\calL$ setting $u^{\e}(x)=\big(\Pi_{1/\e}\varphi\big)(x)$ whenever $x\in\Lw\backslash O_{\e}^\calL$. Now take $O_1,O_2\in\mathcal{A}^R(\mathbb{R}^d)$ such that $O\subset\subset O_1\subset\subset O_2$. To reduce notation, we introduce $v^{\e}:O_2\to\mathbb{R}^n$ as $v^{\e}=\Pi_{\e}u^{\e}$.  Since $u^{\e}\in\mathcal{B}_{\e}(O,\varphi)\cap \mathcal{S}_M(O,\e)$ and $\varphi$ is Lipschitz, one can show that 
\begin{equation*}
\sup_{\e>0}|(O_2)_{\e}|^{-1}H_{\e}(u^{\e},O_2)<+\infty.
\end{equation*}
Using the same construction as for the proof of (\ref{frechetest}) we obtain that, for $h\in\mathbb{R}^d$ such that $2|h|\leq\dist(O_1,\partial O_2)$, it holds that
\begin{equation}\label{sobolevreg}
\int_{O_1}|v^{\e}(z+h)-v^{\e}(z)|^p\,\mathrm{d}z\leq C(|h|^p+|h|\e^{p-1}+\e|h|^{p-1}+\e^p).
\end{equation}
According to \cite[Lemma~4.6]{glpe}, strong $L^p(O_1)$-compactness follows if we prove that $v^{\e}$ is bounded in $L^p(O_2)$. This can be achieved combining the energy bound with the growth assumptions from Hypothesis~\ref{Hypo1} and the properly scaled discrete Poincar\'e inequality stated in Lemma~\ref{poincare}. The regularity of any limit function $v$ follows again by the difference quotient characterization of $W^{1,p}(O_1,\mathbb{R}^n)$ and \eqref{sobolevreg}. Since $\varphi$ is Lipschitz, it can be shown that $v=\varphi$ on $O_1\backslash O$ and therefore $v$ has trace $\varphi$ on $\partial O$.
\end{proof}

\begin{lemma}\label{exptight}
Assume Hypothesis~\ref{Hypo1} and let $G\in\mathcal{G}$. Then, for each $N\in\mathbb{N}$ there exists a compact set $K_N\subset L^p(D,\R^n)$ such that
\begin{equation*}
\limsup_{\e\to 0}\frac{1}{\beta|D_{\e}|}\log\Big(\mu^{\beta}_{\e,D,\varphi}(L^p(D,\R^n)\backslash K_N)\Big)\leq -N.
\end{equation*}
\end{lemma}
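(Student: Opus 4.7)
The strategy is the standard one for exponential tightness of Gibbs measures: identify a compact set $K_N$ as an $L^p$-closure of rescaled low-energy discrete configurations, bound the measure of its complement via the high-energy estimate of Lemma~\ref{tightness}, and divide by an elementary lower bound on the partition function.

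Fix $N\in\mathbb{N}$ and pick a threshold $M = M(N,\beta) \ge C_\beta$ to be chosen at the very end. I would introduce
\[
\widetilde{K}_M \,:=\, \bigcup_{0 < \e \le \e_0} \bigl\{\Pi_\e u \,:\, u \in \mathcal{B}_\e(D,\varphi) \cap \mathcal{S}_M(D,\e)\bigr\} \,\subset\, L^p(D,\R^n),
\]
and set $K_N$ to be its $L^p$-closure. To prove $K_N$ compact, I would invoke the Fr\'echet--Kolmogorov criterion on the bounded domain $D$: Hypothesis~\ref{Hypo1}, Lemma~\ref{poincare} and the energy bound $H_\e(D,u) \le M|D_\e^\calL|$ yield a uniform $\ell^p_\e(D)$ bound on $u - \Pi_{1/\e}\varphi$ and hence a uniform $L^p(D)$ bound on $\Pi_\e u$, while the difference-quotient estimate \eqref{sobolevreg} (derived in the proof of Lemma~\ref{compactness}, with constants depending only on $M$, $\beta$ and $\Lip(\varphi)$) provides the required equi-translation-continuity in $L^p$ after extending by $\varphi$ outside $D$. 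This is the only nontrivial step.

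By construction, whenever $v \in L^p(D,\R^n) \setminus K_N$ satisfies $\Pi_{1/\e}v \in \mathcal{B}_\e(D,\varphi)$, then $\Pi_{1/\e}v \notin \mathcal{S}_M(D,\e)$; consequently
\[
\mu^\beta_{\e,D,\varphi}(L^p\setminus K_N) \,\le\, \frac{Z^\beta_{\e,D}\bigl(\mathcal{B}_\e(D,\varphi) \setminus \mathcal{S}_M(D,\e)\bigr)}{Z^\beta_{\e,D,\varphi}}.
\]
The second bound of Lemma~\ref{tightness} controls the numerator by $\exp\bigl((-\tfrac{M}{2}\beta + C_\beta)|D_\e^\calL|\bigr)$. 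For the denominator, I would restrict the integral defining $Z^\beta_{\e,D,\varphi}$ to the subset $\{u : \|u - \Pi_{1/\e}\varphi\|_\infty \le 1\} \subset \mathcal{B}_\e(D,\varphi)$; since $\Lip(\Pi_{1/\e}\varphi) = \Lip(\varphi)$ and $|x-y| \le C_0$ for $(x,y)\in \B$, Hypothesis~\ref{Hypo1} gives $H_\e(D,u) \le C|D_\e^\calL|$ uniformly on this subset, hence $Z^\beta_{\e,D,\varphi} \ge \exp(-C\beta |D_\e^\calL|)$.

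Combining the two estimates, taking logarithms, dividing by $\beta|D_\e|$ and using $|D_\e^\calL| \le C|D_\e|$ from \eqref{volumeestimate}, one obtains
\[
\limsup_{\e \downarrow 0} \frac{1}{\beta|D_\e|} \log \mu^\beta_{\e,D,\varphi}(L^p \setminus K_N) \,\le\, -cM + c'\bigl(1 + C_\beta/\beta\bigr)
\]
for positive constants $c,c'$ depending only on $d$, $D$ and $\Lip(\varphi)$; choosing $M = M(N,\beta)$ large enough makes the right-hand side no larger than $-N$. The main obstacle is the compactness of $K_N$: the interpolants $\Pi_\e u$ are merely piecewise constant on the random Voronoi tessellation and need not lie in any Sobolev space, so the Fr\'echet--Kolmogorov argument must exploit the uniform-in-$\e$ Sobolev-type difference-quotient estimate already derived in the proof of Lemma~\ref{compactness} rather than any classical embedding.
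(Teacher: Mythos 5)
Your overall strategy (define $K_M$ as the union of rescaled admissible low-energy configurations, control $\mu^\beta_{\e,D,\varphi}(L^p\setminus K_M)$ by Lemma~\ref{tightness} in the numerator and an elementary partition-function bound in the denominator, then take $M$ large) is the same as the paper's, and the partition-function lower bound is a valid minor variant of the paper's \eqref{partitionub}. However, the compactness step --- which you yourself identify as the only nontrivial one --- has a genuine gap.

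You propose to obtain precompactness of $K_M$ by Fr\'echet--Kolmogorov, using the difference-quotient estimate \eqref{sobolevreg} as the source of equi-translation-continuity. That estimate reads
\[
\int_{O_1}|v^{\e}(z+h)-v^{\e}(z)|^p\,\mathrm{d}z\leq C\bigl(|h|^p+|h|\e^{p-1}+\e|h|^{p-1}+\e^p\bigr),
\]
and the crucial point is that the right-hand side does \emph{not} tend to $0$ as $h\to0$ uniformly over $\e\in(0,\e_0]$: for a fixed $\e$, sending $|h|\to0$ only produces the constant $C\e^p$. So the hypothesis of Fr\'echet--Kolmogorov --- $\sup_{v\in K_M}\|\tau_hv-v\|_{L^p}\to0$ as $h\to0$ --- is not delivered by \eqref{sobolevreg} alone. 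This is intrinsic: a piecewise constant function at a fixed scale $\e>0$ has translation modulus of order $\e^{1/p}$ and cannot be made small by shrinking $h$. The paper circumvents exactly this obstruction by splitting the argument: for a sequence $v_j$ defined at scales $\e_j$ with $\liminf_j\e_j>0$, one passes to a subsequence with $\e_j\to\e_*>0$ and uses boundary conditions plus the energy bound to identify the $v_j$ with a bounded sequence in a (stabilizing) finite-dimensional space; only when $\liminf_j\e_j=0$ does one invoke Lemma~\ref{compactness}, where the $\e$-dependent terms in \eqref{sobolevreg} do vanish in the limit. To repair your argument you would need to make an analogous split (use \eqref{sobolevreg} only for $\e\leq\e_1$, and treat the regime $\e\in[\e_1,\e_0]$ by finite-dimensionality), which is precisely the paper's case distinction.
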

\begin{proof}[Proof of Lemma~\ref{exptight}]
For a given number $M>0$ we define the set $K_M\subset L^p(D,\R^n)$ by
\begin{equation*}
K_M:=\bigcup_{0<\e<1}\Big\{v:D\to\R^n:\;\Pi_{1/\e}v\in\mathcal{B}_{\e}(D,\varphi),\,H_{\e}(D,\Pi_{1/\e}v)\leq M|D_{\e}|\Big\},
\end{equation*}
where we identify again discrete functions with piecewise constant function on Voronoi cells. We argue that the set $K_M$ is precompact in $L^p(D,\R^n)$. To this end consider a sequence $\{v_j\}\subset K_M$. Then for each $j$ we find $\e_j$ such that $v_j$ is defined on the nodes of $\e_j\calL$. First let us extend the functions to all of $\e_j\calL$ setting $v_j(\e_j x)=\varphi(\e_j x)$ for $x\in\calL\backslash D_{\e_j}$.  We distinguish two cases: If $\liminf_j\e_j>0$, then we can use the boundary conditions and the energy bound to prove that $v_j$ contains a converging subsequence since it can be identified with an equibounded sequence in a finite dimensional space. Here we use again the fact that each connected component of $G_{D,\e}$ contains a vertex with active boundary conditions. Next we treat the case when $\liminf_j\e_j=0$. In that case we can apply Lemma~\ref{compactness} to conclude that $K_M$ is precompact for every $M$.

For the claimed estimate we have to control the contribution from the partition function. Using the upper bound from Hypothesis~\ref{Hypo1}, Remark~\ref{reverse}, \eqref{volumeestimate} and a change of variables, for $\e$ small enough we obtain
\begin{align}\label{partitionub}
\frac{-1}{\beta|D_{\e}|}\log\Big(Z^{\beta}_{\e,D,\varphi}\Big)&\leq\frac{-1}{\beta|D_{\e}|}\log\Big(\int_{ \mathcal{B}_{\e}(D,\varphi)}\exp\big(-C\beta(\|\nabla_\B u\|^p_{\ell^p_{\e}(D)}+|D_{\e}^{\calL}|)\big)\,\mathrm{d}u\Big) \nonumber \\
&\leq\frac{-1}{\beta|D_{\e}|}\log\Big(\int_{ \mathcal{B}_{\e}(D,\varphi)}\exp\big(- C\beta(\|\nabla_\B (u-\Pi_{1/\e}\varphi)\|^p_{\ell^p_{\e}(D)}+(\|\nabla \varphi\|^p_{\infty}+1)|D_{\e}^{\calL}|)\big)\,\mathrm{d}u\Big) \nonumber \\
&\leq C(\|\nabla \varphi\|_{\infty}^p+1)-\frac{1}{\beta|D_{\e}|}\log\Big(\int_{ \mathcal{B}_{\e}(D,0)}\exp(- C\beta\|u\|^p_{\ell^p_{\e}(D)})\,\mathrm{d}u\nonumber \\
&\leq C(\|\nabla \varphi\|_{\infty}^p+1)+\frac{C}{\beta}\Big|\log\Big(\int_{B_1(0)}\exp(-C\beta|\zeta|^p)\,\mathrm{d}\zeta\Big)\Big|.
\end{align}
Combining this bound with Lemma~\ref{tightness} we obtain the claim choosing $K_{M}$ with $M=M(N,\beta)$ large enough and taking the $L^p$-closure of this set.
\end{proof}

\subsection{Proof of the large deviation principle}
Having established the exponential tightness we can now prove a strong large deviation principle for the large volume Gibbs measures as stated in the main results. Theorem~\ref{th:Helmholtz} will then be a straightforward consequence of the proof.

\begin{proof}[Proof of Theorem~\ref{LDP}]
Observe first that the term $-\frac{1}{\beta|D_{\e}|}\log(Z^{\beta}_{\e,D,\varphi})$ is bounded from above as shown in \eqref{partitionub}. A corresponding lower bound can be achieved using the lower bound of Hypothesis~\ref{Hypo1} and Lemma~\ref{forrestargument}. Hence we may assume that, passing to a subsequence (not relabeled), it holds that
\begin{equation*}
\lim_{\e\to 0}-\frac{1}{\beta|D_{\e}|}\log(Z^{\beta}_{\e,D,\varphi})=c_{\varphi,\beta}
\end{equation*}
for some constant $c_{\varphi,\beta}\in\R$. To reduce notation, we define the functional $I_g:L^p(D,\R^n)\to (-\infty,+\infty]$ via
\begin{equation*}
I^{\beta}_{D,\varphi}(v)=
\begin{cases}
\displaystyle\frac{1}{|D|}\int_D \overline W^{\beta}(\nabla v)\,\mathrm{d}x &\mbox{if $v\in \varphi+W^{1,p}_0(D,\R^n)$},\\
+\infty &\mbox{otherwise.}
\end{cases}
\end{equation*}
Note that by the upper and lower bounds established in Lemma~\ref{ub} and Lemma~\ref{lb}, respectively, as well as the quasiconvexity proven in Theorem~\ref{quasiconvexetc}, we know that $I^{\beta}_{D,\varphi}$ is lower-semicontinuous with respect to strong $L^p(D,\R^n)$-convergence.

\step1{Proof of the lower bound on open sets}
We start the proof with the case of an open set $U\subset L^p(D,\R^n)$. If $U\cap (\varphi+W^{1,p}_0(D,\R^n))=\emptyset$, then there is nothing to prove. Therefore consider $v\in U\cap (\varphi+W^{1,p}_0(D,\R^n))$. Since $U$ is open, given $\eta>0$ we can find $v^{\eta}\in U\cap (\varphi+C_c^{\infty}(D,\R^n))$ such that $\|v^{\eta}-v\|_{W^{1,p}(D)}<\eta$. We claim that, for fixed $\eta>0$, there exist $\kappa_0,\e_0>0$ such that for all $\kappa<\kappa_0$ and $\e<\e_0$ it holds that
\begin{equation}\label{openness}
\Pi_{\e}\Big(\mathcal{N}_p(v^{\eta},D,\e,3\kappa)\cap \mathcal{B}_{\e}(D,\varphi)\Big)\subset U
\end{equation}
Indeed, recalling the definition of $\tilde{v}^{\eta}_{\e}$ in Remark~\ref{approximation}, for every $u\in \mathcal{N}_p(v^{\eta},D,\e,3\kappa)\cap \mathcal{B}_{\e}(D,\varphi)$ we have that
\begin{align}\label{normestimateI}
\|v^{\eta}-\Pi_{\e}u\|_{L^p(D)}&\leq \|v^{\eta}-\tilde{v}^{\eta}_{\e}\|_{L^p(D)}+C\Big(\sum_{x\in D_{\e}}\e^d|v^{\eta}_{\e}(x)-\e u(x)|^p+\sum_{\e\mathcal{C}(x)\cap \partial D\neq\emptyset}\e^d |\tilde{v}^{\eta}_{\e}(\e x)-\varphi(\e x)|^p\Big)^{\frac{1}{p}}\nonumber
\\
&\leq \|v^{\eta}-\tilde{v}^{\eta}_{\e}\|_{L^p(D)}+C\Big((3\kappa)^p|D|^{1+\frac{p}{d}}+\sum_{\e\mathcal{C}(x)\cap \partial D\neq\emptyset}\e^d |\tilde{v}^{\eta}_{\e}(\e x)-\varphi(\e x)|^p\Big)^{\frac{1}{p}}.
\end{align}
The $\e$-dependent terms vanish by Remark~\ref{approximation} combined with an equiintegrability argument for the last sum. Hence (\ref{openness}) holds provided we choose $\e_0,\kappa_0$ small enough. Then from the definition of the Gibbs measure we infer for $\kappa<\kappa_0$ that
\begin{equation*}
\liminf_{\e\to 0}\frac{\log(\mu^{\beta}_{\e,D,\varphi}(U))}{\beta|D_{\e}|}\geq \liminf_{\e\to 0}\frac{1}{\beta|D_{\e}|}\log\Big(Z^{\beta}_{\e,D}(\mathcal{N}_p(v^{\eta},D,\e,3\kappa)\cap \mathcal{B}_{\e}(D,\varphi)\Big)+c_{\varphi,\beta}.
\end{equation*}
Applying Proposition~\ref{interpolation}, we deduce that for any $\delta>0$ small enough and any $N\in\mathbb{N}$ it holds that
\begin{equation*}
\liminf_{\e\to 0}\frac{\log(\mu^{\beta}_{\e,D,\varphi}(U))}{\beta|D_{\e}|}\geq -\frac{N-C}{N}\mathcal{F}^+_{\kappa}(D,v^{\eta})-C(\|\nabla \varphi\|^p_{\infty}+1)\frac{|D_{\delta}|}{|D|}-\Big(\frac{N\kappa|D|^{\frac{1}{d}}}{\delta}\Big)^p-\frac{C}{N}+c_{\varphi,\beta}.
\end{equation*}
Letting first $\kappa\to 0$ and then $N\to +\infty$ as well as $\delta\to 0$, from Theorem~\ref{quasiconvexetc} we infer
\begin{equation*}
\liminf_{\e\to 0}\frac{\log(\mu^{\beta}_{\e,D,\varphi}(U))}{\beta|D_{\e}|}\geq-\frac{1}{|D|}\int_D \overline W^{\beta}(\nabla v^{\eta})\,\mathrm{d}x+c_{\varphi,\beta}.
\end{equation*}
As $\eta>0$ was arbitrary, the continuity of $\Lambda \mapsto \overline W^{\beta}(\Lambda )$ and its growth condition allow to pass from $v^{\eta}$ to $v$ and since $v\in U\cap (\varphi+W_0^{1,p}(D,\R^n))$ was arbitrary too, we conclude the lower bound
\begin{equation*}
\liminf_{\e\to 0}\frac{\log(\mu^{\beta}_{\e,D,\varphi}(U))}{\beta|D_{\e}|}\geq -\inf_{v\in U}I^{\beta}_{D,\varphi}(v)+c_{\varphi,\beta}.
\end{equation*}

\step2{Proof of the upper bound on closed sets}
In order to prove an upper bound, we first recall that due to the exponential tightness established in Lemma~\ref{exptight}, it suffices to consider the case when $V$ is compact (see for example Lemma~1.2.18 in \cite{DeZe}). Then, for $\delta>0$ we define the truncated energy via 
\begin{equation*}
\mathcal{F}_{\delta}(D,v)=\min\left\{\mathcal{F}^-(D,v)-\delta,\frac{1}{\delta}\right\}.
\end{equation*}
Note that by definition of $\mathcal{F}^-(D,v)$, for every $v\in V$ there exists $\kappa>0$ such that
\begin{equation}\label{localub}
-\limsup_{\e\to 0}\frac{1}{\beta|D_{\e}|}\log(Z^{\beta}_{\e,D}(\mathcal{N}_p(v,D,\e,\kappa))\geq \mathcal{F}_{\delta}(D,v).
\end{equation}
Let us fix $C_1$ such that $|\mathcal{C}(x)|^{\frac{1}{p}}\leq C_1$ for all $x\in\calL$. By lower semicontinuity of the functional $I_{D,\varphi}^{\beta}$, up to reducing $\kappa$ we may assume that
\begin{equation}\label{locallsc}
I_{D,\varphi}^{\beta}(v)\leq I_{D,\varphi}^{\beta}(w)+1
\end{equation}
for all $w\in L^p(D,\R^n)$ such that $\|v-w\|_{L^p(D,\R^n)}\leq C_1\kappa |D|^{\frac{1}{p}+\frac{1}{d}}$.
As we show now, for a suitable $0<\kappa^{\prime}<\kappa$ and all $\e$ small enough, we have the inclusion
\begin{equation}\label{containedindiscrete}
B_{\kappa^{\prime}}(v)\cap\Pi_{\e}(\mathcal{B}_{\e}(D,\varphi))\subset \Pi_{\e}\Big(\mathcal{N}_p(v,D,\e,\kappa)\cap\mathcal{B}_{\e}(D,\varphi)\Big),
\end{equation}
where here we denote by $B_{\kappa^{\prime}}(v)$ the $L^p(D,\R^n)$-ball centered at $u$ with radius $\kappa^{\prime}$. Indeed, from \eqref{e.volVoronoi} we deduce that any $u\in\mathcal{B}_{\e}(D,\varphi)$ with $\Pi_{\e}u\in B_{\kappa^{\prime}}(v)$ satisfies 
\begin{align*}
\sum_{x\in D^{\calL}_{\e}}\e^d|v_{\e}(x)-\e u(x)|^p&\leq C\|\tilde{v}_{\e}-\Pi_{\e}u\|^p_{L^p(D)}+C\sum_{\e\mathcal{C}(x)\cap\partial D\neq\emptyset}\e^d\left(|v_{\e}(x)-\varphi(\e x)|^p+\e^p\right)
\\
&\leq C\|\tilde{v}_{\e}-v\|^p_{L^p(D)}+C\kappa^{\prime}+\sum_{\e\mathcal{C}(x)\cap\partial D\neq\emptyset}\e^d\left(|v_{\e}(x)-\varphi(\e x)|^p+\e^p\right)
\end{align*}
and again the $\e$-dependent terms converge to zero by Remark~\ref{approximation} and an equiintegrability argument for the sum in the second line. Since $V$ is compact, we can find a finite covering by the open balls $B_{\kappa^{\prime}}(u)$, that is there exist $v_1,\dots,v_m$ such that $V\subset \bigcup_{i=1}^m B_{\kappa^{\prime}_i}(v_i)$. Together with the the inclusion \eqref{containedindiscrete} this covering implies
\begin{align}\label{ldpubest}
\limsup_{\e\to 0}\frac{\log(\mu^{\beta}_{\e,D,\varphi}(V))}{\beta|D_{\e}|}&\leq\limsup_{\e\to 0}\frac{1}{\beta|D_{\e}|}\log\left(\sum_{i=1}^m\mu^{\beta}_{\e,D,\varphi}(B_{\kappa^{\prime}_i}(v_i))\right)\nonumber
\\
&\leq \max_{i}\limsup_{\e\to 0}\frac{1}{\beta|D_{\e}|}\Big(\log(Z^{\beta}_{\e,D}(\mathcal{N}_p(v_i,D,\e,\kappa_i)\cap\mathcal{B}_{\e}(D,\varphi))\Big)+c_{\varphi,\beta}
\end{align}
and therefore it remains to bound the term for a fixed $v_{i_0}$. First note that if
\begin{equation*}
\limsup_{\e\to 0}\frac{1}{\beta|D_{\e}|}\Big(\log(Z^{\beta}_{\e,D}(\mathcal{N}_p(v_{i_0},D,\e,\kappa_{i_0})\cap\mathcal{B}_{\e}(D,\varphi))\Big)=-\infty,
\end{equation*}
then there is nothing left to prove. Otherwise, Lemma~\ref{tightness} implies that, for a suitable large $M$, it holds that 
\begin{equation*}
\mathcal{N}_p(v_{i_0},D,\e,\kappa_{i_0})\cap\mathcal{B}_{\e}(D,\varphi)\cap \mathcal{S}_M(D,\e)\neq\emptyset
\end{equation*}
along some infinitesimal sequence $\e\to 0$ (not relabeled). Applying Lemma~\ref{compactness} to this element, we deduce that there exists $v\in \varphi+W^{1,p}_0(D,\R^n)$ such that, similar to estimate \eqref{normestimateI}, it holds that
\begin{equation*}
\|v_{i_0}-\varphi\|_{L^p(D)}\leq C_1\kappa_{i_0}|D|^{\frac{1}{p}+\frac{1}{d}}
\end{equation*} 
Together with (\ref{locallsc}) this implies that $v_{i_0}\in \varphi+W^{1,p}_0(D,\R^n)$, too. Therefore, using also \eqref{localub}, we can further estimate \eqref{ldpubest} by 
\begin{align*}
\limsup_{\e\to 0}\frac{\log(\mu_{\e,g}(A))}{\beta|D_{\e}|}&\leq \limsup_{\e\to 0}\frac{1}{\beta|D_{\e}|}\log\Big(Z^{\beta}_{\e,D}(\mathcal{N}_p(v_{i_0},D,\e,\kappa_{i_0}))\Big)+c_{\varphi,\beta}
\\
&\leq -\mathcal{F}_{\delta}(D,v_{i_0})+c_{\varphi,\beta}\leq -\inf_{v\in V\cap g+W^{1,p}_0(D,\R^n)}\mathcal{F}_{\delta}(D,v)+c_{\varphi,\beta}.
\end{align*} 
Letting $\delta\to 0$, by monotonicity and Theorem~\ref{quasiconvexetc} we obtain the estimate
\begin{equation*}
\limsup_{\e\to 0}\frac{1}{\beta|D_{\e}|}\log(\mu^{\beta}_{\e,D,\varphi}(V))\leq -\inf_{v\in V}I^{\beta}_{D,\varphi}(u)+c_{\varphi,\beta}.
\end{equation*}

\step3{Identification of $c_{\varphi,\beta}$ and conclusion}
It remains to show that $c_{\varphi,\beta}$ does not depend on the subsequence. Testing the open and closed set $L^p(D,\R^n)$ it immediately follows that $c_{\varphi,\beta}=\inf_{v\in L^p(D,\R^n)}I^{\beta}_{D,\varphi}(v)$ and this proves the large deviation principle with rate functional $\mathcal{I}_{D,\varphi}^{\beta}$ as claimed in Theorem~\ref{LDP}.
\end{proof}

\begin{proof}[Proof of Theorem~\ref{th:Helmholtz}]
Observe that, by the definitions in \eqref{defpartfunc} and \eqref{deffreeenergy}, in Step~3 above we also proved the claim on the Helmholtz free energy with boundary condition $\varphi$.
\end{proof}

From the large deviation principle we obtain the following qualitative behavior of the Gibbs measures.
\begin{corollary}\label{concentration}
Let $\e_j\to 0$. Under the assumptions of Theorem~\ref{LDP}, for a set of full probability the sequence of measures $\mu^{\beta}_{\e_jD,\varphi}$ is compact with respect to weak$^*$-convergence and each cluster point as $\e_j\to 0$ is a probability measure whose support is contained in the set of minimizers of the rate functional $\mathcal{I}_{D,\varphi}^{\beta}$.
\end{corollary}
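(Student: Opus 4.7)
The plan is to combine the exponential tightness from Lemma~\ref{exptight} with the large-deviation upper bound from Theorem~\ref{LDP}, in the standard way that turns a good rate function with a distinguished zero set $\mathcal{M}$ into concentration of the associated sequence of probability measures on $\mathcal{M}$.

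First I would establish weak-$^*$ sequential precompactness of $\{\mu^{\beta}_{\e_j,D,\varphi}\}$ via Prokhorov's theorem. Lemma~\ref{exptight} supplies, for each $N\in\mathbb{N}$, a compact set $K_N\subset L^p(D,\R^n)$ with $\mu^{\beta}_{\e_j,D,\varphi}(L^p(D,\R^n)\setminus K_N)\le \exp(-\tfrac{N}{2}\beta|D_{\e_j}|)$ for all sufficiently small $\e_j$. Augmenting $K_N$ by inner-regular compacts for the finitely many excluded indices yields uniform tightness of the entire sequence, and Prokhorov delivers weak-$^*$ sequential compactness.

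Next I would verify that $\mathcal{I}^{\beta}_{D,\varphi}$ is a \emph{good} rate function on $L^p(D,\R^n)$. Lower semicontinuity follows from Lemma~\ref{lsc} together with the identification in Theorem~\ref{quasiconvexetc}. Compactness of sublevel sets is where the coercive $p$-growth from Lemma~\ref{lb} enters: any $\mathcal{I}^{\beta}_{D,\varphi}$-bounded set is contained in $\varphi+W^{1,p}_0(D,\R^n)$ and bounded in $W^{1,p}(D,\R^n)$, hence $L^p$-compact by Rellich-Kondrachov. In particular the set $\mathcal{M}:=\{u\in L^p(D,\R^n):\mathcal{I}^{\beta}_{D,\varphi}(u)=0\}$ of minimizers is nonempty and closed in $L^p$. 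Let $\mu$ now be any weak-$^*$ subsequential limit of $\mu^{\beta}_{\e_j,D,\varphi}$, and let $V\subset L^p(D,\R^n)$ be closed with $V\cap\mathcal{M}=\emptyset$. Goodness of $\mathcal{I}^{\beta}_{D,\varphi}$ forces $c_V:=\inf_V \mathcal{I}^{\beta}_{D,\varphi}>0$: otherwise an infimizing sequence in $V$ would subconverge, by compactness of sublevel sets and lower semicontinuity, to a point of $V\cap\mathcal{M}$. The upper large-deviation bound of Theorem~\ref{LDP} then yields
\begin{equation*}
\limsup_{\e_j\downarrow 0} \mu^{\beta}_{\e_j,D,\varphi}(V)\le \limsup_{\e_j\downarrow 0} \exp\bigl(-c_V\,\beta|D_{\e_j}|\bigr)\exp(o(|D_{\e_j}|))=0,
\end{equation*}
so by the Portmanteau theorem $\mu(V)=0$. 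Exhausting the open set $\mathcal{M}^c$ by the closed sets $F_n:=\{u\in L^p(D,\R^n):\dist_{L^p}(u,\mathcal{M})\ge 1/n\}$, each disjoint from $\mathcal{M}$, gives $\mu(\mathcal{M}^c)=0$, hence $\operatorname{supp}(\mu)\subset\mathcal{M}$.

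The only mildly delicate point I expect is the verification that sublevel sets of $\mathcal{I}^{\beta}_{D,\varphi}$ are $L^p$-compact rather than merely closed; this rests on transferring the lower $p$-growth bound on $\overline{W}^\beta$ (Lemma~\ref{lb}) through the Lipschitz trace $\varphi$ to obtain uniform $W^{1,p}$-bounds, after which Rellich-Kondrachov closes the argument. Everything else is standard bookkeeping from the LDP plus Portmanteau.
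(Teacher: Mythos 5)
Your argument is correct, and it follows the route the paper implicitly intends: the corollary is stated without proof as a standard consequence of the LDP with good rate function (Theorem~\ref{LDP}) plus exponential tightness (Lemma~\ref{exptight}). You correctly combine Prokhorov with the exponential tightness for weak-$^*$ precompactness of the sequence (and that limits remain probability measures), verify goodness of $\mathcal{I}^\beta_{D,\varphi}$ via the lower $p$-growth of $\overline W^\beta$ (Lemma~\ref{lb}), lower semicontinuity (Lemma~\ref{lsc} plus Theorem~\ref{quasiconvexetc}) and Rellich--Kondrachov, use goodness to get $\inf_V\mathcal{I}^\beta_{D,\varphi}>0$ for closed $V$ disjoint from the zero set $\mathcal{M}$, and close via the LDP upper bound and Portmanteau by exhausting $\mathcal{M}^c$ with closed sets. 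This is essentially the standard ``LDP with good rate function $\Rightarrow$ concentration on the zero set'' argument, which is what the paper has in mind; the only cosmetic imprecision is the $\exp(o(|D_{\e_j}|))$ in your display, which is really $\exp(\eta\beta|D_{\e_j}|)$ for an arbitrarily small $\eta<c_V$ --- but since $\eta$ is at your disposal, the conclusion $\mu^\beta_{\e_j,D,\varphi}(V)\to 0$ stands.
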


\subsection{Asymptotic analysis of the localized partition function}
At the end of this section we now give the technical proof of Theorem~\ref{quasiconvexetc}, which was used in the proof of the large deviation principle. 
\begin{proof}[Proof of Theorem~\ref{quasiconvexetc}]
	Let $G\in\mathcal{G}^{\prime}$ with $\mathcal{G}^{\prime}$ the set of full probability implicitly given by Proposition~\ref{fullexistence}. The argument consists of two steps.
	
	\step1 {Proof of the upper bound}
	We show that 
	\begin{equation}\label{ldub}
	\mathcal{F}^+(D,v)\leq \frac{1}{|D|}\int_D\overline W^{\beta}(\nabla v)\,\mathrm{d}x.
	\end{equation}
	By the lower semicontinuity of $v\mapsto \mathcal{F}^+(D,v)$ established in Lemma~\ref{lsc} and the p-growth conditions and continuity of $\overline W^{\beta}(\Lambda )$ (cf. the Lemmata \ref{ub} and \ref{lb} and Proposition~\ref{fullexistence}) it is enough to prove the estimate for continuous piecewise affine functions. More precisely, we consider a locally finite triangulation $\mathcal{T}=\{T\}$ of $\mathbb{R}^d$ and a Lipschitz function $v\in W^{1,\infty}(\R^d,\mathbb{R}^n)$ such that for all $T\in\mathcal{T}$ there exists $\Lambda _T\in\mathbb{R}^{n\times d}$ and $b_T\in\R^n$ with $v_{|T}(y)=\Lambda _Ty+b_T$. To simplify notation we introduce the almost lower-dimensional set
	\begin{equation*}
	S=\bigcup_{T\in\mathcal{T}}(\partial T\cap D)\cup\bigcup_{T\cap\partial D\neq\emptyset}(T\cap D).
	\end{equation*}
	A direct computation shows that, for fixed $\kappa>0$ we find $\e_0$ such that for all $\e<\e_0$ we have the inclusion
	\begin{equation}\label{triangledecomposition}
	\prod_{T\subset D}\left(\mathcal{N}_p(v,T,\e,\frac{\kappa}{2})\cap \mathcal{B}_{\e}(T,\varphi_{\Lambda_T}+b_T)\right)\times \mathcal{N}_{\infty}(\e^{-1}v_{\e},S,\e)\subset \mathcal{N}_p(v,D,\e,\kappa).
	\end{equation}
	We aim to establish a kind of subadditivity estimate. To this end, observe that when $u$ belongs to the set on the left hand side and $x,x^{\prime}\in D^{\calL}_{\e}$ are such that $|x-x^{\prime}|\leq C_0$, we have the following bounds:
	\begin{itemize}
		\item [(i)]  If $x\in T_{\e}$ and $x^{\prime}\in T^{\prime}_{\e}$ for $T\neq T^{\prime}$, then 
		\begin{equation*}
		|u(x)-u(x^{\prime})|\leq 2+\frac{1}{\e}|v(\e x)-v(\e x^{\prime})|\leq 2+C_0\|\nabla v\|_{\infty}.
		\end{equation*}
		\item[(ii)] If $x\in T_{\e}$ and $x^{\prime}\in S_{\e}$, then by definition of $v_{\e}$ in \eqref{discreteapprox} and \eqref{e.volVoronoi}
		\begin{align*}
		|u(x)-u(x^{\prime})|&\leq 2+\frac{1}{\e}|v(\e x)-v_{\e}(x^{\prime})|\leq 2+R\|\nabla u\|_{\infty}+\frac{1}{\e}|v(\e x)-v(\e x^{\prime})|
		\leq 2+(R+C_0)\|\nabla v\|_{\infty}.
		\end{align*} 
		\item[(iii)] If $x,x^{\prime}\in S_{\e}$, then by same reasoning as for (ii)
		\begin{equation*}
		|u(x)-u(x^{\prime})|\leq 2+\frac{1}{\e}|v_{\e}(x)-v_{\e}(x^{\prime})|\leq 2+(2R+C_0)\|\nabla v\|_{\infty}.
		\end{equation*}
	\end{itemize}
	By Hypothesis~\ref{Hypo1}, the above bounds, \eqref{volumeestimate} and \eqref{surfaceesimtate} there exists a constant $C_{v}$ depending only on $\|\nabla v\|_{\infty}^p$ such that, for $\e$ sufficiently small,
	\begin{equation*}
	H_{\e}(D,u)\leq\sum_{T\subset D}H_{\e}(T,u)+C_{v}\bigg(\e^{1-d}\sum_{T\cap D\neq\emptyset }\mathcal{H}^{d-1}(\partial T)+\e^{-d}\sum_{T\cap\partial D\neq\emptyset}|T\cap D|\bigg).
	\end{equation*}
	Upon taking the inverse exponential, integrating the above inequality over the left hand side set in \eqref{triangledecomposition} and applying Fubini's Theorem~yields the estimate
	\begin{align*}
	Z^{\beta}_{\e,D}(\mathcal{N}_p(v,D,\e,\kappa))\geq& \prod_{T\subset D}Z^{\beta}_{\e,T}(\mathcal{N}_p(v,T,\e,\frac{\kappa}{2})\cap\mathcal{B}_{\e}(T,\varphi_{\Lambda _T}+b_T))
	\\
	&\times \exp\bigg(-C_{v}\beta\e^{1-d}\sum_{T\cap D\neq\emptyset }\mathcal{H}^{d-1}(\partial T)-C_{v}\beta\e^{-d}\sum_{T\cap\partial D\neq\emptyset}|T\cap D|\bigg),
	\end{align*}
	where we incorporated the measure of $\mathcal{N}_{\infty}(\e^{-1}v_{\e},S,\e)$ in the exponential term in the last line, possibly increasing the value of $C_{v}$ by a multiplicative factor. On each simplex $T\subset D$ we use the translation invariance of the Hamiltonian to get rid of the constant $b_T$ and obtain
	\begin{align*}
	Z^{\beta}_{\e,D}(\mathcal{N}_p(v,D,\e,\kappa))\geq& \prod_{T\subset D}Z^{\beta}_{\e,T}(\mathcal{N}_p(\overline{\varphi}_{\Lambda _T},T,\e,\frac{\kappa}{2})\cap\mathcal{B}_{\e}(T,\varphi_{\Lambda _T}))
	\\
	&\times \exp\bigg(-C_{v}\beta\e^{1-d}\sum_{T\cap D\neq\emptyset }\mathcal{H}^{d-1}(\partial T)-C_{v}\beta\e^{-d}\sum_{T\cap\partial D\neq\emptyset}|T\cap D|\bigg),
	\end{align*}
	Taking logarithms and dividing by $-\beta|D_{\e}|$, when $\e\to 0$ we infer from Lemma~\ref{equivalent} and Proposition~\ref{fullexistence} that
	\begin{align*}
	\mathcal{F}_{\kappa}^+(D,v)\leq& \sum_{D\subset T}\frac{|T|}{|D|}\overline W^{\beta}(\nabla v_{|T})+C_{v}\sum_{T\cap\partial D\neq\emptyset}\frac{|T\cap D|}{|D|}
	\\
	\leq&\frac{1}{|D|}\int_D\overline W^{\beta}(\nabla v)\,\mathrm{d}x+\sum_{T\cap\partial D\neq\emptyset}\left(C_u-\overline W^{\beta}(\Lambda _T)\right)\frac{|T\cap D|}{|D|}.
	\end{align*}
	Now keeping $v$ fixed, we let first $\kappa\to 0$ and then we refine the triangulation $\mathcal{T}$ and by the regularity of $\partial D$ the last sum can be made arbitrarily small. This proves \eqref{ldub}.
	
	\step2 {Proof of the lower bound}
	We now turn to the argument for the inequality
	\begin{equation*}
	\mathcal{F}^-(D,v)\geq \frac{1}{|D|}\int_D\overline W^{\beta}(\nabla v)\,\mathrm{d}x.
	\end{equation*}
	Let us assume without loss of generality that $\overline W^{\beta}(\Lambda )\geq 0$. Due to Lemma~\ref{lb} this can be achieved by adding a large constant to the discrete energy density $f$. This perturbation yields a (random) additive constant on both sides due to the superadditive version of the ergodic theorem. We want to apply the blow-up Lemma~proven in \cite{KoLu} that allows to treat $v$ locally as an affine function. To this end, we need some notation. First we extend the target function $v\in W^{1,p}(D,\R^n)$ (without relabeling) to a function $v\in W^{1,p}(\R^d,\R^n)$ with compact support. Next, given $\rho>0$ and $\xi\in\R^d$ we define the periodic lattice $\mathcal{L}_{\rho,\xi}=\xi+\rho\mathbb{Z}^d$. Note that for fixed $\rho$, for almost all $\xi\in\R^d$ the set $\mathcal{L}_{\rho,\xi}$ consists of Lebesgue points of $\nabla v$. Hence we can define for such $\xi$ a (not necessarily continuous) piecewise affine approximation as
	\begin{equation*}
	L_{\rho,\xi}v(y)=\nabla v(z)(y-z)+\frac{1}{\rho^d}\int_{Q(z,\rho)}v(x)\,\mathrm{d}x\quad\quad\text{ if }y\in Q(z,\rho),\,z\in\mathcal{L}_{\rho,\xi}.
	\end{equation*} 
	The main object will be the local difference between the linearization and the function itself defined for $z\in\mathcal{L}_{\rho,\xi}$ as
	\begin{equation*}
	\phi^z_{\rho,\xi}(y)=\frac{1}{\rho}\big(v(z+\rho y)-L_{\rho,\xi}v(z+\rho y)\big)\quad\quad\text{if  }y\in Q(0,1).
	\end{equation*}
	Note that due the bounds in Lemma~\ref{ub}, for any $D^{\prime}\subset\subset D$ the function $x\mapsto \overline W^{\beta}(\nabla v(x))\mathds{1}_{D^{\prime}}(x)$ is integrable on $\R^d$. Hence we are in the position to use the blow-up Lemma~by Koteck\'y and Luckhaus (see \cite[Corollary 1]{KoLu}). Note that we apply it also for the function itself which follows simply by the Poincar\'e inequality since the function $\phi_{\rho,\xi}^z$ has mean value zero on $Q(0,1)$. It states that for each $\eta>0$ we find $\rho_0>0$ such that for each $\rho<\rho_0$ there exists $\xi\in Q(0,\rho)$ with
	\begin{equation}\label{blowup}
	\begin{split}
	&\sum_{z\in\mathcal{L}_{\rho,\xi}}\rho^d\bigg(\int_{Q(0,1)}|\nabla \phi^z_{\rho,\xi}(y)|^p\,\mathrm{d}y+\int_{Q(0,1)}|\phi^z_{\rho,\xi}(y)|^p\,\mathrm{d}y\bigg)<\eta
	\\
	&\sum_{z\in\mathcal{L}_{\rho,\xi}}\rho^d\overline W^{\beta}(\nabla v(z))\mathds{1}_{D^{\prime}}(z)> \int_{D^{\prime}}\overline W(\nabla v)\,\mathrm{d}x-\eta.
	\end{split}
	\end{equation}
	Next, if $u\in\mathcal{N}_p(v,D,\e,\kappa)$, then for all cubes $Q(z,\rho)\subset D$ and $\e=\e(\rho)$ small enough the triangle inequality, the definition \eqref{discreteapprox} and a change of variables imply
	\begin{align*}
	\bigg(\sum_{x\in Q(z,\rho)^{\calL}_{\e}}\e^d|\e u(x)-(L_{\rho,\xi}v)_{\e}(x)|^p\bigg)^{\frac{1}{p}}\leq& \kappa |D|^{\frac{1}{p}+\frac{1}{d}}+\bigg(\sum_{x\in Q(z,\rho)^{\calL}_{\e}}\e^d|v_{\e}(x)-(L_{\rho,\xi}v)_{\e}(x)|^p\bigg)^{\frac{1}{p}}
	\\
	\leq&\kappa |D|^{\frac{1}{p}+\frac{1}{d}}+C\rho^{1+\frac{d}{p}}\bigg(\sum_{\substack{z^{\prime}\in \mathcal{L}_{\rho,\xi}\\ |z-z^{\prime}|\leq \rho}}\int_{Q(0,1)}|\phi_{\rho,\xi}^{z^{\prime}}(y)|^p\,\mathrm{d}y\bigg)^{\frac{1}{p}}
	\\
	=&\kappa |D|^{\frac{1}{p}+\frac{1}{d}}+C |Q(z,\rho)|^{\frac{1}{p}+\frac{1}{d}}\bigg(\sum_{\substack{z^{\prime}\in \mathcal{L}_{\rho,\xi}\\ |z-z^{\prime}|\leq \rho}}\int_{Q(0,1)}|\phi_{\rho,\xi}^{z^{\prime}}(y)|^p\,\mathrm{d}y\bigg)^{\frac{1}{p}}.
	\end{align*}
	On setting $S_{\rho}=S_{\rho}^1\cup S_{\rho}^2$, where $S_{\rho}^1=\displaystyle \bigcup_{z\in\mathcal{L}_{\rho,\xi}}(\partial Q(z,\rho)\cap D)$ and $S_{\rho}^2=\displaystyle \bigcup_{Q(z,\rho)\cap\partial D\neq\emptyset}(Q(z,\rho)\cap D)$ as well as
	\begin{equation*}
	\kappa_z=\max\bigg\{2C\bigg(\sum_{\substack{z^{\prime}\in \mathcal{L}_{\rho,\xi}\\ |z-z^{\prime}|\leq \rho}}\int_{Q(0,1)}|\phi_{\rho,\xi}^{z^{\prime}}(y)|^p\,\mathrm{d}y\bigg)^{\frac{1}{p}},\rho\bigg\},
	\end{equation*}
	we obtain for $\kappa=\kappa(\rho)$ small enough the set inclusion
	\begin{equation}\label{splitset}
	\mathcal{N}_p(v,D,\e,\kappa)\subset \prod_{\substack{z\in\mathcal{L}_{\rho,\xi}\\ Q(z,\rho)\subset D}}\mathcal{N}_p(L_{\rho,\xi}v,Q(z,\rho),\e,\kappa_z)\times \prod_{i=1}^2S_{\e}^i(\kappa,\rho),.
	\end{equation}
	where for $i=1,2$ we define the sets
	\begin{equation*}
	\begin{split}
	S^i_{\e}(\kappa,\rho):=\{u:(S^i_{\rho})^{\calL}_{\e}\to\R^n:\;\|u-\e^{-1}v_{\e}\|_{\infty}\leq \kappa|\e^{-1}D|^{\frac{1}{p}+\frac{1}{d}}\}.
	\end{split}
	\end{equation*}
	In order to control the integration over these two sets, we note that for $\e$ small enough
	\begin{equation}\label{skeletonbound}
	\log\Big(\int_{S^1_{\e}(\kappa,\rho)}\mathrm{d}u\Big)\leq C\log(C\kappa^n\e^{-\frac{nd}{p}-n})\sum_{Q(z,\rho)\cap D\neq\emptyset}\rho^{d-1}\e^{1-d}\leq C\log(C\kappa^n\e^{-\frac{nd}{p}-n})\rho^{-1}\e^{1-d}.
	\end{equation} 
	To treat the contributions from the points in $(S_{\rho}^2)_{\e}$ we have to use once again Lemma~\ref{forrestargument}. To this end we observe that 
	\begin{equation*}
	\partial S_{\rho}^2\subset \partial D\cup\bigcup_{Q(z,\rho)\cap\partial D\neq\emptyset}(\partial Q(z,\rho)\cap D).
	\end{equation*}
	As the union on the right hand side is finite, we can argue as for Remark~\ref{numbercomponents} and (\ref{surfaceesimtate}) to show that, for $\e$ small enough, the number of connected components $N_{\e,\rho}$ of the graph $G_{S_{\rho}^2,\e}$ can be bounded by
	\begin{equation}\label{numberagain}
	N_{\e,\rho}\leq C\e^{1-d}\left(\mathcal{H}^{d-1}(\partial D)+\rho^{-1}\right).
	\end{equation} 
	Due to Hypothesis~\ref{Hypo1} and Lemma~\ref{forrestargument} we deduce the bound
	\begin{align}\label{boundarystripebound}
	\log\Big(\int_{S_{\e}^2(\kappa,\rho)}\exp(-\beta H_{\e}(S_{\rho}^2,u)\,\mathrm{d}u\Big)\leq&\log\Big(\int_{S_{\e}^2(\kappa,\rho)}\exp(-\frac{\beta}{C}\|\nabla_\B u\|^p_{\ell^p_{\e}(S_{\rho}^2)})\,\mathrm{d}u \Big)+C\beta|(S_{\rho}^2)^{\calL}_{\e}|\nonumber
	\\
	\leq& \log(C\kappa^n\e^{-\frac{nd}{p}-n})N_{\e,\rho}+C(1+|\log(\beta)|+\beta)|(S_{\rho}^2)^{\calL}_{\e}|
	\end{align}
	Together with the inequality $H_{\e}(D,u)\geq \sum_{Q(z,\rho)\subset D}H_{\e}(Q(z,\rho),u)+H_{\e}(S_{\rho}^2,u)$, the inclusion in \eqref{splitset} and Fubini's Theorem~imply
	\begin{align*}
	Z^{\beta}_{\e,D}(\mathcal{N}_p(v,D,\e,\kappa))\leq& \prod_{\substack{z\in\mathcal{L}_{\rho,\xi}\\ Q(z,\rho)\subset D}}Z^{\beta}_{\e,Q(z,\rho)}(\mathcal{N}_p(L_{\rho,\xi}v,Q(z,\rho),\e,\kappa_z))
	\\
	&\quad\times \int_{S_{\e}^1(\kappa)}\mathrm{d}u_1\int_{S_{\e}^2(\kappa,\rho)}\exp(-H_{\e}(S_{\rho}^2,u_2)\,\mathrm{d}u_2.
	\end{align*}
	Taking logarithms and dividing by $-\beta|D_{\e}|$ we infer from \eqref{skeletonbound}, \eqref{numberagain} combined with \eqref{boundarystripebound} that
	\begin{equation*}
	\mathcal{F}^-(D,v)\geq \sum_{\substack{z\in\mathcal{L}_{\rho,\xi}\\ Q(z,\rho)\subset D}}\frac{\rho^d}{|D|}\mathcal{F}^-_{\kappa_z}(Q(z,\rho),\overline{\varphi}_{\nabla v(z)})-C\left(\frac{1+|\log(\beta)|+\beta}{\beta}\right)\frac{|\partial D+Q(0,2\rho)|}{|D|},
	\end{equation*}
	where we also used that the energy is invariant under constant shifts so that we can pass from the affine approximation to the linear one. Since we assume that $\overline W(\Lambda )\geq 0$, using Remark~\ref{invariance} and (\ref{blowup}) we infer that for arbitrary $N\in\mathbb{N}$ and $\delta,\rho$ sufficiently small
	\begin{align}\label{almostdone}
	\mathcal{F}^-(D,v)\geq& \sum_{\substack{z\in\mathcal{L}_{\rho,\xi}\\ Q(z,\rho)\subset D}}\frac{\rho^d}{|D|}\left(\overline W^{\beta}(\nabla v(z))-C\left((1+|\nabla v(z)|^p)\delta+\frac{(N\kappa_z)^p}{\delta^p}+\frac{1}{N}\right)\right)-C_{\beta}\frac{|\partial D+Q(0,2\rho)|}{|D|}\nonumber
	\\
	\geq&\frac{1}{|D|}\int_{D^{\prime}}\overline W^{\beta}(\nabla v)\,\mathrm{d}x-\frac{\eta}{|D|}-\sum_{\substack{z\in\mathcal{L}_{\rho,\xi}\\ Q(z,\rho)\subset D}}C\frac{\rho^d}{|D|}\left((1+|\nabla u(z)|^p)\delta+\frac{(N\kappa_z)^p}{\delta^p}+\frac{1}{N}\right)\nonumber
	\\
	&-C_{\beta}\frac{|\partial D+Q(0,2\rho)|}{|D|}.
	\end{align}
	Using again (\ref{blowup}) we can bound the sum of the gradients. Indeed, by a change of variables it holds that
	\begin{align}\label{pnormbound}
	\sum_{\substack{z\in\mathcal{L}_{\rho,\xi}\\ Q(z,\rho)\subset D}}\rho^d|\nabla v(z)|^p&\leq C\sum_{\substack{z\in\mathcal{L}_{\rho,\xi}\\ Q(z,\rho)\subset D}}\int_{Q(z,\rho)}|\nabla v(y)-\nabla v(z)|^p+|\nabla v(y)|^p\,\mathrm{d}y\nonumber
	\\
	&\leq C\sum_{z\in\mathcal{L}_{\rho,\xi}}\rho^d\int_{Q(0,1)}|\nabla \phi_{\rho,\xi}^z(y)|^p\,\mathrm{d}y+C\|\nabla v\|^p_{L^p(D)}\leq C(\eta+\|\nabla v\|_{L^p(D)}^p).
	\end{align} 
	To control the sum over $\kappa_z^p$, note that by (\ref{blowup}) and the definition of $\kappa_z$ we have
	\begin{equation}\label{kappasum}
	\sum_{\substack{z\in\mathcal{L}_{\rho,z}\\ Q(z,\rho)\subset D}}\frac{\rho^d}{|D|}\kappa_z^p\leq C\rho^{p}+\frac{C}{|D|}\sum_{z\in\mathcal{L}_{\rho,\xi}}\rho^d\int_{Q(0,1)}|\phi_{\rho,\xi}^z(y)|^p\,\mathrm{d}y\leq C(\rho^p+\frac{\eta}{|D|}).
	\end{equation}
	Putting together \eqref{almostdone}, \eqref{pnormbound} and \eqref{kappasum} we obtain
	\begin{align*}
	\mathcal{F}^-(D,v)\geq& \frac{1}{|D|}\int_{D^{\prime}}\overline W^{\beta}(\nabla v)\,\mathrm{d}x-\frac{C}{|D|}\left(\left(|D|+\eta+\|\nabla v\|_{L^p(D)}^p\right)\delta+\frac{N^p}{\delta^p}(\rho^p|D|+\eta)+\frac{|D|}{N}\right)
	\\
	&-C_{\beta}\frac{|\partial D+Q(0,\rho)|}{|D|}.
	\end{align*}
	The last inequality concludes the proof after letting first $\rho\to 0$, then $\eta\to 0$ followed by $N\to +\infty$ and $\delta\to 0$ and finally using the arbitrariness of $D^{\prime}\subset\subset D$ (recall the integrability of $\overline W^{\beta}(\nabla v)$).
\end{proof}


\section{Zero temperature limit of the elastic free energy: Proof of Theorem~\ref{th:smallT}}\label{Sec5}

In this section we investigate the asymptotic behavior of the rate functional from the large deviation principle when the temperature vanishes, or equivalently when $\beta\to +\infty$. To this end, we bound from above and below the entropic part whenever we consider the energy difference between a general configuration of the system and the ground state when we prescribe linear boundary conditions. We shall prove that, under the standard $p$-growth conditions \eqref{Hyp-1.1} and an additional local Lipschitz property (see Hypothesis~\ref{Hypo2}) we indeed recover the density of the $\Gamma$-limit of the rescaled versions of the Hamiltonians $H_{\e}(D,v)$. Since $\Gamma$-convergence focuses on the convergence of global minimizers of the Hamiltonian $H_{\e}(D,\cdot)$ (for a general reference on the subject we refer to the standard literature \cite{GCB,DM}), our result shows that at low temperatures entropic effects can be neglected and energy minimization is indeed meaningful also from a statistical physics point of view.

\subsection{Variational results neglecting temperature}
For completeness  we briefly recall  $\Gamma$-convergence results at zero temperature. First we rescale the Hamiltonian and its domain  as for the definition of the Gibbs measure. Given $\e>0$ and a function $u:\calL\to\R^n$ we define the function $v:\e\calL\to\R^n$ setting $v(\e x)=\e u(x)$. As usual this function can be identified with a function that is constant on the scaled Voronoi cells, so that  it belongs to the class
\begin{equation*}
\mathcal{PC}_{\e}:=\{v:\R^d\to\R^n:\;u_{|\e\mathcal{C}(x)}\text{ is constant for all }x\in\calL\}.
\end{equation*}
We may embed $\mathcal{PC}_{\e}\subset L^p(D,\R^n)$. Then, for every $O\in\Ard$, we introduce the rescaled Hamiltonian $\tilde{H}_{\e}(O,\cdot):L^p(D,\R^n)\to [0,+\infty]$ setting
\begin{equation*}
\tilde{H}_{\e}(v,O)=\begin{cases}
\displaystyle\frac{1}{|O_{\e}|}\sum_{\substack{(x,y)\in \mathbb{B}\\ \e x,\e y\in O}}f\left(x-y,\frac{v(\e x)-v(\e y)}{\e}\right) &\mbox{if $v\in\mathcal{PC}_{\e}$.}
\\
+\infty &\mbox{otherwise}
\end{cases}
\end{equation*}
We then define the set of clamped displacements
\begin{equation*}
\mathcal{BC}_{\e}(O,\varphi_\Lambda )=\{u:O_{\e}^\calL\to\mathbb{R}^n:\;u(x)=\Lambda x\text{ if }\dist(x,\partial O_{\e})\leq C_0\}.
\end{equation*}
Note that in contrast to the soft boundary conditions defining the set $\mathcal{B}_{\e}(O,\varphi_{\Lambda ^{\prime}})$ here the boundary conditions are exactly satisfied. The density of the $\Gamma$-limit is then given by the formula
\begin{equation*}
\overline{W}^{\infty}(\Lambda )=\lim_{\e\to 0}\frac{1}{|Q_{\e}|}\inf\{H_{\e}(u,Q):\;u\in\mathcal{BC}_{\e}(Q,\varphi_\Lambda )\},
\end{equation*}
where $Q=(-\frac{1}{2},\frac{1}{2})^d$. The  existence of this limit is a consequence of the subadditive ergodic Theorem,~as for  Proposition~\ref{existence}. By \cite[Theorems 2 \& 3]{ACG2} we have the following $\Gamma$-convergence result:
\begin{theorem}\label{Gamma-limit}
Assume \eqref{Hyp-1.1} and let $G$ be an admissible, stationary random Euclidean graph. Assume in addition that $f$ is continuous in the second variable. Then almost surely the functionals $\tilde{H}_{\e}$ $\Gamma$-converge with respect to the $L^p(D,\R^n)$-topology to the functional $\overline{H}:L^p(D,\R^n)\to [0,+\infty]$ finite only on $W^{1,p}(D,\R^n)$ and characterized by
\begin{equation*}
\overline{H}(v)=\frac{1}{|D|}\int_D \overline{W}^{\infty}(\nabla v(x))\,\mathrm{d}x.
\end{equation*}
Moreover, for any $O\in\Ard$ and any $v\in W^{1,p}(D,\R^n)$ we have the local version
\begin{equation*}
\Gamma\hbox{-}\lim_{\e\to 0}\tilde{H}_{\e}(O,v)=\frac{1}{|O|}\int_O\overline{W}^{\infty}(\nabla v(x))\,\mathrm{d}x.
\end{equation*}
The map $\Lambda\mapsto\overline{W}^{\infty}(\Lambda)$ is quasiconvex and satisfies the $p$-growth condition
\begin{equation*}
\frac{1}{C}|\Lambda|^p-C\leq \overline{W}^{\infty}(\Lambda)\leq C(|\Lambda|^p+1).
\end{equation*}
\end{theorem}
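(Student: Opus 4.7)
\medskip
\noindent\textbf{Proof proposal for Theorem~\ref{Gamma-limit}.}
The plan is a classical stochastic-homogenization $\Gamma$-convergence proof, carried out on the admissible random graph. First, I would define the candidate density $\overline{W}^{\infty}(\Lambda)$ via a subadditive ergodic argument, exactly parallel to Proposition~\ref{existence} but for the minimum of the Hamiltonian rather than its free energy. Concretely, for $I\in\mathcal I$ set
\begin{equation*}
m(I,\Lambda):=\inf\{H_{1}(I,u):u\in\mathcal{BC}_{1}(I,\varphi_\Lambda)\},
\end{equation*}
check stationarity $m(I,G+z)=m(I-z,G)$, integrability (using the upper bound of \eqref{Hyp-1.1} with the admissible competitor $u=\varphi_\Lambda$), and almost-subadditivity on partitions of $I$ (the boundary error being controlled by \eqref{surfaceesimtate} and the clamped boundary data, which glue without extra cost). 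Akcoglu--Krengel then gives $\overline{W}^{\infty}(\Lambda):=\lim_{t\to\infty}t^{-d}m(tQ,\Lambda)$ almost surely, independently of the Lipschitz domain. Lemmata~\ref{ub} and~\ref{lb} (the latter unchanged by sending $\beta\to\infty$) yield the two-sided $p$-growth, and an argument analogous to Proposition~\ref{fullexistence} gives continuity of $\Lambda\mapsto\overline{W}^{\infty}(\Lambda)$ on a common set of full probability.

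Next I would prove compactness: any sequence $v_\e\in\mathcal{PC}_\e$ with $\tilde H_\e(O,v_\e)\le C$ is precompact in $L^p_{\mathrm{loc}}(O,\R^n)$, and cluster points lie in $W^{1,p}(O,\R^n)$. This is essentially Lemma~\ref{compactness} combined with the difference-quotient estimate \eqref{int-energy} derived in the proof of Lemma~\ref{lb}; the lower $p$-growth of $f$ controls $\|\nabla_\B u\|_{\ell^p_\e(O)}^p$, Lemma~\ref{poincare} controls $\|u\|_{\ell^p_\e(O)}^p$ after fixing additive constants on each connected component (using Remark~\ref{numbercomponents}), and \eqref{sobolevreg} promotes the limit to $W^{1,p}$.

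For the liminf inequality I would use a blow-up argument in the spirit of Fonseca--Müller, but discretized as in Step~2 of the proof of Theorem~\ref{quasiconvexetc}. Given $v_\e\to v$ in $L^p(O)$ with $\liminf\tilde H_\e(O,v_\e)<\infty$, apply the Koteck\'y--Luckhaus blow-up lemma to approximate $v$ on a fine $\rho$-grid $\mathcal L_{\rho,\xi}$ by its affine linearization $L_{\rho,\xi}v$, with the error $\phi^z_{\rho,\xi}$ controlled in $W^{1,p}(Q(0,1))$. On each cube $Q(z,\rho)\subset O$ one modifies $v_\e$ to satisfy the linear boundary condition $\varphi_{\nabla v(z)}$ on a thin collar, paying only a surface error (bounded via the local Lipschitz property \eqref{Hyp-2.2} applied to $f$, and the small $W^{1,p}$-norm of $\phi^z_{\rho,\xi}$ on collars, using Fubini to pick a good collar layer). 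The modified competitor is admissible for the definition of $m(Q(z,\rho)/\e,\nabla v(z))$ after rescaling, and the subadditive ergodic theorem gives
\begin{equation*}
\liminf_{\e\to 0}\tilde H_\e(O,v_\e)\ge \sum_{z}\frac{\rho^d}{|O|}\overline W^{\infty}(\nabla v(z))+\text{error}(\rho,\eta),
\end{equation*}
and letting $\rho,\eta\downarrow 0$ recovers $\tfrac{1}{|O|}\int_O\overline W^{\infty}(\nabla v)\,dx$; quasiconvexity of $\overline W^{\infty}$ is a free byproduct of this lower-semicontinuity combined with weak $L^p$-lower semicontinuity on affine functions.

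For the limsup inequality, by the standard density argument it suffices to build a recovery sequence for continuous piecewise affine $v$ with gradients $\{\Lambda_T\}$. On each simplex $T\subset O$, take an almost-optimal $u^T_\e$ for the discretized problem defining $\overline W^{\infty}(\Lambda_T)$ on $T/\e$ with clamped boundary data $\varphi_{\Lambda_T}+b_T$; since boundary data agree across interfaces, the union is already continuous and the localized additivity of $H_\e$ (up to boundary terms handled via \eqref{surfaceesimtate}) gives
\begin{equation*}
\limsup_{\e\to 0}\tilde H_\e(O,v_\e)\le\sum_T\frac{|T|}{|O|}\overline W^{\infty}(\Lambda_T)=\frac{1}{|O|}\int_O\overline W^{\infty}(\nabla v)\,dx.
\end{equation*}
The main obstacle is the liminf step: without the cushion provided by Gibbs-style interpolation (Proposition~\ref{interpolation}) available at finite $\beta$, imposing the affine boundary condition on small random cubes must be done purely deterministically. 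The key input is the local Lipschitz bound \eqref{Hyp-2.2} together with a careful Fubini selection of a collar on which $\nabla v_\e$ has controlled $\ell^p$-mass, mirroring the fundamental estimate used in $\Gamma$-convergence for integral functionals.
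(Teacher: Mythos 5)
This statement is not proved in the paper at all: it is quoted verbatim from \cite[Theorems 2 \& 3]{ACG2} (with the Dirichlet version, Theorem~\ref{gammabc}, taken from \cite[Theorem~4]{ACG2}), so there is no internal argument to compare with. Your sketch is essentially a reconstruction of the standard stochastic-homogenization proof behind that citation: define $\overline W^\infty(\Lambda)$ by applying the subadditive ergodic theorem to the minimum problems with clamped affine data (this matches the paper's own remark that the existence of the limit follows ``as for Proposition~\ref{existence}''), prove compactness via the discrete Poincar\'e and difference-quotient estimates, prove the liminf by blow-up with boundary matching on small cubes, and the limsup by gluing almost-minimizers on a piecewise affine triangulation. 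That is the right architecture, and most of the ingredients you point to (Lemma~\ref{poincare}, \eqref{surfaceesimtate}, \eqref{int-energy}, Remark~\ref{numbercomponents}) are indeed the relevant ones.

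There is, however, one concrete misstep: in the liminf step you control the cost of forcing the affine boundary condition on the collar of $Q(z,\rho)$ by invoking the local Lipschitz estimate \eqref{Hyp-2.2}. That estimate belongs to Hypothesis~\ref{Hypo2}, whereas the theorem assumes only \eqref{Hyp-1.1} together with continuity of $f$ in the second variable, so the argument as written proves a weaker statement than claimed. The fix is the classical De Giorgi cut-off/averaging argument, which needs nothing beyond the upper bound $f(z,\xi)\le C(1+|\xi|^p)$: writing $w=\theta v_\e+(1-\theta)\varphi_{\nabla v(z)}$ on the collar, the collar energy is bounded by $C$ times the discrete gradient energy of $v_\e$ there, plus $C(1+|\nabla v(z)|^p)$ times the collar volume, plus $C\delta^{-p}$ times the $\ell^p$-distance between $v_\e$ and the affine map on the collar; choosing among $N$ disjoint collar layers the one with small energy (your Fubini selection) and using $v_\e\to v\approx L_{\rho,\xi}v$ makes all three terms negligible after $\e\to0$, $N\to\infty$, $\rho\to0$. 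A minor related point: the continuity of $\Lambda\mapsto\overline W^\infty(\Lambda)$ that you use for the density step in the limsup should not be obtained ``as in Proposition~\ref{fullexistence}'' before quasiconvexity is available; either derive it directly by comparing the clamped minimum problems for nearby $\Lambda$ with the same cut-off construction, or obtain it a posteriori from quasiconvexity and the two-sided $p$-growth. With these repairs your route coincides, in substance, with the proof of \cite{ACG2} that the paper relies on.
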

In order to also incorporate Dirichlet boundary conditions $\varphi\in \Lip(\R^d,\R^n)$, we introduce the class
\begin{equation*}
\mathcal{PC}_{\e,\varphi}=\{v\in\mathcal{PC}_{\e}:\;v(\e x)=\varphi(\e x)\text{ for all }x\in\Lw\text{ such that }\dist(\e x,\partial D)\leq C_0\e\}.
\end{equation*}
We restrict the domain of the discrete Hamiltonian $\tilde{H}_{\e}$ to $\mathcal{PC}_{\e,\varphi}$ setting $\tilde{H}_{\e,\varphi}:L^p(D,\R^n)\to [0,+\infty]$ as
\begin{equation*}
\tilde{H}_{\e,\varphi}(v)=
\begin{cases}
\tilde{H}_{\e}(v) &\mbox{if $v\in\mathcal{PC}_{\e,\varphi}$,}
\\
+\infty &\mbox{otherwise.}
\end{cases}
\end{equation*}
Then \cite[Theorem~4]{ACG2} yields the following $\Gamma$-convergence result under Dirichlet boundary conditions.
\begin{theorem}\label{gammabc}
Under the assumptions of Theorem~\ref{Gamma-limit}, the functionals $\tilde{H}_{\e,\varphi}$ $\Gamma$-converge with respect to the $L^p(D,\R^n)$-topology to the functional $\overline{H}_{\varphi}:L^p(D,\R^n)\to [0,+\infty]$ finite only for $\varphi+W_0^{1,p}(D,\R^n)$ and characterized by
\begin{equation*}
\overline{H}_{\varphi}(v)=\frac{1}{|D|}\int_D\overline W^{\infty}(\nabla v(x))\,\mathrm{d}x.
\end{equation*}	
\end{theorem}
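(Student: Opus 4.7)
The plan is to derive Theorem~\ref{gammabc} from the unconstrained $\Gamma$-convergence result Theorem~\ref{Gamma-limit} by a standard De Giorgi-type boundary-gluing argument. The $\Gamma$-liminf is essentially for free: if $v_\e \to v$ in $L^p(D,\R^n)$ with $\liminf_\e \tilde H_{\e,\varphi}(v_\e) < +\infty$, then along a subsequence $v_\e \in \mathcal{PC}_{\e,\varphi}$ and $\tilde H_{\e,\varphi}(v_\e) = \tilde H_\e(v_\e)$, so Theorem~\ref{Gamma-limit} gives $\liminf_\e \tilde H_\e(v_\e) \geq \frac{1}{|D|}\int_D \overline W^\infty(\nabla v)\,dx$. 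To verify $v \in \varphi + W^{1,p}_0(D,\R^n)$, extend $v_\e$ to a slightly larger Lipschitz domain $D'\supset\supset D$ by declaring $v_\e = \varphi$ on $D'\setminus D$ (which is compatible with the clamped boundary data), apply the compactness argument of Lemma~\ref{compactness} on $D'$, and observe that the limit agrees with the Lipschitz function $\varphi$ outside $D$, so its trace on $\partial D$ is $\varphi|_{\partial D}$.

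For the $\Gamma$-limsup, first use density in $\varphi + W^{1,p}_0(D,\R^n)$ together with the $p$-growth and continuity of $\overline W^\infty$ (hence $L^p$-lower-semicontinuity of $\overline H_\varphi$ by quasiconvexity) to reduce to targets of the form $v = \varphi + \psi$ with $\psi \in C_c^\infty(D,\R^n)$. Choose a Lipschitz domain $D'\in \Ard$ with $\mathrm{supp}\,\psi \subset\subset D' \subset\subset D$, and apply Theorem~\ref{Gamma-limit} on $D$ to produce $w_\e \in \mathcal{PC}_\e$ with $w_\e \to v$ in $L^p(D)$ and $\limsup_\e \tilde H_\e(w_\e) \leq \overline H(v)$. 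In particular the discrete gradient energies $\|\nabla_{\B} w_\e\|_{\ell_\e^p(D)}^p/|D_\e|$ and norms $\|w_\e - \varphi\|_{L^p(D)}$ are uniformly controlled.

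The core construction is a De Giorgi averaging on annular transition layers. Fix $\eta>0$ small with $\mathrm{dist}(D',\partial D)>2\eta$ and $N\in\N$ large, and split the shell $\{x\in D:\mathrm{dist}(x,D')<\eta\}$ into $N$ concentric layers $A_1,\dots,A_N$ of width $\eta/N$. Introduce Lipschitz cut-offs $\theta_i$ with $\theta_i\equiv 1$ inward of $A_i$, $\theta_i\equiv 0$ outward of $A_i$, and $\|\nabla\theta_i\|_\infty\le CN/\eta$, and define the interpolated discrete configurations
\begin{equation*}
v^{(i)}_\e(\e x) := \theta_i(\e x)\,w_\e(\e x) + (1-\theta_i(\e x))\,\varphi(\e x).
\end{equation*}
Using the $p$-growth upper bound of \eqref{Hyp-1.1}, each nearest-neighbor increment of $v_\e^{(i)}$ along an edge entering $A_i$ is bounded by a sum of increments of $w_\e$ and of $\varphi$ plus a Lipschitz-of-$\theta$ term of order $(N/\eta)|w_\e-\varphi|$. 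Summing the $p$-th powers over edges in $A_i$, averaging over $i\in\{1,\dots,N\}$, and invoking the pigeonhole principle produces an index $i^\ast=i^\ast(\e,\eta,N)$ for which the transition-layer energy is bounded by $\tfrac{1}{N}$ times a controlled quantity of the form $C\bigl(\tilde H_\e(w_\e) + (N/\eta)^p\|w_\e-\varphi\|_{L^p(D)}^p + (\|\nabla\varphi\|_\infty^p+1)|D\setminus D'|\bigr)$. Setting $v_\e := v^{(i^\ast)}_\e$, the boundary condition holds in $\mathcal{PC}_{\e,\varphi}$ for $\e$ small relative to $\eta/N$, and $v_\e \to v$ in $L^p(D)$. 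The subadditive splitting
\begin{equation*}
\tilde H_{\e,\varphi}(v_\e) \leq \tilde H_\e(w_\e) + \tilde H^{A_{i^\ast}}_\e(v_\e) + C(\|\nabla\varphi\|_\infty^p+1)\tfrac{|D\setminus D'_\eta|}{|D_\e|}|D_\e|,
\end{equation*}
followed by the limits $\e\downarrow 0$, $N\uparrow\infty$, $\eta\downarrow 0$ and a diagonal extraction, yields $\limsup_\e \tilde H_{\e,\varphi}(v_\e)\le \overline H_\varphi(v)$.

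The main obstacle lies in the transition-layer estimate: without a Lipschitz assumption on $f$ (Theorem~\ref{Gamma-limit} requires only continuity and $p$-growth), one cannot linearize the interchange $w_\e \leftrightsquigarrow \varphi$ and must instead control the full $p$-th power of the modified gradient. This is precisely why one first sends $N\to\infty$ (exploiting averaging to absorb the $(N/\eta)^p$ blow-up into a factor $1/N$) and only afterwards $\eta\to 0$ (to shrink the shell on which $p$-growth of $\varphi$ contributes). The ordering of these limits, together with the uniform $L^p$-bound on $w_\e-\varphi$ inherited from $\psi\in C_c^\infty(D,\R^n)$, is what makes the scheme close.
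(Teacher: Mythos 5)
Your overall route is the right one, but you should know that the paper does not prove Theorem~\ref{gammabc} at all: it is quoted directly from \cite[Theorem~4]{ACG2}, so there is no internal proof to match. Your argument is the standard De Giorgi cut-off/averaging proof that presumably underlies the cited result, and the same gluing machinery does appear elsewhere in the paper (the interpolation $T_{\e,\delta}$ in Proposition~\ref{fullexistence}, Proposition~\ref{interpolation}, and the proof of Theorem~\ref{quasiconvexetc}). Your liminf part is fine, and in fact simpler than you make it: clamped configurations lie in $\mathcal{B}_{\e}(D,\varphi)$, so Lemma~\ref{compactness} applies directly on $D$ and already yields $v\in\varphi+W^{1,p}_0(D,\R^n)$; the extension to a larger $D'$ is not needed. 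The reduction to $v=\varphi+\psi$, $\psi\in C^\infty_c(D,\R^n)$, and the order of limits that absorbs the $(N/\eta)^p$ term are also correct (indeed $\|w_\e-\varphi\|_{L^p}$ on the transition shell tends to $0$ in $\e$ because $v=\varphi$ there, so that term dies already for fixed $N,\eta$).

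There is, however, one concrete flaw in the limsup bookkeeping. You attach the $N$ transition layers to a \emph{fixed} set $D'\supset\supset\mathrm{supp}\,\psi$ and then bound the energy of the glued configuration on the outer region $\{x\in D:\dist(x,D')\geq\eta\}$, where $v_\e=\varphi$, only through the $p$-growth of $f$, i.e.\ by $C(1+\|\nabla\varphi\|^p_\infty)\,|D\setminus D'|/|D|$. None of the limits you take ($\e\downarrow0$, $N\uparrow\infty$, $\eta\downarrow0$) makes this term small, because $D'$ is fixed and the discrete energy of (the interpolation of) $\varphi$ on a region of fixed volume does not converge to $\fint\overline W^{\infty}(\nabla\varphi)$ — you only have the crude upper bound. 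So as written your chain of inequalities yields $\Gamma\hbox{-}\limsup \le \overline H_\varphi(v)+C(1+\|\nabla\varphi\|^p_\infty)|D\setminus D'|/|D|$, not the claimed bound. The fix is easy and should be stated: either add the further limit $D'\uparrow D$ (admissible since $\mathrm{supp}\,\psi$ is a fixed compact subset of $D$), or, more in line with the standard argument, place the $N$ layers inside the collar $\{x\in D:\dist(x,\partial D)<\eta\}$, where $v=\varphi$ anyway; then the region on which the recovery sequence is replaced by $\varphi$ has measure $O(\eta)$ and the spurious term vanishes as $\eta\downarrow0$. With that modification (and noting that bounding the inner contribution by the full energy $\tilde H_\e(w_\e)$ is legitimate by nonnegativity), your proof closes.
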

\begin{remark}\label{convofmin}
From Lemma~\ref{compactness} and the fundamental property of $\Gamma$-convergence we deduce in particular the convergence of (almost-)minimizers to minimizers of the limit energy. Moreover it follows that
\begin{equation}\label{convofinf}
\lim_{\e\to 0}\Big(\inf_{v\in L^p(D,\R^n)}\tilde{H}_{\e,\varphi}(v)\Big)=\min_{v\in L^p(D,\R^n)}\overline{H}_{\varphi}(v).
\end{equation}	
\end{remark}

Before comparing the $\Gamma$-limit and the limit free energy we prove that one can replace the clamped boundary conditions by the soft version considered for the free energies and obtains the same limit. Note that in what follows both definitions will be used.
\begin{lemma}\label{b-bc}
Assume Hypothesis~\ref{Hypo1}. Fix $\Lambda \in\mathbb{R}^{n\times d}$. Then almost surely it holds that
\begin{equation*}
\overline{W}^{\infty}(\Lambda )=\lim_{\e\to 0}\frac{1}{|Q_{\e}|}\inf\{H_{\e}(u,Q):\;u\in\mathcal{B}_{\e}(Q,\varphi_\Lambda )\}.
\end{equation*}
\end{lemma}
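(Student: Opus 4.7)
The plan is to establish the matching upper and lower bounds separately. The upper bound is immediate: any function in $\mathcal{BC}_\e(Q,\varphi_{\Lambda})$ trivially satisfies $|u(x)-\Lambda x|=0<1$ on the boundary layer, so $\mathcal{BC}_\e(Q,\varphi_{\Lambda})\subset\mathcal{B}_\e(Q,\varphi_{\Lambda})$. Consequently
\begin{equation*}
\frac{1}{|Q_\e|}\inf_{u\in\mathcal{B}_\e(Q,\varphi_{\Lambda})}H_\e(u,Q)\,\le\,\frac{1}{|Q_\e|}\inf_{u\in\mathcal{BC}_\e(Q,\varphi_{\Lambda})}H_\e(u,Q),
\end{equation*}
and passing to $\e\downarrow 0$ yields the limsup inequality by the very definition of $\overline W^{\infty}(\Lambda)$.

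For the matching liminf, I would select almost-minimizers $u_\e\in\mathcal{B}_\e(Q,\varphi_{\Lambda})$. Since in the worst case the clamped infimum is bounded by plugging in $u(x)=\Lambda x$ (which has normalized energy $\le C(|\Lambda|^p+1)$ by the upper bound in Hypothesis~\ref{Hypo1}), these almost-minimizers satisfy $|Q_\e|^{-1}H_\e(u_\e,Q)\le M$ uniformly in $\e$, i.e.~$u_\e\in\mathcal{S}_M(Q,\e)$. Lemma~\ref{compactness} then extracts a subsequence along which the rescaled functions $v_\e:=\Pi_\e u_\e$ converge in $L^p(Q,\R^n)$ to some $v\in\varphi_{\Lambda}+W^{1,p}_0(Q,\R^n)$. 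Invoking the $\Gamma$-liminf inequality of Theorem~\ref{Gamma-limit} applied to $\tilde H_\e$ (which, inside this section devoted to the zero-temperature limit, is available since continuity of $f$ in its second variable is part of Hypothesis~\ref{Hypo2}) gives
\begin{equation*}
\liminf_{\e\downarrow 0}\frac{1}{|Q_\e|}H_\e(u_\e,Q)\,=\,\liminf_{\e\downarrow 0}\tilde H_\e(Q,v_\e)\,\ge\,\fint_Q\overline W^{\infty}(\nabla v(x))\,\mathrm{d}x.
\end{equation*}
The quasiconvexity of $\overline W^{\infty}$ recorded in Theorem~\ref{Gamma-limit}, together with its $p$-growth and the membership $v-\varphi_{\Lambda}\in W^{1,p}_0(Q,\R^n)$ (the latter allowing approximation of $v-\varphi_{\Lambda}$ by $W^{1,\infty}_0$ test fields), then forces $\fint_Q\overline W^{\infty}(\nabla v)\,\mathrm{d}x\ge\overline W^{\infty}(\Lambda)$, and combining this with the upper bound closes the proof.

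The main subtlety to confirm is that the $L^p$-limit $v$ really does have trace $\Lambda x$ on $\partial Q$: the soft boundary condition only forces $|v_\e(\e x)-\Lambda(\e x)|<\e$ on the thin physical layer $\{\dist(\e x,\partial Q)\le C_0\e\}$, which collapses to $\partial Q$ as $\e\downarrow 0$, so the trace property is not automatic. This is however precisely the point already handled in the proof of Lemma~\ref{compactness}, where $u_\e$ is extended by the exact values $\Pi_{1/\e}\varphi_{\Lambda}$ outside $Q_\e^{\calL}$ and a difference-quotient argument on a slightly enlarged Lipschitz domain $O_1\supset\supset Q$ promotes the $L^p$-convergence to $W^{1,p}$-regularity and forces the limit to equal $\varphi_{\Lambda}$ identically on $O_1\setminus Q$, so that its trace on $\partial Q$ coincides with $\Lambda x$ by construction.
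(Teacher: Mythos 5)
Your proof is correct and follows essentially the same route as the paper's: the paper also sandwiches the soft-boundary functional $H_{\e,\Lambda}$ between the free $\tilde H_\e$ and the clamped $\tilde H_{\e,\overline\varphi_\Lambda}$ by monotonicity, invokes Lemma~\ref{compactness} to confine limits to $\overline{\varphi}_\Lambda + W^{1,p}_0(Q,\R^n)$, and then uses $\Gamma$-convergence of infima together with the quasiconvexity of $\overline W^\infty$. The only cosmetic difference is that you phrase the two halves as explicit limsup/liminf estimates on infima while the paper packages the same inequalities as a $\Gamma$-convergence statement; your closing observation about where the trace property comes from (the extension-by-$\varphi_\Lambda$ device inside the proof of Lemma~\ref{compactness}) is the right thing to flag.
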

\begin{proof}[Proof of Lemma~\ref{b-bc}]
The result is a special case of $\Gamma$-convergence. Indeed, consider the auxiliary functional $H_{\e,\Lambda }:L^p(Q,\R^n)\to [0,+\infty]$ defined by 
\begin{equation*}
H_{\e,\Lambda }(v)=\begin{cases}
\tilde{H}_{\e}(Q,v) &\mbox{if $v\in\mathcal{PC}_{\e}$ and $\Pi_{1/\e}v\in\mathcal{B}_{\e}(Q,\varphi_\Lambda )$,}
\\
+\infty &\mbox{otherwise.}
\end{cases}
\end{equation*}
Due to Lemma~\ref{compactness} we know that the $\Gamma$-limit of $H_{\e,\Lambda }$ can be finite only for $v\in \overline{\varphi}_\Lambda +W_0^{1,p}(Q,\R^n)$. From Theorems \ref{Gamma-limit} and \ref{gammabc} applied with $D=Q$ and $\varphi=\overline{\varphi}_\Lambda $, for any such $v$ we deduce from monotonicity that
\begin{align*}
\int_Q\overline W^{\infty}(\nabla v(x))\,\mathrm{d}x&\leq \Gamma\hbox{-}\liminf_{\e\to 0} H_{\e,\Lambda }(v)\leq  \Gamma\hbox{-}\limsup_{\e\to 0} H_{\e,\Lambda }(u)  
\\
&\leq\Gamma\hbox{-}\limsup_{\e\to 0} \tilde{H}_{\e,\overline{\varphi}_\Lambda }(v)\leq \int_Q\overline W^{\infty}(\nabla v(x))\,\mathrm{d}x.
\end{align*}
Using \eqref{convofinf}, which holds by the same arguments for the functionals $H_{\e,\Lambda }$, the result follows by quasiconvexity of $\Lambda \mapsto \overline{W}^{\infty}(\Lambda )$ and a rescaling since $\frac{1}{|Q_{\e}|}H_{\e}(Q,\Pi_{1/\e}(\cdot))=\tilde{H}_{\e}(Q,\cdot)$.
\end{proof}

\subsection{Quantitative comparison between $\overline{W}^{\beta}$ and $\overline{W}^{\infty}$}
As announced earlier, in this section we replace Hypothesis~\ref{Hypo1} by the stronger assumptions of Hypothesis~\ref{Hypo2}. In the following two lemmata we establish a quantitative estimate between $\overline{W}^{\beta}(\Lambda)$ and $\overline{W}^{\infty}(\Lambda)$ in the regime $\beta\gtrsim 1$. Note that the additional properties of Hypothesis~\ref{Hypo2} are only needed for Lemma~\ref{tempub} below.


\begin{lemma}\label{tempub}
Assume Hypothesis~\ref{Hypo2}. Then for every $\Lambda \in\R^{n\times d}$ there exists a constant $0<C_\Lambda \leq C(1+|\Lambda |^{p-1})$ such that, for all $\beta\geq \exp(1)$,
\begin{equation*}
\overline W^{\beta}(\Lambda)-W^{\infty}_{{\rm hom}}(\Lambda )\leq C_\Lambda \frac{\log(\beta)}{\beta}.
\end{equation*} 	
\end{lemma}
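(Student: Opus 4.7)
The plan is to produce an upper bound on $\overline{W}^\beta(\Lambda)$ by lower-bounding the partition function $Z^\beta_{\e, Q, \overline{\varphi}_\Lambda}$: I will integrate the Boltzmann weight over a thin $\ell^\infty$-tube around a clamped near-minimizer $\bar u^\e$ of $H_\e(Q, \cdot)$, exploiting the local Lipschitz control on $f$ from \eqref{Hyp-2.2} to show that $H_\e$ barely varies on this tube. By the definition of $\overline{W}^\infty$ (or Lemma~\ref{b-bc}), for each $\e$ I would take $\bar u^\e : Q_\e^\calL \to \R^n$ with $\bar u^\e(x) = \Lambda x$ whenever $\dist(x, \partial Q_\e) \leq C_0$ and $\frac{1}{|Q_\e|} H_\e(Q, \bar u^\e) \leq \overline{W}^\infty(\Lambda) + r_\e$ for some $r_\e \to 0$. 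The lower growth in \eqref{Hyp-2.1} combined with the upper bound \eqref{e.p-growth2} on $\overline{W}^\infty$ yields $\|\nabla_\B \bar u^\e\|_{\ell_\e^p(Q)}^p \leq C(1 + |\Lambda|^p)|Q_\e|$, and H\"older's inequality over the $O(|Q_\e|)$ edges inside $Q_\e$ provides the key bound
\begin{equation*}
\sum_{\substack{(x,y) \in \B \\ x, y \in Q_\e}} |\bar u^\e(x) - \bar u^\e(y)|^{p-1} \leq C(1 + |\Lambda|^{p-1})|Q_\e|.
\end{equation*}

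Next, for $\delta \in (0, 1]$ to be optimized, I would consider the family of admissible configurations $u = \bar u^\e + \eta$ with $\eta(x) \in B_\delta(0) \subset \R^n$ for $\dist(x, \partial Q_\e) > C_0$ and $\eta(x) = 0$ otherwise; thanks to the clamped choice of $\bar u^\e$ the boundary constraint $u \in \mathcal{B}_\e(Q, \varphi_\Lambda)$ is automatically met. The local Lipschitz estimate \eqref{Hyp-2.2} applied edge by edge, combined with $|\eta(x) - \eta(y)| \leq 2\delta$, $\delta \leq 1$, and the H\"older bound of the previous step, yields
\begin{equation*}
H_\e(Q, u) - H_\e(Q, \bar u^\e) \leq C(1 + |\Lambda|^{p-1})\, \delta\, |Q_\e|.
\end{equation*}
Integrating $\exp(-\beta H_\e(Q, \cdot))$ over this set of perturbations (whose Lebesgue measure equals $(c_n \delta^n)^{N_\e^{\rm int}}$ with $c_n = |B_1(0)|$ and $N_\e^{\rm int}$ the number of interior vertices), taking $-(\beta |Q_\e|)^{-1}\log$, and passing to the limit $\e \to 0$ (the free energy limit exists by Proposition~\ref{fullexistence}, and $N_\e^{\rm int}/|Q_\e| \to c_0 > 0$ by the ergodic theorem), I would obtain
\begin{equation*}
\overline{W}^\beta(\Lambda) \leq \overline{W}^\infty(\Lambda) + C(1 + |\Lambda|^{p-1})\, \delta + \frac{c_0 n}{\beta} \log(1/\delta) + \frac{C}{\beta}.
\end{equation*}

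The last step is the scalar optimization in $\delta$: choosing $\delta := \log \beta / (\beta (1 + |\Lambda|^{p-1}))$, which indeed lies in $(0, 1]$ as soon as $\beta \geq e$, the first error term becomes $C \log \beta / \beta$, while $\log(1/\delta) \leq \log \beta + \log(1 + |\Lambda|^{p-1})$ contributes at most $C(1 + |\Lambda|^{p-1}) \log \beta / \beta$ (using $\log(1 + t) \leq 1 + t$ and $\log \beta \geq 1$), producing the claim with $C_\Lambda \leq C(1 + |\Lambda|^{p-1})$. The main technical point I expect to require care is the explicit tracking of the $|\Lambda|^{p-1}$ prefactor through the interplay of the local Lipschitz estimate and the H\"older bound on the discrete gradient: this is precisely where Hypothesis~\ref{Hypo2} (rather than Hypothesis~\ref{Hypo1}) is essential, because the mere $p$-growth of Hypothesis~\ref{Hypo1} would not provide any quantitative modulus of continuity of $f$ and hence no way to convert the $\ell^\infty$ thickness of the integration tube into a controlled energy variation.
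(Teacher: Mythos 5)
Your proposal is correct in spirit and takes a genuinely different, more elementary route than the paper's. The paper's proof of Lemma~\ref{tempub} also starts from a clamped minimizer and the same a priori gradient bound, then applies the local Lipschitz estimate globally (through H\"older) to control $H_\e(Q,u)-H_\e(Q,\hat u_\e)$ by a sum of a linear and a $p$-th power term in $\|u-\hat u_\e\|_{\ell^p_\e(Q)}$; it then lower-bounds the resulting Gaussian-type integral over $\mathcal{B}_\e(Q,0)$ by splitting off the boundary vertices, applying the coarea formula with respect to the rescaled $\ell^p$-norm and invoking a surface-measure lower bound (Lemma~\ref{surfacevolume}) before optimizing a radius $t_*\sim 1/\beta$. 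You instead lower-bound the partition function by restricting to a thin $\ell^\infty$-tube around the near-minimizer and applying \eqref{Hyp-2.2} edge by edge, which converts the $\ell^\infty$ thickness $\delta$ into an energy variation of order $(1+|\Lambda|^{p-1})\delta |Q_\e|$. This discards more of the partition function than the paper's coarea computation, but both choices of radius $\sim 1/\beta$ produce the same $\log\beta/\beta$ rate with the same $(1+|\Lambda|^{p-1})$ prefactor; your H\"older step bounding $\sum|\bar u^\e(x)-\bar u^\e(y)|^{p-1}$ and the final scalar optimization in $\delta$ are correct. The gain of your argument is that it avoids the coarea and high-dimensional sphere estimates entirely; the paper's gain is a slightly more transparent separation of the ``interior'' and ``boundary'' contributions and a direct comparison to the exact level sets.

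There is one bookkeeping slip that, as written, would make the lower bound vacuous: the tube you describe pins $\eta(x)=0$ exactly at all vertices with $\dist(x,\partial Q_\e)\le C_0$, and this set has Lebesgue measure zero in the product space $(\R^n)^{Q^\calL_\e}$ over which the partition function integrates (recall $\mathrm{d}u=\prod_{x\in Q^\calL_\e}\mathrm{d}u(x)$, boundary vertices included). Hence $Z^\beta_{\e,Q,\varphi_\Lambda}\ge\int_{\rm tube}\exp(-\beta H_\e)\,\mathrm{d}u$ gives only $Z\ge 0$, which carries no information. The repair is immediate: let $\eta(x)\in B_\delta(0)$ for \emph{all} $x\in Q^\calL_\e$, boundary vertices included (the soft constraint $|u(x)-\Lambda x|<1$ near $\partial Q_\e$ is then still satisfied since $\delta\le 1$), so that the tube has measure $(c_n\delta^n)^{|Q^\calL_\e|}$ and the rest of your estimate goes through unchanged, with $|Q^\calL_\e|/|Q_\e|$ bounded by a deterministic constant depending only on $r,R$ via \eqref{e.volVoronoi} and \eqref{volumeestimate}. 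Alternatively, release $\eta(x)\in B_1(0)$ at the $O(\e^{1-d})$ boundary vertices; the extra energy cost is then $O((1+|\Lambda|^{p-1})\e^{1-d})$, which is negligible after division by $|Q_\e|\sim\e^{-d}$. Either fix restores the claimed bound.
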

\begin{proof}[Proof of Lemma~\ref{tempub}]
Due to Proposition~\ref{existence} we can consider the free energy on the unit cube $Q$. First note that due to the assumptions, the discrete energy $H_{\e}(Q,\cdot)$ is equicoercive and continuous on the closed set $\mathcal{BC}_{\e}(Q,\varphi_\Lambda )$. Hence the minimum is attained and we denote by $\hat{u}_{\e}$ a minimizer. By testing the function $\varphi_\Lambda $ and using the $p$-growth conditions of Hypothesis~\ref{Hypo2}, we obtain the a priori bound
\begin{equation}\label{aprioribound}
\|\nabla_{\B}\hat{u}_{\e}\|^p_{\ell^p_{\e}(Q)}\leq CH_{\e}(\hat{u}_{\e})+C|Q^{\calL}_{\e}|\leq C(1+|\Lambda |^p)|Q^{\calL}_{\e}|.
\end{equation}
According to the continuity property in Hypothesis~\ref{Hypo2}, for any $u\in\mathcal{B}_{\e}(Q,\varphi_\Lambda )$ we have the estimate
\begin{align*}
H_{\e}(Q,u)-H_{\e}(Q,\hat{u}_{\e})&\leq \sum_{\substack{(x,y)\in \mathbb{B}\\ x,y\in Q_{\e}}}|f(x-y,u(x)-u(y))-f(x-y,\hat{u}_{\e}(x)-\hat{u}_{\e}(y))|
\\
&\leq C\sum_{\substack{(x,y)\in \mathbb{B}\\ x,y\in Q_{\e}}}(1+|u(x)-u(y)|^{p-1}+|\hat{u}_{\e}(x)-\hat{u}_{\e}(y)|^{p-1})|(u-\hat{u}_{\e})(x)-(u-\hat{u}_{\e})(y)|
\\
&\leq C\big(|Q^{\calL}_{\e}|^{\frac{p-1}{p}}+\|\nabla_{\B}(u-\hat{u}_{\e})\|^{p-1}_{\ell^p_{\e}(Q)}+\|\nabla_{\B}\hat{u}_{\e}\|^{p-1}_{\ell^p_{\e}(Q)}\big) \|\nabla_{\B}(u-\hat{u}_{\e})\|_{\ell^p_{\e}(Q)}
\\
&\leq C|Q^{\calL}_{\e}|^{\frac{p-1}{p}}(1+|\Lambda |^{p-1})\|\nabla_{\B}(u-\hat{u}_{\e})\|_{\ell^p_{\e}(Q)}+C\|\nabla_{\B}(u-\hat{u}_{\e})\|^p_{\ell^p_{\e}(Q)}
\\
&\leq C|Q^{\calL}_{\e}|^{\frac{p-1}{p}}(1+|\Lambda |^{p-1})\|u-\hat{u}_{\e}\|_{\ell^p_{\e}(Q)}+C\|u-\hat{u}_{\e}\|^p_{\ell^p_{\e}(Q)},
\end{align*}
where we have used H\"{o}lder's inequality, \eqref{aprioribound} and Remark~\ref{reverse}.
From the change of variables $u\mapsto u-\hat{u}_{\e}$, which maps one-to-one from $\mathcal{B}_{\e}(Q,\varphi_{\Lambda})$ to $\mathcal{B}_{\e}(Q,0)$, we infer that the discrete error can be bounded by
\begin{align}\label{tempest1}
e_{\e,\beta}:&=\mathcal{E}^{\beta}_{\e}(Q,\overline{\varphi}_{\Lambda} )-\frac{1}{|Q_{\e}|}H_{\e}(\hat{u}_{\e})\nonumber
\\
&=-\frac{1}{\beta|Q_{\e}|}\log\Big(\int_{\mathcal{B}_{\e}(Q,\varphi_\Lambda )}\exp\big(-\beta( H_{\e}(Q,u)-H_{\e}(Q,\hat{u}_{\e})\big)\big)\,\mathrm{d}u\Big)\nonumber
\\
&\leq -\frac{1}{\beta|Q_{\e}|}\log\Big(\int_{\mathcal{B}_{\e}(Q,0)}\exp\big(-\beta C(|Q_{\e}^{\calL}|^{\frac{p-1}{p}}(1+|\Lambda |^{p-1})\|u\|_{\ell^p_{\e}(Q)}+\|u\|^p_{\ell^p_{\e}(Q)})\big)\,\mathrm{d}u\Big).
\end{align}
In this last integral we aim to get rid of the boundary conditions. To this end, let us write the domain of integration as a product and implicitly define the numbers $d_{\e,i}$ and $d_{\e,b}$ via
\begin{equation*} 
\mathcal{B}_{\e}(Q,0)=\Bigg(\prod_{\substack{x\in Q^{\calL}_{\e}\\ \dist(x,\partial Q_{\e})>C_0}}\R^n\Bigg)\times\Bigg( \prod_{\substack{x\in Q^{\calL}_{\e}\\ \dist(x,\partial Q_{\e})\leq C_0}}B_1(0)\Bigg)= (\R^n)^{d_{\e,{\rm int}}}\times \left(B_1(0)\right)^{d_{\e,{\rm bd}}}.
\end{equation*}
With a slight abuse of notation, we write any $u\in\mathcal{B}_{\e}(Q,0)$ as a sum via $u=u_1+u_2$, where $u_1\in\mathcal{BC}_{\e}(Q,0)$ and $|u_2(x)|\leq 1$ with support contained in $\{x\in Q_{\e}:\;\dist(x,\partial Q_{\e})\leq C_0\}$. Interpreting a deformation as a large vector $u\in\R^{n|Q^{\calL}_{\e}|}$ we denote its standard $p$-norm by $|u|_p$. As on $\R^n$ all norms are equivalent, it holds that $\|u\|_{\ell^p_{\e}(Q)}\leq C|u|_p$. By the triangle inequality and the structure of $u_2$ we get
\begin{align*}
|Q^{\calL}_{\e}|^{\frac{p-1}{p}}(1+|\Lambda |^{p-1})\|u\|_{\ell^p_{\e}(Q)}+\|u\|^p_{\ell^p_{\e}(Q)}&\leq C\left(|Q^{\calL}_{\e}|^{\frac{p-1}{p}}(1+|\Lambda |^{p-1})\left(|u_1+u_2|_p\right)+(|u_1+u_2|^p_p)\right)
\\
&\leq C\Big(|Q^{\calL}_{\e}|^{\frac{p-1}{p}}(1+|\Lambda |^{p-1})(|u_1|_p+(d_{\e,{\rm bd}})^{\frac{1}{p}})+|u_1|_p^p+d_{\e,{\rm bd}}\Big).
\end{align*}
Using Fubini's Theorem~we can factorize the integral and therefore \eqref{tempest1} yields
\begin{align*}
e_{\e,\beta}\leq& -\frac{1}{\beta|Q_{\e}|}\log\Big(\int_{\R^{nd_{\e,{\rm int}}}}\exp\big(-\beta C(|Q^{\calL}_{\e}|^{\frac{p-1}{p}}(1+|\Lambda |^{p-1})|u_1|_p+|u_1|_p^p)\big)\mathrm{d}u_1\Big)\\
&-\frac{1}{\beta|Q_{\e}|}\log\Big(|B_1(0)|^{d_{\e,{\rm bd}}}\exp\big(-\beta C((1+|\Lambda |^{p-1})|Q^{\calL}_{\e}|^{\frac{p-1}{p}}d_{\e,{\rm bd}}^{\frac{1}{p}}+ d_{\e,{\rm bd}})\big)\Big)=:e_{\e,\beta}^{\rm int}+e_{\e,\beta}^{\rm bd}.
\end{align*}
We first argue that $e_{\e,\beta}^{\rm bd}$ vanishes when $\e\to 0$. Indeed, as $d_{\e,{\rm bd}}\leq C\e^{1-d}$ by \eqref{surfaceesimtate} and the Lipschitz regularity of $\partial Q$, for $\e$ small enough it holds that 
\begin{equation}\label{boundaryvanishes}
|e_{\e,\beta}^{\rm bd}|\leq (C\beta^{-1}+C)\e+C(1+|\Lambda |^{p-1})\e^{\frac{1}{p}}.
\end{equation}

To treat the contribution of $e_{\e,\beta}^{\rm int}$, we make use of the coarea formula. Therefore we consider the Lipschitz-continuous function $f_{\e}:\R^{nd_{\e,{\rm int}}}\to [0,+\infty)$ defined by $y\mapsto f_{\e}(y)=|Q^{\calL}_{\e}|^{-\frac{1}{p}}|y|_p$. For $y\neq 0$ it is differentiable and, since $f_{\e}$ is $|Q^{\calL}_{\e}|^{-\frac{1}{p}}$-Lipschitz with respect to the $p$-norm, for $\e$ small enough we have the rough estimate
\begin{align}\label{derivativebound}
|\nabla f_{\e}(y)|_2&=\sup_{|x|_2=1}\langle \nabla f_{\e}(y),x\rangle=\sup_{|x|_2=1}\lim_{t\to 0}\frac{f_{\e}(y+tx)-f_{\e}(y)}{t}\nonumber
\\
&\leq \sup_{|x|_2=1}|Q^{\calL}_{\e}|^{-\frac{1}{p}}|x|_p\leq |Q^{\calL}_{\e}|^{-\frac{1}{p}}\max\{1,(nd_{\e,{\rm int}})^{\frac{1}{p}-\frac{1}{2}}\}\leq 1.
\end{align}
Using \eqref{derivativebound}, we deduce from the coarea formula that for arbitrary $t_*>0$
\begin{align}\label{tempest2}
e^{\rm int}_{\e,\beta}&\leq -\frac{1}{\beta|Q_{\e}|}\log\Big(\int_{\R^{nd_{\e,{\rm int}}}}|\nabla f_{\e}(u_1)|_2\exp\big(-\beta C(|Q^{\calL}_{\e}|(1+|\Lambda |^{p-1})f_{\e}(u_1)+|Q^{\calL}_{\e}|f_{\e}(u_1)^p)\big)\,\mathrm{d}u_1\Big)\nonumber
\\
&=-\frac{1}{\beta|Q_{\e}|}\log\Big(\int_0^{\infty}\mathcal{H}^{nd_{\e,{\rm int}}-1}(\{f_{\e}=t\})\exp\big(-\beta C|Q^{\calL}_{\e}|((1+|\Lambda |^{p-1})t+t^p)\big)\,\mathrm{d}t\Big)\nonumber
\\
&\leq-\frac{1}{\beta|Q_{\e}|}\log\Big(\int_0^{t_{*}}\mathcal{H}^{nd_{\e,{\rm int}}-1}(\{|y|_p=|Q^{\calL}_{\e}|^{\frac{1}{p}}t\})\exp\big(-\beta C|Q^{\calL}_{\e}|(|(1+|\Lambda |^{p-1})t+t^p)\big)\,\mathrm{d}t\Big),
\end{align} 
We next bound from below the surface measure inside the integral. To this end, we make the restriction $t_*\leq 1$. By Lemma~\ref{surfacevolume} and the scaling properties of the Hausdorff measure, for some small constant $c=c(n,p)$ we have the lower bound
\begin{align*}
\mathcal{H}^{nd_{\e,{\rm int}}-1}(\{|y|_p=|Q^{\calL}_{\e}|^{\frac{1}{p}}t\}) &
\geq (|Q^{\calL}_{\e}|^{\frac{1}{p}}t)^{nd_{\e,{\rm int}}-1}\left(\frac{c_p}{nd_{\e,{\rm int}}}\right)^{\frac{nd_{\e,{\rm int}}}{p}}
\\
&\geq |Q^{\calL}_{\e}|^{-\frac{1}{p}}(ct)^{nd_{\e,{\rm int}}}\left(\frac{|Q^{\calL}_{\e}|}{d_{\e,{\rm int}}}\right)^{\frac{nd_{\e,{\rm int}}}{p}}
\geq |Q^{\calL}_{\e}|^{-\frac{1}{p}}(ct)^{n|Q^{\calL}_{\e}|},
\end{align*}
where we used that $t\leq 1$. Plugging this bound into (\ref{tempest2}), for any $t_*\leq 1$ we can further estimate
\begin{align}\label{tempest3}
e^{\rm int}_{\e,\beta}\leq& -\frac{1}{\beta|Q_{\e}|}\log\bigg(\int_0^{t_{*}}\exp\Big(|Q^{\calL}_{\e}|\big(n\log(ct)-\beta C((1+|\Lambda |^{p-1})t+t^p)\big)\Big)\,\mathrm{d}t\bigg)\nonumber
\\
&+\frac{1}{p\beta|Q_{\e}|}\log(|Q^{\calL}_{\e}|).
\end{align} 
We now choose an appropriate $t_*\leq 1$. More precisely, we try to find $t_*$ and $\overline{C}=\overline{C}(\Lambda ,n,p)$ such that for all $t\leq t_*$
\begin{equation*}
n\log(ct)-\beta C((1+|\Lambda |^{p-1})t+t^p)\geq \overline{C}\log(t).
\end{equation*}
To this end, first observe that the function $t\mapsto (n-\overline{C})\log(ct)-\beta C((1+|\Lambda |^{p-1})t+t^p)$ is decreasing whenever $\overline{C}\geq n$. Moreover, if we set $\overline{C}=n+C(2+|\Lambda |^{p-1})$ and $t_*=\frac{1}{\beta}$, then for $\beta>\exp(1)$ and $c\leq 1$ we have
\begin{align*}
(n-\overline{C})\log(ct_*)-\beta C((1+|\Lambda |^{p-1})t_*+t_*^p) &\geq (n-\overline{C})(\log(t_*)-\beta C(2+|\Lambda |^{p-1})t_*
\\
&\geq C(\log(\beta)-1)(2+|\Lambda |^{p-1})> 0.
\end{align*}
Thus with our choice of $t_*$ and $\overline{C}$ we infer from (\ref{tempest3}) that
\begin{align*}
e^{\rm int}_{\e,\beta}&\leq -\frac{1}{\beta|Q_{\e}|}\log\Big(\int_0^{t_*}t^{\overline{C}|Q^{\calL}_{\e}|}\,\mathrm{d}t\Big)+\frac{1}{p\beta|Q_{\e}|}\log(|Q^{\calL}_{\e}|)
=-\frac{1}{\beta|Q_{\e}|}\log\Big(\frac{t_*^{\overline{C}|Q^{\calL}_{\e}|+1}}{\overline{C}|Q^{\calL}_{\e}|+1}\Big)+\frac{1}{p\beta|Q_{\e}|}\log(|Q_{\e}|)
\end{align*}
and we can conclude from \eqref{boundaryvanishes} and \eqref{e.volVoronoi} that
\begin{equation*}
\overline W(\Lambda ,\beta)-\overline W^{\infty}(\Lambda ))=\lim_{\e\to 0}e_{\e,\beta}\leq \limsup_{\e\to 0}\frac{\overline{C}|Q^{\calL}_{\e}|+1}{\beta|Q_{\e}|}|\log(t_*)|\leq \overline{C}\left(\frac{2R}{r}\right)^d\frac{\log(\beta)}{\beta}.
\end{equation*}
This proves the claim by our definition of $\overline{C}$.
\end{proof}

\begin{lemma}\label{templb}
Assume Hypothesis~\ref{Hypo1}. Then for every $\Lambda \in\R^{n\times d}$ there exists a constant $0<C_\Lambda \leq C(1+\log(1+|\Lambda |))$ such that, for all $\beta\geq 1$,
\begin{equation*}
\overline W^{\beta}(\Lambda)-\overline W^{\infty}(\Lambda )\geq -\frac{C_\Lambda }{\beta}.
\end{equation*} 	
\end{lemma}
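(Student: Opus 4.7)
The plan is to upper bound the partition function $Z^{\beta}_{\e,Q,\varphi_\Lambda}$ on the unit cube $Q$ by essentially $\exp(-\beta\,\inf H_\e + C_\Lambda|Q_\e|)$, since after taking $-\frac{1}{\beta|Q_\e|}\log(\cdot)$ and passing to the limit $\e\downarrow 0$ this is exactly what the claim asserts once we invoke Lemma~\ref{b-bc} for $\overline W^\infty(\Lambda)=\lim_\e \inf H_\e/|Q_\e|$. Throughout, by Proposition~\ref{fullexistence} we may work on $D=Q$, and $C_\beta$ denotes the (now universal) constant from Lemma~\ref{tightness} for $\beta\ge 1$.

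\textit{Setup and decomposition.} Fix an (approximate) minimizer $\hat u_\e\in\mathcal{B}_\e(Q,\varphi_\Lambda)$ with $H_\e(Q,\hat u_\e)\le \inf_{\mathcal{B}_\e(Q,\varphi_\Lambda)} H_\e(Q,\cdot)+1$; by Lemma~\ref{b-bc} we have $H_\e(\hat u_\e)/|Q_\e|\le \overline W^\infty(\Lambda)+2\le C(1+|\Lambda|^p)$ for small $\e$. Choose the truncation level
\[
M\,:=\, 2(\overline W^\infty(\Lambda)+2)+2C_\beta \,\le\, C(1+|\Lambda|^p),
\]
and split $\mathcal{B}_\e=\big(\mathcal{B}_\e\cap\mathcal{S}_M(Q,\e)\big)\cup\big(\mathcal{B}_\e\setminus\mathcal{S}_M(Q,\e)\big)$. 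On the high-energy part Lemma~\ref{tightness} gives
\[
Z^\beta_{\e,Q}(\mathcal{B}_\e\setminus\mathcal{S}_M)\,\le\, \exp\!\big(-\tfrac{M}{2}\beta|Q_\e^\calL|+C_\beta|Q_\e^\calL|\big)\,\le\, \exp(-\beta H_\e(\hat u_\e)),
\]
by our choice of $M$. On the low-energy part, since $H_\e(u)\ge H_\e(\hat u_\e)-1$ for every $u\in\mathcal{B}_\e$,
\[
Z^\beta_{\e,Q}(\mathcal{B}_\e\cap\mathcal{S}_M)\,\le\, e^{\beta}\, |\mathcal{B}_\e\cap\mathcal{S}_M|\,\exp(-\beta H_\e(\hat u_\e)).
\]

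\textit{Volume estimate via the forest argument.} By the lower bound in Hypothesis~\ref{Hypo1}, $\mathcal{S}_M\subset\{\|\nabla_\B u\|^p_{\ell^p_\e(Q)}\le C(M+1)|Q_\e^\calL|\}$. Apply Lemma~\ref{forrestargument} to $V=\mathcal{B}_\e$ with $\gamma=1$, $z_x=\Lambda x$, and a parameter $\alpha_0>0$ to be optimized. Bounding the integrand on $\mathcal{S}_M$ from below by $\exp(-\alpha_0 C(M+1)|Q_\e^\calL|)$ yields
\[
|\mathcal{B}_\e\cap\mathcal{S}_M|\,\le\, \exp\!\big(\alpha_0 C(M+1)|Q_\e^\calL|\big)\cdot C_1^{N_{Q,\e}}\big(\alpha_0^{-n/p}C_1\big)^{|Q_\e^\calL|-N_{Q,\e}}.
\]
Taking $\alpha_0=1/(M+1)$ balances the exponential penalty against the polynomial loss, and using $N_{Q,\e}=O(\e^{1-d})=o(|Q_\e^\calL|)$ this gives
\[
|\mathcal{B}_\e\cap\mathcal{S}_M|\,\le\, \exp\!\Big(\big(C+\tfrac{n}{p}\log(M+1)+o(1)\big)|Q_\e^\calL|\Big)\,\le\, \exp\!\big(C_\Lambda|Q_\e^\calL|\big),
\]
with $C_\Lambda \le C(1+\log(1+|\Lambda|))$ since $M+1\le C(1+|\Lambda|^p)$.

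\textit{Conclusion.} Combining the two pieces, $Z^\beta_\e\le (1+e^\beta)\exp(C_\Lambda|Q_\e^\calL|-\beta H_\e(\hat u_\e))$, so that
\[
-\frac{1}{\beta|Q_\e|}\log Z^\beta_{\e,Q,\varphi_\Lambda}\,\ge\, \frac{H_\e(\hat u_\e)}{|Q_\e|}-\frac{C_\Lambda|Q_\e^\calL|}{\beta|Q_\e|}-\frac{\log(1+e^\beta)}{\beta|Q_\e|}.
\]
Using $|Q_\e^\calL|\le C|Q_\e|$ from~\eqref{volumeestimate}, passing to the limit $\e\downarrow 0$ together with Lemma~\ref{b-bc} and Proposition~\ref{fullexistence} gives $\overline W^\beta(\Lambda)\ge \overline W^\infty(\Lambda)-C_\Lambda/\beta$, as claimed. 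The only mildly delicate step is the optimization $\alpha_0=1/(M+1)$ in the forest argument: this is what converts the crude polynomial-in-$|\Lambda|$ bound on $M$ into a logarithmic-in-$|\Lambda|$ constant in front of $1/\beta$, matching the lemma's statement. No continuity of $f$ is needed — only the $p$-growth of Hypothesis~\ref{Hypo1} and the coarse coercivity Lemma~\ref{tightness}, which is why Hypothesis~\ref{Hypo2} is required only for the companion upper bound Lemma~\ref{tempub}.
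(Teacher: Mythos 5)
Your proof is correct (up to an easily repaired constant), but follows a genuinely different route from the paper's. The paper splits the domain of integration according to the size of $\|\nabla_\B(u-\varphi_\Lambda)\|_{\ell^p_\e}$, and on the bounded-gradient piece it passes through a spanning-tree change of variables and an exact volume formula for high-dimensional $\ell^p$-balls (controlled via Stirling, cf.~Lemma~\ref{surfacevolume}); the remaining piece is handled directly by Lemma~\ref{forrestargument} with fixed $\alpha$. You instead split according to the energy threshold $\mathcal{S}_M(Q,\e)$ with $M\sim \overline W^\infty(\Lambda)+C_\beta$, discard the high-energy states via Lemma~\ref{tightness}, and, crucially, introduce a \emph{free} parameter $\alpha_0$ in the application of Lemma~\ref{forrestargument} to the low-energy set, optimizing $\alpha_0 = 1/(M+1)$. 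This optimization is the mechanism that converts the polynomial-in-$|\Lambda|$ bound on $M$ into a $\log(1+|\Lambda|)$ constant, so you obtain the same asymptotics as the paper while avoiding the Gamma-function/ball-volume computation entirely; the price is an invocation of Lemma~\ref{tightness}, which the paper's proof of this lemma avoids. The one small slip: to make $\exp(-\tfrac{M}{2}\beta|Q^{\calL}_\e| + C_\beta|Q^{\calL}_\e|) \le \exp(-\beta H_\e(\hat u_\e))$ you must account for the constant in $|Q_\e| \le C|Q^{\calL}_\e|$ from~\eqref{volumeestimate}, i.e.~take $M = 2C(\overline W^\infty(\Lambda)+2) + 2C_\beta$ rather than $M = 2(\overline W^\infty(\Lambda)+2) + 2C_\beta$; this preserves the bound $M\le C(1+|\Lambda|^p)$ and hence the claimed $C_\Lambda \le C(1+\log(1+|\Lambda|))$.
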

\begin{proof}[Proof of Lemma~\ref{templb}]
Again we compute the energy densities with respect to the unit cube $Q$. Having in mind Lemma~\ref{b-bc}, we let  $\tilde{u}_{\e}$ be a minimizer of the Hamiltonian $H_{\e}$ on the set $\mathcal{B}_{\e}(Q,\varphi_\Lambda )$. Note that we assume without loss of generality that a minimizer exists, otherwise we could take an almost minimizer with an energy close to the infimum at a rate that vanishes much faster than $\e^{d}$. For any $u\in\mathcal{B}_{\e}(Q,\varphi_\Lambda )$, by the $p$-growth condition in Hypothesis~\ref{Hypo2} and \eqref{aprioribound}, we have the inequality
\begin{equation*}
H_{\e}(Q,u)-H_{\e}(Q,\tilde{u}_{\e})\geq \frac{1}{C}\|\nabla_{\B}u\|^p_{\ell^p_{\e}(Q)}-C(1+|\Lambda |^p)|Q^{\calL}_{\e}|\geq \frac{1}{C}\|\nabla_{\B}(u-\varphi_\Lambda )\|^p_{\ell^p_{\e}(Q)}-C(1+|\Lambda |^p)|Q^{\calL}_{\e}|.
\end{equation*}
While this estimate turns out to be useful for deformations with large energy, we also need a suitable lower bound for deformations with small energy. To this end we observe that by minimality
$H_{\e}(Q,u)-H_{\e}(Q,\tilde{u}_{\e})\geq 0$, so that we can write
\begin{equation}\label{softbound}
H_{\e}(Q,u)-H_{\e}(Q,\tilde{u}_{\e})\geq\max\left\{0,\frac{1}{C}\|\nabla_{\B}(u-\varphi_\Lambda )\|^p_{\ell^p_{\e}(Q)}-C(1+|\Lambda |^p)|Q^{\calL}_{\e}|\right\}.
\end{equation}
This inequality motivates the partition $\mathcal{B}_{\e}(Q,0)=B_{1,\e}\cup B_{2,\e}$, where
\begin{equation*}
\begin{split}
& B_{1,\e}:=\Big\{\varphi\in \mathcal{B}_{\e}(Q,0):\;\|\nabla_{\B}u\|^p_{\ell^p_{\e}(Q)}\leq 2C^2(1+|\Lambda |^p)|Q^{\calL}_{\e}|\Big\},
\\
& B_{2,\e}:=\mathcal{B}_{\e}(Q,0)\setminus B_{1,\e}.
\end{split}
\end{equation*}
With the change of variables $u\mapsto u-\varphi_\Lambda $ and \eqref{softbound} we then obtain
\begin{align}
e^1_{\e,\beta}:&=\mathcal{E}^{\beta}_{\e}(\overline{\varphi}_{\Lambda} )-\frac{1}{|Q_{\e}|}H_{\e}(Q,\tilde{u}_{\e})
=-\frac{1}{\beta|Q_{\e}|}\log\Big(\int_{\mathcal{B}_{\e}(Q,\varphi_\Lambda )}\exp\big(-\beta( H_{\e}(Q,u)-H_{\e}(Q,\tilde{u}_{\e})\big)\big)\,\mathrm{d}u\Big)\nonumber
\\
&\geq -\frac{1}{\beta|Q_{\e}|}\log\Big(|\mathcal{B}_{1,\e}|+\int_{\mathcal{B}_{\e}(Q,0)}\exp\big(-\frac{\beta}{2C} \|\nabla_{\B}u\|^p_{\ell^p_{\e}(Q)}\big)\,\mathrm{d}u\Big).\label{errorlb}
\end{align}
We treat the two terms inside the logarithm separately. Let us start with the integral. Using Lemma~\ref{forrestargument}, for $\e$ small enough (independent of $\beta$) and $\beta\geq 1$, we obtain the bound
\begin{align}\label{largegradlb}
\int_{\mathcal{B}_{\e}(Q,0)}\exp\big(-\frac{\beta}{2C} \|\nabla_{\B}u\|^p_{\ell^p_{\e}(Q)}\big)\,\mathrm{d}u&\leq C^{N_{Q,\e}}\left(\left(\frac{\beta}{2C}\right)^{-\frac{n}{p}}C\right)^{|Q^{\calL}_{\e}|-N_{Q,\e}}\leq \Big(C(p,n)\Big)^{n|Q^{\calL}_{\e}|}.
\end{align}
In order to provide a bound for the measure of $\mathcal{B}_{1,\e}$ we first enlarge the set and then perform a suitable change of variables. To this end we number the vertices by the following algorithm: For every connected component $G_j=(V_j,\mathbb{B}_j)$ of the graph $G_{Q_{\e}}$ we choose a minimal spanning tree $ST_j=(V_j,\mathbb{B}^{\prime}_j)$ and a vertex where the boundary conditions are active (see Remark~\ref{numbercomponents} for the existence of such a vertex). To this vertex we assign the number $k_j:=\Big(\sum_{i<j}|V_i|\Big)+1$. Then we start any path in the spanning tree and number the vertices consecutively until we cannot go on. If we have numbered all vertices of the connected component we go to the next one, otherwise we continue at the first (with respect to the numbering) vertex with multiple path possibilities and continue with the same procedure. Since we consider a minimal spanning tree, every vertex gets assigned a unique number. Moreover, for each vertex number $l\backslash\{k_j\}$, we find a number $l^{\prime}<l$ such that $(x_{l^\prime},x_l)\in \mathbb{B}^{\prime}_j$. Then we have the following set inclusion:
\begin{equation*}
\mathcal{B}_{1,\e}\subset \Big\{u\in (\R^n)^{|Q^{\calL}_{\e}|}:|u_{k_j}|<1\text{ for all }j\text{ and }\sum_{j}\sum_{l=k_j+1}^{k_{j+1}-1}|u_{l}-u_{l^{\prime}}|^p\leq 2C^2(1+|\Lambda |^p)|Q_{\e}^{\calL}|\Big\}=:\mathcal{U}_{1,\e}
\end{equation*}
We now define a linear transformation on $\mathcal{U}_{\e,1}$ setting $T:\mathcal{U}_{1,\e}\to(\mathbb{R}^n)^{|Q^{\calL}_{\e}|}$ as
\begin{equation*}
(T\varphi)_l=
\begin{cases}
\varphi_l &\mbox{if $l=k_j$ for some $j$,}\\
\varphi_l-\varphi_{l^{\prime}} &\mbox{otherwise.}
\end{cases}
\end{equation*}
Note that the mapping $l\mapsto l^{\prime}$ is independent of $\varphi$, so that $T$ is indeed linear. Moreover, it is straightforward to check that $T$ is injective and thus a diffeomorphism onto its image allowing to perform a change of variables. Observe that its derivative $DT$ admits a lower triangle matrix representation since the $l^{th}$ component of $T\varphi$ depends only on entries with smaller index. On the diagonal we have all entries equal to $1$. Hence it holds that $\det(DT)=1$. By a change of variables we conclude that
\begin{equation*}
|\mathcal{B}_{\e,1}|\leq |\mathcal{U}_{\e,1}|=|T(\mathcal{U}_{\e,1})|
\end{equation*}
and by construction it holds that
\begin{align*}
T(\mathcal{U}_{\e,1})&=\Bigg(\prod_{j=1}^{N_{Q,\e}}B_1(0)\Bigg)\times \Big\{\varphi\in (\R^n)^{|Q_{\e}|-N_{Q,\e}}:\;\|\varphi\|^p_p\leq 2C^2(1+|\Lambda |^p)|Q^{\calL}_{\e}|\Big\}
\\
&\subset\Big\{u\in \R^{n|Q^{\calL}_{\e}|}:\;|u|_p\leq C^{\prime}(1+|\Lambda |^p)^{\frac{1}{p}}|Q^{\calL}_{\e}|^{\frac{1}{p}}\Big\},
\end{align*}
where the larger constant $C^{\prime}$ contains a factor derived from the equivalence of norms on $\R^n$. The last set is a high-dimensional ball with respect to the corresponding $\ell^p$-norm, for which there exist exact formulas for the volume. Denoting (just in this proof) by $\Gamma$ Euler's Gamma-function we deduce that, for $\e$ small enough,
\begin{equation}\label{volumebound}
|\mathcal{B}_{\e,1}|\leq\frac{ \Big(C(p)(1+|\Lambda |^p)|Q^{\calL}_{\e}|\Big)^{\frac{n|Q_{\e}^{\calL}|}{p}}}{\Gamma\Big(\frac{n|Q^{\calL}_{\e}|}{p}+1\Big)}\leq \Big(C(p,n)(1+|\Lambda |)\Big)^{n|Q^{\calL}_{\e}|},
\end{equation}
where we used the lower bound $\Gamma(z+1)\geq (z/e)^z$ for all $z\geq 1$. Combining \eqref{errorlb}, \eqref{largegradlb} and \eqref{volumebound} we infer that, for $\e$ small enough (but independent of $\beta$) and $\beta\geq 1$,
\begin{equation*}
e^1_{\e,\beta}\geq -\frac{1}{\beta |Q_{\e}|}\log\Big((C(p,n)(1+|\Lambda |))^{n|Q^{\calL}_{\e}|}\Big)=-C(p,n)\left(\frac{2R}{r}\right)^d\frac{(1+\log(1+|\Lambda |)}{\beta}.
\end{equation*}
Thanks due Lemma~\ref{b-bc} and the definition of $e^1_{\e,\beta}$ the claim now follows after letting $\e\to 0$.
\end{proof}

\subsection{$\Gamma$-convergence of the LDP rate functionals}
The estimates proved in Lemmata \ref{tempub} and \ref{templb} lead to Theorem~\ref{th:smallT} that also relates the support of the limits of Gibbs measures (see Corollary \ref{concentration}) to the minimizers of the $\Gamma$-limit at small temperatures.
\begin{proof}[Proof of Theorem~\ref{th:smallT}]
We let $G\in\mathcal{G}^{\prime}$, where $\mathcal{G}^{\prime}$ is given by Proposition~\ref{fullexistence} (see also Remark~\ref{alltemp}). Fix an arbitrary sequence $\beta_j\to +\infty$. For the moment we consider the functionals $F_j,F:L^p(D,\R^n)\to \R\cup\{+\infty\}$ finite only on $\varphi+W^{1,p}_0(D,\R^n)$ and characterized by
\begin{equation*}
F_j(v)=\fint_D\overline W^{\beta_j}(\nabla v)\dx,\quad\quad F(v)=\fint_D \overline{W}^{\infty}(\nabla v)\dx.
\end{equation*}
By Lemmata \ref{tempub} and \ref{templb} we have that $F_j\to F$ pointwise when $j\to +\infty$. Hence for all $v\in L^p(D,\R^n)$ 
\begin{equation*}
\Gamma\hbox{-}\limsup_{j\to +\infty}F_j(v)\leq F(v).
\end{equation*} In order to prove the $\liminf$-inequality, consider $v\in L^p(D,\R^n)$ and a sequence $(v_j)\subset L^p(D,\R^n)$ such that $v_j\to v$ in $L^p(D,\R^n)$ and $\sup_j F_j(u_j)<+\infty$. Since Lemma~\ref{templb} yields
\begin{equation*}
\overline W^{\beta_j}(\Lambda)-W^{\infty}_{\rm hom}(\Lambda )\geq -\frac{C}{\beta}(1+\log(1+|\Lambda |),
\end{equation*}
where the constant $C$ is independent of $\Lambda $ and for $j$ large enough it holds that $\overline W^{\beta_j}(\Lambda )\geq\frac{1}{C}|\Lambda |^p-C$, we infer that $v\in\varphi+W_0^{1,p}(D,\R^n)$ and
\begin{equation*}
\liminf_{j\to +\infty}F_j(v_j)\geq \liminf_{j\to +\infty}F(v_j)\geq F(v),
\end{equation*} 
where we used that $F$ is lower semicontinuous due to the quasiconvexity of the map $\Lambda \mapsto \overline W^{\infty}(\Lambda )$. is quasiconvex, the lower bound follows from weak lower semicontinuity. Thus $F_j$ $\Gamma$-converges to $F$ with respect to the $L^p(D<\R^n)$-topology. Since the $\Gamma$-convergence implies the convergence of the infimum values $\lim_j\inf_v F_j(v)= \inf_v F(v)$, Theorem~\ref{th:smallT} is proven.
\end{proof}

\subsection{The phantom model}
The convergence result proved in this section can be made much more precise when the discrete Hamiltonian is quadratic, that is,
\begin{equation}\label{quadratic}
H_{\e}(O,u)=\sum_{\substack{(x,y)\in \mathbb{B}\\ x,y\in O_{\e}}}\langle u(x)-u(y),A(x-y)(u(x)-u(y))\rangle
\end{equation}
with a function $A:\R^d\to\mathbb{M}_{\rm sym}^{d\times d}$ uniformly positive definite and bounded on $B_{C_0}(0)$, where $C_0$ is the maximal range of interactions given by Definition \ref{defadmissible}. 
The phantom model (see e.g.~\cite[Section~7.2.2]{Rub}), which is an approximation of rubber elasticity of polymer-chain networks at 
small deformation and finite temperature, indeed corresponds to the 
homogenization of this energy density (using a self-consistent approach rather than the usual cell-formula).
As the following shows, the limit free energy agrees with the density of the $\Gamma$-limit up to an additive constant which depends on $\beta$ but not on $\Lambda$.
In particular, this justifies the use of the self-consistent approach in \cite[Section~7.2.2]{Rub} even at finite temperature.
\begin{corollary}\label{cor.quad}
Assume that $H_{\e}$ is given by \eqref{quadratic}. Then it holds that 
\begin{equation*}
\overline W^{\beta}(\Lambda)=\overline W^{\infty}(\Lambda )+\overline W^{\beta}(0).
\end{equation*}
In particular the function $\Lambda \mapsto \overline W^{\beta}(\Lambda )$ is uniformly convex and quadratic.	
\end{corollary}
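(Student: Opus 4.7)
The plan is to exploit the exact quadratic structure of $H_\e$ via a change of variables and a completion-of-the-square argument. First I would let $\hat u_\e$ be the unique minimizer of $H_\e(D,\cdot)$ on the clamped set $\mathcal{BC}_\e(D,\varphi_\Lambda)$; existence and uniqueness follow from the uniform positive definiteness of $A$ combined with a discrete Poincar\'e inequality. Since $H_\e$ depends only on the differences $u(x)-u(y)$ and is quadratic in them, the affine substitution $u=\hat u_\e+w$ expands as
\begin{equation*}
H_\e(D,u)=H_\e(D,\hat u_\e)+H_\e(D,w)+\mathrm{Cross}(w),
\end{equation*}
where $\mathrm{Cross}(w):=2\sum_{(x,y)\in\B,\,x,y\in D_\e^\calL}\langle\hat u_\e(x)-\hat u_\e(y),A(x-y)(w(x)-w(y))\rangle$ is linear in $w$. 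Because $\hat u_\e(x)=\Lambda x$ on the boundary layer $B_\e:=\{x\in D_\e^\calL:\dist(x,\partial D_\e)\le C_0\}$, the translation sends $\mathcal{B}_\e(D,\varphi_\Lambda)$ bijectively onto $\mathcal{B}_\e(D,0)$ with unit Jacobian, so
\begin{equation*}
Z^\beta_{\e,D,\varphi_\Lambda}=e^{-\beta H_\e(D,\hat u_\e)}\int_{\mathcal{B}_\e(D,0)}e^{-\beta H_\e(D,w)-\beta\mathrm{Cross}(w)}\,dw.
\end{equation*}

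The key step will be to show that $\mathrm{Cross}(w)$ is uniformly negligible. Writing $\mathrm{Cross}(w)=\sum_{x\in D_\e^\calL}\langle g(x),w(x)\rangle$ with $g(x):=2\sum_{y\sim x}A(x-y)(\hat u_\e(x)-\hat u_\e(y))$, the discrete Euler--Lagrange equation for $\hat u_\e$ with clamped BC forces $g(x)\equiv 0$ for every $x\in D_\e^\calL\setminus B_\e$. Since moreover $|w(x)|<1$ on $B_\e$, one obtains the deterministic bound $|\mathrm{Cross}(w)|\le G_\e:=\sum_{x\in B_\e}|g(x)|$. A Cauchy--Schwarz estimate, together with the surface bound $|B_\e|\le C\e^{1-d}\mathcal H^{d-1}(\partial D)$ coming from \eqref{surfaceesimtate} and the a priori energy estimate $H_\e(D,\hat u_\e)\le H_\e(D,\varphi_\Lambda)\le C|\Lambda|^2|D_\e|$, will then give
\begin{equation*}
G_\e\le C|B_\e|^{1/2}\,H_\e(D,\hat u_\e)^{1/2}\le C|\Lambda|\,\e^{1/2}|D_\e|=o(|D_\e|).
\end{equation*}
Combining this with the trivial two-sided bound $e^{-\beta G_\e}Z^\beta_{\e,D,0}\le\int_{\mathcal{B}_\e(D,0)}e^{-\beta(H_\e+\mathrm{Cross})}dw\le e^{\beta G_\e}Z^\beta_{\e,D,0}$ yields
\begin{equation*}
\mathcal E^\beta_\e(D,\varphi_\Lambda)=\frac{H_\e(D,\hat u_\e)}{|D_\e|}+\mathcal E^\beta_\e(D,\varphi_0)+o(1),
\end{equation*}
and Lemma~\ref{b-bc} (applied in the clamped form which defines $\overline W^\infty$) together with Theorem~\ref{th:W} lets me pass to the limit in both summands to obtain $\overline W^\beta(\Lambda)=\overline W^\infty(\Lambda)+\overline W^\beta(0)$.

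The second assertion of the corollary will then follow by reading off the structure: for quadratic $H_\e$ the clamped cell-minimizer depends linearly on $\Lambda$, so standard homogenization arguments show that $\overline W^\infty$ is itself a quadratic form in $\Lambda$ with $\overline W^\infty(0)=0$ (realized by $u\equiv 0$), while the lower growth estimate of Theorem~\ref{th:W} with $p=2$ forces $\overline W^\infty(\Lambda)\ge c|\Lambda|^2$. Hence $\overline W^\beta=\overline W^\infty+\overline W^\beta(0)$ differs from a uniformly convex quadratic form only by the additive constant $\overline W^\beta(0)$. The hard part will be the surface-to-volume control of $\mathrm{Cross}(w)$: while the interior vanishing of $g$ via Euler--Lagrange is automatic from minimality, the bound on $G_\e$ must hold uniformly in $w\in\mathcal{B}_\e(D,0)$, and this is precisely where the quadratic nature of $H_\e$ is essential -- no higher-order Taylor remainders appear to spoil the estimate. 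Modulo this boundary-layer bookkeeping, the remainder of the argument is soft.
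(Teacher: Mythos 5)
Your proposal is correct and follows the same overall skeleton as the paper's proof: translate by the clamped minimizer $\hat u_\e$, use the exact quadratic identity so that only a linear cross term remains, observe that the Euler--Lagrange equation kills that term on the interior so only a boundary-layer contribution survives, and then show that contribution is $o(|D_\e|)$. Where you diverge is in the last estimate. The paper controls the boundary-layer term by a Jensen inequality followed by an appeal to the \emph{local} $\Gamma$-convergence result (Theorem~\ref{Gamma-limit} applied on $(1-\delta)Q$), which shows that the $\ell^2$-energy of $\hat u_\e$ carried by the stripe $\partial_\e Q$ actually vanishes as $\e\downarrow 0,\,\delta\downarrow 0$. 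You instead estimate crudely: Cauchy--Schwarz gives $G_\e \le |B_\e|^{1/2}\big(\sum_{B_\e}|g|^2\big)^{1/2}$, and bounding the second factor by the \emph{total} energy $H_\e(D,\hat u_\e)\lesssim|\Lambda|^2|D_\e|$ (via the competitor $\varphi_\Lambda$) together with the surface bound $|B_\e|\lesssim\e^{1-d}$ yields $G_\e\lesssim|\Lambda|\,\e^{1/2}|D_\e|$, which is $o(|D_\e|)$ as required. Your route is more elementary and self-contained --- it uses only the a priori energy estimate and the surface-to-volume ratio $|B_\e|/|D_\e|\sim\e$, with no reference to the $\Gamma$-limit --- while the paper's route proves the slightly stronger fact that the boundary-layer energy density itself tends to zero. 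For the purposes of the corollary (an $o(1)$ contribution to the rescaled free energy) the two are interchangeable; your version avoids re-using Theorem~\ref{Gamma-limit}, but does waste the extra cancellation the paper extracts. The second part of the statement (quadraticity and uniform convexity of $\overline W^\beta$) is handled by both you and the paper with a brief appeal to standard homogenization facts for quadratic forms and the already-established two-sided $p$-growth with $p=2$, so there is no substantive difference there.
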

\begin{proof}[Proof of Corollary \ref{cor.quad}]
We consider the free energy on the unit cube $Q$ and first find a unique minimizer of $u\mapsto H_{\e}(Q,u)$ on the set $\mathcal{BC}_{\e}(Q,\varphi_\Lambda )$, that we denote by $\hat{u}_{\e}$. We extend it to $\calL$ setting $\hat{u}_{\e}(x)=\varphi_{\Lambda}(x)$ for all $x\in\calL\backslash Q_{\e}$. Given $u\in\mathcal{B}_{\e}(Q,\varphi_\Lambda )$, we decompose it as $u=u_1+u_2$ with $u_1\in\mathcal{BC}_{\e}(Q,\varphi_\Lambda )$ and $u_2:Q_{\e}\to\R^n$ satisfying $|u_2(x)|\leq 1$ for all $x\in\calL$ and $u_2(x)=0$ for all $x\in\Lw$ such that $\dist(x,\partial Q_{\e})>C_0$. By the quadratic structure we have
\begin{align}\label{quadstructure}
H_{\e}(Q,u)-H_{\e}(Q,\hat{u}_{\e})=&H_{\e}(Q,u-\hat{u}_{\e})+2\sum_{\substack{(x,y)\in \mathbb{B}\\ x,y\in Q_{\e}}}\Big\langle(u-\hat{u}_{\e})(x)-(u-\hat{u}_{\e})(y),A(x-y)(\hat{u}_{\e}(x)-\hat{u}_{\e}(y))\Big\rangle\nonumber
\\
=&H_{\e}(Q,u-\hat{u}_{\e})+2\sum_{\substack{(x,y)\in \mathbb{B}\\ x,y\in Q_{\e}}}\Big\langle u_2(x)-u_2(y),A(x-y)(\hat{u}_{\e}(x)-\hat{u}_{\e}(y))\Big\rangle,
\end{align}
where we used pointwise symmetry of $A(z)$, the weak Euler-Lagrange equation satisfied by $\hat{u}_{\e}$ and that $u_1-\hat{u}_{\e}\in\mathcal{BC}_{\e}(Q,0)$ is an admissible test function for this equation. In order to bound the last term, we first introduce the set $\partial_{\e}Q=\{z\in Q_{\e}:\;\dist(z,\partial Q_{\e})\leq 2C_0\}$. Then by the properties of $u_2$ and boundedness of $A(z)$ we have
\begin{equation*}
\Big|2\sum_{\substack{(x,y)\in \mathbb{B}\\ x,y\in Q_{\e}}}\Big\langle u_2(x)-u_2(y),A(x-y)(\hat{u}_{\e}(x)-\hat{u}_{\e}(y))\Big\rangle\Big|\leq C\sum_{\substack{(x,y)\in \mathbb{B}\\ x,y\in \partial_{\e}Q}}c_0|\hat{u}_{\e}(x)-\hat{u}_{\e}(y)|,
\end{equation*} 
where $c_0$ is a lower bound for the smallest eigenvalue of $A(z)$ for all $z\in B_{C_0}(0)$. Note that the right hand side does not depend on $u$ any more. Since the change of variables $u\mapsto u-\hat{u}_{\e}$ maps $\mathcal{B}_{\e}(Q,\varphi_{\Lambda})$ one-to-one to $\mathcal{B}_{\e}(Q,0)$, we conclude by the very definition of the terms $\overline W^{\beta}(\Lambda)$ and $\overline W^{\infty}(\Lambda)$ and equation \eqref{quadstructure} that
\begin{align*}
|\overline W^{\beta}(\Lambda)-\overline W^{\infty}(\Lambda )-\overline W^{\beta}(0)|\leq C\limsup_{\e\to 0}\frac{1}{|Q_{\e}|}\sum_{\substack{(x,y)\in \mathbb{B}\\ x,y\in \partial_{\e}Q}}c_0|\hat{u}_{\e}(x)-\hat{u}_{\e}(y)|.
\end{align*}
It remains to prove that the right hand side is zero. To this end, we note that due to Lemma~\ref{compactness} we can assume that $v_{\e}\in\mathcal{PC}_{\e}$ defined by $v_{\e}(\e x)=\e\hat{u}_{\e}(x)$ converges in $L^p(D,\R^n)$ to some function $v\in \overline{\varphi}_\Lambda +W^{1,p}_0(Q,\R^n)$ (actually one can prove that $v=\overline{\varphi}_\Lambda $, but this is not needed here). Given $\delta>0$, Jensen's inequality and (quasi)convexity of $\Lambda \mapsto \overline{W}^{\infty}(\Lambda )$ imply
\begin{align*}
\limsup_{\e\to 0}&\Big(\frac{1}{|Q_{\e}|}\sum_{\substack{(x,y)\in \mathbb{B}\\ x,y\in \partial_{\e}Q}}c_0|\hat{u}_{\e}(x)-\hat{u}_{\e}(y)|\Big)^2\ll \limsup_{\e\to 0}\frac{1}{|Q_{\e}|}\sum_{\substack{(x,y)\in E\\ x,y\in \partial^{\e}Q}}c_0|\hat{u}_{\e}(x)-\hat{u}_{\e}(y)|^2
\\
&\leq \overline{W}^{\infty}(\Lambda )-\frac{|(1-\delta)Q|}{|Q|}\liminf_{\e\to 0}\tilde{H}_{\e}((1-\delta)Q,u_{\e})\leq \int_{Q\backslash (1-\delta)Q}\overline{W}^{\infty}(\nabla v(x))\,\mathrm{d}x,
\end{align*}
where for the last estimate we also used the local $\Gamma$-convergence result on $(1-\delta)Q$ stated in Theorem~\ref{Gamma-limit}. Letting $\delta\to 0$ we deduce the claim since the right hand side vanishes. The quadratic structure follows by the general theory of $\Gamma$-convergence of quadratic functionals (see \cite[Theorem~11.10]{DM}) while the uniform coercivity of $A$ is conserved in the limit, too. This  proves uniform convexity of $\Lambda\mapsto\overline{W}^{\beta}(\Lambda)$.
\end{proof}


\section{Penalizing volume changes}\label{Sec6}

Having in mind the model presented in the introduction, we now explain how to include the volumetric term defined in \eqref{def:volHam} in our previous analysis. We assume throughout this whole section that $n=d$ and $p\geq d$. For the notation we also refer to Subsection \ref{subsec:definitions}.

In order to estimate the volumetric part of the Hamiltonian, it is convenient to rewrite the integral as sums over $d$-simplices, that is,
\begin{equation}\label{defvol_discrete}
H_{\rm vol,\e}(O,u)=\sum_{\mathcal{C}_1(x)\in\mathcal{V}_{1,\e}(O)}|\mathcal{C}_1(x)|W\Big(\sum_{T\cap\mathcal{C}_1(x)\neq\emptyset}\frac{\det(\nabla u_{\rm aff_{|T}})|T\cap\mathcal{C}_1(x))|}{|\mathcal{C}_1(x)|}\Big).
\end{equation}

The Hamiltonian has a different structure than in the previous sections since it depends on multi-body interactions through the volumetric term. However, we emphasize that for our analysis in the previous sections, the precise structure was needed only for proving stationarity. The reader might remember that elsewhere we just used bounds from above and below and a certain locality given by the finite range of interactions (see also Remark~\ref{r.subadditivity}). As we will prove in the lemmata below, the bounds from Hypothesis~\ref{Hypo1} lead to similar local bounds for the Hamiltonian and stationarity of the corresponding stochastic processes is preserved, too. To extend the validity of the results of Section \ref{Sec5}, it then suffices to reprove a global continuity estimate. These technical details then allow to include the multi-body volumetric term in the proofs of the previous sections and in that sense complete the proofs of Theorems \ref{th:W}, \ref{th:Helmholtz}, \ref{LDP} and \ref{th:smallT}. We leave the details  to the reader.

\begin{lemma}\label{boundsvol}
Assume that $W$ satisfies Hypothesis~\ref{Hypo1} and let $G\in\mathcal{G}$. Then there exists a constant $C>0$ such that, for every $u:\calL\to\R^d$ and all $x\in\calL_1$, it holds that
\begin{equation*}
W\Big(\fint_{\mathcal{C}_1(x)}\det(\nabla u_{\rm aff})\,\mathrm{d}z\Big)\leq C\Big(1+\sum_{T\cap \mathcal{C}_1(x)\neq\emptyset}\sum_{y,y^{\prime}\in\mathcal{L}_1\cap T}|u(y)-u(y^{\prime})|^p\Big).
\end{equation*}	
\end{lemma}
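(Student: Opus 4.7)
The plan is to use the growth assumption $W(\lambda)\le C(1+|\lambda|^{p/d})$ from Hypothesis~\ref{Hypo1} and reduce the statement to a purely geometric/algebraic bound on $\bigl|\fint_{\mathcal{C}_1(x)}\det(\nabla u_{\rm aff})\bigr|^{p/d}$. Once I have such a bound, I will apply Hypothesis~\ref{Hypo1} in the form $W(\lambda)\le C(1+|\lambda|^{p/d})$ and conclude. So the whole work lies in controlling the averaged Jacobian determinant.

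First I would rewrite the average as a weighted sum over the finitely many simplices $T\in\mathbb{T}$ that meet $\mathcal{C}_1(x)$, using that $\nabla u_{\rm aff}$ is constant on each $T$:
\begin{equation*}
\fint_{\mathcal{C}_1(x)}\det(\nabla u_{\rm aff})\,\mathrm{d}z=\frac{1}{|\mathcal{C}_1(x)|}\sum_{T\cap\mathcal{C}_1(x)\ne\emptyset}\det(\nabla u_{\rm aff}|_T)\,|T\cap\mathcal{C}_1(x)|.
\end{equation*}
Then I would use the triangle inequality together with the crude bound $|T\cap\mathcal{C}_1(x)|\le |T|$, and observe the key cancellation: if $y_0,\dots,y_d$ are the vertices of $T$, then
\begin{equation*}
|T|\cdot|\det(\nabla u_{\rm aff}|_T)|=|u_{\rm aff}(T)|=\frac{1}{d!}\bigl|\det\bigl[u(y_1)-u(y_0),\dots,u(y_d)-u(y_0)\bigr]\bigr|\le C\prod_{i=1}^d|u(y_i)-u(y_0)|,
\end{equation*}
by Hadamard's inequality, and therefore (by AM-GM) $|T|\cdot|\det(\nabla u_{\rm aff}|_T)|\le C\sum_{y,y'\in\mathcal{L}_1\cap T}|u(y)-u(y')|^d$. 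This gives
\begin{equation*}
\Bigl|\fint_{\mathcal{C}_1(x)}\det(\nabla u_{\rm aff})\,\mathrm{d}z\Bigr|\le\frac{C}{|\mathcal{C}_1(x)|}\sum_{T\cap\mathcal{C}_1(x)\ne\emptyset}\sum_{y,y'\in\mathcal{L}_1\cap T}|u(y)-u(y')|^d.
\end{equation*}
The crucial point of this maneuver is that the (possibly very small) volumes $|T|$ cancel out; this is what makes the bound uniform in the graph, avoiding any issue with thin Delaunay simplices.

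Next I would raise this inequality to the power $p/d\ge1$, use the elementary inequality $(\sum_{i=1}^N a_i)^{p/d}\le N^{p/d-1}\sum_i a_i^{p/d}$, and note that both $|\mathcal{C}_1(x)|$ and the number $N$ of pairs $(T,(y,y'))$ contributing to the sum are controlled by deterministic constants depending only on $d,r,R$. Indeed, $\calL_1\subset\calL$ inherits the $r$-separation from (ii), and condition (i) together with the definition of a Delaunay tessellation implies that $B_{r/2}(x)\subset\mathcal{C}_1(x)\subset B_R(x)$, so $|\mathcal{C}_1(x)|\ge c>0$. Moreover any simplex $T$ with $T\cap\mathcal{C}_1(x)\ne\emptyset$ has circumradius at most $R$, hence diameter at most $2R$, so its vertices lie in $B_{3R}(x)\cap\calL_1$, a set whose cardinality is uniformly bounded by the $r$-separation. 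This yields the clean bound
\begin{equation*}
\Bigl|\fint_{\mathcal{C}_1(x)}\det(\nabla u_{\rm aff})\,\mathrm{d}z\Bigr|^{p/d}\le C\sum_{T\cap\mathcal{C}_1(x)\ne\emptyset}\sum_{y,y'\in\mathcal{L}_1\cap T}|u(y)-u(y')|^p,
\end{equation*}
and combining this with $W(\lambda)\le C(1+|\lambda|^{p/d})$ concludes the proof.

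The only mildly delicate step is the cancellation argument that removes the simplex volume from the estimate; a naive approach that first estimates $|\det(\nabla u_{\rm aff}|_T)|^{p/d}$ simplex by simplex would produce a factor $|T|^{1-p/d}$ that cannot be bounded uniformly (thin Delaunay simplices are allowed under Definition~\ref{defadmissible}). Keeping the determinant and its volume grouped as $|T|\cdot|\det(\nabla u_{\rm aff}|_T)|=|u_{\rm aff}(T)|$ throughout is what makes the estimate uniform in $G\in\mathcal{G}$.
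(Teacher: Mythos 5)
Your proof is correct and takes essentially the same route as the paper: rewrite the averaged determinant as a sum over simplices weighted by $|T\cap\mathcal{C}_1(x)|$, exploit the cancellation $|T|\cdot|\det(\nabla u_{\rm aff}|_T)|=|u_{\rm aff}(T)|$ so that no inverse power of the (possibly very small) simplex volume survives, bound $|u_{\rm aff}(T)|$ by $d$-th powers of the edge lengths of its image, and finish using the $p/d$-growth of $W$ together with the equiboundedness of $|\mathcal{C}_1(x)|$ and of the number of simplices and vertex pairs involved. The only difference is cosmetic: the paper bounds $|u_{\rm aff}(T)|$ by $\mathrm{diam}(u_{\rm aff}(T))^d$, whereas you bound it via Hadamard's inequality followed by AM--GM; both give the same result up to constants.
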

\begin{remark}\label{r.subadditivity}
Points $y,y^{\prime}$ appearing in the above upper bound satisfy $|x-y|,|x-y^{\prime}|\leq 3R$. Hence the condition $6R<C_0$ allows to establish almost subadditivity estimates using boundary values by the definition of the interior Voronoi cells $\mathcal{V}_{1,\e}(O)$.
\end{remark}
\begin{proof}[Proof of Lemma~\ref{boundsvol}]
We rewrite the left hand side term similar to \eqref{defvol_discrete}. Since $\calL_1\subset\calL$, the volume of the Voronoi cells is uniformly bounded from below. Hence from the upper bound in Hypothesis~\ref{Hypo1} and the area formula we deduce
\begin{align}\label{Wbound}
W\Big(\fint_{\mathcal{C}_1(x)}\det(\nabla u_{\rm aff})\,\mathrm{d}z\Big)\leq&  C+C\Big(\sum_{T\cap\mathcal{C}_1(x)\neq\emptyset}|\det(\nabla u_{\rm aff_{|T}}||T\cap\mathcal{C}_1(x))|\Big)^\frac{p}{d}\nonumber
\\
\leq& C+C\Big(\sum_{T\cap\mathcal{C}_1(x)\neq\emptyset}|u_{\rm aff}(T)|\Big)^\frac{p}{d}.
\end{align} 
We claim that for each $T\in\mathbb{T}$ with $|T\cap\mathcal{C}_1(x)|>0$ it holds that
\begin{equation}\label{toshowestimate1}
|u_{\rm aff}(T)|\leq C\Big(\sum_{y,y^{\prime}\in\mathcal{L}_1\cap T}|u(y)-u(y^{\prime})|\Big)^d.
\end{equation}
Indeed, if $\det(\nabla u_{\rm aff_{|T}})=0$, then there is nothing to prove. Otherwise, the set $u_{\rm aff}(T)$ is again a $d$-simplex with vertices $\{u(y)\}_{y\in\calL_1\cap T}$. By convexity its diameter can be bounded by 
\begin{equation*}
{\rm diam}(u_{\rm aff}(T))\leq \sum_{y,y^{\prime}\in\calL_1\cap T}|u(y)-u(y^{\prime})|,
\end{equation*}
so that the bound $|u_{\rm aff}(T)|\leq {\rm diam}(u_{\rm aff}(T))^d$ implies \eqref{toshowestimate1}. Combining \eqref{Wbound} and \eqref{toshowestimate1} we conclude that
\begin{equation*}
W\Big(\fint_{\mathcal{C}_1(x)}\det(\nabla u_{\rm aff})\,\mathrm{d}z\Big)\leq C+C\Big(\sum_{T\cap\mathcal{C}_1(x)\neq\emptyset}\sum_{y,y^{\prime}\in\calL_1\cap T}|u(y)-u(y^{\prime})|\Big)^p
\end{equation*}
and the statement follows by Jensen's inequality since the number of terms in the above sum is equibounded with respect to $x\in\calL_1$ and $G\in\mathcal{G}$.
\end{proof}
We continue our series of lemmas with the proof of stationarity as used when applying the ergodic theorem.
\begin{lemma}\label{vol_stationary}
Let $u:\calL\to\R^d$ and let $G\in\mathcal{G}$. Then, for all $z\in\mathbb{Z}^d$, all $I\in\mathcal{I}$ and every $\a\in\R^n$, it holds that
\begin{equation*}
H_{\rm vol,1}(u(\cdot+z)+\a,I,G-z)=H_{\rm vol,1}(u,I+z,G).
\end{equation*}
\end{lemma}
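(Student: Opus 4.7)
The plan is to unwind the shift action on every geometric object appearing in the definition \eqref{defvol_discrete} of $H_{\rm vol,1}$ and then check that each ingredient (the admissible family of interior Voronoi cells, the volumes of these cells, and the cell-averaged determinant) transforms covariantly, so that the two sums coincide term by term.

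First I would record the elementary identities for the shifted graph $G-z$: the vertex set is $\calL - z$, the set of volumetric points is $\calL_1 - z$, the Delaunay tessellation of $G-z$ is $\{T-z : T\in \mathbb{T}\}$, and the Voronoi cell of $x-z$ with respect to $\calL_1 - z$ is $\mathcal{C}_1(x,G) - z$. From this and the defining property ``$T\subset I$ for every $T\in \mathbb{T}(G-z)$ meeting the cell'', one sees that the map $\mathcal{C}_1(x-z,G-z)\mapsto \mathcal{C}_1(x,G)$ is a bijection between $\mathcal{V}_{1,1}(I,G-z)$ and $\mathcal{V}_{1,1}(I+z,G)$ which preserves Lebesgue measure.

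Next I would analyze the piecewise affine interpolation. Writing $v(\cdot):=u(\cdot+z)+\a$, defined on the vertex set $\calL -z$ of $G-z$, the interpolation $v_{\rm aff}$ with respect to $\mathbb{T}(G-z)=\mathbb{T}-z$ and the nodal values $\{v(y)\}_{y\in\calL_1-z}$ satisfies, on each simplex $T-z$ with $T\in\mathbb{T}$,
\begin{equation*}
v_{\rm aff}(y)=u_{\rm aff}(y+z)+\a,
\end{equation*}
since both sides are affine on $T-z$ and agree on the $d+1$ vertices. Differentiating kills the constant $\a$ and the shift, so $\nabla v_{\rm aff}|_{T-z}=\nabla u_{\rm aff}|_{T}$, and in particular $\det(\nabla v_{\rm aff}|_{T-z})=\det(\nabla u_{\rm aff}|_T)$.

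The final step is the change of variables $w\mapsto w-z$ in the cell average, which gives
\begin{equation*}
\fint_{\mathcal{C}_1(x-z,G-z)}\det(\nabla v_{\rm aff}(w))\,\mathrm{d}w=\fint_{\mathcal{C}_1(x,G)}\det(\nabla u_{\rm aff}(w))\,\mathrm{d}w,
\end{equation*}
combined with $|\mathcal{C}_1(x-z,G-z)|=|\mathcal{C}_1(x,G)|$. Summing $|\mathcal{C}|\,W(\cdot)$ over the bijection of interior cells established in the first step yields the claimed identity. There is no real obstacle: this is bookkeeping of translation invariance, the only mild point being the verification that the interior-cell condition is compatible with the shift, which follows directly from translating both the Delaunay simplices and the reference set $I$ together.
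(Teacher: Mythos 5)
Your proof is correct and fills in exactly the chain of observations that the paper compresses into a single sentence: that the Delaunay tessellation of $G-z$ equals $\mathbb{T}-z$ (this uses its uniqueness, which the paper derives from assumption~(v) of Definition~\ref{defadmissible}), that Voronoi cells, the interior-cell condition, and the piecewise affine interpolant all transform covariantly under the shift and the constant translation $\alpha$, and a change of variables in the cell average. This is the same approach as the paper's proof, just carried out explicitly.
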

\begin{proof}[Proof of Lemma~\ref{vol_stationary}]
Since we assume that $\mathcal{L}_1$ is stationary and in general position, the Delaunay tessellation $\mathbb{T}$ of $\R^d$ with respect to $\calL_1$ is unique and hence also stationary. The claim then follows by the stationarity of $\calL_1$ and $\mathbb{T}$ combined with the linearity of piecewise affine interpolations, a discrete change of variables and translation invariance of the Lebesgue measure.
\end{proof}
The last point left in order to repeat the analysis of the previous sections for the volumetric term concerns the quantitative continuity in order to prove the convergence in the zero temperature limit. 

Before we prove the latter, we introduce some further notation. Define the set of neighbours for the volumetric points $\mathcal{L}_1$ by
\begin{equation*}
\mathcal{N}_1:=\{(x,y)\in\mathcal{L}_1\times\mathcal{L}_1:\;{\rm dim}(\mathcal{C}_1(x)\cap\mathcal{C}_1(y))=d-1\}.
\end{equation*}
Since we assume $\calL_1$ to be in general position, two points $x,y\in\calL_1$ belong to the same simplex $T\in\mathbb{T}$ if and only if they are neighbours. Given $u:\calL\to\R^d$ we set
\begin{equation*}
\|\nabla_{\mathcal{N}}u\|_{\ell^p_{\e}(D)}=\Big(\sum_{\substack{(x,y)\in\mathcal{N}_1\\ \e x,\e y\in D}}|u(x)-u(y)|^p\Big)^{\frac{1}{p}}.
\end{equation*}
Then the continuity estimate reads as follows:
\begin{lemma}\label{globalcont}
Assume that $W$ satisfies Hypothesis~\ref{Hypo2} and let $G\in\mathcal{G}$. Then, for any $u,\,\zeta:\calL\to\R^d$ and any bounded Lipschitz domain $D\subset\R^d$, we have the global continuity estimate
\begin{equation*}
|H_{\rm vol,\e}(D,u)-H_{\rm vol,\e}(D,\zeta)|\leq C\Big(|D^{\calL}_{\e}|^{\frac{p-1}{p}}+\|\nabla_{\mathcal{N}}u\|^{p-1}_{\ell^p_{\e}(D)}+\|\nabla_{\mathcal{N}}\zeta\|^{p-1}_{\ell^p_{\e}(D)}\Big)\|\nabla_{\mathcal{N}}(u-\zeta)\|_{\ell^p_{\e}(D)}.
\end{equation*}
\end{lemma}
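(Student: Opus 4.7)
My plan is to follow the standard three-step pattern ``Lipschitz reduction + multilinear determinant expansion + H\"older pairing.'' First, I would apply the triangle inequality to the defining sum \eqref{defvol_discrete} and use the local Lipschitz condition \eqref{Hyp-2.2} on $W$. Setting $\lambda_u(x):={\det}_{\mathcal{C}_1(x)}(\nabla u_{\rm aff})$, this reduces the claim to bounding
\begin{equation*}
\sum_{\mathcal{C}_1(x)\in\mathcal{V}_{1,\e}(D)}|\mathcal{C}_1(x)|\,|\lambda_u(x)-\lambda_\zeta(x)|\,\big(1+|\lambda_u(x)|^{\tfrac pd-1}+|\lambda_\zeta(x)|^{\tfrac pd-1}\big)
\end{equation*}
by the right-hand side of the claim.

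Next, to convert these averaged-determinant quantities into sums of nearest-neighbor differences, I would rely on the multilinear expansion
\begin{equation*}
\det A-\det B=\sum_{k=1}^d\det(a_1,\ldots,a_{k-1},a_k-b_k,b_{k+1},\ldots,b_d)
\end{equation*}
together with Hadamard's inequality, which yield the pointwise bounds $|\det A-\det B|\leq C(|A|^{d-1}+|B|^{d-1})|A-B|$ and $|\det A|\leq C|A|^d$. Applied to $A=\nabla u_{\rm aff|T}$ and $B=\nabla \zeta_{\rm aff|T}$ on each simplex $T$ intersecting $\mathcal{C}_1(x)$, and combined with the identity $|\mathcal{C}_1(x)|\lambda_u(x)=\sum_{T\cap\mathcal{C}_1(x)\neq\emptyset}|T\cap\mathcal{C}_1(x)|\det(\nabla u_{\rm aff|T})$ and Jensen's inequality for $s\mapsto|s|^{p/d}$ (to absorb the power $p/d-1$), this reduces each term to expressions of the form $|\nabla u_{\rm aff|T}|^\alpha|\nabla(u-\zeta)_{\rm aff|T}|^\beta$. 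The geometric regularity of the Delaunay tessellation associated with the Delone set $\mathcal{L}_1$ (a consequence of Definition \ref{defadmissible}(i)--(v) together with $6R<C_0$) then gives $|\nabla u_{\rm aff|T}|\leq C\sum_{(y,y')\in\mathcal{N}_1,\,y,y'\in T}|u(y)-u(y')|$, turning every term into a sum of nearest-neighbor differences over pairs in $\mathcal{N}_1$; since $\mathcal{C}_1(x)\in\mathcal{V}_{1,\e}(D)$ forces every such $T$ to lie in $D_\e$, only pairs contributing to $\|\nabla_{\mathcal{N}}\cdot\|_{\ell^p_\e(D)}$ arise.

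The last step will be a H\"older pairing. After the previous reductions one is left with a sum of the schematic form $\sum_x(1+g_u(x)^{p-d}+g_\zeta(x)^{p-d})(g_u(x)^{d-1}+g_\zeta(x)^{d-1})h(x)$, where $g_u,g_\zeta,h$ are local sums of nearest-neighbor differences of $u$, $\zeta$, and $u-\zeta$. Using $(p-d)+(d-1)=p-1$, the bounded overlap (each pair in $\mathcal{N}_1$ belongs to $O(1)$ simplices and each simplex to $O(1)$ Voronoi cells), and H\"older's inequality with conjugate exponents $p/(p-1)$ and $p$, one obtains
\begin{equation*}
C\big(|D_\e^{\mathcal{L}}|^{(p-1)/p}+\|\nabla_{\mathcal{N}}u\|^{p-1}_{\ell^p_\e(D)}+\|\nabla_{\mathcal{N}}\zeta\|^{p-1}_{\ell^p_\e(D)}\big)\|\nabla_{\mathcal{N}}(u-\zeta)\|_{\ell^p_\e(D)},
\end{equation*}
which is exactly the desired bound. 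The delicate point I expect is the exponent bookkeeping in the second step: the local Lipschitz exponent $p/d-1$ of $W$ combines with the $d-1$ coming from the multilinear determinant expansion to reconstruct the global exponent $p-1$ that balances the first power of $u-\zeta$ in the final H\"older pairing.
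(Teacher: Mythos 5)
Your overall architecture (Lipschitz reduction of $W$, a determinant-continuity estimate on each simplex, then a H\"older pairing with $(p-d)+(d-1)=p-1$) matches the paper's proof, but there is a genuine gap in the middle step: you claim ``$|\nabla u_{\rm aff|T}|\leq C\sum_{(y,y')\in\mathcal{N}_1,\,y,y'\in T}|u(y)-u(y')|$'' as a consequence of Definition~\ref{defadmissible}. This requires a uniform lower bound on the shape regularity (inradius/minimal altitude) of the Delaunay simplices, and conditions (i)--(v) do \emph{not} supply one when $d\geq 3$: a Delone set can produce sliver tetrahedra with all edges of length $\sim R$, circumradius $\leq R$, but volume $\to 0$. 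For such a simplex, writing $\nabla u_{\rm aff|T}=UV^{-1}$ with $V=(x_1-x_0|\cdots|x_d-x_0)$, the norm $|V^{-1}|$ blows up like the reciprocal of the smallest altitude, so $|\nabla u_{\rm aff|T}|$ can be arbitrarily large while all vertex differences $|u(y)-u(y')|$ stay $O(1)$. Consequently the reduction to quantities of the form $|\nabla u_{\rm aff|T}|^\alpha|\nabla(u-\zeta)_{\rm aff|T}|^\beta$ cannot, in general, be followed by the claimed conversion to nearest-neighbor sums.

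The paper's proof circumvents this precisely by never estimating the gradient norm itself. It works only with $|\det(\nabla u_{\rm aff|T})|\,|T|=|u_{\rm aff}(T)|$ and with the determinant difference $|\det(\nabla u_{\rm aff|T})-\det(\nabla\zeta_{\rm aff|T})|\,|T|$, both of which, via the simplex volume formula
\begin{equation*}
\det(\nabla u_{\rm aff|T})\,|T|=\tfrac{1}{d!}\det\big(u(x_1)-u(x_0)\,|\,\cdots\,|\,u(x_d)-u(x_0)\big),
\end{equation*}
reduce to determinants of \emph{vertex} differences. Hadamard's inequality (and its multilinear variant for determinant differences) then gives bounds purely in terms of $\sum_{y,y'\in\calL_1\cap T}|u(y)-u(y')|$ with no non-degeneracy assumption on $T$ at all --- the factor $|\det V|$ that would cause trouble cancels against $|T|$. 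In other words, the correct intermediate quantity to pass through is the image volume, not the gradient. Once you replace your gradient bound by that vertex-difference identity, the remaining H\"older bookkeeping (which you have laid out correctly) goes through and recovers the claimed estimate.
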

\begin{proof}[Proof of Lemma~\ref{globalcont}]
Fix $x\in\calL_1$. Applying \eqref{Hyp-2.2} we infer from the area formula and Jensen's inequality (recall that $p\geq d$) that
\begin{align*}
\Bigg|W&\Big(\fint_{\mathcal{C}_1(x)}\det(\nabla u_{\rm aff})\,\mathrm{d}z\Big)-W\Big(\fint_{\mathcal{C}_1(x)}\det(\nabla \zeta_{\rm aff})\,\mathrm{d}z\Big)\Bigg|
\\
\leq& C\Big(1+ \sum_{T\cap\mathcal{C}_1(x)\neq\emptyset}|u_{\rm aff}(T\cap\mathcal{C}_1(x))|^{\frac{p}{d}-1}+|\zeta_{\rm aff}(T\cap\mathcal{C}_1(x))|^{\frac{p}{d}-1}\Big)\times\fint_{\mathcal{C}_1(x)}|\det(\nabla u_{\rm aff})-\det(\nabla \zeta_{\rm aff})|\,\mathrm{d}z
\\
\leq& C\Big(1+ \sum_{T\cap\mathcal{C}_1(x)\neq\emptyset}|u_{\rm aff}(T)|^{\frac{p}{d}-1}+|\zeta_{\rm aff}(T)|^{\frac{p}{d}-1}\Big)
\times \sum_{T\cap\mathcal{C}_1(x)\neq\emptyset}|\det(\nabla u_{\rm aff_{|T}})-\det(\nabla \zeta_{\rm aff_{|T}})||T|.
\end{align*}
Taking into account \eqref{toshowestimate1}, we can again use Jensen's inequality to further estimate 
\begin{align}\label{longestimate}
\Bigg|W&\Big(\fint_{\mathcal{C}_1(x)}\det(\nabla u_{\rm aff})\,\mathrm{d}z\Big)-W\Big(\fint_{\mathcal{C}_1(x)}\det(\nabla \zeta_{\rm aff})\,\mathrm{d}z\Big)\Bigg|\nonumber
\\
\leq& C\Big(1+\sum_{T\cap \mathcal{C}_1(x)\neq\emptyset}\sum_{y,y^{\prime}\in\mathcal{L}_1\cap T}\Big(|u(y)-u(y^{\prime})|^{p-d}+|\zeta(y)-\zeta(y^{\prime})|^{p-d}\Big)\Big)\nonumber
\\
&\times \sum_{T\cap\mathcal{C}_1(x)\neq\emptyset}|\det(\nabla u_{\rm aff_{|T}})-\det(\nabla \zeta_{\rm aff_{|T}})||T|.
\end{align}
We bound the difference in each term of the last sum. Write $T={\rm co}(x_0,\dots,x_d)$. Then by the volume formula for simplices
\begin{equation*}
\det(\nabla u_{\rm aff_{|T}})|T|=\frac{1}{d!}\det(\nabla u_{\rm aff_{|T}})\det\big(x_1-x_0|\dots|x_d-x_0\big)=\frac{1}{d!}\det\big(u(x_1)-u(x_0)|\dots|u(x_d)-u(x_0)\big).
\end{equation*}
The same formula holds with $\zeta$ in place of $u$. From the standard continuity estimate for determinants, we deduce that
\begin{align*}
|\det(\nabla u_{\rm aff_{|T}})-\det(\nabla \zeta_{\rm aff_{|T}})||T|\leq& C\max_{i}\Big(|u(x_i)-u(x_0)|+|\zeta(x_i)-\zeta(x_0)|\Big)^{d-1}
\\
&\times\sum_{y,y^{\prime}\in\mathcal{L}_1\cap T}|u(y)-u(y^{\prime})-\zeta(y)+\zeta(y^{\prime})|
\end{align*}
Recall that $d\leq p$. Hence inserting the above estimate into \eqref{longestimate} it follows that
\begin{align}\label{esth1}
\Big|W_1&\Big(\sum_{T\cap\mathcal{C}_1(x)\neq\emptyset}\frac{|u_{\rm aff}(T\cap\mathcal{C}_1(x))|}{|\mathcal{C}_1(x)|}\Big)-W_1\Big(\sum_{T\cap\mathcal{C}_1(x)\neq\emptyset}\frac{|\zeta_{\rm aff}(T\cap\mathcal{C}_1(x))|}{|\mathcal{C}_1(x)|}\Big)\Big|\nonumber
\\
\leq&C\Big(1+\sum_{T\cap\mathcal{C}_1(x)\neq\emptyset}\sum_{y,y^{\prime}\in\mathcal{L}_1\cap T}\Big(|u(y)-u(y^{\prime})|^{p-1}+|\zeta(y)-\zeta(y^{\prime})|^{p-1}\Big)\nonumber
\\
&\times\sum_{T\cap\mathcal{C}_1(x)\neq\emptyset}\sum_{y,y^{\prime}\in\mathcal{L}_1\cap T}|u(y)-u(y^{\prime})-\zeta(y)+\zeta(y^{\prime})|\Big).
\end{align}
Note that each $T\in\mathbb{T}$ can intersect only a uniformly bounded number of Voronoi cells $\mathcal{C}_1$. Hence, summing \eqref{esth1} over all $\mathcal{C}_1(x)\in\mathcal{V}_{1,\e}(D)$ and using H\"older's inequality for the products yields
\begin{align*}
|H_{\rm vol,\e}&(D,u)-H_{\rm vol,\e}(D,\zeta)|
\\ 
\leq&C\Big(|D^{\calL}_{\e}|^{\frac{p-1}{p}}+\Big(\sum_{T\subset D_{\e}}\sum_{y,y^{\prime}\in\mathcal{L}_1\cap T}\big(|u(y)-u(y^{\prime})|^{p}\Big)^{\frac{p-1}{p}}+\Big(\sum_{T\subset \frac{D}{\e}}\sum_{y,y^{\prime}\in\mathcal{L}_1\cap T}\big(|\zeta(y)-\zeta(y^{\prime})|^{p}\Big)^{\frac{p-1}{p}}\Big)
\\
&\times\Big(\sum_{T\subset D_{\e}}\sum_{y,y^{\prime}\in\mathcal{L}_1\cap T}|u(y)-u(y^{\prime})-\zeta(y)+\zeta(y^{\prime})|^p\Big)^{\frac{1}{p}}
\\
\leq&C\Big(|D^{\calL}_{\e}|^{\frac{p-1}{p}}+\|\nabla_{\mathcal{N}}u\|^{p-1}_{\ell^p_{\e}(D)}+\|\nabla_{\mathcal{N}}\zeta\|^{p-1}_{\ell^p_{\e}(D)}\Big)\|\nabla_{\mathcal{N}}(u-\zeta)\|_{\ell^p_{\e}(D)},
\end{align*}
where we used in the last inequality that each element in $\mathbb{T}$ has as vertices only nearest neighbours and that a vertex can belong to only a uniformly bounded number of different cells. The last estimate yields the claim.
\end{proof}

\appendix
\section{}\label{appA}
In this appendix we collect and prove some of the results we used in the paper. We start with the technical proof of the interpolation inequality.

\begin{proof}[Proof of Proposition~\ref{interpolation}]
We set $\beta=1$ to reduce the notation. Given $\delta>0$ and $N\in\mathbb{N}$, for $i\in\{1,\dots,N+1\}$ we introduce the open sets
\begin{equation*}
O_i=\left\{x\in O:\;\dist(x,\partial O)>(i+1)\frac{\delta}{2N}\right\}.
\end{equation*}
Then the stripes $S_i:=O_{i-1}\backslash \overline{O_{i+2}}$ fulfill $S_i\cap S_j=\emptyset$ whenever $|i-j|>2$. Thus for every $u:O_{\e}^\calL\to\R^n$ we obtain by averaging
\begin{equation*}
\frac{1}{N}\sum_{i=1}^{N}H_{\e}(S_i,u)\leq \frac{3}{N}H_{\e}(O,u),
\end{equation*}
so that we can decompose the set $\mathcal{N}_p(v,O,\e,\kappa)=\bigcup_{i=1}^N\mathcal{P}_{i,\e}$ (we omit the dependence on $O$ and $\kappa$), where
\begin{equation*}
\mathcal{P}_{i,\e}=\left\{u\in\mathcal{N}_p(v,O,\e,\kappa):\;H_{\e}(S_i,u)\leq \frac{3}{N}H_{\e}(O,u)\right\}.
\end{equation*}
Let $\theta_i:O\to[0,1]$ be the Lipschitz-continuous cut-off function defined by
\begin{equation*}
\theta_i(z)=\min\left\{\max\left\{\frac{2N}{\delta}\dist(z,\partial O)-(i+1),0\right\},1\right\},
\end{equation*}
so that $\theta_i\equiv 1$ on $\overline{O_{i+1}}$, $\theta_i\equiv 0$ on $O\backslash O_{i}$ and its Lipschitz constant can be bounded by $\Lip(\theta_i)\leq \frac{2N}{\delta}$. We then define an interpolation between functions $u,\psi:O_{\e}^\calL\to\R^n$ as
\begin{equation*}
T_{i,\e}(u,\psi)(x)=\theta_i(\e x)u(x)+(1-\theta_i(\e x))\psi(x).
\end{equation*}
Observe that if $u\in\mathcal{P}_{i,\e}$ as well as $\varphi\in\mathcal{N}_p(v,O,\e,\kappa)$ and  $\psi\in\mathcal{N}_{\infty}(\varphi,O\backslash \overline{O_{i+1}},\e)$, by the Minkowski inequality we have
\begin{equation*}
\e^{\frac{d}{p}}\|v_{\e}-\e T_{i,\e}(u,\psi)\|_{\ell^p_{\e}(O)}\leq 2\kappa|O|^{\frac{1}{p}+\frac{1}{d}}+\e^{\frac{d}{p}}\|\e\psi-\e\varphi\|_{\ell^p_{\e}(O\backslash\overline{O_{i+1}})}
\leq 2\kappa |O|^{\frac{1}{p}+\frac{1}{d}}+C\e |O|^{\frac{1}{p}},
\end{equation*}
so that $T_{i,\e}(u,\psi)\in\mathcal{N}_p(v,O,\e,3\kappa)$ for $\e$ small enough. 
	
For technical reasons the interpolations will not suffice to prove the estimates. For every $i$ let us choose $t_i\in[\frac{1}{4},\frac{3}{4}]$ such that, setting $S_i^t=\{x\in O:\;\theta_i(x)=t\}$, the coarea formula implies
\begin{equation}\label{prepareaveraging}
\frac{1}{2}\mathcal{H}^{d-1}( S_i^{t_i})\leq\int_{\frac{1}{4}}^{\frac{3}{4}}\mathcal{H}^{d-1}(S_i^t)\,\mathrm{d}t\leq \int_0^1\mathcal{H}^{d-1}( S_i^t)\,\mathrm{d}t=\int_{O}|\nabla\theta_i|\leq \frac{2N}{\delta}|O_i\backslash{O_{i+1}}|.
\end{equation}
We set $S_i^*=\{x\in O:\theta_i(x)<t_i\}$. Note that for $\delta$ small enough (depending only on $O$), we have $S_i^*\in\Ard$ (see for instance \cite[Lemma~2.2]{glpe}). Let us introduce the product set
\begin{equation*}
\mathcal{U}_{\e}^i(M):=\big(\mathcal{P}_{i,\e}\cap\mathcal{S}_M(O,\e)\big)\times\mathcal{N}_{\infty}(\varphi,O\backslash\overline{O_{i+1}},\e),
\end{equation*}
as well as the integral
\begin{equation*}
e^i_{\e}(M):=\bigg(\int_{\mathcal{U}_{\e}^i(M)}\exp\Big(- H_{\e}(O,T_{i,\e}(u,\psi))-c_0\|\nabla_{\B}u\|^p_{\ell^p_{\e}(S_i^*)}\Big)\,\mathrm{d}u\,\mathrm{d}\psi \bigg),
\end{equation*}
where $c_0>0$ is a small constant such that $c_0|\xi|^p\leq f(\cdot,\xi)+c_0^{-1}$ (cf. Hypothesis~\ref{Hypo1}). This integral quantity will be the main ingredient to prove the interpolation inequality. We split the remaining argument into several steps.

\step1{Energy bounds for the interpolation}
To bound the energy of $T_{i,\e}(u,\psi)$, we use the pointwise inequality 
\begin{equation*}
|\psi(x)-\psi(y)|^p\leq C|(\psi-\varphi)(x)-(\psi-\varphi)(y)|^p+C|\varphi(x)-\varphi(y)|^p\leq C+C|\varphi(x)-\varphi(y)|^p,
\end{equation*}
which is valid for all $x,y\in (O\backslash \overline{O_{i+1}})_{\e}$. Combined with the two-sided growth condition in Hypothesis~\ref{Hypo1} we infer that
\begin{align}\label{firstbound}
H_{\e}(O,T_{i,\e}(u,\psi))&\leq H_{\e}(O_{i+1},u)+H_{\e}(O\backslash\overline{O_i},\psi)+H_{\e}(S_i,T_{i,\e}(u,\psi))\nonumber\\
&\leq H_{\e}(O_{i+1},u)+CH_{\e}(O^{\delta},\varphi)+C|(O^{\delta})^{\calL}_{\e}|+H_{\e}(S_i,T_{i,\e}(u,\psi)),
\end{align}
where $O^{\delta}$ is defined in the statement of Proposition~\ref{interpolation}. In order to estimate the last term on the right hand side we use the formula
\begin{align*}
T_{i,\e}(u,\psi)(x)-T_{i,\e}(u,\psi)(y)=&\big(\theta_i(\e x)-\theta_i(\e y)\big)\big(u(x)-\psi(x)\big)
\\
&+\theta_i(\e y)\big(u(x)-u(y)\big)+(1-\theta_i(\e y))\big(\psi(x)-\psi(y)\big)
\end{align*}
and the bound on the Lipschitz constant of $\theta_i$ to estimate the energy on the interpolation stripe via
\begin{align}\label{improvedbound}
H_{\e}(S_i,T_{i,\e}(u,\psi))&\leq C\|\nabla_{\B}T_{i,\e}(u,\psi)\|^p_{\ell^p_{\e}(S_i)}+C|(S_i)^{\calL}_{\e}|\nonumber\\
&\leq C\left(\|\nabla_{\B}u\|^p_{\ell^p_{\e}(S_i)}+\|\nabla_{\B}\psi\|^p_{\ell^p_{\e}(S_i)}+\frac{(N\e)^p}{\delta^p}\|u-\psi\|^p_{\ell^p_{\e}(S_i)}+|(S_i)^{\calL}_{\e}|\right)\nonumber
\\
&\leq \frac{C}{N}H_{\e}(O,u)+CH_{\e}(O^{\delta},\varphi)+C|(O^{\delta})^{\calL}_{\e}|+\frac{CN^p}{\delta^p}\kappa^p|O|^{1+\frac{p}{d}}\e^{-d},
\end{align}
where we have used again that the degree of each vertex is equibounded and that, after suitable extension, $\psi\in \mathcal{N}_p(v,O,\e,2\kappa)$ for $\e$ small enough. Combining (\ref{firstbound}) and (\ref{improvedbound}) we infer that
\begin{equation}\label{energyboundinterpolation}
H_{\e}(O,T_{i,\e}(u,\psi))\leq H_{\e}(O_{i+1},u)+\frac{C}{N}H_{\e}(O,u)+CH_{\e}(O^{\delta},\varphi)+C|(O^{\delta})^{\calL}_{\e}|+\frac{CN^p}{\delta^p}\kappa^p|O|^{\frac{p}{d}}|O_{\e}^\calL|.
\end{equation} 
	
\step2{Lower bound for $e^i_{\e}(M)$}
In order to prove a lower bound for the integral, first note that due to Hypothesis~\ref{Hypo1} and the definition of $S_i^*$ 
\begin{equation*}
c_0\|\nabla_{\B}u\|^p_{\ell^p_{\e}(S_i^*)}\leq H_{\e}(O\setminus O_{i+1},u)+C|(O^{\delta})^{\calL}_{\e}|, 
\end{equation*}
so that, up to increasing $C$, we can add this inequality to \eqref{energyboundinterpolation} and obtain the estimate
\begin{equation*}
H_{\e}(O,T_{i,\e}(u,\psi))+c_0\|\nabla_{\B}u\|^p_{\ell^p_{\e}(S_i^*)}\leq \left(1+\frac{C}{N}\right)H_{\e}(O,u)+CH_{\e}(O^{\delta},\varphi)+C|(O^{\delta})^{\calL}_{\e}|+\frac{CN^p}{\delta^p}\kappa^p|O|^{\frac{p}{d}}|O_{\e}^\calL|	
\end{equation*}
Rearranging the terms we obtain by Fubini's Theorem~that
\begin{align}\label{firstinterpolationbound}
e_{\e}^i(M)\geq&\exp\bigg(-C \Big(|(O^{\delta})^{\calL}_{\e}|+\frac{(N\kappa|O|^{\frac{1}{d}})^p}{\delta^p}|O_{\e}^\calL|+\frac{M}{N}|O_{\e}^\calL|+ H_{\e}(O^{\delta},\varphi)\Big)\bigg)\int_{\mathcal{N}_{\infty}(\varphi,O\backslash\overline{O_{i+1}},\e)}\mathrm{d}\psi\nonumber
\\
&\times\int_{\mathcal{P}_{i,\e}\cap\mathcal{S}_M(O,\e)}\exp(- H_{\e}(O,u))\,\mathrm{d}u\nonumber
\\
\geq&\exp\bigg(-C \Big(|(O^{\delta})^{\calL}_{\e}|+\frac{(N\kappa|O|^{\frac{1}{d}})^p}{\delta^p}|O_{\e}^\calL|+\frac{M}{N}|O_{\e}^\calL|+ H_{\e}(O^{\delta},\varphi)\Big)\bigg)\nonumber\\
&\times Z_{\e,O}(\mathcal{P}_{i,\e}\cap\mathcal{S}_M(O,\e))
\end{align}
where we used that the measure of $\mathcal{N}_{\infty}(\varphi,O\backslash\overline{O_{i+1}},\e)$ can be bounded from below by $\exp(-C|(O^{\delta})^{\calL}_{\e}|)$.
	
\step3{Upper bound for $e^i_{\e}(M)$ and conclusion}
To estimate $e^i_{\e}(M)$ from above, similar to \cite{KoLu} we perform a suitable change of variables. Define 
$\Phi_{i,\e}:\mathcal{N}_p(v,O,\e,\kappa)\times \mathcal{N}_{\infty}(\varphi,O\backslash\overline{O_{i+1}},\e)\to \mathcal{N}_p(u,O,\e,3\kappa)\times\mathcal{N}_p(u,O\backslash\overline{O_{i+1}},\e,3\kappa)$ by
\begin{equation*}
\Phi_{i,\e}(u,\psi)(x)=\begin{cases}
(T_{i,\e}(u,\psi)(x),\psi(x)) &\mbox{if $\theta_i(\e x)\geq t_i$,}\\
(T_{i,\e}(u,\psi)(x),u(x)) &\mbox{if $\theta_i(\e x)<t_i$.}
\end{cases}
\end{equation*}
Note that for $\e$ small enough $\Phi_{i,\e}$ is well-defined and bijective onto its range $\mathcal{R}(\Phi_{i,\e})$. For the idea how to calculate the Jacobian, we refer to the proof of Proposition~\ref{fullexistence}. As $t_i\in[\frac{1}{4},\frac{3}{4}]$, it holds that
\begin{equation*}
|\det(D\Phi_{i,\e}(u,\psi))|^{-1}= \left(\prod_{x:\theta_i(\e x)\geq t_i}|\theta_i(\e x)|^n\prod_{x:\theta_i(\e x)<t_i}|1-\theta_i(\e x)|^n\right)^{-1}\leq \exp(C|(O^{\delta})^{\calL}_{\e}|).
\end{equation*}
Setting $(g,h)=\Phi_{i,\e}(u,\psi)$, by construction of the interpolation we have
\begin{equation*}
\begin{split}
g&\in\mathcal{N}_p(u,O,\e,3\kappa)\cap \mathcal{B}_{\e}(O,\varphi),
\\
h&=(h_1,h_2)\in \mathcal{N}_{\infty}(\varphi,O\backslash (\overline{O_{i+1}}\cup S_i^*),\e)\times \underbrace{\{h:(S_i^*)_{\e}\to\R^n:\;\|h-\e^{-1}v_{\e}\|_{\infty}\leq C\kappa|O_{\e}|^{\frac{1}{p}+\frac{1}{d}}\}}_{=:R_{i,\e}}.
\end{split}
\end{equation*}
As the measure of the set $\mathcal{N}_{\infty}(\varphi,O\backslash (\overline{O_{i+1}}\cup S_i^*),\e)$ can be bounded by $\exp(C|(O^{\delta})^{\calL}_{\e}|)$, the above change of variables and Fubini's Theorem~imply
\begin{align}\label{secondinterpolationbound}
e_{\e}^i(M)\leq&\exp(C|(O^{\delta})^{\calL}_{\e}|)\int_{\mathcal{R}(\Phi_{i,\e})}\exp\Big(- H_{\e}(g,O)-c_0\|\nabla_{\B}h\|^p_{\ell^p_{\e}(S_i^*)}\Big)\,\mathrm{d}g\,\mathrm{d}h\nonumber
\\
\leq&\exp(C|(O^{\delta})^{\calL}_{\e}|)\int_{\mathcal{N}_{\infty}(\varphi,O\backslash (\overline{O_{i+1}}\cup S_i^*),\e)}\mathrm{d}h_1\int_{R_{i,\e}}\exp(-c_0\|\nabla_\B  h_2\|^p_{\ell^p_{\e}(S_i^*)})\,\mathrm{d}h_2\nonumber
\\
&\times Z(\mathcal{N}_p(u,O,\e,3\kappa)\cap\mathcal{B}_{\e}(O,\varphi))\nonumber
\\
\leq&\exp(C|(O^{\delta})^{\calL}_{\e}|)\int_{R_{i,\e}}\exp(-c_0\|\nabla_\B  h_2\|^p_{\ell^p_{\e}(S_i^t)})\,\mathrm{d}h_2\;\times Z_{\e,O}(\mathcal{N}_p(u,O,\e,3\kappa)\cap\mathcal{B}_{\e}(O,\varphi)).
\end{align}
In order to bound the integral on the right hand side, we apply Lemma~\ref{forrestargument} to the graph $G_{S_i^*,\e}$ with $\a=c_0$ and $\gamma=C\kappa |O_{\e}|^{\frac{1}{p}+\frac{1}{d}}$ and infer
\begin{align*}
\int_{R_{i,\e}}\exp(-c_0\|\nabla_\B  h_2\|^p_{\ell^p_{\e}(S_i^*)})\,\mathrm{d}h_2&\leq \Big(C\kappa^n |O_{\e}|^{\frac{n}{p}+\frac{n}{d}}\Big)^{N_{i,\e}}C^{|(S_i^*)^{\calL}_{\e}|-N_{i,\e}},
\end{align*}
where we denoted by $N_{i,\e}$ the number of connected components of the graph $G_{S_i^*,\e}$.
For $\e$ small enough (possibly depending on $N,\delta$), by Remark~\ref{numbercomponents}, \eqref{surfaceesimtate} and the fact that $S_i^*\in\Ard$ we can bound the number of components via
\begin{equation*}
N_{i,\e}\leq \#\{x\in O_{\e}^\calL:\;\dist(x,\partial (S_i^*)_{\e})\leq C_0\}\leq C\e^{1-d}(\mathcal{H}^{d-1}(S_i^{t_i})+\mathcal{H}^{d-1}(\partial O)).
\end{equation*}
In particular, for $N,\delta$ and $\kappa>0$ fixed, due to \eqref{prepareaveraging} there exists $\e_0$ such that for all $\e<\e_0$
\begin{equation*}
\int_{R_{i,\e}}\exp(-c_0\|\nabla_\B  h_2\|^p_{\ell^p_{\e}(S_i^*)})\,\mathrm{d}h_2\leq \exp(C|(O^{\delta})^{\calL}_{\e}|).
\end{equation*}
Plugging this bound into \eqref{secondinterpolationbound} and comparing with \eqref{firstinterpolationbound} yields
\begin{align*}
Z_{\e,O}(\mathcal{P}_{\e}^i\cap\mathcal{S}_M(O,\e))\leq&Z_{\e,O}(\mathcal{N}_p(v,O,\e,3\kappa)\cap\mathcal{B}_{\e}(O,\varphi)) \\
&\times \exp\Big(C\big(|(O^{\delta})^{\calL}_{\e}|+\frac{(N\kappa|O|^{\frac{1}{d}})^p}{\delta^p}|O_{\e}^\calL|+\frac{M}{N}|O_{\e}^\calL|+ H_{\e}(O^{\delta},\varphi)\big)\Big)
\end{align*}
Summing this inequality over $i$, by the definition of the sets $\mathcal{P}_{i,\e}$ we infer that
\begin{align}\label{truncatedestimate}
Z_{\e,O}(\mathcal{N}_p(v,O,\e,\kappa)\cap\mathcal{S}_M(O,\e))\leq&\sum_{i=1}^N Z_{\e,O}(\mathcal{P}_{\e}^i\cap\mathcal{S}_M(O,\e))\nonumber
\\
\leq& Z_{\e,O}(\mathcal{N}_p(v,O,\e,3\kappa)\cap\mathcal{B}_{\e}(O,\varphi))\nonumber
\\
&\times N\exp\Big(C\big(|(O^{\delta})^{\calL}_{\e}|+\frac{(N\kappa|O|^{\frac{1}{d}})^p}{\delta^p}|O_{\e}^\calL|+\frac{M}{N}|O_{\e}^\calL|+ H_{\e}(O^{\delta},\varphi)\big)\Big)
\end{align}
Now choosing 
\begin{equation*}
M=2\left(\frac{1}{|O_{\e}^\calL|}\log\Big(Z_{\e,O}(\mathcal{N}_p(v,O,\e,\kappa))\Big)+\overline{C}+\frac{\log(2)}{|O_{\e}^\calL|}\right),
\end{equation*}
where $\overline{C}$ is the constant of Lemma~\ref{tightness}, we obtain by the same Lemma~and Remark~\ref{Mdepends} that, for any $\kappa>0$ fixed and all $\e$ small enough,
\begin{equation*}
Z_{\e,O}(\mathcal{N}_p(v,O,\e,\kappa)\backslash\mathcal{S}_M(O,\e))\leq \frac{1}{2}Z_{\e,O}(\mathcal{N}_p(v,O,\e,\kappa)).
\end{equation*}
Thus \eqref{truncatedestimate} and the definition of $M$ yield the final estimate
	\begin{align*}
	Z_{\e,O}(\mathcal{N}_p(v,O,\e,\kappa))\leq& 2Z_{\e,O}(\mathcal{N}_p(v,O,\e,\kappa)\cap\mathcal{S}_M(O,\e))
	\\
	\leq&Z_{\e,O}(\mathcal{N}_p(v,O,\e,3\kappa)\cap\mathcal{B}_{\e}(O,\varphi))\,Z_{\e,O}(\mathcal{N}_p(v,O,\e,\kappa))^{\frac{C}{N}}\nonumber
	\\
	&\times 2N\exp\Big(C\big(|(O^{\delta})^{\calL}_{\e}|+(\frac{(N\kappa|O|^{\frac{1}{d}})^p}{\delta^p}+\frac{C}{N})|O_{\e}^\calL|+ H_{\e}(O^{\delta},\varphi)\big)\Big).
	\end{align*}
\end{proof}
\begin{remark}\label{invariance}
	Note that the restriction on $\delta$ in the interpolation inequality comes only from the requirement that tubular neighbourhoods of the boundary have again Lipschitz boundary. In particular, if $\delta$ satisfies the condition for a set $O\subset\R^d$, then $\delta^{\prime}=\delta \rho$ satisfies the condition for all sets of the form $O^{\prime}=z+\rho O$. Applying	this fact to the family of cubes $Q(z,\rho)$ with $z\in D$ and $\rho>0$, we obtain that there exists $\delta_0>0$ such that for all $\delta<\delta_0$, all $N\in\mathbb{N}$ and all $\kappa>0$ it holds that
	\begin{align*}
	\frac{N-C}{N}\mathcal{F}_{\kappa}^-(Q(z,\rho), \overline{\varphi}_\Lambda)\geq& \overline W(\Lambda )-C(1+|\Lambda |^p)\frac{|Q(z,\rho)^{\delta\rho}|}{|Q(z,\rho)|}-C\left(\frac{(N\kappa|O(z,\rho)|^{\frac{1}{d}})^p}{(\delta\rho)^p}+\frac{1}{N}\right)
	\\
	\geq&\overline W(\Lambda )-C\left((1+|\Lambda |^p)\delta+\frac{(N\kappa)^p}{\delta^p}+\frac{1}{N}\right)
	\end{align*}
	$\mathbb{P}$-almost surely. Here we used Lemma~\ref{equivalent}, Proposition~\ref{fullexistence} and \eqref{e.volVoronoi} in order to pass to the limit as $\e\to 0$ in the interpolation inequality almost surely. Note that the last bound is independent of $\rho$ and $z$.
\end{remark}

\begin{lemma}\label{pgrowth}
Let $p\in(1,+\infty)$. For all $u,v\in\R^n$ it holds that
\begin{equation*}
|u-v|^p+|u+v|^p\leq \max\{2^{p-1},2\}(|u|^p+|v|^p).
\end{equation*}
\end{lemma}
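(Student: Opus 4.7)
The plan is to split into the two regimes $p\ge 2$ and $1<p\le 2$ distinguished by the $\max$, use the parallelogram identity $|u+v|^2+|u-v|^2=2(|u|^2+|v|^2)$ to reduce the problem to a scalar inequality for the two nonnegative numbers $a=|u+v|^2$ and $b=|u-v|^2$, and then apply convexity or concavity of the power map $t\mapsto t^{p/2}$ in opposite directions depending on the regime.

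For $p\ge 2$, I would first observe that $s\mapsto s^{p/2}$ is convex, so the scalar map $g(a)=a^{p/2}+(S-a)^{p/2}$ on $[0,S]$ (with $S=a+b$ fixed) is convex and attains its maximum at the endpoints, giving the superadditivity bound $a^{p/2}+b^{p/2}\le (a+b)^{p/2}$. Plugging in $a+b=2(|u|^2+|v|^2)$ yields $|u+v|^p+|u-v|^p\le 2^{p/2}(|u|^2+|v|^2)^{p/2}$. A second application of convexity of $s\mapsto s^{p/2}$ applied to $|u|^2,|v|^2$, i.e.\ the standard inequality $(x+y)^{p/2}\le 2^{p/2-1}(x^{p/2}+y^{p/2})$, finishes this case with the constant $2^{p/2}\cdot 2^{p/2-1}=2^{p-1}$.

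For $1<p\le 2$, the function $s\mapsto s^{p/2}$ is now concave, so Jensen's inequality gives the reverse bound $\tfrac12(a^{p/2}+b^{p/2})\le\bigl(\tfrac{a+b}{2}\bigr)^{p/2}$, i.e.\ $a^{p/2}+b^{p/2}\le 2^{1-p/2}(a+b)^{p/2}$. Combined with the parallelogram law this collapses to $|u+v|^p+|u-v|^p\le 2(|u|^2+|v|^2)^{p/2}$. The last step is the subadditivity of the same concave map, $(x+y)^{p/2}\le x^{p/2}+y^{p/2}$ for $0<p/2\le 1$ and $x,y\ge 0$ (which holds because a concave function through the origin is subadditive), applied with $x=|u|^2$, $y=|v|^2$, yielding the constant $2$.

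There is no real obstacle here: the only thing to keep track of is that the two regimes use $s\mapsto s^{p/2}$ in opposite senses (convex vs.\ concave), and correspondingly the ``two-term'' inequality $(x+y)^{p/2}$ vs.\ $x^{p/2}+y^{p/2}$ in opposite senses. The two cases match the two entries of $\max\{2^{p-1},2\}$ and agree at $p=2$, where both give the sharp value $2$, confirming that the constant stated in the lemma is indeed the right one.
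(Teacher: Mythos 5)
Your proof is correct, and for $1<p\le 2$ it is essentially identical to the paper's argument: both combine the parallelogram law $|u+v|^2+|u-v|^2=2(|u|^2+|v|^2)$ with two $\ell^p$--$\ell^2$ norm comparisons on $\R^2$, which are precisely your two applications of concavity of $t\mapsto t^{p/2}$ (the paper writes them as $(x_1^p+x_2^p)^{1/p}\ge(x_1^2+x_2^2)^{1/2}$ and $(x_1^p+x_2^p)^{1/p}\le 2^{1/p-1/2}(x_1^2+x_2^2)^{1/2}$; substituting $x_i^2=a,b$ recovers your scalar inequalities exactly). The only genuine difference lies in the regime $p\ge 2$: the paper simply invokes Clarkson's first inequality as a known result, whereas you derive it from scratch via the endpoint-maximum argument for the convex map $a\mapsto a^{p/2}+(S-a)^{p/2}$, the parallelogram law, and the power-mean inequality $(x+y)^{p/2}\le 2^{p/2-1}(x^{p/2}+y^{p/2})$. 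This is in fact the standard proof of that case of Clarkson's inequality, so what you have done is unpack a black box the paper leaves opaque, at the cost of a few extra lines but with the benefit of a fully self-contained argument that exhibits both regimes as the two directions of the same convexity dichotomy.
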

\begin{proof}[Proof of Lemma~\ref{pgrowth}]
For $p\geq 2$ the claimed estimate follows from Clarkson's inequality. If $p<2$, then $(x_1^p+x_2^p)^{\frac{1}{p}}\geq (x_1^2+x_2^2)^{\frac{1}{2}}$ for all $x_1,x_2\geq 0$. Moreover, with elementary analysis one can show that $(x_1^p+x_2^p)^{\frac{1}{p}}\leq 2^{\frac{1}{p}-\frac{1}{2}}(x_1^2+x_2^2)^{\frac{1}{2}}$. Applying these two inequalities first with $x_1=|u-v|$ and $x_2=|u+v|$ and then with $x_1=|u|$ and $x_2=|v|$ we obtain
\begin{equation*}
(|u-v|^p+|u+v|^p)^{\frac{1}{p}}\leq 2^{\frac{1}{p}-\frac{1}{2}}(|u-v|^2+|u+v|^2)^{\frac{1}{2}}=2^{\frac{1}{p}}(|u|^2+|v|^2)^{\frac{1}{2}}\leq 2^{\frac{1}{p}}(|u|^p+|v|^p)^{\frac{1}{p}}.
\end{equation*}
\end{proof}

\begin{lemma}\label{surfacevolume}
Let $p\in(1,+\infty)$. Then there exists a constant $c_{p}$ such that the  Hausdorff measure of the sphere $S^{n-1}_{p}=\{y\in\R^n:\;|y|_p=1\}$ fulfills
\begin{equation*}
\mathcal{H}^{n-1}(S^{n-1}_{p})\geq \left(\frac{c_{p}}{n}\right)^{\frac{n}{p}}.
\end{equation*}	
\end{lemma}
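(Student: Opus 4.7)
The plan is to reduce the surface-area bound to a volume bound by means of the elementary identity
\begin{equation*}
n\,|B^n_p|\,=\,\int_{\partial B^n_p} x\cdot\nu(x)\,d\mathcal{H}^{n-1}(x),
\end{equation*}
which is simply the divergence theorem applied to the identity vector field on the convex (in particular Lipschitz) body $B^n_p:=\{y\in\R^n:|y|_p\le 1\}$ containing the origin. Combined with the trivial Cauchy-Schwarz bound $x\cdot\nu(x)\le |x|_2$ on $\partial B^n_p$, this will yield
\begin{equation*}
\mathcal{H}^{n-1}(S^{n-1}_p)\,\ge\,\frac{n\,|B^n_p|}{\max_{x\in S^{n-1}_p}|x|_2}.
\end{equation*}
So the task reduces to two elementary estimates: a lower bound for $|B^n_p|$ and an upper bound for $|x|_2$ on the $p$-sphere.

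For the volume, I would simply observe that the scaled cube $[-n^{-1/p},n^{-1/p}]^n$ is contained in $B^n_p$, because for $x$ with $|x_i|\le n^{-1/p}$ one has $\sum_i|x_i|^p\le n\cdot n^{-1}=1$. This gives the crude but sufficient bound $|B^n_p|\ge 2^n n^{-n/p}$. For the norm comparison, I would note that $|x|_p=1$ forces $|x_i|\le 1$ for every $i$, so $|x|_2^2=\sum_i x_i^2\le n$ and hence $|x|_2\le\sqrt n$.

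Plugging these two estimates into the displayed inequality yields
\begin{equation*}
\mathcal{H}^{n-1}(S^{n-1}_p)\,\ge\,\frac{n\cdot 2^n n^{-n/p}}{\sqrt n}\,=\,\sqrt n\cdot 2^n\, n^{-n/p}\,\ge\,\frac{2^n}{n^{n/p}}\,=\,\left(\frac{2^p}{n}\right)^{n/p},
\end{equation*}
so that the conclusion holds with $c_p:=2^p$. There is no genuine obstacle here: the only point that requires a comment is that the divergence theorem applies to $B^n_p$ since it is a convex (hence Lipschitz) domain, which disposes of the mild non-smoothness of $\partial B^n_p$ on the coordinate hyperplanes (and of the additional singularities for $p=1$ if one wished to include that case).
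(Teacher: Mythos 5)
Your proof is correct, and it takes a genuinely different route from the paper's. The paper computes $\mathcal{H}^{n-1}(S^{n-1}_p)$ via the Minkowski content, bounding the tubular neighborhood $S^{n-1}_p+B_\e(0)$ from below by an explicit shell $\{1-c_{n,p}^{-1}\e<|y|_p<1+c_{n,p}^{-1}\e\}$ (using the norm comparison $|y|_2\le c_{n,p}|y|_p$), and then invokes the exact formula $|B^n_p|=(2\Gamma(\tfrac1p+1))^n/\Gamma(\tfrac np+1)$ together with Stirling's estimate for $\Gamma(\tfrac np+1)$ to extract the $(c_p/n)^{n/p}$ behaviour. Your approach replaces all of this by the divergence theorem $n|B^n_p|=\int_{\partial B^n_p}x\cdot\nu\,d\mathcal{H}^{n-1}\le(\max_{S^{n-1}_p}|x|_2)\,\mathcal{H}^{n-1}(S^{n-1}_p)$ applied to the identity field on the convex (hence Lipschitz) body $B^n_p$, the inscribed-cube volume bound $|B^n_p|\ge 2^n n^{-n/p}$, and the crude estimate $|x|_2\le\sqrt n$ on $S^{n-1}_p$. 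What you gain is that no exact volume formula and no Gamma-function asymptotics are needed; the argument is shorter, more elementary, and yields the clean explicit constant $c_p=2^p$. What the paper's route buys is that it starts from the definition of Hausdorff measure directly via Minkowski content, which in their context is a natural estimate already in the spirit of the other geometric computations in Section 2; it also gives a sharper constant if one cares (their bound is asymptotically tight because it uses the exact volume, whereas your inscribed-cube step is lossy by an $n$-independent factor). For the purposes of the lemma, which only requires \emph{some} $c_p>0$ so that the exponential rate $n^{-n/p}$ is captured, the two are interchangeable, and yours is arguably the cleaner proof.

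One small remark: the step $\sqrt n\cdot 2^n n^{-n/p}\ge 2^n n^{-n/p}$ simply discards the favorable factor $\sqrt n\ge 1$; you could keep it, but since the lemma is only stated with $n^{-n/p}$ on the right, dropping it is the right call.
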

\begin{proof}[Proof of Lemma~\ref{surfacevolume}]
Note that $S^{n-1}_{p}$ is a compact smooth $(n-1)$-dimensional manifold. Hence we can characterize its Hausdorff measure by its Minkowski content. To be more precise, it holds that
\begin{equation}\label{minkowski}
\mathcal{H}^{n-1}(S^{n-1}_{p})=\lim_{\e\to 0}\frac{\mathcal{H}^{n}(S^{n-1}_{p}+B_{\e}(0))}{2\e},
\end{equation}
where the factor $2$ comes from the Lebesgue measure of the 1D unit ball $[-1,1]$. Note however that $B_{\e}(0)$ is a ball with respect to the Euclidean metric on $\R^n$. We now give a lower bound for the nominator on the right hand side of (\ref{minkowski}). To this end, set $c_{n,p}=\max\{1,n^{\frac{1}{2}-\frac{1}{p}}\}$. Then, for $y\neq 0$, we have 
\begin{equation*}
\left|y-\frac{y}{|y|_p}\right|_2\leq ||y|_p-1|\frac{|y|_2}{|y|_p}\leq ||y|_p-1|c_{n,p},
\end{equation*}
where we used that by definition $|y|_2\leq c_{n,p}|y|_p$ for all $y\in\R^n$. We conclude that
\begin{equation*}
\{y\in\R^n:\;1-c^{-1}_{n,p}\e<|y|_p<1+c^{-1}_{n,p}\e\}\subset S^{n-1}_p+B_{\e}(0).
\end{equation*}
Hence we deduce from \eqref{minkowski} and the well-know formula for the volume of $p$-norm balls that
\begin{align*}
\mathcal{H}^{n-1}(S^{n-1}_p)&\geq \liminf_{\e\to 0}\frac{\mathcal{H}^n(\{|y|_p<1+c_{n,p}^{-1}\e\})-\mathcal{H}^n(\{|y|_p<1-c^{-1}_{n,p}\e\})}{2\e}\\
&=\frac{(2\Gamma(\frac{1}{p}+1))^n}{\Gamma(\frac{n}{p}+1)}\lim_{\e\to 0}\frac{(1+c^{-1}_{n,p}\e)^n-(1-c^{-1}_{n,p}\e)^n}{2\e}=\frac{(2\Gamma(\frac{1}{p}+1))^n}{\Gamma(\frac{n}{p}+1)}nc^{-1}_{n,p}\geq \frac{(2\Gamma(\frac{1}{p}+1))^n}{\Gamma(\frac{n}{p}+1)}n^{\frac{1}{2}}.
\end{align*}
We conclude the proof using Stirling's formula in the form of the upper bound
\begin{equation*}
\Gamma\left(\frac{n}{p}+1\right)\leq \left(\frac{2\pi n}{p}\right)^{\frac{1}{2}}\left(\frac{n}{pe}\right)^{\frac{n}{p}}\exp(p/12).
\end{equation*}
\end{proof}

\section*{Acknowledgements}
AG warmly thanks Fran\c{c}ois Lequeux and Michael Rubinstein for inspiring discussions on polymer physics.
AG and MR acknowledge 
the financial support from the European Research Council under
the European Community's Seventh Framework Programme (FP7/2014-2019 Grant Agreement
QUANTHOM 335410). The work of MC was supported by the DFG Collaborative Research Center TRR 109, 
``Discretization in Geometry and Dynamics''.

\end{document}